\newcommand{\markov}{\mathrel\multimap\joinrel\mathrel-\mspace{-9mu}\joinrel\mathrel-}
\newtheorem{theorem}{Theorem}
\newtheorem{definition}{Definition}
\newtheorem{lemma}{Lemma}
\newtheorem{corollary}{Corollary}
\newtheorem{proposition}{Proposition}
\newtheorem{example}{Example} 
\newtheorem{remark}{Remark}
\title{%Rate-Splitting and Superposition Coding for 
%Two Groupcast Messages over the 
The $K$-User DM Broadcast Channel with Two Groupcast Messages: 
%A General Framework and 
Achievable Rate Regions and the Combination Network as 
a Case Study}
\author{Mohamed Salman and Mahesh K. Varanasi \\
%Department of ECEE, University of Colorado, Boulder, CO, USA \\
%Mohamed.Salman@Colorado.edu, Varanasi@Colorado.edu 
% <-this % stops a space
\thanks{This work was presented in part at the 2018 IEEE International Symposium on Information Theory, Vail, CO \cite{salman2018achievable}. It was funded in part by the 2018 and 2019 Qualcomm Faculty Awards.
%It was funded in part by NSF Grant 1423657 and by Qualcomm Gift 24868 and a 2017 Qualcomm Faculty Award. 
The authors are with the Electrical, Computer and Energy Engineering Department, University of Colorado, Boulder, CO, USA (emails: \{mohamed.salman, varanasi\}@colorado.edu).}}
\begin{document}

\maketitle
\thispagestyle{empty}
\pagestyle{headings}
\pagenumbering{arabic}

%%%%%%%%%%%%%%%%%%%%%%%%%%%%%%%%%%%%%%%%%%%%%%%%%%%%%%%%%%%%%%%%%%%%%%%%%%%%%%%%
\begin{abstract}
A novel class of achievable rate regions is obtained for the general $K$-receiver discrete memoryless broadcast channel over which two groupcast messages are to be transmitted, with each message required by an arbitrary group of receivers.  
The associated achievability schemes are parameterized by an expansion of the message set which then determines how 
random coding techniques are employed. These techniques include generalized versions of {\em up-set} message-splitting, the generation of possibly multiple auxiliary codebooks for certain compositions of split messages using superposition coding with subset inclusion order, partial interference decoding at all receivers in general, joint unique decoding at receivers that desire both messages, and non-unique or indirect decoding at receivers that desire only one of the two messages.

The generality of the proposed class of schemes implies new achievable rate regions for problems previously not considered as well as those that were studied before, with specific members of that class having rate regions that coincide with previously found capacity regions for special classes of broadcast channels with two private or two {\em nested} groupcast messages, wherein the group of receivers desiring one message is contained in that desiring the other. Moreover, new capacity results are established for certain partially ordered classes of general broadcast channels for certain non-nested groupcast messages such as two messages each desired at $K-1$ receivers and two messages with one desired at $K-1$ receivers and the other message desired at some arbitrary subset of receivers including the receiver at which the first message is not desired. To further show the strength of the proposed class of achievable rate regions we consider the so-called {\em combination network}, a 
special class of linear deterministic broadcast channels, as a test case. When specialized to the combination network, some members of the class of inner bounds 
are shown, via converse results, to result in the capacity region for three different scenarios, namely, (a) the two messages are intended for two distinct sets of $K{-}1$ receivers each and (b) two nested messages in which one message is intended for one or (c) two (common) receivers and both messages are intended for all other (private) receivers. In the latter two nested messages cases, we hence recover, in a top-down manner, previous results by Bidokhti, Prabhakaran and Diggavi obtained therein using network coding schemes based on rate-splitting and linear superposition coding tailored to the combination network, while in the first case we obtain a new capacity result for a non-nested message set, which was hitherto not studied. Furthermore, we show, via a suitable choice of the distribution of the auxiliary random variables and the encoding function in our framework, the achievability of rate pairs in two interesting examples of combination networks, with three and four common receivers each. These examples were used in the previous literature to show the sub-optimality of the rate-splitting and linear superposition coding mentioned earlier, and hence to motivate the additional consideration of a pre-encoding technique and a block-Markov linear superposition coding for the combination network, with the latter then lifted to the general broadcast channel. Our results suggest that the proposed framework here
for the general broadcast channel when specialized to the combination network is strong enough to incorporate the enhancements afforded by those two latter techniques, thereby suggesting among other things, that perhaps block-Markov superposition coding is not necessary in the general broadcast channel. Moreover, there is a trade-off between the complexity of the coding scheme within the class of schemes we propose when applied to the combination network and that of the determination of the distribution of the auxiliary random variables and the encoding function that achieve the capacity region. This may have interesting implications for the general broadcast channel as well.  
%a single coding distribution for the auxiliary random variables achieves the entire capacity region 
%of the combination network 
%in these message set scenarios.

\end{abstract}

\newpage

%%%%%%%%%%%%%%%%%%%%%%%%%%%%%%%%%%%%%%%%%%%%%%%%%%%%%%%%%%%%%%%%%%%%%%%%%%%%%%%%
\section{Introduction}
\label{Sec_Into}

The problem of sending two groupcast messages over the $K$-receiver broadcast channel (BC) is studied. Each such message is intended for a distinct group of receivers, with the two groups of receivers assumed to be arbitrary in general.
%but the special cases of (a) nested groups and (b) groups containing $K-1$ receivers each, are investigated in further detail. 
In spite of its apparent simplicity, this problem remains unsolved in general in the Shannon-theoretic sense. However, capacity results have been obtained in the literature for various special cases of the BC and for specific (especially degraded) message sets in the two and three-receiver cases, and more recently, in the general $K$-receiver case.

The most studied problem however of sending two messages over the BC is the two-receiver discrete memoryless (DM) case with {\em private} messages. The capacity region is notoriously difficult in general even in this case and remains unsolved in general to date. However, for the increasingly larger classes of degraded \cite{bergmans1973random}, less noisy \cite[Definition 2]{korner1975source} and more capable \cite[Definition 3]{korner1975source} channels, the capacity region was found in the series of seminal papers \cite{cover1972broadcast,bergmans1973random,gallager1974capacity,korner1977images,el1979capacity} in the 1970s. In particular, the superposition coding scheme proposed in \cite{bergmans1973random} was shown, using a clever identification of auxiliary random variable, to achieve the capacity region in \cite{gallager1974capacity} for the degraded BC. The same scheme was also shown to achieve the capacity region for the larger class of less noisy and more capable BCs in \cite{korner1975source} and \cite{el1979capacity}, where the images-of-a-set technique \cite{korner1977images} and the Csiszar sum lemma \cite[Lemma 7]{csiszar1978broadcast} were used to prove the converses, respectively.

%In \cite{korner1975source}, Korner and Marton showed that the region that proposed in \cite{gallager1974capacity} by Gallager is in fact the capacity region for a larger class of channels (less noisy BC) using the image of a set technique \cite{korner1977images} to prove the converse. 

%{\color{blue} What technique did Korner and Marton use in their converse for less noisy BC in their paper "Comparison of two noisy channels"?}

%\textcolor{red}{In \cite{korner1975source} Korner and Marton showed that the region that proposed in \cite{gallager1974capacity} by Gallager is in fact the capacity region for a larger class of channels that they introduce (they called it less noisy). The achievability scheme is similar to that in \cite{gallager1974capacity} but the converse proof depends on the image of a set technique \cite{korner1977images}, the same technique that they used to proof the capacity for two degraded messages without any channel restrictions.    }

The capacity region for the two-receiver DM BC with two nested (i.e., degraded) messages was found by Korner and Marton in 1977 \cite{kiirner1977general}. Interestingly, with superposition coding as the achievability scheme and a converse based on the images-of-a-set technique \cite{korner1977images}, the authors therein established the capacity region without {\em any} restriction on the channel. However, %even though the problem of two nested messages is completely resolved for the two-user case, 
the generalization of this result for three or more receivers has remained elusive ever since.
%been made only for restricted classes of the DM BC.

%On the other hand, the problem two-receiver wired broadcast network (error free links with finite capacity) is totally resolved for any message sets in \cite{erez2003capacity} where routing and network coding is shown optimal and achieves the min-cut set bounds. 

%According to the max-flow min-cut theorem a source can transmit information to a sink at a rate that does not exceed the capacity of the minimal cut that separates the source and the sink

%{\color{blue} The reference \cite{erez2003capacity} considers all three messages, 1 private and one common. They refer to the 2003 paper " ALGEBRAIC APPROACH TO NETWORK CODING" by Koetter and Medard for the degraded message set case, who in turn seem to have an algebraic solution for 2 nested groupcast (see Corollary 2 on page 791 of their paper) which of course would be highly relevant here, if true. I can't access the complete version of the paper \cite{erez2003capacity} either.}

%\textcolor{red}{I added few lines at the end to emphasis the importance of rate splitting. I will use this fact later to promote to our coding scheme in this paper.}

In the $K$-receiver BC with two {\em nested} messages the receivers can be classified into $L$ common receivers that require only one (common) message and $P$ private receivers that require both messages (with $P+L=K$). The result of Korner and Marton in \cite{kiirner1977general} might suggest that the nested structure of the messages might render a straightforward extension of their superposition coding scheme to be capacity-optimal even in this $K$-receiver setting. However, the authors of \cite{nair2017sub} and \cite{nair2009capacity} showed that superposition coding alone is not optimal for the three-receiver DM BC with one and two common receivers, respectively.
%it was shown in \cite{nair2009capacity} that this is not true. . %The authors in \cite{nair2009capacity} consider the three-receiver DM BC with one private receiver, i.e., $C{=}2, P{=}1$. They showed that superposition coding alone is not optimal and 
In the latter case, they proposed a more general scheme that involves a simple form of rate-splitting along with superposition coding \cite{nair2009capacity}. However, even this scheme was only shown to achieve capacity for the restricted class of DM BCs wherein the private receiver is less noisy than one of the two common receivers. %{\color{blue} Moreover, authors give an examples that illustrate the importance of rate splitting by showing that the achievable region achieved by their proposed scheme that involves rate splitting is {\em strictly} bigger than that without rate splitting.  }

One of the challenges of obtaining capacity results for rate-splitting based schemes beyond the three-receiver case is the difficulty of obtaining a closed-form polyhedral description for the inner bound in terms of the message rates due to the large number of split rates possible.
%since using rate splitting is essential as shown in \cite{nair2009capacity}. 
We make progress on this problem in
\cite[Theorem 2]{salman2017capacity,salman2019capacity} where an achievable rate region that generalizes in one direction the capacity result for the three-user, two-common receiver problem in \cite{nair2009capacity} to arbitrary $K$ and arbitrary $L$ is obtained. In particular, the private message is split into $L$ sub-messages, and each common receiver decodes the common message uniquely, and certain sub-messages of the private message assigned to it, indirectly. %\cite{nair2009capacity}. 
The inner bound is presented in terms of the two nested message rates only, by eliminating all split rates for any $K$ and any $L$, in general. Also, this inner bound is shown to be capacity-optimal for various classes of channels characterized by certain pair-wise relationships between and among the common and private receivers \cite[Theorem 3]{salman2019capacity}. For example, the scheme is optimal for the class of four-receiver DM BCs with $L{=}3$ and $P{=}1$ in which the private receiver is less noisy than two of the three common receivers.

Combination networks, first proposed in \cite{ngai2004network} to demonstrate that network coding can attain unbounded gain over routing alone, can be seen to be a special class of linear deterministic broadcast channels. The capacity regions of the combination network for the two- and three-receiver cases were established in \cite{grokop2008fundamental} under the guise of fundamental constraints in multicast capacity regions and where the transmitter must transmit all possible $2^K{-}1$ independent groupcast messages. The achievability scheme depends mainly on the rate transfer argument. For example, in a two-receiver combination network, we have three possible independent messages; two private messages and one common message. If the transmitter is able to simultaneously send a rate of 1 bit per channel use for each of the three messages, then by sending the same information in each of the two private messages, it must be able to send a common message at rate of 2 bits per channel use for both receivers. Another possible rate transfer operation is when the transmitter merely uses the common bit to send private information to one of the receivers. Then, the channel can deliver 2 bits of private message per channel use to that receiver and 1 bit to the other. 
  
In other words, the achievability of any $2^K{-}1$-dimensional rate-tuple of messages implies the achievability of a certain $2^K{-}1$-dimension rate region regardless of the channel. The approach of \cite{grokop2008fundamental} is to exhaustively determine all possibilities for rate transfer to characterize the inner bound for $K=2$, whereas for $K=3$, rate transfer and network coding are employed to establish the inner bound. On the other hand, the outer bound depends on cut-set bounds with some extensions. These proofs of the converse are specific to $K$ being two or three, and hard to extend to $K>3$. In fact, the capacity region of the general $K$-receiver combination network is an open problem to date for $K>3$.

Because it is unclear how to generalize the approach of \cite{grokop2008fundamental} to more that three receivers since the complexity of rate transfer increases exponentially with the number of users, Tian in \cite{tian2011latent}, under the guise of latent capacity regions, effectively considers a restricted class of {\em symmetric} $K$-receiver combination networks with the capacities of certain sets of finite capacity links being the same, and with symmetric message rates, wherein the messages required by the same number of receivers have the same rate. %This means that all private messages have the same rate, all messages required by any two receivers have the same rate, and so on. 
By simplifying the channel model and the message structure in this manner, Tian was able to establish the symmetric capacity region (where the rates of all messages of the same order are equal) of the symmetric $K$-receiver combination network by extending the rate-transfer approach of \cite{grokop2008fundamental} to this scenario.

Later, in \cite{salimi2015generalized}, Salimi {\em et al.} proposed a general framework for the outer bound of broadcast networks in which they obtain a large family of outer bounds based on the sub-modularity of entropy they call generalized cut-set bounds. These bounds are used to reproduce the outer bounds of \cite{grokop2008fundamental} for the two- and three-receiver combination networks and, along with an explicit polyhedral description, the symmetric capacity region of the $K$-receiver symmetric combination network of \cite{tian2011latent}.

%Later in \cite{salimi2015generalized}, of the symmetric capacity region of \cite{tian2011latent} was obtained and shown to be bounded by a specific set of the class of generalize cut-set bounds found therein for general broadcast networks.
%It is clear that the stage for taking about the general combination networks is still widely open for more than three users, which this paper addresses. 

Romero and Varanasi in \cite{romero2016superposition} obtain the capacity region for the combination networks via a top-down approach unlike in \cite{grokop2008fundamental} for $K\leq 3$ and in \cite{tian2011latent} for $K$-user symmetric combination networks. They first obtained a inner bound for the much more general $K$-user DM BC with general message sets \cite[Theorem 1]{romero2016superposition} and then specialized to the combination network to recover the results in \cite{grokop2008fundamental} and \cite{tian2011latent}. This bolsters the case for considering superposition coding and rate splitting (i.e., without binning) in the DM BC and specializing it to the combination network for achieving even the general (not just symmetric) capacity region of the general (not only symmetric) combination network for $K>3$. This is the approach we take on the combination network in this paper for the two groupcast message set.

Meanwhile, the authors in \cite{bidokhti2016capacity} studied two nested message set broadcasting over the $K$-receiver combination network where three different coding schemes tailored to it were proposed. The first is a linear superposition coding scheme with rate splitting for the private message where the transmitted signal is obtained by the multiplication of a carefully designed matrix over a finite field with the information symbols vector over that field. The structure of this matrix follows the zero-structured matrices of \cite[Definition 2]{bidokhti2016capacity} while the rank of this matrix dominates the feasibility of decoding analysis. This zero-structured matrix has a zero in specific positions such that, when it is multiplied with the information symbols, the received signal at each common receiver does not depend on too many private sub-messages. The second scheme is a linear superposition coding scheme with a pre-encoder, i.e., the information symbols vector is multiplied first by a pre-encoder matrix before it is multiplied with a zero-structured matrix. The purpose of the pre-encoder matrix is to introduce dependency among the sub-messages of the private message. The last coding scheme is a block Markov coding scheme. The main idea is, instead of introducing the dependency among the sub-messages of the private message by multiplying with a pre-encoder matrix over one-time (one-block) code, a block Markov coding scheme can be used to introduce those dependencies sequentially across blocks. Upon receiving all the $n$ blocks, each receiver finds its intended message by performing backward decoding \cite{willems1985discrete}. 

The second and third coding schemes of \cite{bidokhti2016capacity} are motivated via two examples, \cite[Example 2]{bidokhti2016noisy} and \cite[Example 4]{bidokhti2016noisy}. In \cite[Example 2]{bidokhti2016noisy}, it was shown that the second coding scheme (with pre-encoding) can achieve a rate pair that is not achievable by the first, while in \cite[Example 4]{bidokhti2016noisy} the block Markov coding scheme was shown to achieve a rate pair that was not achievable by the second coding scheme. The first rate splitting and linear superposition coding scheme, was shown to be capacity achieving for a general $K$-receiver combination network with at most two common receivers \cite[Proposition 1 and Theorem 3]{bidokhti2016noisy}. The second the third coding schemes were shown to be optimal for the general $K$-receiver combination networks with three common receivers \cite[Theorems 1, 2, 4 and 5]{bidokhti2016noisy}. Capacity for more than three common receivers remains an open problem.

\subsection{Main Contributions}

In this work, we begin by proposing an inner bound for the general $K$-user DM BC with two general groupcast messages in Section \ref{Sec_General_two}. That inner bound is based on a class of coding schemes parameterized by a flexible form of message set expansion, and involves general forms of rate-splitting, superposition coding, unique and non-unique decoding. Specific choices of the message set expansion parameter in the two, three and $K$ receiver DM BCs lead to recovering the previously proposed achievability schemes based on rate-splitting and superposition coding for two, three- and $K$-receiver DM BCs  \cite{bergmans1973random,kiirner1977general,el1979capacity,nair2009capacity,salman2019capacity}. %Our inner bound can be seen to provide a general and unifying framework. 
Moreover, our inner bound also gives new and potentially strictly larger achievable rate regions than those proposed based on rate-splitting and superposition coding for two nested groupcast messages in the three- and $K$-receiver DM BCs of \cite{nair2009capacity} and \cite{salman2019capacity}, respectively. Furthermore, it gives classes of achievable rate regions for $K$-user DM BCs for any two non-nested groupcast message sets as well, none of which appear to have been proposed in the literature before for three or more receivers. Hence our inner bound provides a general and unifying framework for studying the problem of sending two groupcast messages over a $K$-receiver DM BC. It is in fact an adaptation of the rate-splitting and superposition coding framework recently proposed by Romero and Varanasi in \cite{romero2017rate} which is applicable to general message sets with any number (up to $2^K-1$) of groupcast messages. By specializing to the two-message set case, our aim here is to gain an in-depth understanding (which \cite{romero2017rate} does not do) \textemdash via explicit descriptions, converses, and choices of extremal coding distributions \textemdash of the strength of an adaptation of the general inner bound of \cite[Corollary 1]{romero2017rate} in the two-message set case. 
%(which \cite{romero2017rate} does not do).

The adaptation of the framework of \cite[Corollary 1]{romero2017rate} to the two groupcast message case considered here involves replacing unique joint decoding in the coding scheme of \cite[Corollary 1]{romero2017rate} with non-unique joint decoding at all common receivers that require just one of the two groupcast messages. While the adoption of non-unique decoding may or may not strictly improve the achievable rate region in general (which is difficult to prove one way or the other) we choose it because (a) it expands the rate region per coding distribution (b) it does not shrink the rate region taken as a union over all admissible coding distributions (c) there are fewer inequalities leading to relative ease of doing projections via the Fourier-Motzkin Elimination (FME) technique \cite{el2011network} and (d) proving converses becomes easier as a result in some special cases as well. 
%More importantly, it makes it possible to achieve explicit polygonal descriptions for rate regions per coding distribution in some cases and since such a description involves fewer inequalities, which makes proving converse results also somewhat easier. 
Indeed, for the case of two groupcast messages where (a) each message is desired at $K-1$ receivers, and (b) one message is desired at $K-1$ receivers and the other at some arbitrary subset of receivers that includes the one where the first message is not desired, we obtain the capacity region for classes of partially ordered DM BCs wherein certain pairs of receivers are ordered by the well-known less noisy and more capable ordering relations. These results complement the ones we obtained recently for two nested groupcast messages in \cite{salman2019capacity}. 
%In the case of two messages desired at $K-1$ receievr each The converse proof of this results strengthen that proposed before for two nested messages with single common receiver in \cite{salman2017capacity} for arbitrary $K$-receivers and \cite{nair2009capacity} for $3$-receiver DM BCs.

Beyond the fact that we propose a general and unifying framework 
%\textemdash thanks to the work in \cite{romero2017rate} \textemdash
which provides general descriptions of rate-splitting, superposition coding, unique and non-unique decoding, and a succinct characterization of the associated rate regions, %(which were previously known only in very special cases), 
the key novel aspect underlying our achievable schemes %(and of Theorem 1 in \cite{romero2017rate}) 
is the message set expansion. It affects how everything is done: how the messages are split, what messages are reconstructed and how they are superposed, which reconstructed messages are uniquely decoded and which are non-uniquely decoded by the private and common receivers, respectively. There are in general $2^{2^{K}-3}$ expanded message sets and each gives a distinct achievable rate region for a given two-message set and $K$. Three specific choices of message set expansions for the three-receiver case previously considered in \cite{nair2009capacity} with two nested messages with 2 common receivers recover the results therein while the rest of the other choices yield new rate region descriptions for that setting. More importantly, there are message set expansions that result in coding schemes that assign {\em multiple} codewords to certain groups of sub-messages (after message splitting) according to different distributions, a feature not found in any previously proposed coding schemes, 
%(other than in \cite{romero2017rate}) 
%including in \cite{nair2009capacity,bidokhti2016capacity}, 
to the best of our knowledge. To be sure, the largest message set expansion yields a rate region that subsumes all others but it also yields the most complex coding scheme of all. It is of interest to consider simpler coding schemes corresponding to smaller message set expansions when such schemes suffice to achieve capacity. 

To more explicitly show the strength of the proposed class of achievable rate regions we consider the combination network as a test case in Section \ref{Sec_Capacity_combianation}. In particular, when specialized to the combination network, the largest of the class of inner bounds, %for the general broadcast channel 
corresponding to the most complex of the class of coding schemes,
is shown, via converse results, to ``easily" result in the capacity region via the choice of a single distribution of the auxiliary random variables and encoding function for three different scenarios, namely, (a) the two messages are intended for two distinct sets of $K{-}1$ receivers each and (b) two nested messages in which one message is intended for one or (c) two (common) receivers and both messages are intended for all other (private) receivers. In the latter two nested messages cases, we hence recover, in a top-down manner, the previous result of \cite[Theorem 3]{bidokhti2016capacity} obtained therein using network coding schemes based on rate-splitting and linear superposition coding tailored to the combination network and two nested messages, while in the first case we obtain a new capacity result for a non-nested message set, which was hitherto unknown. Moreover, we also show that via the same and single choice of auxiliary random variables and choice of encoding function in our inner bound applied to the combination network we recover, in a top-down manner, the rate region of the rate-splitting and linear superposition coding given by \cite[Proposition 1]{bidokhti2016capacity} for any two nested messages set for the $K$-user combination network.

Furthermore, we show in Section \ref{Sec_Relation_Bidokhti}, again in a top-down manner, via suitable choices of the distribution of the auxiliary random variables and the encoding function in our framework, the achievability of ``difficult" rate pairs in two interesting examples of six- and seven-receiver combination networks studied in detail \cite[Examples 2-5]{bidokhti2016capacity}, with three and four common receivers, respectively. The first example was proposed therein to show the sub-optimality of the rate-splitting and linear superposition coding scheme mentioned earlier, and hence to motivate the additional consideration of a pre-encoding technique in that case \cite[Example 2]{bidokhti2016capacity}. The second example was proposed to show the sub-optimality in turn of the pre-encoding technique as well, and to further motivate a third block-Markov linear superposition coding scheme for the combination network. Our results hence suggest that the proposed framework here and the associated class of achievable rate regions for the general broadcast channel may be strong enough that, when it is specialized to the combination network, it would incorporate the enhancements afforded by pre-encoding with rate-splitting and linear superposition coding and block-Markov linear superposition coding schemes of \cite{bidokhti2016capacity}.

The authors in \cite{bidokhti2016capacity} lift their block-Markov linear superposition coding scheme for the combination network to a block-Markov superposition coding for the general broadcast channel because the former can achieve the ``difficult" rate pair of the 7-receiver combination network with four common receivers of \cite[Examples 4-5]{bidokhti2016capacity} whereas their pre-encoding technique with linear superposition coding does not. The fact, however, that that rate pair can be achieved via our proposed inner bound suggests that perhaps block-Markov superposition coding is not necessary in the general broadcast channel. 
%To  strengthen this claim however, further work is needed.
%To further strengthen this one would have to show that it is possible to obtain an explicit inner bound for the combination network starting with our general inner bound 

Finally, in Section \ref{Sec:Smaller_F}, we show that the consideration of simpler coding schemes within the class we propose corresponding to suitably chosen smaller message set expansions suffice in each case to achieve capacity for the combination network for the three message set cases mentioned previously for which we have capacity results. However, finding the extremal distributions is more involved. This is what we do in detail in Section \ref{Sec:Smaller_F} where the allowed {\em dependencies} among the auxiliary random variables are exploited. 

Thus, the results in Sections \ref{Sec_Capacity_combianation} and \ref{Sec:Smaller_F} on the combination network seems to suggest that there is perhaps a trade-off between the complexity of coding scheme adopted within the class of coding schemes we propose and the ease of finding coding distributions to extract much (if not all) of the performance of that scheme in DM BCs in general.
%, at least with two general groupcast messages. 
%We find this possibility very intriguing.
In other words, even if we know that a union of regions inner bound is the capacity region it may still be worthwhile to consider more complex coding schemes in the hope of discovering coding distributions that achieve the capacity region in more explicit form that otherwise may prove difficult or impossible.

The rest of this paper is organized as follows. In Section \ref{Sec_system}, we describe the system model. In Section \ref{Sec_General_two}, we present the new achievable rate region for the DM BC with two general groupcast messages and the capacity region for certain partially ordered DM BCs for certain message sets including one with two messages each required by $K-1$ receivers. The inner bound is specified for the nested messages case in Section \ref{Sec_Two_nested}. Then, in Section \ref{Sec_Capacity_combianation}, we obtain the capacity regions for combination networks for three different message sets. In Section \ref{Sec_Relation_Bidokhti} we illustrate the relation between our capacity regions for combination networks and those proposed before in \cite{bidokhti2016capacity}. In Section \ref{Sec:Smaller_F}, a trade-off between complexity of coding scheme (via message set expansion) and choice of random coding distributions is studied. Finally, the paper is concluded in Section \ref{Sec_Conc}.

\section{System Model and Preliminaries }
\label{Sec_system}

\subsection{System Model}

\begin{figure*}[t]
\centering
\begin{tikzpicture}
\node (E) at (0,0) {Encoder};
\node (C) at (3,0.3) {$p(y_1y_2y_3|x)$};
\node at (3,1.8) {DM BC};
\node at (1.35,.2) {$x^n$};
\node (msg) at (-1.8,0) {$M_{\mathsf{E}}$};
%\node (msg1) at (-2.1,.2) {};
%\node (msg2) at (-2.1,-.2) {$M_{123}$};

\node (Es) at (-.9,0) {};
\node (Ee) at (.9,0) {};
\node (Cs) at (1.78,0) {};

%rec1
\node at (4.6,1.65) {$y_1^n$};
\node (rec_msg1) at (7.45,1.4) {};
\node (rec_msg1P) at (7.8,1.4) {$\tilde{M}_{\mathsf{W}_1^{\mathsf{E}}}$};
%\node (rec_msg1K) at (7.8,1.3) {$\tilde{M}_{123}$};
\node (D1) at (6,1.4) {Decoder $1$};
\node (Ce) at (4.2,1.4) {};
\node (Ds) at (5.1,1.4) {};
\node (De) at (6.9,1.4) {};
\draw [->] (Ce) -- (Ds) ;
\draw [->] (De) -- (rec_msg1) ;
\draw (5,1) rectangle (7,1.8);

%rec2
\node at (4.6,.7) {$y_2^n$};
\node (rec_msg2) at (7.8,0.4) {$\tilde{M}_{\mathsf{W}_2^{\mathsf{E}}}$};
\node (DP) at (6,0.4) {Decoder $2$};
\node (CeP) at (4.2,0.4) {};
\node (DsP) at (5.1,0.4) {};
\node (DeP) at (6.9,0.4) {};
\draw [->] (CeP) -- (DsP) ;
\draw [->] (DeP) -- (rec_msg2) ;
\draw (5,0.0) rectangle (7,0.8);

%rec3
\node at (4.63,-.35) {$y_{3}^n$};
\node (rec_msg3) at (7.8,-0.6) {$\tilde{M}_{\mathsf{W}_3^{\mathsf{E}}}$};
\node (DP) at (6,-0.6) {Decoder $3$};
\node (CeP1) at (4.2,-0.6) {};
\node (DsP1) at (5.1,-0.6) {};
\node (DeP1) at (6.9,-0.6) {};
\draw [->] (CeP1) -- (DsP1) ;
\draw [->] (DeP1) -- (rec_msg3) ;
\draw (5,-1) rectangle (7,-.2);

\draw [->] (msg) -- (Es) ;
\draw [->] (Ee) -- (Cs) ;
\draw (-1,-0.4) rectangle (1,0.4);
\draw (1.75,-1) rectangle (4.25,1.6);
\end{tikzpicture}
\caption{The three-receiver DM BC where $M_{\mathsf{E}}=\{M_S: S \in \mathsf{E}\}$ are the messages sent through the channel and $\tilde{M}_{\mathsf{W}_i^{\mathsf{E}}}=\{ \tilde{m}_{S} : S \in \mathsf{W}_i^{\mathsf{E}} \}$ are the messages decoded by receiver $Y_i$.}
\label{Fig_SystemModekK_3}
\end{figure*}
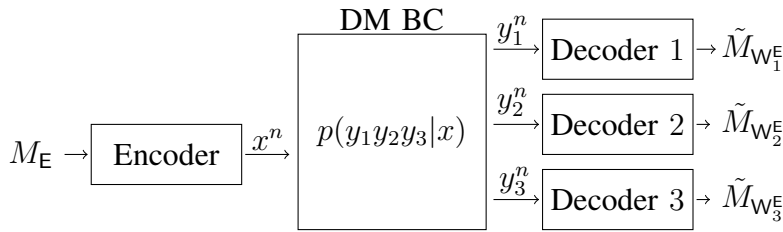

The DM BC consists of a transmitter $X\in \mathcal{X}$, $K$ receivers $Y_i\in \mathcal{Y}_k$, for $1 \leq k \leq K$, and the channel transition probability $W(y_1, \cdots , y_K|x)$ with the conditional probability of the sequence of $n$ outputs at the $K$ receivers, denoted as $Y_1^n, \cdots , Y_K^n $ with $Y_k^n \triangleq (Y_{k,1}, \cdots , Y_{k,n})$, respectively, for the $n$ inputs $X^n \triangleq (X_{1}, \cdots , X_n)$ factors as
\begin{equation}
p(y_1^n, \cdots , y_K^n|x^n)= \prod_{j=1}^nW(y_{1,j}, \cdots , y_{K,j}|x_j)
\end{equation}
where $X_j, Y_{1,j},\cdots Y_{K,j}$ are the channel input and outputs in the $j^{th}$ channel use. Denote the set of consecutive integers from $i$ to $j$ as $[i:j]$.
The message $M_{S} \in [1:2^{nR_S}] $ of rate $R_S$ is indexed by the subset $S\subseteq [1:K]$ of receivers it is intended for. Define $\mathsf{E}$ as the set of all message indices (with a message index being a subset of $[1:K]$) and let $\mathsf{P}$ be the power set of $[1:K]$ excluding the empty set. Hence, in general, $\mathsf{E} \subseteq \mathsf{P}$. 

For any $\mathsf{F}\subseteq \mathsf{P}$ and $i\in[1:K]$, define $\mathsf{W}_i^{\mathsf{F}}$ as the set of message indices in $\mathsf{F}$ of messages intended for receiver $i$ so that
\begin{equation}
\label{Eq:W_i_F}
\mathsf{W}_i^{\mathsf{F}} \triangleq \{S\in \mathsf{F}:i\in S\}
\end{equation}

Denote the set of all messages $\{M_S : S \in \mathsf{E} \} $ to be sent over a $K$-user DM BC as $ M_{\mathsf{E}}$. A $ (\{2^{nR_{S}}\}_{S \in \mathsf{E}}, n) $ code consists of (i) an encoder that assigns to each message tuple $ m_{\mathsf{E}} \in \prod_{S \in \mathsf{E}}[1:2^{nR_S}] $ a codeword $x^n( m_\mathsf{E})$ (ii) a decoder at each receiver, with the $k^{th}$ decoder mapping the received sequence $Y_{k}^n$ for each $k \in [1:K]$ into the set of decoded messages $ \{ \tilde{m}_{S} : S \in \mathsf{W}_k^{\mathsf{E}} \} \in \prod_{S \in \mathsf{W}_k^{\mathsf{E}}}[1:2^{nR_S}]$, denoted as $ \tilde{m}_{\mathsf{W}_k^{\mathsf{E}}}$. We will have occasion to refer to the received sequence at receiver $k$ from times $t_1$ to $t_2$, which we denote as $Y_{k,t_1}^{t_2}$. For simplicity, we choose to write  $Y_{i,1}^t$ (i.e., when $t_1=1$) as just $Y_i^t$.
The three-receiver DM BC is illustrated in Fig. \ref{Fig_SystemModekK_3}. The probability of error $P_e^{(n)}$ is the probability that not all receivers decode their intended messages correctly. The rate tuple $( R_{S} : S\in \mathsf{E} )$ is said to be achievable if there exists a sequence of $ (\{2^{nR_{S}}\}_{S \in \mathsf{E}},n) $ codes with $P_e^{(n)} \rightarrow 0$ as $n  \rightarrow  \infty $. The closure of the union of achievable rates is the capacity region.

When describing examples, we find it convenient to make certain notational simplifications when no confusion arises. For example, consider the three-receiver DM-BC with the message index set $\mathsf{E}{=}\{\{1\}, \{1,2,3\}\}$, so that there are two messages $M_{\{1\}}$ and $M_{\{1,2,3\}}$, the first one intended for the first receiver and the second for all three receivers. For simplicity, we will denote these messages as $M_1$ and $M_{123}$. Similarly, we will write their rates $R_{\{1\}}$ and $R_{\{1,2,3\}}$ simply as $R_1$ and $R_{123}$. Also, for convenience, we denote $\mathsf{E}{=}\{1, 123\}$ in this case. 
In other words, for simplicity, and when there is no confusion, we abbreviate the set $\{i_1,i_2,..,i_N\} \subseteq [1:K]$ as $i_1i_2 \cdots i_N$ (adopting the convention that $i_1 < i_2 < \cdots < i_N$) when all $i_j \in \{0,1, \cdots 9\}$. Note that with this notational simplification, when $K=3$, we have $\mathsf{P}{=}\{1,2,3,12,13,23,123\}$. If we have numbers and variables in the set, say as in $\{1,2,3,K-2,K-1,K\}$, we write the elements of the set consecutively as before but we separate the variables with dots to avoid confusion. Hence, the set $\{1,2,3,K-2,K-1,K\}$ is abbreviated as $123.K-2.K-1.K$, not $123K-2K-1K$ (which is confusing). If there is only one variable at the end, such as in $\{1,2,3,\cdots,K-2\}$ we still abbreviate it as by $123\cdots K-2$ since there is no confusion here.

In some cases, especially when the set $\{i_1,i_2,..,i_N\}$ has many elements, we find it more convenient to denote it by its complement. For example, the common message intended for all $K$ receivers is denoted by $M_{123\cdots K-1.K}$. It is simpler to denote it as $M_{\overline{\phi}}$ where $\phi$ is the empty set and $\overline{S}=\{1,2,3,\cdots,K\}\backslash S$ is the complement of $S$. Similarly, we can represent the message index set $\mathsf{E}=\{123\cdots K-2.K,123\cdots K-2.K-1\}$ of two messages, each required by $K-1$ receivers, simply as $\mathsf{E}=\{\overline{K-1},\overline{K}\}$. 

%We found it more convenient in the following part to write the set $\{i_1,i_2,\cdots, i_N\}$ as $\overline{j_1j_2\cdots j_M}$ or $i_1i_2\cdots i_N$ for any two positive numbers $N$ and $M$ where the set $\{i_1,i_2,\cdots, i_N\}$ and $\{j_1,j_2,\cdots, j_M\}$ are disjoint and $\{i_1,i_2,\cdots, i_N\}\cup \{j_1,j_2,\cdots, j_M\}=\{1,2,3,\cdots,K\}$. Hence, for instance, the set of the nested message indices with one common receiver can be written as $\mathsf{E}=\{\overline{K},\overline{\phi}\}$ or $\mathsf{E}=\{12 \cdots K-1,12 \cdots K\}$. This modification simplifies the presentation of the results. 

%{\color{blue} This description and figure of the combination network is not accurate. There is a mixing of two models that are equivalent. One given in Salimi et al and the other in the Romero-MV ISIT paper.}
The combination network \cite{salimi2015generalized,romero2016superposition}, which is a special case of the general DM-BC, is described next. 
It consists, as described in \cite{salimi2015generalized}, of three layers of nodes, as shown in Fig. \ref{Fig_K3_combination networks} for the three-receiver case. The top layer consists of a single source node $X$, and the bottom layer consists of $K$ receivers $Y_i$, $i\in[1:K]$. The middle layer consists of $2^K{-}1$ intermediate nodes, denoted $V_{S}$ for all $S\in \mathsf{P}$. The source is connected to each of the intermediate nodes $V_S$ through a noiseless link of capacity $C_S$ (per channel use). Receiver $Y_i$ is connected to the intermediate nodes $V_{S}$ for all $ S \in \mathsf{W}_i^{\mathsf{P}} $ via noiseless links of unlimited capacity. An equivalent representation for the combination networks is given in \cite{romero2016superposition} wherein it is considered to be a network of noiseless DM BCs with the channel input $X$ connected in different ways to the channel outputs ($Y_1,Y_2, \cdots ,Y_K$) each through a noiseless BC. In particular, the channel input $X$ contains $2^{K}-1$ components $V_S$, for all $S \in  \mathsf{P}$. For each $S$, the component $V_S \in \mathcal{V}_S$, where $|\mathcal{V}_S|=2^{C_S}$, is noiselessly received at each receiver $Y_i$ for all $i \in S$ and {\em not} received at the receivers $Y_j$ with $j\not\in S $, i.e., $Y_i= \{V_S : S \in \mathsf{W}_i^{\mathsf{P}} \} \triangleq V_{\mathsf{W}_i^{\mathsf{P}}} $.

\begin{figure}
\centering
\begin{tikzpicture}
%----------------------------------------------------------------

%\node  (msg) at (0,4) {$W_1W_2$};

%\node  (X) at (2.5,2.5) {$X=[V_1V_2V_3V_{12}V_{13}V_{23}V_{123}]$};

\node  (s) at (0,4)  [circle,draw] {$X$};
\node (v1) at (-3.5,1)   [circle,draw] {$V_{1}$};
\node (v2) at (-2.3,1)    [circle,draw] {$V_{2}$};
\node (v3) at (-1.1,1)    [circle,draw] {$V_{3}$};
\node (v12) at (-0,1)    [circle,draw] {$V_{12}$};
\node (v13) at (1.2,1)    [circle,draw] {$V_{13}$};
\node (v23) at (2.5,1)    [circle,draw] {$V_{23}$};
\node (v123) at (3.8,1)    [circle,draw] {$V_{123}$};

\node (y1) at (-1.5,-1)    [circle,draw] {$Y_1$};
\node (y2) at (0,-1)    [circle,draw] {$Y_2$};
\node (y3) at (1.5,-1)    [circle,draw] {$Y_3$};

\node (c1) at (-1.9,2.7) {$C_{1}$};
\node (c2) at (-1.4,2.5) {$C_{2}$};
\node (c3) at (-.85,2.3) {$C_{3}$};

\node (c12) at (-.2,2.2) {$C_{12}$};
\node (c13) at (1,2.24) {$C_{13}$};
\node (c23) at (1.7,2.3) {$C_{23}$};

\node (c123) at (2.5,2.4) {$C_{123}$};

%\node  (y1hat) at (-2.4,-1.7) {$[V_1V_{12}V_{13}V_{123}]$};
%\node  (y3hat) at (-.0,-1.7) {$[V_2V_{23}V_{23}V_{123}]$};
%\node  (y3hat) at (2.4,-1.7) {$[V_3V_{13}V_{23}V_{123}]$};

% Connections 
%\draw [->] (msg) -- (s) ;
\draw [->] (s) to  (v1);
\draw [->] (s) to  (v2);
\draw [->] (s) to  (v3);
\draw [->] (s) to  (v12);
\draw [->] (s) to  (v13);
\draw [->] (s) to  (v23);
\draw [->] (s) to  (v123);
\draw[thick,dash dot] [->] (v1) to  (y1);
\draw[thick,dash dot] [->] (v2) to  (y2);
\draw[thick,dash dot] [->] (v3) to  (y3);
\draw[thick,dash dot] [->] (v12) to  (y1);
\draw[thick,dash dot] [->] (v12) to  (y2);
\draw[thick,dash dot] [->] (v13) to  (y1);
\draw[thick,dash dot] [->] (v13) to  (y3);
\draw[thick,dash dot] [->] (v23) to  (y2);
\draw[thick,dash dot] [->] (v23) to  (y3);
\draw[thick,dash dot] [->] (v123) to  (y1);
\draw[thick,dash dot] [->] (v123) to  (y2);
\draw[thick,dash dot] [->] (v123) to  (y3);

\end{tikzpicture}
\caption{A combination network with $7$ intermediate nodes and three receivers. The dark lines represent finite capacity links while the dashed lines represent infinite capacity links. The capacity of the dark line connecting the node $X$ to the node $V_S$ is $C_S$ per channel use for each $S\in \mathsf{P}$. For brevity, the source/destination nodes are denoted by their transmitted/received symbols and the intermediate nodes by their output symbols.
\label{Fig_K3_combination networks}
}
\end{figure}
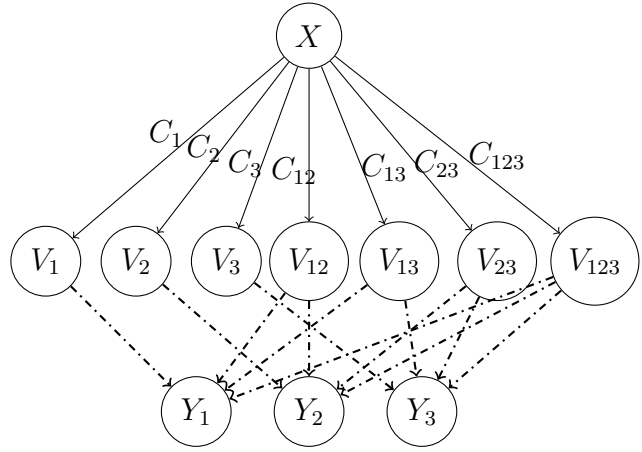
\subsection{Just Enough Order Theory}

We introduce ideas from order theory following the notation in \cite{davey2002introduction}. Any set equipped with an order is an ordered set. Let $P$ be such an ordered set and $Q$ be a subset of $P$. We say that $Q$ is 

\begin{enumerate}
\item an {\em up-set} if $x\in Q$, $y\in P$, and $y\geq x$ implies $y\in Q$.
\item a {\em down-set} if $x\in Q$, $y\in P$, and $y\leq x$ implies $y\in Q$.
\end{enumerate} Note that these two types of subsets are duals of each other, i.e., if $Q$ is a {\em down-set} then $P\backslash Q$ is an {\em up-set}. Moreover, for any subset $Q\subseteq P$, we define the smallest {\em down-set} containing $Q$ as $\downarrow_{P} Q = \{y \in P: y\leq x, x \in Q\}$ and the smallest {\em up-set} containing $Q$ as $\uparrow_{P} Q = \{y \in P: x\leq y, x \in Q\}$. Further, for any $Q_1,Q_2\subseteq P$, denote the part of the smallest {\em down-set} containing $Q_1$ that is also in $Q_2$, i.e., $ ( \downarrow_{P} Q_1 ) \cap Q_2$, as $\downarrow_{Q_2} Q_1 $. Similarly, $ ( \uparrow_{P} Q_1 ) \cap Q_2$, the smallest {\em up-set} of $Q_1$ that is in $Q_2$, is denoted as $ \uparrow_{Q_2} Q_1 $. Henceforth, for brevity, $\downarrow_{Q_2} Q_1 $ (or $ \uparrow_{Q_2} Q_1 $) is referred to as the down-set (or up-set, respectively) of $Q_1$ in $Q_2$.

Also, let $\mathcal{F}_{\downarrow}(P)$ denote the family of all down-sets of $P$ and $\mathcal{F}_{\uparrow}(P)$ denote the family of all up-sets of $P$. Finally, let $\mathcal{F}_{\uparrow_Q}(P)$ and $\mathcal{F}_{\downarrow_Q}(P)$ denote the family of all {\em up-sets} and all {\em down-sets} of $P$ that contain $Q$, respectively. 

In this paper, we will take the ground set to be a set of sets, such as the set of non-empty subsets of $[1:K]$, the receiver index set. We will denote a set of sets in sans-serif font to distinguish it from sets. The order on the ground set considered in this paper is exclusively that of set inclusion, i.e., $S_1\leq S_2$ if and only if $S_1\subseteq S_2$. %(more general orders called superposition orders are possible, see \cite{romero2017rate}). 
Recall that, for simplicity, we write the index set $\{i_1,i_2,..,i_N\}$ as $i_1i_2 \cdots i_N$ (adopting the convention that $i_1 < i_2 < \cdots < i_N$). To illustrate such notation, consider the example of $K=3$. The ground set in this case could be the set of all non-empty subsets of $[1:3]$, denoted as $\mathsf{P} = \{1,2,3,12,13,23,123\}$. The down-set of, say, $\{12\}$ in $\mathsf{P}$ is $\downarrow_{\mathsf{P}} \{12\}=\{1,2,12\}$, and the up-set of $\{12\}$ in $\mathsf{P}$ is $\uparrow_{\mathsf{P}} \{12\}= \{12,123\}$. For the same $\mathsf{P}$, we have $\uparrow_{\mathsf{W}_3^{\mathsf{P}}} \{12\}=\{\uparrow_{\mathsf{P}} \{12\}\}\cap \mathsf{W}_3^{\mathsf{P}}=\{123\}$, whereas $\downarrow_{\mathsf{W}_3^{\mathsf{P}}} \{12\}=\{\downarrow_{\mathsf{P}} \{12\}\}\cap \mathsf{W}_3^{\mathsf{P}}= \phi$. 
%Similarly, $\uparrow_{\mathsf{W}_3^{\mathsf{P}}} \{1\}=\{\uparrow_{\mathsf{P}} \{1\}\}\cap \mathsf{W}_3^{\mathsf{P}}=\{13,123\}$ and $\downarrow_{\mathsf{W}_3^{\mathsf{P}}} \{23\}=\{\downarrow_{\mathsf{P}} \{23\}\}\cap \mathsf{W}_3^{\mathsf{P}}=\{3,23\}$.

To illustrate families of up-sets and down-sets, consider the ground set $\mathsf{P}=\{1,2,12\}$. Then, $\mathcal{F}_{\downarrow}(\mathsf{P}) = \{ \{1\} , \{2\} , \{1,2\} , \{1,2,12\} \}$, while $\mathcal{F}_{\uparrow}(\mathsf{P}) = \{\{1,12\},\{2,12\},\{12\},\{1,2,12\}\}$. For the same $\mathsf{P}=\{1,2,12\}$, we have $\mathcal{F}_{\uparrow_{ \{1\} } }(\mathsf{P})=\{   \{1,12\},\{1,2,12\}   \}$ and $\mathcal{F}_{\downarrow_{ \{1\} } }(\mathsf{P})=\{   \{1\},\{1,2\},\{1,2,12\}   \}$.

%Order theory was introduced in network information theory by Romero and Varanasi in characterizing the capacity regions of the DM and vector Gaussian multiple access channels with general message sets in \cite{romero2017unifying,romero2017vector}. It was later also used by the same authors to describe a very general inner bound based on rate-splitting and superposition coding for the DM BC with general message sets in \cite{romero2017rate}.

\begin{lemma}
\label{Lemma_Sets}
The following relationships are true:
\begin{enumerate}
\item For any set $S=\{i_1,i_2,\cdots, i_N\} \subseteq \{1,2,\cdots ,K\}$, we have  
\begin{align}
\cup_{k\in S}\mathsf{W}_k^{\mathsf{P}}&= \uparrow_{\mathsf{P}}\{i_1,i_2,\cdots, i_N\}\label{Eq_Lemma_1}\\
\cap_{k\in S}\mathsf{W}_k^{\mathsf{P}}&= \uparrow_{\mathsf{P}}\{i_1i_2\cdots i_N\} \label{Eq_Lemma_2}
%%\cup_{S\in \mathsf{W}}\uparrow_{\mathsf{W}_i^{\mathsf{P}}}\{S\}&= \uparrow_{\mathsf{W}_i^{\mathsf{P}} }\{\cup_{S\in \mathsf{W}}\}   \enspace \forall \mathsf{W}\subseteq \mathsf{P} \\
%\cap_{S\in \mathsf{W}}\uparrow_{\mathsf{W}_i^{\mathsf{P}}}\{S\}&= \uparrow_{\mathsf{W}_i^{\mathsf{P}} }\{\cap_{S\in \mathsf{W}}\}   \enspace \forall \mathsf{W}\subseteq \mathsf{P} 
%for upset you need concatenation while for down set we need intersection.
\end{align}
\item For any set of sets $\mathsf{W}\subseteq \mathsf{P}$,
\begin{align}
\cup_{S\in \mathsf{W}}\downarrow_{\mathsf{W}_i^{\mathsf{P}}}\{S\}&= \downarrow_{\mathsf{W}_i^{\mathsf{P}} } \mathsf{W} \label{Eq_Lemma_3}\\
\cap_{S\in \mathsf{W}}\downarrow_{\mathsf{W}_i^{\mathsf{P}}}\{S\}&= \downarrow_{\mathsf{W}_i^{\mathsf{P}} }\{\cap_{S\in \mathsf{W}}S\} \label{Eq_Lemma_4}
\end{align}
%\item For any set $S=i_1i_2\cdots i_N \subseteq \{1,2,\cdots, K\}$ and its complement $\overline{S}$, 
%\begin{equation}
%\mathsf{W}_i^{\mathsf{P}}\backslash \cup_{j\in \overline{S}}\mathsf{W}_j^{\mathsf{P}}= \downarrow_{\mathsf{W}_i^{\mathsf{P}}}\{S\} \enspace \forall i\in S \label{Eq_Lemma_5}
%\end{equation}
%\item For any two sets of sets $\mathsf{W}_1 \subseteq \mathsf{P}$ and $\mathsf{W}_2\subseteq \mathsf{P}$,  
%\begin{equation}
%C_{\mathsf{W}_1}+C_{\mathsf{W}_2}=C_{\mathsf{W}_1\cup \mathsf{W}_2}+C_{\mathsf{W}_1\cap \mathsf{W}_2} \label{Eq_Lemma_6}
%\end{equation}
%where $C_{\mathsf{W}}= \sum_{S\in \mathsf{W}}C_S $ for any $\mathsf{W}\subseteq \mathsf{P}$.
\item For any set $S=i_1i_2\cdots i_N \subset \{1,2,\cdots, K\}$ and $i\in [1:K]$ 
\begin{align}
\downarrow_{\mathsf{W}_i^{\mathsf{P}}}\{\overline{i_1},\overline{i_2},\cdots, \overline{i_N}\} \cup \uparrow_{\mathsf{W}_i^\mathsf{P}}\{S\}&=\mathsf{W}_i^{\mathsf{P}} \label{Eq_Lemma_*}\\
\downarrow_{\mathsf{W}_i^{\mathsf{P}}}\{\overline{i_1},\overline{i_2},\cdots, \overline{i_N}\} \cap \uparrow_{\mathsf{W}_i^\mathsf{P}}\{S\}&=\phi \label{Eq_Lemma_**}\\
\downarrow_{\mathsf{W}_i^{\mathsf{P}}}\{\overline{S}\} \cup \uparrow_{\mathsf{W}_i^\mathsf{P}}\{i_1,i_2,\cdots ,i_N\}&=\mathsf{W}_i^{\mathsf{P}} \label{Eq_Lemma_***}\\
\downarrow_{\mathsf{W}_i^{\mathsf{P}}}\{\overline{S}\} \cap \uparrow_{\mathsf{W}_i^\mathsf{P}}\{i_1,i_2,\cdots ,i_N\}&=\phi
\label{Eq_Lemma_****}
\end{align}

\end{enumerate}
\end{lemma}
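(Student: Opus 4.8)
The plan is to reduce every identity to an explicit membership condition on a generic index $T \in \mathsf{P}$ and then compare conditions, so that each claim becomes a short exercise in Boolean set algebra. The only machinery needed is the unwinding of the two order operators straight from their definitions, $\uparrow_{\mathsf{P}} \mathsf{Q} = \{T \in \mathsf{P} : Q \subseteq T \text{ for some } Q \in \mathsf{Q}\}$ and $\downarrow_{\mathsf{P}} \mathsf{Q} = \{T \in \mathsf{P} : T \subseteq Q \text{ for some } Q \in \mathsf{Q}\}$, together with the observation that intersecting any family with $\mathsf{W}_i^{\mathsf{P}}$ merely appends the condition $i \in T$. I would record these once at the outset, and stress the notational point that drives items 1 and 3: a comma-separated argument such as $\{i_1, \ldots, i_N\}$ abbreviates the set of singletons $\{\{i_1\}, \ldots, \{i_N\}\}$, whereas $\{i_1 i_2 \cdots i_N\} = \{S\}$ and $\{\overline{S}\}$ are singleton families whose one member is the set $S$, respectively its complement $\overline{S}$.

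For item 1 I would expand both sides. The left of \eqref{Eq_Lemma_1} collects every $T$ meeting $S$, i.e. $\{T : T \cap S \neq \phi\}$, while the right, with the singleton family as argument, is $\{T : \{i_j\} \subseteq T \text{ for some } j\}$, the same set; the left of \eqref{Eq_Lemma_2} collects every $T$ containing all of $S$, i.e. $\{T : S \subseteq T\}$, which is exactly $\uparrow_{\mathsf{P}} \{S\}$. For item 2, identity \eqref{Eq_Lemma_3} follows from the distributivity of $\downarrow_{\mathsf{P}}$ over unions, namely $\downarrow_{\mathsf{P}} \mathsf{W} = \cup_{S \in \mathsf{W}} \downarrow_{\mathsf{P}} \{S\}$, after intersecting throughout with $\mathsf{W}_i^{\mathsf{P}}$ and distributing that intersection over the union; and \eqref{Eq_Lemma_4} follows because $\cap_{S \in \mathsf{W}} \{T : T \subseteq S\} = \{T : T \subseteq \cap_{S \in \mathsf{W}} S\}$, so appending $i \in T$ to both sides gives the claim.

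For item 3 I would note that all four identities assert the same structural fact, that a down-set and an up-set partition $\mathsf{W}_i^{\mathsf{P}}$. Hence it suffices to verify that their two defining predicates are logical negations of each other within $\mathsf{W}_i^{\mathsf{P}}$, which yields the union ($= \mathsf{W}_i^{\mathsf{P}}$) and the intersection ($= \phi$) simultaneously. Unwinding, $\downarrow_{\mathsf{W}_i^{\mathsf{P}}} \{\overline{i_1}, \ldots, \overline{i_N}\} = \{T \in \mathsf{W}_i^{\mathsf{P}} : T \subseteq \overline{i_j} \text{ for some } j\} = \{T \in \mathsf{W}_i^{\mathsf{P}} : S \not\subseteq T\}$, which is complementary in $\mathsf{W}_i^{\mathsf{P}}$ to $\uparrow_{\mathsf{W}_i^{\mathsf{P}}} \{S\} = \{T \in \mathsf{W}_i^{\mathsf{P}} : S \subseteq T\}$ (the latter being exactly item 1 intersected with $\mathsf{W}_i^{\mathsf{P}}$); this settles \eqref{Eq_Lemma_*} and \eqref{Eq_Lemma_**}. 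Symmetrically, $\downarrow_{\mathsf{W}_i^{\mathsf{P}}} \{\overline{S}\} = \{T \in \mathsf{W}_i^{\mathsf{P}} : T \cap S = \phi\}$ is complementary to $\uparrow_{\mathsf{W}_i^{\mathsf{P}}} \{i_1, \ldots, i_N\} = \{T \in \mathsf{W}_i^{\mathsf{P}} : T \cap S \neq \phi\}$, settling \eqref{Eq_Lemma_***} and \eqref{Eq_Lemma_****}.

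Since no step carries real analytic difficulty, the principal obstacle is bookkeeping rather than mathematics: keeping the four kinds of operator argument distinct (singletons, the set $S$ itself, complements of singletons, and $\overline{S}$), and handling the degenerate case of \eqref{Eq_Lemma_4} in which $\cap_{S \in \mathsf{W}} S = \phi$ falls outside $\mathsf{P}$. I would dispatch the latter by adopting the natural reading $\{T \in \mathsf{W}_i^{\mathsf{P}} : T \subseteq \phi\} = \phi$, under which both sides of \eqref{Eq_Lemma_4} are empty and the identity persists.
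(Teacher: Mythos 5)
Your proof is correct, and it takes a genuinely different route from the paper's. The paper dismisses items 1, 2 and the two disjointness identities \eqref{Eq_Lemma_**}, \eqref{Eq_Lemma_****} as straightforward, and proves only the two union identities \eqref{Eq_Lemma_*} and \eqref{Eq_Lemma_***} in an appendix, via a chain of set decompositions: it splits $\mathsf{W}_i^{\mathsf{P}}$ along the up-set $\uparrow_{\mathsf{W}_i^\mathsf{P}}\{S\}$, rewrites $\mathsf{W}_i^{\mathsf{P}}$ as the top element together with the down-set of the $K$ coatoms $\overline{1},\ldots,\overline{K}$, prunes the coatoms absorbed by the up-set, and finally invokes the disjointness identity to discard a residual set difference \textemdash{} so the ``hard'' identities there actually lean on the ``easy'' ones. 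You instead reduce every identity to a membership predicate on a generic $T\in\mathsf{P}$ and check Boolean complementarity, which buys three things: all six identities are handled by one uniform mechanism rather than a case split into trivial and appendix-worthy; each union identity and its companion disjointness identity fall out \emph{simultaneously} from the observation that $S\subseteq T$ and $S\not\subseteq T$ (respectively $T\cap S=\phi$ and $T\cap S\neq\phi$) are negations of each other within $\mathsf{W}_i^{\mathsf{P}}$; and you explicitly resolve the degenerate case of \eqref{Eq_Lemma_4} where $\cap_{S\in\mathsf{W}}S=\phi$ lies outside $\mathsf{P}$, which the paper passes over in silence. Your up-front disambiguation of the paper's notation \textemdash{} that $\{i_1,\ldots,i_N\}$ as an operator argument is a family of singletons while $\{i_1i_2\cdots i_N\}$ and $\{\overline{S}\}$ are singleton families \textemdash{} is also exactly the right reading and is what makes \eqref{Eq_Lemma_1} versus \eqref{Eq_Lemma_2} come out differently. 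The paper's appendix argument is closer in spirit to the set-algebraic manipulations used later in its converse proofs, which is presumably why it is phrased that way, but as a proof of the lemma itself your predicate argument is the more economical and self-contained one.
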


\begin{proof}
The proofs of all the above equalities are straightforward given the order theoretic definitions except that of \eqref{Eq_Lemma_*} and \eqref{Eq_Lemma_***}, which are given in Appendix \ref{App_Prove_Lemma_Sets}.
\end{proof}

\section{A Class of Achievable Rate Regions and Some Converses: Two  Messages}
\label{Sec_General_two}

%{\color{red} The symbols $C_1$ and $C_2$ for numbers of common receivers clash with the parameters $C_1$ and $C_2$ of the combination network.}

This paper is devoted to the problem of sending two groupcast messages over the $K$-receiver DM-BC. Let the two general messages be $M_{S_1}$ and $ M_{S_2}$, so that the message index set is $\mathsf{E}=\{S_1,S_2\}$. Without loss of generality, we let $S_1=\{1, 2 , \cdots , P, P+1, \cdots , P+L_1\}$ and $S_2= \{1, 2, \cdots, P, P+L_1+1,\cdots ,P+L_1+L_2\}$. The set of indices of receivers that decode both messages is denoted as $S_p=\{1,2,\cdots P\}$, that decode only $M_{S_1}$ is denoted by $S_{l_1}=\{P+1,P+2,\cdots , P+L_1\}$, and that decode only $M_{S_2}$ is denoted by $S_{l_2}=\{P+L_1+1, P+L_1+2, \cdots , P+L_1+L_2\}$. The $P$ receivers with indices in $S_p$ can be thought of as private receivers, the $L_1$ receivers with indices in $S_{l_1}$ can be thought of as the first group of common receivers that decode only $M_{S_1}$, and the $L_2$ receivers with indices in $S_{l_2}$ can be thought of as the second group of common receivers that decode only $M_{S_2}$. Obviously, $P+L_1+L_2{=}K$. %When $P=0$, the two groupcast messages are desired by disjoint groups of receivers.

Of special interest for proving converses later in this paper are the three special cases of (a) two order-$(K-1)$ messages (i.e., messages that are intended at two distinct sets of $K-1$ of the $K$ users) so that $L_1=L_2=1$ and $P=K-2$ (b) one order-$(K-1)$ message and another order-$(P+1)$ message so that $L_1=K-P-1$ and $L_2=1$ for any $ 0 \leq P \leq K-2 $ and (c) two {\em nested} messages so that either $L_1=0$ or $L_2=0$.
  
Next, we obtain a class of new inner bounds for the $K$-user DM BC with two general messages.
%that extends the one proposed before in \cite{romero2017rate} for two messages.
We use order theory to describe our result. In particular, let $\mathsf{P}$, the set of all non-empty subsets of receiver indices $[1:K]$, be the ground set. As stated previously, we will think of $\mathsf{P}$ as an ordered set with the order relation defined by set inclusion, i.e., $ S \leq S'$ if and only if $S \subseteq S'$. Evidently, the message index set $ \mathsf{E} = \{ S_1, S_2\} \subset \mathsf{P}$. 

\begin{remark}[Message Set Expansion]
The class of inner bounds of the next theorem is parameterized by a message index superset $\mathsf{F}$ that is some superset of message indices that contains the message indices in $\mathsf{E}$ and can be as large as $\mathsf{P}$, so that $\mathsf{P}\supseteq \mathsf{F}\supseteq\mathsf{E}$. For each $K$ and each message set $\mathsf{E}$, every choice of the design parameter $\mathsf{F}$ in the next theorem gives a distinct inner bound. Note that there are $2^{2^K-3}$ choices of $\mathsf{F}$ for any two groupcast message set $\mathsf{E} = \{S_1, S_2\}$. We refer to expanding $\mathsf{E}$ to $\mathsf{F}$ simply as {\em message set expansion}. While the choice of $\mathsf{F}$ strongly determines the coding scheme employed and hence the rate region it achieves, both the description of the scheme and the characterization of its rate region can be done for any admissible $\mathsf{F}$ in a general way.
\end{remark}

%{\color{red} Note that I introduced the time-sharing random variable Q in the statement of the theorem. It needs to be introduced throughout the paper wherever needed (in a lot of places including all mutual information bounds). Where we specialize and don't intend to include time-sharing we should say we forego time-sharing by setting $Q$ to an uniformative constant. In some places we}

\begin{theorem}
\label{TH_genral_inner_General_msgs}
Let $\mathsf{F}$ be some message index superset so that $\mathsf{P}\supseteq \mathsf{F}\supseteq\mathsf{E}$. The rate pair $(R_{S_1}, R_{S_2})$ is achievable if there exist non-negative up-set split rates ($R_{S\rightarrow S'} :S\in \mathsf{E}, S' \in \mathsf{F}, S\subseteq S^{'}$) such that for each $i \in \{1,2\}$
\begin{equation}
\label{Eq_RS_upset_splittin}
R_{S_i}=\sum_{S^{'}\in \uparrow_{\mathsf{F}} S_i }R_{S_i \rightarrow S^{'}} 
%\qquad S\in \mathsf{E}
\end{equation}
and reconstruction rates 
\begin{equation}
\label{Eq_reconstruction_rates_upset_splitting}
\hat{R}_{S^{'}}= \sum_{S\in \downarrow_{\mathsf{E}} S^{'}}R_{S\rightarrow S^{'}} \qquad \forall S^{'}\in \mathsf{F}
\end{equation}
that satisfy the inequalities
\begin{align}
&\sum_{S^{'}\in \mathsf{B}}  \hat{R}_{S^{'}} \leq I(U_{\mathsf{B}}; Y_j|U_{\mathsf{W}_j^{\mathsf{F}}\backslash \mathsf{B} }, Q), \; \forall \mathsf{B}\in \mathcal{F}_{\downarrow}(\mathsf{W}_j^{\mathsf{F}}), \; \forall  j\in S_p \label{Eq_condition_reconstruction_rates1}
\end{align}
and, for each $ i \in \{1,2\}$, the inequalities
\begin{align}
\sum_{S^{'}\in \mathsf{B}}  \hat{R}_{S^{'}} \leq I(U_{\mathsf{B}}; Y_j|U_{\mathsf{W}_j^{\mathsf{F}}\backslash \mathsf{B}   }, Q), \; \forall \mathsf{B}\in \mathcal{F}_{\downarrow_{   \{S_i\}  }  }(\mathsf{W}_j^{\mathsf{F}}), \; \forall j\in S_{l_i} 
\label{Eq_condition_reconstruction_rates2}
\end{align}
%&\sum_{S^{'}\in \mathsf{B}}  \hat{R}_{S^{'}} \leq I(U_{\mathsf{B}}; Y_j|U_{\mathsf{W}_j^{\mathsf{F}}\backslash \mathsf{B}   }) \quad \forall \mathsf{B}\in \mathcal{F}_{\downarrow_{   \{S_2\}   }  }(\mathsf{W}_j^{\mathsf{F}}),j\in S_{c_2} 
%\label{Eq_condition_reconstruction_rates3}
%\end{align}
for some time sharing and auxiliary random variables $Q$ and $U_{\mathsf{F}} \triangleq \{U_S : S \in \mathsf{F} \}$ with a joint distribution that factors as $p(q, u_{\mathsf{F}})= p(q) \prod_{S\in \mathsf{F}}p(u_S|u_{(\uparrow_{\mathsf{F}} S)\backslash \{S \}},q)$ and $X$ taken to be a deterministic function of $(Q, U_{\mathsf{F}})$. 
%Note that the number of auxiliary random variables is $|\mathsf{F}|$
\end{theorem}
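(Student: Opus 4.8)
The plan is to prove achievability via a random coding argument structured around the message-set-expansion parameter $\mathsf{F}$ and the up-set ordering on it. The key idea is that the encoder generates one auxiliary codebook $U_S$ for each $S \in \mathsf{F}$, superposed according to the chain condition encoded in the factorization $p(q,u_{\mathsf{F}}) = p(q)\prod_{S\in\mathsf{F}}p(u_S \mid u_{(\uparrow_{\mathsf{F}} S)\backslash\{S\}}, q)$, so that $U_S$ is superposed on all the auxiliaries $U_{S'}$ with $S' \supsetneq S$ (its strict up-set in $\mathsf{F}$). \emph{Up-set message splitting} then means each message $M_{S_i}$ is split into independent parts $M_{S_i \to S'}$ indexed by $S' \in\, \uparrow_{\mathsf{F}} S_i$, as in \eqref{Eq_RS_upset_splittin}, and each reconstructed index $S'\in\mathsf{F}$ is assigned a composite codeword carrying the aggregate rate $\hat R_{S'}$ from \eqref{Eq_reconstruction_rates_upset_splitting}, namely the sum of all split parts flowing into $S'$ from below.

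First I would describe the \textbf{codebook generation}: fix $p(q)\prod_S p(u_S\mid u_{(\uparrow_{\mathsf{F}}S)\backslash\{S\}},q)$, generate $q^n\sim\prod p(q_t)$, and generate the codewords $u_S^n$ for $S\in\mathsf{F}$ \emph{in decreasing order of the inclusion partial order} (largest sets first), so that each $u_S^n(\cdots)$ is drawn conditioned on the already-generated $u_{S'}^n$ for $S'\in(\uparrow_{\mathsf{F}}S)\backslash\{S\}$, with $2^{n\hat R_S}$ codewords in each superposition branch. Then $X^n$ is the deterministic function of $(Q^n,U_{\mathsf{F}}^n)$ applied letterwise. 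For \textbf{encoding}, the split indices $\{m_{S_i\to S'}\}$ determine, for each $S'\in\mathsf{F}$, a reconstructed index that selects $u_{S'}^n$ within its branch; the consistency of this assignment is exactly what \eqref{Eq_RS_upset_splittin} and \eqref{Eq_reconstruction_rates_upset_splitting} guarantee, using the set identities of Lemma~\ref{Lemma_Sets} (in particular that the split rates reassemble correctly along up-sets and down-sets).

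The crux is the \textbf{decoding and error analysis}. At a private receiver $j\in S_p$, which wants both messages, I would use joint unique decoding: it looks for the unique jointly typical tuple $(u_S^n : S\in\mathsf{W}_j^{\mathsf{F}})$ — i.e. all auxiliaries carrying information it can observe. Bounding the probability of a decoding error where the erroneous tuple disagrees with the truth exactly on a set of indices reduces, via the packing lemma and the superposition structure, to requiring that for each down-set $\mathsf{B}\in\mathcal{F}_\downarrow(\mathsf{W}_j^{\mathsf{F}})$ the aggregate rate $\sum_{S'\in\mathsf{B}}\hat R_{S'}$ not exceed $I(U_{\mathsf{B}};Y_j\mid U_{\mathsf{W}_j^{\mathsf{F}}\backslash\mathsf{B}},Q)$ — precisely \eqref{Eq_condition_reconstruction_rates1}. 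The reason the error events are indexed by \emph{down-sets} is structural: if an auxiliary $U_{S'}$ is decoded in error, then every $U_{S''}$ superposed below it ($S''\subseteq S'$) must also be treated as erroneous, so the set of simultaneously-wrong indices is automatically a down-set, and conditioning on the correctly-decoded complement $\mathsf{W}_j^{\mathsf{F}}\backslash\mathsf{B}$ (an up-set) is consistent with the superposition order. At a common receiver $j\in S_{l_i}$, which wants only $M_{S_i}$, I would use \emph{non-unique} (indirect) decoding: it insists on correctly recovering only the auxiliaries in $\downarrow_{\{S_i\}}$ — those carrying its desired message — treating the others as interference to be decoded non-uniquely. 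This restricts the error events to down-sets $\mathsf{B}\in\mathcal{F}_{\downarrow_{\{S_i\}}}(\mathsf{W}_j^{\mathsf{F}})$ (down-sets that still contain the portion of $S_i$'s information), yielding exactly \eqref{Eq_condition_reconstruction_rates2}; the non-uniqueness is what lets us drop all the other inequalities and keep only those whose $\mathsf{B}$ reaches up to $S_i$.

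The main obstacle, and where I would spend the most care, is verifying that the mutual-information bounds produced by the packing lemma collapse to exactly the stated down-set form under the \emph{general} (non-chain) superposition order on $\mathsf{F}$. Unlike a single nested chain, $\mathsf{F}$ is an arbitrary poset, so multiple auxiliaries may be superposed on overlapping (but incomparable) parents; I must confirm that for each candidate error pattern the correct conditioning set is the complementary up-set $\mathsf{W}_j^{\mathsf{F}}\backslash\mathsf{B}$ and that the joint typicality exponent is the claimed conditional mutual information $I(U_{\mathsf{B}};Y_j\mid U_{\mathsf{W}_j^{\mathsf{F}}\backslash\mathsf{B}},Q)$, rather than something larger. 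This is where Lemma~\ref{Lemma_Sets}, together with the fact that down-sets are closed under the superposition relation, does the essential bookkeeping, ensuring the codeword-generation conditioning and the decoder's conditioning match so that the standard independence-of-erroneous-codewords argument applies branch by branch. Once the error events are correctly indexed by (restricted) down-sets and the matching conditional-typicality bounds are established, the remainder is routine: taking $n\to\infty$, invoking \eqref{Eq_RS_upset_splittin}–\eqref{Eq_reconstruction_rates_upset_splitting} to translate reconstruction-rate feasibility back into achievability of $(R_{S_1},R_{S_2})$, and concluding that any rate pair satisfying \eqref{Eq_condition_reconstruction_rates1}–\eqref{Eq_condition_reconstruction_rates2} for some admissible $(Q,U_{\mathsf{F}})$ is achievable.
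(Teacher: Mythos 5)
Your proposal is correct and follows essentially the same route as the paper's proof in Appendix~\ref{App_Prove_General_msgs}: up-set splitting and reconstruction, superposition codebooks generated in non-increasing inclusion order, unique joint decoding at private receivers with error events indexed by down-sets of $\mathsf{W}_j^{\mathsf{F}}$ (the paper delegates this packing analysis to \cite{romero2017rate}, which you instead sketch directly), and non-unique decoding at common receivers obtained by discarding every inequality whose down-set does not contain $S_i$. One small imprecision: the auxiliaries a common receiver $j\in S_{l_i}$ must recover correctly are those indexed by $\uparrow_{\mathsf{W}_j^{\mathsf{F}}}\{S_i\}$, not ``$\downarrow_{\{S_i\}}$'' as you wrote, though your subsequent identification of the relevant error-event family $\mathcal{F}_{\downarrow_{\{S_i\}}}(\mathsf{W}_j^{\mathsf{F}})$ is the right one.
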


\begin{proof}
A detailed proof is given in Appendix \ref{App_Prove_General_msgs}. We only provide an outline here.
%\textcolor{blue}{Original TEXT: Each of the two messages $(M_S:S \in \mathsf{E})$ is divided into a collection of sub-messages, $( M_{S\rightarrow S'} :  S' \in \mathsf{F}, S \subseteq S')$, for each $S\in \mathsf{E}$ as per \eqref{Eq_RS_upset_splittin}. This form of rate-splitting is called up-set rate splitting in \cite{romero2017rate}  which considers general message sets.} 
Each message $M_S$ is divided into a collection of sub-messages $ M_{S\rightarrow S'}$ where $S' \in \uparrow_{ \mathsf{F}} S$, for each $S\in \mathsf{E}$ as per \eqref{Eq_RS_upset_splittin}. This explains why this form of rate-splitting is called up-set rate splitting.
%in \cite{romero2017rate} which considers general message sets. 
The sub-message $M_{S\rightarrow S^{'}}$ will be treated as if it was intended for the receivers in the larger set $S^{'}$ instead of in $S$. %Such splitting decomposes the message rates $R_S:S\in \mathsf{E}$ as in \eqref{Eq_RS_upset_splittin}. 
By reassembling all sub-messages intended for the set of receivers with indices in $S'$, we obtain the reconstructed message $\hat{M}_{S^{'}}=(M_{S\rightarrow S^{'}}:S\in  \downarrow_{\mathsf{E}} S^{'})$ for all $S^{'}\in \mathsf{F}$ with rate $\hat{R}_{S^{'}}$ given by \eqref{Eq_reconstruction_rates_upset_splitting}. Note that the message set expansion parameter $\mathsf{F}$ determines how the message splitting is done and hence how the newly reconstituted messages $ M_{S\rightarrow S'} $ for all $ S^{'}\in \mathsf{F}$ are reconstructed, which in turn determines the superposition coding scheme described next.
%We will refer to the expansion of the original message index set $\mathsf{E}$ to that of reconstructed message index set $\mathsf{F}$ (via message splitting and reconstruction) succinctly as message set expansion. 
The set of reconstructed messages with indices in $\mathsf{F}$ are encoded using superposition coding with dependent auxiliary codeword generation according to the subset inclusion order (cf. \cite{romero2017unifying}) as described in Appendix \ref{App_Prove_General_msgs}. Private receiver $Y_j$ jointly decodes the desired messages $M_{S_1}$ and $M_{S_2}$ %indexed by $\mathsf{W}_j^{\mathsf{E}}=\{S_1,S_2\}$ 
via the {\em unique} joint decoding of the set of reconstructed messages $( \hat{M}_S : S \in \mathsf{W}_j^{\mathsf{F}} )$ that contain those two messages for every $j\in S_p$. As shown in Appendix \ref{App_Prove_General_msgs}, the reconstructed messages can be reliably transmitted over the DM BC if the partial sums of the reconstructed message rates satisfy the inequalities given by \eqref{Eq_condition_reconstruction_rates1}.
On the other hand, the common receiver $Y_j$ (with $j\in S_{l_i} $, $i \in \{1,2\}$) only needs to decode the message $M_{S_{i}}$. 
%one of the two transmitted messages indexed by $\mathsf{W}_j^{\mathsf{E}}$ (note that $\mathsf{W}_j^{\mathsf{E}}=\{S_1\}$ for $j\in S_{c_1}$ and $\mathsf{W}_j^{\mathsf{E}}=\{S_2\}$ for $ j\in S_{c_2}$), 
Hence, non-unique decoding can be employed by these receivers. Note that for each $j\in S_{l_i} $, the reconstructed messages $( \hat{M}_S : S \in \mathsf{W}_j^{\mathsf{F}} )$ contain the desired message $M_{S_{i}}$ as well as partial interference via up-set message splitting and reconstruction. Thus, among these reconstructed messages, only the ones with indices in $\uparrow_{\mathsf{F}} \mathsf{W}_j^{\mathsf{E}}$  are uniquely decoded because per \eqref{Eq_RS_upset_splittin} and \eqref{Eq_reconstruction_rates_upset_splitting}, all such reconstructed messages are needed to reconstitute this receiver's desired message $M_{S_{i}}$, whereas the rest of the reconstructed messages do not, and these messages are hence decoded non-uniquely. This happens successfully with high probability if the partial sums of the reconstructed message rates satisfy the inequalities given by \eqref{Eq_condition_reconstruction_rates2}.
%and \eqref{Eq_condition_reconstruction_rates3}. 
\end{proof}

%\begin{remark}
%Note that a common receiver $Y_j$, which needs to decode only one message, is required to uniquely decode all reconstructed messages with indices in $\uparrow_{\mathsf{F}} \mathsf{W}_j^{\mathsf{E}}$ because per \eqref{Eq_RS_upset_splittin} and \eqref{Eq_reconstruction_rates_upset_splitting}, all such reconstructed messages are needed to reconstitute this receiver's desired message. 
%\end{remark}

\begin{remark}[Non-unique decoding]
In the conference paper \cite[Theorem 2]{romero2017rate} by Romero and Varanasi, an inner bound for a {\em general} message set $\mathsf{E}$ was proposed that used the same encoding scheme but with %a different decoding strategy. In particular, 
each receiver $Y_j$ {\em uniquely} decoding all reconstructed messages with indices in $\mathsf{W}_j^{\mathsf{F}}$ for all $j$. Hence, a common receiver may end up decoding some reconstructed messages that only contain sub-messages of the message that is not of interest to it which in turn produces more inequalities on the reconstruction rates, hence possibly strictly limiting the achievable rate region compared to the one given in Theorem \ref{TH_genral_inner_General_msgs}. %In Theorem \ref{TH_genral_inner_General_msgs}, we avoid this by employing non-unique decoding at the common receivers $Y_j$, i.e., for $j\in S_{l_1}\cup S_{l_2}$, instead. 
The idea of non-unique decoding was first presented in the context of the two-user interference channel in \cite{chong2006comparison} and later used as indirect decoding in \cite{nair2009capacity} for the DM BC with the nested message set $\mathsf{E}=\{1,123\}$ to make notable progress on establishing a matching converse for a class of DM BCs. In both those cases, each receiver that does non-unique decoding decodes its desired message(s) uniquely and the common sub-message of the undesired message non-uniquely. That idea can be seen to be generalized here as follows: each common receiver decodes (a) a subset of reconstructed messages intended for it uniquely, each necessarily including one or more sub-message of the desired message and possibly also some sub-message(s) of the undesired message and (b) a subset of reconstructed messages that do not include any sub-message of the desired message non-uniquely. Moreover, it is interesting, thanks to the order-theoretic formulation, that a general expression \eqref{Eq_condition_reconstruction_rates2} for the achievable rate region is possible even after incorporating such decoding at the common receivers. 
\end{remark}

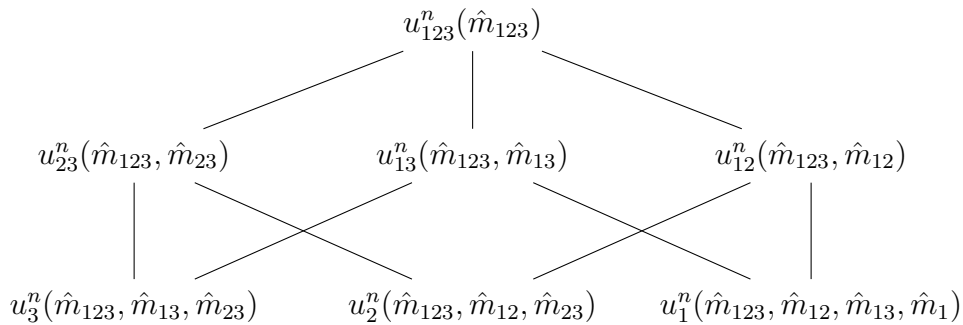
\begin{figure}[t]
\centering
\begin{tikzpicture}
\node (u123) at (0,1.75)  {$u_{123}^n(\hat{m}_{123})$};
\node (u12) at (4.5,0)  {$u_{12}^n(\hat{m}_{123},\hat{m}_{12})$};
\node (u13) at (0,0)  {$u_{13}^n(\hat{m}_{123},\hat{m}_{13})$};
\node (u23) at (-4.5,0)  {$u_{23}^n(\hat{m}_{123},\hat{m}_{23})$};
\node (u1) at (4.5,-2)  { $ u_1^n(\hat{m}_{123},\hat{m}_{12},\hat{m}_{13},\hat{m}_{1})$};
\node (u2) at (0,-2)  {$  u_2^n(\hat{m}_{123},\hat{m}_{12},\hat{m}_{23})$};
\node (u3) at (-4.5,-2)  {$ u_3^n(\hat{m}_{123},\hat{m}_{13},\hat{m}_{23})$};

\draw [-] (u123) -- (u12);
\draw [-] (u123) -- (u13);
\draw [-] (u123) -- (u23);
\draw [-] (u1) -- (u12);
\draw [-] (u1) -- (u13);
\draw [-] (u2) -- (u12);
\draw [-] (u2) -- (u23);
\draw [-] (u3) -- (u13);
\draw [-] (u3) -- (u23);

%\tiny \scriptsize \footnotesize \small \normalsize (default) \large \Large (capital "L") \LARGE (all caps) \huge \Huge (capital "H")

\end{tikzpicture}
\caption{A Hasse diagram for the coding scheme for Example \ref{Example_P_0_C1_1_C2_2} where the message index superset $\mathsf{F}{=}\mathsf{P}$. Each line represents superposition coding with codebooks generated top to bottom, i.e., we first generate the codebook $U_{123}$ (whose codewords are ``cloud centers") that represents $\hat{M}_{123}$. Then, using superposition coding, we conditionally independently generate the codebooks $U_{12},U_{13}$ and $U_{23}$ (conditioned on the $U_{123}$ cloud centers) that represent $\hat{M}_{12},\hat{M}_{13}$ and $\hat{M}_{23}$, respectively. The corresponding codewords form three primary satellite codebooks for each cloud center. Finally, for each $U_{12}$ and $U_{13}$ satellite codeword pair, we generate the secondary satellite $U_1$ codebook that represents the message $\hat{M}_{1}$. Moreover, for each $U_{12}$ and $U_{23}$ satellite codeword pair, we generate a single $U_2$ codeword dependent on that pair (since $\hat{R}_2 {=} 0$). Similarly, for each $U_{13}$ and $U_{23}$ satellite codeword pair, we generate a single $U_3$ codeword dependent on that pair (since $\hat{R}_3 {=} 0$). See Remarks \ref{rem-largeF} and \ref{remarkzerorate} for further discussion of this point.
\label{Fig_P_0_C1_1_C2_2}} 
\end{figure}

Next, we provide a simple example to illustrate  Theorem \ref{TH_genral_inner_General_msgs}.
\begin{example}
\label{Example_P_0_C1_1_C2_2}
Consider the case $K=3$ and $\mathsf{E}=\{1,23\}$ so that $P=0$, $L_1=1$, and $L_2=2$ and $S_p=\phi$,  $S_{l_1}=\{1\}$, $S_{l_2}=\{2,3\}$. Choose the largest possible message index superset $\mathsf{F}{=}\mathsf{P}$. 
%It is clear that choosing a bigger $\mathsf{F}$ can not reduce the inner bound, and hence, this choice is always among the best choices of $\mathsf{F}$. 
Up-set message splitting described in the proof of Theorem \ref{TH_genral_inner_General_msgs} yields $M_1 {=} ( M_{1\rightarrow 1}, M_{1\rightarrow 12}, M_{1\rightarrow 13}, M_{1\rightarrow 123} )$ and $M_{23} {=} (M_{23 \rightarrow 23}, M_{23 \rightarrow 123})$ with split rates defined according to \eqref{Eq_RS_upset_splittin}. The reconstructed messages and their rates as per \eqref{Eq_reconstruction_rates_upset_splitting} are given as 
\begin{align*}
&\hat{M}_1=M_{1\rightarrow 1}  &&\hat{R}_1=R_{1\rightarrow 1}\\
&\hat{M}_2=\phi  &&\hat{R}_2=0\\
&\hat{M}_3=\phi  &&\hat{R}_3=0\\
&\hat{M}_{12}=M_{1\rightarrow 12}  &&\hat{R}_{12}=R_{1\rightarrow 12}\\
&\hat{M}_{13}=M_{1\rightarrow 13}  &&\hat{R}_{13}=R_{1\rightarrow 13}\\
&\hat{M}_{23}=M_{23\rightarrow 23}  &&\hat{R}_{23}=R_{23\rightarrow 23}\\
&\hat{M}_{123}=(M_{1\rightarrow 123},M_{23\rightarrow 123})  &&\hat{R}_{123}=R_{1\rightarrow 123}+R_{23\rightarrow 123}
\end{align*}
%{\color{blue} This figure needs a somewhat detailed explanation including specifics of rate-splitting and message reconstruction and superposition coding (using subset inclusion as order) because its the first time superposition coding is being explained {\em in this paper} using order theory and Hasse diagrams.} 
The resulting rate-splitting/superposition coding scheme described in the proof of Theorem \ref{TH_genral_inner_General_msgs} is illustrated in Fig. \ref{Fig_P_0_C1_1_C2_2} with the specifics explained in its caption. The key point to note is the generation of the $U_2$ and $U_3$ codewords towards the end of that caption which has to do with $\hat{R}_2 {=} 0$ and $\hat{R}_3 {=} 0$. For every pair of $U_{12}$ and $U_{23}$ codewords there is a single $U_2$ codeword and similarly for every pair of $U_{13}$ and $U_{23}$ codewords there is a single $U_3$ codeword. If one uses the well known ``cloud and satellite" symbolism, it is as if each $U_2$ (and $U_3$) codeword were a single satellite codeword for every pair of $U_{12}$ and $U_{23}$ (and $U_{13}$ and $U_{23}$, resp.) cloud codewords. Whereas, the usual picture one has is quite the contrary in that for every cloud codeword there are many satellite codewords. Note that this latter and usual symbolism works for all other codewords (i.e., other than $U_2$ and $U_3$ codewords). It is no accident that it is precisely the case that only $\hat{R}_2 {=} 0$ and $\hat{R}_3 {=} 0$, which in turn is a consequence of having chosen the message index superset $\mathsf{F}$ to be equal to $\mathsf{P}$. This phenomenon of zero-rate codebooks superposed over other codebooks has crucially important consequences that we will explain further in this example and in more depth in much of the paper later.

From the conditions for reliable communication of the messages at their desired destinations given in \eqref{Eq_condition_reconstruction_rates2} (note that  \eqref{Eq_condition_reconstruction_rates1} is vacuous since $S_p {=} \phi $ in this example) of Theorem \ref{TH_genral_inner_General_msgs}, we get that the reconstructed message rates must satisfy the inequalities
\begin{align}
\hat{R}_{123}+\hat{R}_{13}+\hat{R}_{12}+\hat{R}_{1}\leq &I(U_{123},U_{13},U_{12},U_{1};Y_1 |Q)\nonumber \\
\hat{R}_{13}+\hat{R}_{12}+\hat{R}_{1}\leq &I(U_{13},U_{12},U_{1};Y_1|U_{123},Q)\nonumber \\
\hat{R}_{12}+\hat{R}_{1}\leq &I(U_{12},U_{1};Y_1|U_{123},U_{13},Q)\nonumber \\
\hat{R}_{13}+\hat{R}_{1}\leq &I(U_{13},U_{1};Y_1|U_{123},U_{12},Q)\nonumber \\
\hat{R}_{1}\leq &I(U_{1};Y_1|U_{123},U_{13},U_{12},Q)\nonumber \\ \nonumber \\
\hat{R}_{123}+\hat{R}_{23}+\hat{R}_{12}\leq &I(U_{123},U_{23},U_{12},U_{2};Y_2|Q)\nonumber \\
\hat{R}_{23}+\hat{R}_{12}\leq &I(U_{23},U_{12},U_{2};Y_2|U_{123},Q)\nonumber \\
\hat{R}_{23}\leq &I(U_{23},U_{2};Y_2|U_{123},U_{12},Q)\nonumber \\ \nonumber \\
\hat{R}_{123}+\hat{R}_{23}+\hat{R}_{13}\leq &I(U_{123},U_{23},U_{13},U_{3};Y_3|Q)\nonumber \\
\hat{R}_{23}+\hat{R}_{13}\leq &I(U_{23},U_{13},U_{3};Y_3|U_{123},Q)\nonumber \\
\hat{R}_{23}\leq &I(U_{23},U_{3};Y_3|U_{123},U_{13},Q)\nonumber
\end{align}
for some $p(q) p(u_{123},q)p(u_{12}|u_{123},q)p(u_{13}|u_{123},q)$ $p(u_{23}|u_{123},q)$ $p(u_1|u_{12},u_{13},u_{123},q)$ $p(u_2|u_{12},u_{23},u_{123},q)$ $p(u_3|u_{13},u_{23},u_{123},q)$ are achievable. 
\end{example}

%\begin{remark}
%{\color{blue} It is not that important to point out that non-unique decoding leads to fewer inequalities; the fact that the general form can be given is more interesting.}
\begin{remark}[What's unusual in the rate region?]
The inequalities above may appear to be ``standard" for rate-splitting, superposition coding and non-unique 
decoding if one realizes that in order to get a satellite codeword right one has to get its cloud codeword right as well and hence one doesn't have sum rate inequalities for all possible sum rates (cf. \cite{romero2017unifying}). However, even though $\hat{R}_2 {=} 0$ and $\hat{R}_3 {=} 0$ the unusual, notice the unusual presence of the $U_2$ and $U_3$ auxiliary random variables.
\end{remark}

\begin{remark}[Perspective]
While the above reliability conditions may seem almost familiar (except for the presence of $U_2$ and $U_3$), it is important to note that such inequalities have been succinctly expressed in Theorem \ref{TH_genral_inner_General_msgs} in complete generality, i.e., for any $K$, any message set $\mathsf{E}=\{S_1, S_2\}$, and any choice of the message set expansion parameter $\mathsf{F}$. The order-theoretic formulation turns out to just the right one to express a result of its generality with such analytical brevity. Quite importantly, this brevity is key to further work with the result of Theorem \ref{TH_genral_inner_General_msgs} as we do in the rest of this paper.
%An equivalent but alternative description of Theorem \ref{TH_genral_inner_General_msgs} would be unwieldy and inelegant, and even worse, it would be difficult to work with going forward.
\end{remark}

As an aside, we note the simple point that the inner bound for the same example using the result in \cite[Theorem 2]{romero2017rate} would have the two additional inequalities
\begin{align*}
\hat{R}_{12} \leq I(U_{12},U_2;Y_2|U_{123},U_{23},Q)\\
\hat{R}_{13} \leq I(U_{13},U_3;Y_3|U_{123},U_{23},Q)
\end{align*}
because in that scheme %of \cite[Theorem 2]{romero2017rate}
receiver $Y_2$ uniquely decodes the undesired sub-message $M_{1\rightarrow 12} $ and receiver $Y_3$ uniquely decodes the undesired sub-message $ M_{1\rightarrow 13}$, whereas in the scheme of Theorem \ref{TH_genral_inner_General_msgs} those sub-messages are decoded non-uniquely at Receivers 2 and 3, respectively.

\begin{remark}[Is non-unique decoding necessary \cite{bidokhti2014non}? The answer is ``yes"]
In previous uses of non-unique decoding in \cite{chong2008han,nair2009capacity,salman2019capacity,bidokhti2014non} (see also the several references of \cite{bidokhti2014non}) it has been shown that the rate region for non-unique decoding taken as a union of admissible distributions is not strictly larger than that with unique decoding. Whether the use of non-unique decoding strictly enlarges the inner bound of Theorem \ref{TH_genral_inner_General_msgs} is not the focus of this work, although a strict enlargement of the rate region would be desirable. Rather, we have adopted non-unique decoding because (a) it expands the rate region per coding distribution (i.e., admissible distribution of the auxiliary random variables and encoding function) (b) it does not shrink the rate region taken as a union over all admissible coding distributions and (c) there are fewer inequalities leading to relative ease of doing projections via FME and (d) proving converses in some special cases becomes easier as a result as we show later since there are fewer inequalities in the rate region after FME. The points (b-d) have been articulated before in their respective contexts (cf. \cite{chong2008han,nair2009capacity,salman2019capacity,bidokhti2014non}), and as in those contexts, and even to a greater degree here, we realize the benefits (c-d) in performing FME more efficiently than would be otherwise possible with unique decoding (if FME would be possible at all with reasonable effort in the latter case) in obtaining Corollaries \ref{Colollary_general_indirect_C1_1_C2_1},
\ref{Colollary_nested_indirect_1commonRec}, and 
\ref{Colollary_nested_indirect_2commonRec}, and
proving converse results for the DM BC with certain message sets in Theorem \ref{Th-Capacity_Region_Two_K-1_Order_msgs} 
and Proposition 
\ref{Proposition_1Private_1OrderK-1_messages}, all of which are to follow. But it is also the first point (a) that we find very interesting and useful when we specialize Theorem \ref{TH_genral_inner_General_msgs} to the combination network, and choosing certain single coding distributions to obtain ``large" rate regions and even extremal ones in some cases leading to the capacity region as in Theorem \ref{Th_Capacity_combination_networks_C1_1_C2_1} and Propositions \ref{Th_Capacity_combination_networks_C1},  \ref{Th_Capacity_combination_networks_C2} and \ref{Proposition_Regenerate_Bidokhti_el_al} and Examples \ref{Example_Bidokhti_Example_K=6} and \ref{Example_Bidokhti_Example_K=7} to follow. 
Indeed, all of Section \ref{Sec:Smaller_F} is dedicated to the interesting task of finding simple coding schemes among the class of coding schemes proposed and analyzed in Theorem \ref{TH_genral_inner_General_msgs} that yet achieve the capacity region in several settings of message sets through clever choices of the coding distribution in each case. The success we achieve in Section \ref{Sec:Smaller_F} is due in no small part to non-unique decoding as well.
Finally, we note that
even if some of the aforementioned achievability and capacity results were possible with unique decoding we believe that the associated rate and capacity region descriptions would be less compact, and the proofs more laborious, in each case than the ones we obtain here.
%We will see later that the non-unique decoding employed in proving Theorem \ref{TH_genral_inner_General_msgs} is useful for simpler characterizations of the capacity region of the combination network for certain pairs of messages.
\end{remark}

\begin{remark}[Rate region versus complexity of the coding schemes]
\label{rem-largeF}
In Theorem \ref{TH_genral_inner_General_msgs}, for every possible message set expansion from $\mathsf{E}$ to $\mathsf{F}$ such that $\mathsf{P} {\supseteq }\mathsf{F} {\supseteq} \mathsf{E}$, and there are $2^{2^K-3}$ of them, we get a different achievable region which involves a different set of auxiliary random variables. Denote it as ${\cal R}(K,\mathsf{E},\mathsf{F})$ for brevity.
Expanding $\mathsf{F}$ leads to finer message splitting, and hence to the use of more auxiliary random variables/codebooks, and it therefore cannot reduce the achievable region. Hence, the full power of the coding scheme of Theorem \ref{TH_genral_inner_General_msgs} is realized by setting $\mathsf{F}{=}\mathsf{P}$. This fact is exploited extensively in Sections \ref{Sec_General_two}-\ref{Sec_Relation_Bidokhti}.
%{Sec_Capacity_combianation}
%{Sec_Two_nested}
More generally, if 
$\mathsf{F}_1 \supset \mathsf{F}_2 $ then ${\cal R}(K,\mathsf{E},\mathsf{F}_1) \supseteq {\cal R}(K,\mathsf{E},\mathsf{F}_2) $. 
%In other words, there is a partial order on the achievable rate regions with the order relation defined by subset inclusion of the 
Nevertheless, we prefer to leave $\mathsf{F}$ as a parameter to be chosen rather than replace it with $\mathsf{P} $ in Theorem \ref{TH_genral_inner_General_msgs} since a smaller $\mathsf{F}$ leads to a simpler coding scheme (with fewer random variables and hence fewer codebooks), and sometimes a specific such choice suffices to achieve capacity as we illustrate later in Section \ref{Sec:Smaller_F}.  
\end{remark}

\begin{remark}[Multiple codewords for some groups of sub-messages]
\label{remarkzerorate}
Interestingly, when we choose $\mathsf{F} \supset \uparrow_{\mathsf{P}}\mathsf{E}$ we get some zero reconstruction rates in \eqref{Eq_reconstruction_rates_upset_splitting} and this important point is explained in this remark. For example, when we chose $\mathsf{F}{=}\mathsf{P} \supset \uparrow_{\mathsf{P}}\mathsf{E}$ in Example \ref{Example_P_0_C1_1_C2_2}, we got two {\em zero} reconstruction rates, namely, $\hat{R}_2$ and $\hat{R}_3$, per \eqref{Eq_reconstruction_rates_upset_splitting}. 
%We can also see the same feature in Example \ref{Example_P_0_C1_1_C2_2} where we have two zero reconstruction rates $\hat{R}_2=\hat{R}_3=0$. 
This is reflected in Fig. \ref{Fig_P_0_C1_1_C2_2} which depicts the superposition coding scheme described in Appendix \ref{App_Prove_General_msgs} for
%(without coded time-sharing, for simplicity) 
Example \ref{Example_P_0_C1_1_C2_2}. In particular, the codewords $u_{2}^n(\hat{m}_{123},\hat{m}_{12},\hat{m}_{23})$ and $u_3^n(\hat{m}_{123},\hat{m}_{13},\hat{m}_{23})$ do not encode more messages than those already encoded in $u_{12}^n$, $u_{13}^n$, and $u_{23}^n$. As mentioned earlier, for every pair of codewords $u_{12}^n(\hat{m}_{123},\hat{m}_{12})$ and $u_{23}^n(\hat{m}_{123},\hat{m}_{23})$, we generate a {\em single} codeword $u_{2}^n(\hat{m}_{123},\hat{m}_{12},\hat{m}_{23})$ according to $\prod_{i=1}^n p(u_{2i}|u_{12i},u_{23i},u_{123i})$. Similarly, for every pair of codewords $u_{13}^n(\hat{m}_{123},\hat{m}_{13})$ and $u_{23}^n(\hat{m}_{123},\hat{m}_{23})$, we generate a {\em single} codeword $u_{3}^n(\hat{m}_{123},\hat{m}_{13},\hat{m}_{23})$ according to $\prod_{i=1}^n p(u_{3i}|u_{13i},u_{23i},u_{123i})$. However, since $\hat{R}_1\neq 0$, we generate $2^{n\hat{R}_1}$ codewords $u_{1}^n(\hat{m}_{123},\hat{m}_{13},\hat{m}_{12},\hat{m}_{1})$ for every pair of codewords $u_{12}^n(\hat{m}_{123},\hat{m}_{12})$ and $u_{13}^n(\hat{m}_{123},\hat{m}_{13})$.   Hence, in general, in the coding scheme of Theorem \ref{TH_genral_inner_General_msgs}, superposition coding is not only used to encode a message over other messages (satellites over cloud centers), but also to encode some messages multiple times using different distributions. This novel feature of generating a single satellite per one or more cloud centers will be present in general as long as we choose $\mathsf{F}$ such that $\mathsf{F}  \supset \uparrow_{\mathsf{P}} \mathsf{E}$.

\end{remark}

\subsection{Explicit polyhedral representation for the inner bound with $\mathsf{E}=\{\overline{K},\overline{K-1}\}$ }

For this case, we have $P{=}K-2$, $L_1{=}1$ and $L_2{=}1$. Hence, $S_p {=} \{1, \cdots , P\}, S_{l_1} {=}\{K-1\}, S_{l_2}{=} \{K\}$. We use Theorem \ref{TH_genral_inner_General_msgs} to get a polyhedral description of the inner bound by eliminating the split rates. Here, the message $M_{\overline{K}}$ is split into two parts via \eqref{Eq_RS_upset_splittin}, i.e., $M_{\overline{K}} {=} (M_{\overline{K} \rightarrow \overline{K}},M_{\overline{K}\rightarrow \overline{\phi}})$ while the other message $M_{\overline{K-1}}$ is split into $M_{\overline{K-1} \rightarrow \overline{K-1}}$ and $M_{\overline{K-1}\rightarrow \overline{\phi}}$. The polyhedral representation is presented in the next corollary.

\begin{corollary}
\label{Colollary_general_indirect_C1_1_C2_1}
An inner bound of $K$-user DM BC for the message index set $\mathsf{E}=\{\overline{K},\overline{K-1}\}$ is the set of rate pairs ($R_{\overline{K}},R_{\overline{K-1}}$) satisfying
\begin{align}
R_{\overline{K-1}} &\leq I(U_{\mathsf{W}_K^{\mathsf{P}}};Y_K|Q) \label{Eq_Corollary_indirect_C1_1_C2_1_1}\\
R_{\overline{K}} &\leq I(U_{\mathsf{W}_{K-1}^{\mathsf{P}}};Y_{K-1}|Q) \label{Eq_Corollary_indirect_C1_1_C2_1_11}\\
R_{\overline{K-1}}+R_{\overline{K}} &\leq  I(U_{\mathsf{W}_j^{\mathsf{P}}};Y_j|Q) \enspace \forall j\in  S_p  \label{Eq_Corollary_indirect_C1_1_C2_1_2}\\
R_{\overline{K-1}}+R_{\overline{K}}  &\leq   I(U_{  \downarrow_{\mathsf{W}_j^{\mathsf{P}}} \{\overline{K}\} };Y_j|U_{\mathsf{W}_j^{\mathsf{P}} \backslash  \downarrow_{\mathsf{W}_j^{\mathsf{P}}} \{\overline{K}\}  },Q)  + I(U_{\mathsf{W}_{K}^{\mathsf{P}}};Y_{K}|Q) \quad  \forall j\in S_p \cup \{ K-1\} \label{Eq_Corollary_indirect_C1_1_C2_1_3} \\
R_{\overline{K-1}}+R_{\overline{K}} &\leq I(U_{  \downarrow_{\mathsf{W}_j^{\mathsf{P}}} \{\overline{K-1}\} };Y_j|U_{\mathsf{W}_j^{\mathsf{P}} \backslash  \downarrow_{\mathsf{W}_j^{\mathsf{P}}} \{\overline{K-1}\}  },Q) + I(U_{\mathsf{W}_{K-1}^{\mathsf{P}}};Y_{K-1}|Q) \quad  \forall j\in S_p \cup \{ K\} \label{Eq_Corollary_indirect_C1_1_C2_1_4} \\
2R_{\overline{K-1}}+2R_{\overline{K}}&\leq I(U_{\downarrow_{\mathsf{W}_j^{\mathsf{P}}} \{\overline{K-1},\overline{K}\} };Y_j|U_{\mathsf{W}_j^{\mathsf{P}}\backslash \downarrow_{\mathsf{W}_j^{\mathsf{P}}} \{\overline{K-1},\overline{K}\} },Q)+I(U_{\mathsf{W}_{K}^{\mathsf{P}}};Y_{K}|Q)+ I(U_{\mathsf{W}_{K-1}^{\mathsf{P}}};Y_{K-1}|Q)\enspace \forall j\in  S_p \label{Eq_Corollary_indirect_C1_1_C2_1_5} 
\end{align} 
for some $p(q, u_{\mathsf{P}})= p(q) \prod_{S\in \mathsf{P}}p(u_S|u_{\uparrow_{\mathsf{P}} S\backslash \{ S \} },q)$ and $X$ as a deterministic function of $(Q, U_{\mathsf{P}})$.
\end{corollary}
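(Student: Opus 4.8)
The plan is to specialize Theorem~\ref{TH_genral_inner_General_msgs} to $\mathsf{E}=\{\overline{K},\overline{K-1}\}$ with $\mathsf{F}=\mathsf{P}$ and then project onto $(R_{\overline{K}},R_{\overline{K-1}})$ by Fourier--Motzkin elimination (FME). First I would make the up-set splitting explicit. Since $\uparrow_{\mathsf{P}}\overline{K}=\{\overline{K},\overline{\phi}\}$ and $\uparrow_{\mathsf{P}}\overline{K-1}=\{\overline{K-1},\overline{\phi}\}$, \eqref{Eq_RS_upset_splittin} splits each message into exactly two parts, and \eqref{Eq_reconstruction_rates_upset_splitting} leaves only three nonzero reconstruction rates: $\hat{R}_{\overline{K}}=R_{\overline{K}\rightarrow\overline{K}}$, $\hat{R}_{\overline{K-1}}=R_{\overline{K-1}\rightarrow\overline{K-1}}$ and $\hat{R}_{\overline{\phi}}=R_{\overline{K}\rightarrow\overline{\phi}}+R_{\overline{K-1}\rightarrow\overline{\phi}}$, while every other $\hat{R}_{S'}=0$ because $\downarrow_{\mathsf{E}}S'=\phi$. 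Writing $b:=R_{\overline{K}\rightarrow\overline{\phi}}$ and $d:=R_{\overline{K-1}\rightarrow\overline{\phi}}$, equation \eqref{Eq_RS_upset_splittin} gives $\hat{R}_{\overline{K}}=R_{\overline{K}}-b$, $\hat{R}_{\overline{K-1}}=R_{\overline{K-1}}-d$ and $\hat{R}_{\overline{\phi}}=b+d$, with nonnegativity of the split rates encoded as $0\le b\le R_{\overline{K}}$ and $0\le d\le R_{\overline{K-1}}$. Thus $b$ and $d$ are the only two variables to be eliminated.

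Next I would catalog the inequalities \eqref{Eq_condition_reconstruction_rates1}--\eqref{Eq_condition_reconstruction_rates2}. The left-hand side $\sum_{S'\in\mathsf{B}}\hat{R}_{S'}$ depends only on which of the three indices $\overline{K},\overline{K-1},\overline{\phi}$ lie in $\mathsf{B}$, since all other reconstruction rates vanish. The key reduction is a monotonicity fact, proved by the chain rule and nonnegativity of mutual information: for down-sets $\mathsf{B}_1\subseteq\mathsf{B}_2$ of $\mathsf{W}_j^{\mathsf{P}}$ one has $I(U_{\mathsf{B}_2};Y_j|U_{\mathsf{W}_j^{\mathsf{P}}\setminus\mathsf{B}_2},Q)\ge I(U_{\mathsf{B}_1};Y_j|U_{\mathsf{W}_j^{\mathsf{P}}\setminus\mathsf{B}_1},Q)$. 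Hence, among all down-sets carrying a fixed left-hand-side value, the smallest one---the down-closure of the relevant desired indices---yields the tightest (binding) bound on the split rates and all larger ones are redundant. This collapses the exponentially many down-set constraints to those generated by subsets of $\{\overline{K},\overline{K-1}\}$ together with the full set $\mathsf{W}_j^{\mathsf{P}}=\downarrow_{\mathsf{W}_j^{\mathsf{P}}}\{\overline{\phi}\}$.

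I would then read off the surviving constraints. At each private receiver $j\in S_p$ the full down-set gives $R_{\overline{K}}+R_{\overline{K-1}}\le I(U_{\mathsf{W}_j^{\mathsf{P}}};Y_j|Q)$, where $b,d$ cancel, i.e.\ \eqref{Eq_Corollary_indirect_C1_1_C2_1_2}, while $\downarrow_{\mathsf{W}_j^{\mathsf{P}}}\{\overline{K}\}$, $\downarrow_{\mathsf{W}_j^{\mathsf{P}}}\{\overline{K-1}\}$ and $\downarrow_{\mathsf{W}_j^{\mathsf{P}}}\{\overline{K},\overline{K-1}\}$ produce lower bounds on $b$, on $d$ and on $b+d$ respectively. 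The common receiver $K$ contributes an upper bound on $b$ from its full down-set (\eqref{Eq_condition_reconstruction_rates2} with $\mathsf{B}=\mathsf{W}_K^{\mathsf{P}}$) and a lower bound on $d$ from $\downarrow_{\mathsf{W}_K^{\mathsf{P}}}\{\overline{K-1}\}$; symmetrically, receiver $K-1$ gives an upper bound on $d$ and a lower bound on $b$. Eliminating $b$ and then $d$: pairing $b\ge0$ (resp.\ $d\ge0$) against the corresponding common-receiver upper bound yields the single-rate bounds \eqref{Eq_Corollary_indirect_C1_1_C2_1_1} and \eqref{Eq_Corollary_indirect_C1_1_C2_1_11}; pairing the $\downarrow_{\mathsf{W}_j^{\mathsf{P}}}\{\overline{K}\}$ lower bound on $b$ with receiver $K$'s upper bound gives \eqref{Eq_Corollary_indirect_C1_1_C2_1_3}; the $\downarrow_{\mathsf{W}_j^{\mathsf{P}}}\{\overline{K-1}\}$ lower bound on $d$ with receiver $K-1$'s upper bound gives \eqref{Eq_Corollary_indirect_C1_1_C2_1_4}; and the $\downarrow_{\mathsf{W}_j^{\mathsf{P}}}\{\overline{K},\overline{K-1}\}$ lower bound on $b+d$ against the sum of the two common-receiver upper bounds gives the $2$-$2$ bound \eqref{Eq_Corollary_indirect_C1_1_C2_1_5}.

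The main obstacle is the bookkeeping of the by-products that FME generates. Eliminating two variables is not a single step: removing $b$ first turns the $b+d$ lower bound into new lower bounds on $d$, which must then be paired against every upper bound on $d$, and one must also process the pairings against the trivial bounds $b\le R_{\overline{K}}$ and $d\le R_{\overline{K-1}}$. Each resulting extra inequality has to be shown to be implied by \eqref{Eq_Corollary_indirect_C1_1_C2_1_1}--\eqref{Eq_Corollary_indirect_C1_1_C2_1_5}. This is where the monotonicity fact is invoked a second time: a typical spurious constraint has the same form as one of the retained inequalities but with $I(U_{\downarrow_{\mathsf{W}_j^{\mathsf{P}}}\{\overline{K},\overline{K-1}\}};\cdot)$ in place of $I(U_{\downarrow_{\mathsf{W}_j^{\mathsf{P}}}\{\overline{K-1}\}};\cdot)$ or $I(U_{\downarrow_{\mathsf{W}_j^{\mathsf{P}}}\{\overline{K}\}};\cdot)$, and since the former dominates the latter the retained inequality is tighter and the spurious one drops out. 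Verifying that every FME by-product is eliminated in this manner is routine but tedious, and completes the projection onto $(R_{\overline{K}},R_{\overline{K-1}})$.
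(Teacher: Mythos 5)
Your proposal is correct and follows essentially the same route as the paper's proof: specialize Theorem \ref{TH_genral_inner_General_msgs} with $\mathsf{F}=\mathsf{P}$, observe that only $\hat{R}_{\overline{\phi}},\hat{R}_{\overline{K}},\hat{R}_{\overline{K-1}}$ are nonzero so that only the down-sets $\mathsf{W}_j^{\mathsf{P}}$, $\downarrow_{\mathsf{W}_j^{\mathsf{P}}}\{\overline{K}\}$, $\downarrow_{\mathsf{W}_j^{\mathsf{P}}}\{\overline{K-1}\}$, $\downarrow_{\mathsf{W}_j^{\mathsf{P}}}\{\overline{K-1},\overline{K}\}$ yield non-redundant constraints, and then eliminate the two split rates by FME. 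You in fact supply more detail than the paper on two points it merely asserts, namely the chain-rule monotonicity argument justifying the redundancy of the remaining down-sets and the disposal of the FME by-products.
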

\begin{proof}
The proof begins with the result of Theorem \ref{TH_genral_inner_General_msgs} by setting $\mathsf{F}{=}\mathsf{P}$. In other words, the rate region of Corollary  \ref{Colollary_general_indirect_C1_1_C2_1} is ${\cal R}(K,\mathsf{E}=\{\overline{K},\overline{K-1}\},\mathsf{F}{=}\mathsf{P})  $ expressed in explicit form after having projected split rates away.
Since we have $L_1{=}1$ and $L_2{=}1$, we have from \eqref{Eq_RS_upset_splittin} that each message is split only into two parts so that
\begin{align}
&R_{\overline{K-1}}=R_{\overline{K-1}\rightarrow \overline{K-1}}+R_{\overline{K-1}\rightarrow \overline{\phi}}\\
&R_{\overline{K}}=R_{\overline{K}\rightarrow \overline{K}}+R_{\overline{K}\rightarrow \overline{\phi}}
\end{align}
Moreover, from \eqref{Eq_reconstruction_rates_upset_splitting},
we have the three non-zero reconstruction rates  $\hat{R}_{\overline{\phi}},\hat{R}_{\overline{K}},\hat{R}_{\overline{K-1}}$, such that
\begin{align}
\hat{R}_{\overline{\phi}}&=R_{\overline{K}\rightarrow \overline{\phi}}+R_{\overline{K-1}\rightarrow \overline{\phi}}\label{Eq_General_new_construction1}\\
\hat{R}_{\overline{K}}&=R_{\overline{K}\rightarrow \overline{K}}\label{Eq_General_new_construction2}\\
\hat{R}_{\overline{K-1}}&=R_{\overline{K-1}\rightarrow \overline{K-1}}\label{Eq_General_new_construction3}
\end{align}

Hence, we can write \eqref{Eq_condition_reconstruction_rates1} as follow
\begin{align}
&R_{\overline{K-1}}+R_{\overline{K}}\leq I(U_{\mathsf{W}_j^{\mathsf{P}} };Y_j|Q)\label{Eq_general_case_C1_1_C2_1_1}\\
&R_{\overline{K-1}\rightarrow \overline{K-1}}\leq  I(U_{\downarrow_{\mathsf{W}_j^{\mathsf{P}}} \{\overline{K-1}\} };Y_j|U_{\mathsf{W}_j^{\mathsf{P}} \backslash \downarrow_{\mathsf{W}_j^{\mathsf{P}}} \{\overline{K-1}\}  },Q)\label{Eq_general_case_C1_1_C2_1_2}\\
&R_{\overline{K}\rightarrow \overline{K}}\leq I(U_{\downarrow_{\mathsf{W}_j^{\mathsf{P}}} \{\overline{K}\} };Y_j|U_{\mathsf{W}_j^{\mathsf{P}} \backslash \downarrow_{\mathsf{W}_j^{\mathsf{P}}} \{\overline{K}\}  },Q)\label{Eq_general_case_C1_1_C2_1_3}\\
&R_{\overline{K-1}\rightarrow \overline{K-1}}+R_{\overline{K} \rightarrow \overline{K}}\leq \nonumber \\ & \quad I(U_{\downarrow_{\mathsf{W}_j^{\mathsf{P}}} \{\overline{K-1},\overline{K}\} };Y_j|U_{\mathsf{W}_j^{\mathsf{P}} \backslash \downarrow_{\mathsf{W}_j^{\mathsf{P}}} \{\overline{K-1},\overline{K}\}  },Q)\label{Eq_general_case_C1_1_C2_1_4}
\end{align} for all $j\in S_p$. This follows from the fact we have just three non-zero reconstruction rates, and hence, $\mathsf{B}\in\{   \mathsf{W}_j^{\mathsf{P}}, \downarrow_{\mathsf{W}_j^{\mathsf{P}}} \{\overline{K-1}\}, \downarrow_{\mathsf{W}_j^{\mathsf{P}}} \{\overline{K}\},\downarrow_{\mathsf{W}_j^{\mathsf{P}}} \{\overline{K-1},\overline{K}\}      \}$, because the rest of the down-sets in $\mathcal{F}_{\downarrow}(\mathsf{W}_j^{\mathsf{P}})$ give redundant inequalities. %Using the inequalities in \eqref{Eq_General_new_construction1}-\eqref{Eq_General_new_construction3}, we get the above inequalities. 

%I(U_{\downarrow_{\mathsf{W}_K^{\mathsf{P}}} \{\overline{K-1}\}};Y_K|U_{\mathsf{W}_K^{\mathsf{P}}  \backslash \downarrow_{\mathsf{W}_K^{\mathsf{P}}} \{\overline{K-1}\} },Q )+I(U_{\mathsf{W}_{K-1}^{\mathsf{P}}};Y_{K-1}|Q)$

%$ I(U_{\downarrow_{\mathsf{W}_{K-1}^{\mathsf{P}}} \{\overline{K}\}};Y_{K-1}|U_{\mathsf{W}_{K-1}^{\mathsf{P}}  \backslash \downarrow_{\mathsf{W}_{K-1}^{\mathsf{P}}} \{\overline{K}\} } , Q)+I(U_{\mathsf{W}_K^{\mathsf{P}}};Y_K|Q)$

On the other hand, we can rewrite \eqref{Eq_condition_reconstruction_rates2} as follows
\begin{align}
&R_{\overline{K}}+R_{\overline{K-1}\rightarrow \overline{\phi}}\leq I(U_{\mathsf{W}_{K-1}^{\mathsf{P}}};Y_{K-1}|Q)\label{Eq_general_case_C1_1_C2_1_5}\\
&R_{\overline{K}\rightarrow \overline{K}}\leq I(U_{\downarrow_{\mathsf{W}_{K-1}^{\mathsf{P}}} \{\overline{K}\}};Y_{K-1}|U_{\mathsf{W}_{K-1}^{\mathsf{P}}  \backslash \downarrow_{\mathsf{W}_{K-1}^{\mathsf{P}}} \{\overline{K}\} } , Q)\label{Eq_general_case_C1_1_C2_1_6}\\
&R_{\overline{K-1}}+R_{\overline{K}\rightarrow \overline{\phi}}\leq I(U_{\mathsf{W}_K^{\mathsf{P}}};Y_K|Q)\label{Eq_general_case_C1_1_C2_1_7}\\
&R_{\overline{K-1}\rightarrow \overline{K-1}}\leq I(U_{\downarrow_{\mathsf{W}_K^{\mathsf{P}}} \{\overline{K-1}\}};Y_K|U_{\mathsf{W}_K^{\mathsf{P}}  \backslash \downarrow_{\mathsf{W}_K^{\mathsf{P}}} \{\overline{K-1}\} },Q )\label{Eq_general_case_C1_1_C2_1_8}
\end{align} 
Moreover, in \eqref{Eq_condition_reconstruction_rates2} the only sets in $\mathcal{F}_{\downarrow_{ \{\overline{K}\} }  }(\mathsf{W}_j^{\mathsf{P}})$ and $\mathcal{F}_{\downarrow_{ \{\overline{K-1}\} }  }(\mathsf{W}_j^{\mathsf{P}})$ that do not give redundant inequalities are \{$\mathsf{W_j}^{\mathsf{P}},\downarrow_{\mathsf{W}_j^{\mathsf{P}}} \mathsf{W}_j^{\mathsf{E}}$ for $j\in\{K-1,K\}\}$ again because we have just three non-zero reconstruction rates where $\mathsf{W}_j^{\mathsf{E}}=\{\overline{K}\}$ for $j=K-1$ and $\mathsf{W}_j^{\mathsf{E}}=\{\overline{K-1}\}$ for $j=K$. By eliminating the sub-rates from \eqref{Eq_general_case_C1_1_C2_1_1}-\eqref{Eq_general_case_C1_1_C2_1_4} and \eqref{Eq_general_case_C1_1_C2_1_5}-\eqref{Eq_general_case_C1_1_C2_1_8} using FME \cite{schrijver1986theory}, we get the polyhedral description \eqref{Eq_Corollary_indirect_C1_1_C2_1_1}-\eqref{Eq_Corollary_indirect_C1_1_C2_1_5}. 
\end{proof}

\subsection{Converses for $\mathsf{E}=\{\overline{K},\overline{K-1}\}$ and more generally for $\mathsf{E}=\{\overline{K}, S\}$ for any $S$ with $K \in S$}

In what follows, we show that the achievable rate region of Corollary \ref{Colollary_general_indirect_C1_1_C2_1} is optimal for certain classes of partially ordered DM BCs with the partial order denoting the relative strengths of channels to the receivers in terms of the well known ordering relations of one receiver being less noisy or more capable than another. We state here those definitions (including two for the more capable order) for completeness.
\begin{definition}  \label{Def_Less_Noisy} \cite[Definition 2]{korner1975source}
Receiver $Y_s$ is less noisy than $Y_c$ if $I(U;Y_s){\geq} I(U;Y_c)$ for all $p(u,x)$. Henceforth, we denote this condition as $Y_s  \succeq Y_c$.
\end{definition}
\begin{definition} \label{Def_More_Capable_Original}
Receiver $Y_s$ is more capable than $Y_c$ if for every $0< \epsilon < 1$ and $\delta>0$ there exists an $n_0$, a function of $\epsilon$ and $\delta$, such that for $n>n_0$ every $\epsilon$-code $\mathcal{B}\subseteq \mathcal{X}^n$ for the channel $Y_c$ contains an $\epsilon$-code $\mathcal{B}^{'}$ for the channel $Y_s$ such that $\frac{1}{n}\log(|\mathcal{B}^{'}|)>\frac{1}{n}\log(|\mathcal{B}|)-\delta$. 
\end{definition}
This is essentially equivalent to saying that $Y_s$ could decode any codebook that $Y_c$ could decode. Korner and Marton showed that the above definition is equivalent to following one.
\begin{definition} \cite[Definition 3]{korner1975source} \label{Def_More_Capable}
Receiver $Y_s$ is more capable than $Y_c$ if $I(X;Y_s) \geq I(X;Y_c)$ for all $p(x)$. Henceforth, we denote this condition as $ Y_s  \sqsupseteq Y_c $\footnote{This notation can, but should not be, mistaken for ``superset" (which, conveniently, is meaningless in the present context).}. 
\end{definition}

\begin{theorem}
\label{Th-Capacity_Region_Two_K-1_Order_msgs}
The capacity region of $K$-user DM BC for the message index set $\mathsf{E}=\{\overline{K},\overline{K-1}\}$, i.e., $S_p=\{1,2,\dots ,K-2 \}$ is the set of rate pairs ($R_{\overline{K}},R_{\overline{K-1}}$) satisfying
\begin{align}
    R_{\overline{K-1}}&\leq I(U;Y_K)\label{Eq:Capacity_Region_Two_K-1_Order_msgs1_1}\\
   % R_{\overline{K-1}}+R_{\overline{K}}&\leq I(X;Y_j) \; \forall j\in S_p \label{Eq:Capacity_Region_Two_K-1_Order_msgs1_2} \\
    R_{\overline{K-1}}+R_{\overline{K}}&\leq I(X;Y_j|U)+I(U;Y_K) \; \forall j\in S_p \cup \{K-1\}\label{Eq:Capacity_Region_Two_K-1_Order_msgs1_3}\\
    R_{\overline{K-1}}+R_{\overline{K}}&\leq I(X;Y_{K-1})\label{Eq:Capacity_Region_Two_K-1_Order_msgs1_4}
\end{align} for some $p(u,x)$ for each of the following two classes of partially ordered DM BCs
\begin{itemize}
    \item[A.] DM BCs for which $Y_j  \sqsupseteq  Y_{K-1} \sqsupseteq Y_K $ for all $j\in S_p $
    \item[B.] DM BCs for which $Y_i \succeq Y_K $ for all $i\in S_p \cup \{K-1\}$
\end{itemize}
\end{theorem}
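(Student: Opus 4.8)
The plan is to get achievability from Corollary~\ref{Colollary_general_indirect_C1_1_C2_1} and to prove the converse directly, with the channel ordering entering only at the two places flagged below. For achievability I would start from the polyhedral region $\mathcal{R}(K,\mathsf{E},\mathsf{P})$ of Corollary~\ref{Colollary_general_indirect_C1_1_C2_1} and specialize the coding distribution to a single non-degenerate cloud center: put all of $M_{\overline{K-1}}$ and the up-set split part $M_{\overline{K}\to\overline{\phi}}$ of $M_{\overline{K}}$ into the common reconstructed message $\hat{M}_{\overline{\phi}}$ carried by $U:=U_{\overline{\phi}}$, and collapse the remaining auxiliaries so that the superposed codeword is just $X$. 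Then the information terms in \eqref{Eq_Corollary_indirect_C1_1_C2_1_1}--\eqref{Eq_Corollary_indirect_C1_1_C2_1_5} become $I(U;Y_K)$, $I(X;Y_j\mid U)$, and $I(X;Y_{K-1})$; eliminating the single split rate $R_{\overline{K}\to\overline{\phi}}$ by Fourier--Motzkin elimination reproduces \eqref{Eq:Capacity_Region_Two_K-1_Order_msgs1_1}, \eqref{Eq:Capacity_Region_Two_K-1_Order_msgs1_3}, \eqref{Eq:Capacity_Region_Two_K-1_Order_msgs1_4}, while the ordering ($Y_j\succeq Y_K$, resp.\ $Y_j\sqsupseteq Y_{K-1}\sqsupseteq Y_K$) makes the cloud-decoding inequalities at the receivers $j\in S_p$ redundant relative to those at $Y_K$ and $Y_{K-1}$.

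For the converse I would fix a reliable sequence of codes, write $W:=M_{\overline{K-1}}$, and introduce a single auxiliary anchored to the weakest receiver $Y_K$, namely $U_i=(W,Y^{n}_{K,i+1})$, together with a time-sharing variable $Q\sim\mathrm{Unif}[1:n]$, setting $U=(U_Q,Q)$ and $X=X_Q$. Bound \eqref{Eq:Capacity_Region_Two_K-1_Order_msgs1_1} is immediate: Fano at receiver $K$ gives $nR_{\overline{K-1}}\le\sum_i I(W;Y_{K,i}\mid Y^{n}_{K,i+1})+n\epsilon_n\le\sum_i I(U_i;Y_{K,i})+n\epsilon_n$. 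For the sum bound \eqref{Eq:Capacity_Region_Two_K-1_Order_msgs1_3} I would combine Fano at receiver $K$ (for $M_{\overline{K-1}}$) with Fano at receiver $j\in S_p\cup\{K-1\}$ (for $M_{\overline{K}}$, using independence of the two messages) and single-letterize $I(W;Y_K^n)+I(X^n;Y_j^n\mid W)$ with the Csiszár sum lemma \cite{csiszar1978broadcast}. The lemma converts the mismatch between the past of $Y_j$ and the future of $Y_K$ into a cross term $\sum_i I(Y_j^{i-1};Y_{K,i}\mid U_i)$, which is absorbed into $\sum_i I(U_i;Y_{K,i})$ exactly when $I(Y_j^{i-1};Y_{K,i}\mid U_i)\le I(Y_j^{i-1};Y_{j,i}\mid U_i)$; this is the conditional less-noisy relation $Y_j\succeq Y_K$ of class~B. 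The outcome is $n(R_{\overline{K-1}}+R_{\overline{K}})\le\sum_i\bigl[I(U_i;Y_{K,i})+I(X_i;Y_{j,i}\mid U_i)\bigr]+2n\epsilon_n$.

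The bound \eqref{Eq:Capacity_Region_Two_K-1_Order_msgs1_4}, $R_{\overline{K-1}}+R_{\overline{K}}\le I(X;Y_{K-1})$, requires $Y_{K-1}$ to effectively carry both messages. For class~B the chain-rule identity $I(X^n;Y_{K-1}^n)=I(W;Y_{K-1}^n)+I(X^n;Y_{K-1}^n\mid W)$ reduces it to $I(W;Y_K^n)\le I(W;Y_{K-1}^n)$, which is the multi-letter less-noisy inequality for $Y_{K-1}\succeq Y_K$ (here one uses $K-1\in S_p\cup\{K-1\}$); after single-letterizing via memorylessness this gives \eqref{Eq:Capacity_Region_Two_K-1_Order_msgs1_4}.

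The hard part, as I expect, is class~A: more capable does \emph{not} give $I(W;Y_{K-1}^n)\ge I(W;Y_K^n)$, so the class-B route for \eqref{Eq:Capacity_Region_Two_K-1_Order_msgs1_4} collapses, and the conditional less-noisy step used in \eqref{Eq:Capacity_Region_Two_K-1_Order_msgs1_3} is likewise unavailable. I would resolve this with the more-capable converse machinery: either (i) invoke the operational Definition~\ref{Def_More_Capable_Original}, observing that for a typical value of $M_{\overline{K}}$ the map $m_{\overline{K-1}}\mapsto x^n$ is a good code for $Y_K$, so the more capable receiver $Y_{K-1}\sqsupseteq Y_K$ admits a sub-code of essentially the same rate and hence decodes $(M_{\overline{K-1}},M_{\overline{K}})$ jointly, giving $R_{\overline{K-1}}+R_{\overline{K}}\le I(X;Y_{K-1})$ as the rate loss $\delta\to0$; or (ii) use the information-theoretic Definition~\ref{Def_More_Capable} together with the Csiszár sum lemma exactly as in El~Gamal's more-capable broadcast converse \cite{el1979capacity}, applying the \emph{per-letter} relation $I(X_i;Y_{K-1,i}\mid U_i)\ge I(X_i;Y_{K,i}\mid U_i)$ in place of an auxiliary-level less-noisy inequality. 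The same per-letter device is what I expect lets the single $Y_K$-anchored auxiliary $U$ serve all three bounds simultaneously in class~A, thereby completing both orderings with one coding-distribution description. The main obstacle is thus not the algebra but establishing \eqref{Eq:Capacity_Region_Two_K-1_Order_msgs1_4} (and unifying the auxiliary in \eqref{Eq:Capacity_Region_Two_K-1_Order_msgs1_3}) under the weaker more-capable hypothesis.
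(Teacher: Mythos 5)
Your proposal is correct and, for achievability and for Class~A, it is essentially the paper's proof: the same degenerate specialization of Corollary~\ref{Colollary_general_indirect_C1_1_C2_1} ($U_{\overline{\phi}}=U$, $U_{\overline{K}}=X$, all other auxiliaries constant, the $I(X;Y_j)$ bound for $j\in S_p$ killed by the channel ordering), and your option~(ii) for Class~A is exactly what Appendix~\ref{Appendix_Proof_Th_Two_K-1_Order_msgs} does --- El Gamal's more-capable converse \cite{el1979capacity} with the Csisz\'ar sum identity and the per-letter relation $I(X_i;Y_{K,i}\mid W_i)\le I(X_i;Y_{K-1,i}\mid W_i)$, while your option~(i) parallels the paper's alternative argument that the Class-A region is just the two-receiver more-capable private-message region and receivers more capable than $Y_{K-1}$ add nothing. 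The genuine divergence is in Class~B. You anchor the auxiliary to the \emph{future} of $Y_K$, $U_i=(M_{\overline{K-1}},Y_{K,i+1}^n)$, apply the Csisz\'ar sum lemma \cite{csiszar1978broadcast}, and absorb the resulting cross term via the conditional less-noisy step $I(Y_j^{i-1};Y_{K,i}\mid U_i)\le I(Y_j^{i-1};Y_{j,i}\mid U_i)$; the paper instead anchors to the \emph{past} of $Y_K$, $U_i=(M_{\overline{K-1}},Y_{K,1}^{i-1})$, and replaces the Csisz\'ar sum step altogether by the information inequality of \cite{nair2011capacity} (Lemma~\ref{Lemma_Information_Inequality}), which swaps $Y_{j,1}^{i-1}$ for $Y_{K,1}^{i-1}$ directly. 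Both routes deliver what Class~B really requires --- a $j$-independent auxiliary serving every $j\in S_p\cup\{K-1\}$ simultaneously --- and your absorption step is legitimate because, given $(M_{\overline{K-1}},Y_{K,i+1}^n)$, the chain $Y_j^{i-1}\markov X_i\markov (Y_{j,i},Y_{K,i})$ still holds, so the single-letter less-noisy hypothesis applies conditionally. Two small cautions: your remark that ``the single $Y_K$-anchored auxiliary serves all three bounds in Class~A'' is not right as stated --- under more-capable alone the cross term cannot be absorbed, so the Class-A auxiliary must retain $Y_{K-1,1}^{i-1}$ (as in the El Gamal proof you defer to) and you should state explicitly that the $j\in S_p$ sum bounds then follow from the $j=K-1$ bound via the conditional more-capable inequality $I(X;Y_j\mid U)\ge I(X;Y_{K-1}\mid U)$; and note that in Class~B the bound \eqref{Eq:Capacity_Region_Two_K-1_Order_msgs1_4} is implied by \eqref{Eq:Capacity_Region_Two_K-1_Order_msgs1_3} at $j=K-1$ together with $Y_{K-1}\succeq Y_K$, so your separate multi-letter argument for it, while valid, is not needed.
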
 
\begin{proof}
The achievability proof of \eqref{Eq:Capacity_Region_Two_K-1_Order_msgs1_1}-\eqref{Eq:Capacity_Region_Two_K-1_Order_msgs1_4}  follow from Corollary \ref{Colollary_general_indirect_C1_1_C2_1} by setting $U_{\overline{\phi}}=U, U_{\overline{K}}=X $ while setting the rest of the auxiliary random variables to be constants and $|Q|=1$. In particular, the inequalities in \eqref{Eq:Capacity_Region_Two_K-1_Order_msgs1_1}-\eqref{Eq:Capacity_Region_Two_K-1_Order_msgs1_4} are obtained from \eqref{Eq_Corollary_indirect_C1_1_C2_1_1}, \eqref{Eq_Corollary_indirect_C1_1_C2_1_3}, and \eqref{Eq_Corollary_indirect_C1_1_C2_1_4}, respectively. Note that from \eqref{Eq_Corollary_indirect_C1_1_C2_1_2}, we get the inequality $R_{\overline{K-1}}+R_{\overline{K}}\leq I(X;Y_j) \; \forall j\in S_p$ which is redundant for both classes A and B of DM BCs defined in the statement of Theorem \ref{Th-Capacity_Region_Two_K-1_Order_msgs}. This concludes the achievbility proof.

For Class A of DM BCs, that is, when $Y_j  \sqsupseteq  Y_{K-1} \sqsupseteq Y_K $ for all $j\in S_p $, the region in \eqref{Eq:Capacity_Region_Two_K-1_Order_msgs1_1}-\eqref{Eq:Capacity_Region_Two_K-1_Order_msgs1_4} can be written as 
\begin{align}
    R_{\overline{K-1}}&\leq I(U;Y_K)\label{Eq:Capacity_Region_Two_K-1_Order_msgs1_1_Class1}\\
    R_{\overline{K-1}}+R_{\overline{K}}&\leq I(X;Y_{K-1}|U)+I(U;Y_K) \label{Eq:Capacity_Region_Two_K-1_Order_msgs1_2_Class1}\\
    R_{\overline{K-1}}+R_{\overline{K}}&\leq I(X;Y_{K-1})\label{Eq:Capacity_Region_Two_K-1_Order_msgs1_3_Class1}
\end{align} 
for some $p(u,x)$. There are two ways to prove the converse for the above inequalities. One is to use the Csiszar sum lemma \cite[Lemma 7]{csiszar1978broadcast} with the optimal choice of $U_i= M_{\overline{K-1}},Y_{K-1,1}^{i-1},Y_{K,i+1}^n$. This proof is given in detail in Appendix \ref{Appendix_Proof_Th_Two_K-1_Order_msgs}. The other is to notice that the region \eqref{Eq:Capacity_Region_Two_K-1_Order_msgs1_1_Class1}-\eqref{Eq:Capacity_Region_Two_K-1_Order_msgs1_3_Class1}, is the capacity region of the two receiver ($Y_{K-1},Y_K$) more capable DM BC with two private messages (cf. \cite{el1979capacity}). In other words, it depends only on the channels of the two receivers $Y_{K-1}, Y_K$, although we are considering the $K$-receiver DM BC. Since Receiver $Y_{K-1}$ decodes from $X$ it is forced to decode both messages, and hence, from Definition \ref{Def_More_Capable_Original}, adding any number of receivers that are more capable than $Y_{K-1}$, does not change the capacity region from the two receiver case. This concludes an alternative proof of the converse for the Class A of DM BCs.
  
For Class B of DM BCs in which $Y_i \succeq Y_K $ for all $i\in S_p \cup \{K-1\}$, the region in \eqref{Eq:Capacity_Region_Two_K-1_Order_msgs1_1}-\eqref{Eq:Capacity_Region_Two_K-1_Order_msgs1_4} can be written as 
\begin{align}
    R_{\overline{K-1}}&\leq I(U;Y_K)\label{Eq:Capacity_Region_Two_K-1_Order_msgs1_1_Class2}\\
    R_{\overline{K-1}}+R_{\overline{K}}&\leq I(X;Y_j|U)+I(U;Y_K) \; \forall j\in S_p \cup \{K-1\}\label{Eq:Capacity_Region_Two_K-1_Order_msgs1_2_Class2}
\end{align}
for some $p(u,x)$ where  \eqref{Eq:Capacity_Region_Two_K-1_Order_msgs1_4} are redundant for this class of channels. The converse proof depends on the information inequality in \cite[Lemma 1]{nair2011capacity} and the details are given in Appendix \ref{Appendix_Proof_Th_Two_K-1_Order_msgs}. 
\end{proof}

Evidently, one can exchange the roles of $Y_K$ and $Y_{K-1}$ in Theorem \ref{Th-Capacity_Region_Two_K-1_Order_msgs} to get a result that mirrors it by exchanging $K-1$ and $K$ in the rate region and in the definition of classes of DM BCs.
%that the rate pairs ($R_{\overline{K}},R_{\overline{K-1}}$) satisfying
%\begin{align}
%    R_{\overline{K}}&\leq I(U;Y_{K-1})\label{Eq:Capacity_Region_Two_K-1_Order_msgs2_1}\\
 %   R_{\overline{K-1}}+R_{\overline{K}}&\leq I(X;Y_j) \; \forall j\in S_p \label{Eq:Capacity_Region_Two_K-1_Order_msgs2_2} \\
 %   R_{\overline{K-1}}+R_{\overline{K}}&\leq I(X;Y_{K})\label{Eq:Capacity_Region_Two_K-1_Order_msgs2_3}\\
 %   R_{\overline{K-1}}+R_{\overline{K}}&\leq I(X;Y_j|U)+I(U;Y_K) \; \forall j\in S_p \cup \{K\}\label{Eq:Capacity_Region_Two_K-1_Order_msgs2_4}
%\end{align} for some $p(u,x)$ for the following classes of channels 
%\begin{itemize}
 %   \item $Y_j  \sqsupseteq  Y_{K} \sqsupseteq Y_{K-1} $ for all $j\in S_p $
 %   \item $Y_i \succeq Y_{K-1} $ for all $i\in S_p \cup \{K\}$.
%\end{itemize}

% \eqref{Eq:Capacity_Region_Two_K-1_Order_msgs2_1}-\eqref{Eq:Capacity_Region_Two_K-1_Order_msgs2_4} by setting  $U_{\overline{\phi}}=U, U_{\overline{K-1}}=X $. While the inequalities in \eqref{Eq:Capacity_Region_Two_K-1_Order_msgs2_1}-\eqref{Eq:Capacity_Region_Two_K-1_Order_msgs2_4} are obtained from \eqref{Eq_Corollary_indirect_C1_1_C2_1_2}, \eqref{Eq_Corollary_indirect_C1_1_C2_1_3}, \eqref{Eq_Corollary_indirect_C1_1_C2_1_4}, and \eqref{Eq_Corollary_indirect_C1_1_C2_1_5}, respectively.

\begin{remark}
We showed in \cite[Lemma 3]{salman2017capacity} that the region in \eqref{Eq:Capacity_Region_Two_K-1_Order_msgs1_1_Class2}-\eqref{Eq:Capacity_Region_Two_K-1_Order_msgs1_2_Class2} is also the capacity region of the Class B of DM BCs for two {\em nested} messages with a single common receiver, i.e., $\mathsf{E}=\{\overline{\phi}, \overline{K}\}$, where a similar achievability scheme and converse proof were used. Hence, even though we relaxed the decoding requirements for receiver $Y_{K-1}$ by considering the message set $\mathsf{E}=\{\overline{K-1}, \overline{K}\}$ compared to $\mathsf{E}=\{\overline{\phi}, \overline{K}\}$, we can prove the same outer bound for both message sets. Hence, our converse proof in Appendix \ref{Appendix_Proof_Th_Two_K-1_Order_msgs} is stronger than the one in \cite[Lemma 3]{salman2017capacity}. Applied to the case of $K=3$, this implies that the converse proof in Appendix \ref{Appendix_Proof_Th_Two_K-1_Order_msgs} (for $\mathsf{E}=\{12,13\}$) is stronger than the one for $\mathsf{E}=\{12,123\}$ in \cite[Proposition 11]{nair2009capacity} where the Csiszar sum lemma is used, not the information inequality as in Appendix \ref{Appendix_Proof_Th_Two_K-1_Order_msgs}.  From the achievability perspective, there is no loss of optimality and it does not reduce the achievable region to force receiver $Y_{K-1}$ to decode $M_{\overline{K-1}}$ as long as $Y_{K-1}\succeq Y_K$. 

\end{remark}

%In the following proposition, we extend the argument in the previous remark to its logical and strongest conclusion by relaxing the decoding requirements for the receivers $Y_1,Y_2,\cdots ,Y_{K-1}$ instead of only $Y_{K-1}$ by considering the message set $\mathsf{E}=\{\overline{K},K\}$, and still getting the same capacity region for the same class of channels.  

In the following proposition, we extend the argument in the previous remark to its most general conclusion by considering the message set $\mathsf{E}=\{\overline{K},S\}$ (where $K\in S$), and hence, relaxing the decoding requirements for the receivers $Y_i$ $i\in\overline{S}$ instead of only $Y_{K-1}$, and still getting the same capacity region for the same class of channels.  

\begin{proposition}
\label{Proposition_1Private_1OrderK-1_messages}
The capacity region of $K$-user DM BC with message index set $\mathsf{E}=\{\overline{K},S\}$, where $S\in \mathsf{P},K\in S$ for the class of channels $Y_i \succeq Y_K $ for all $i\in \{1,2,\cdots ,K-1\}$ is the set of rate pairs ($R_{\overline{K}},R_{S}$) satisfying
\begin{align}
    R_{S}&\leq I(U;Y_K)\label{Eq:Capacity_Region_1Private_1OrderK-1_messages_1}\\
    R_{S}+R_{\overline{K}}&\leq I(X;Y_j|U)+I(U;Y_K) \; \forall j\in \{1,2,\cdots,K-1\}\label{Eq:Capacity_Region_1Private_1OrderK-1_messages_2}
\end{align}
for some $p(u,x)$. 
\end{proposition}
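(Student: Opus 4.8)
The plan is to establish achievability by a transfer argument from a more demanding message set, and to prove the converse by adapting the Class~B converse of Theorem~\ref{Th-Capacity_Region_Two_K-1_Order_msgs}; the key realization is that the sum-rate converse at a receiver $Y_j$ never uses the decoding of $M_S$ at $Y_j$, so it survives verbatim for every $j\in\{1,\dots,K-1\}$, including those $j\in\overline{S}$ that are now only required to decode $M_{\overline{K}}$.

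For achievability, note that region \eqref{Eq:Capacity_Region_1Private_1OrderK-1_messages_1}--\eqref{Eq:Capacity_Region_1Private_1OrderK-1_messages_2} is, after the identification $R_S\leftrightarrow R_{\overline{\phi}}$, exactly the region shown in \cite[Lemma 3]{salman2017capacity} to be achievable over Class~B channels for the nested set $\mathsf{E}=\{\overline{\phi},\overline{K}\}$, in which \emph{all} $K$ receivers decode $M_{\overline{\phi}}$ and receivers $\{1,\dots,K-1\}$ decode $M_{\overline{K}}$. Relabelling $M_{\overline{\phi}}$ as $M_S$, any code meeting the stronger requirements of $\{\overline{\phi},\overline{K}\}$ a fortiori meets the relaxed requirements of $\mathsf{E}=\{\overline{K},S\}$, where only the receivers in $S$ (rather than all of them) must decode $M_S$: the error event for the relaxed problem is contained in that for the nested one. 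Hence the same region is achievable here. Equivalently, one may apply Theorem~\ref{TH_genral_inner_General_msgs} directly with $U$ carrying $M_S$ as a cloud centre and $X$ carrying $M_{\overline{K}}$ superposed on it, the would-be constraints $R_S\leq I(U;Y_j)$ and $R_S+R_{\overline{K}}\leq I(X;Y_j)$ being rendered redundant for $j<K$ by $Y_j\succeq Y_K$.

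For the converse, fix $j\in\{1,\dots,K-1\}$. Since $M_{\overline{K}}$ is required at $Y_j$ and $M_S$ at $Y_K$, Fano's inequality gives $nR_S\leq I(M_S;Y_K^n)+n\epsilon_n$, while the independence $M_S\perp M_{\overline{K}}$ and the channel Markov chain yield $nR_{\overline{K}}\leq I(M_{\overline{K}};Y_j^n\,|\,M_S)+n\epsilon_n\leq I(X^n;Y_j^n\,|\,M_S)+n\epsilon_n$. Adding,
\begin{equation*}
n(R_S+R_{\overline{K}})\leq I(M_S;Y_K^n)+I(X^n;Y_j^n\,|\,M_S)+2n\epsilon_n .
\end{equation*}
The only decoding facts invoked are that $Y_j$ decodes $M_{\overline{K}}$ and $Y_K$ decodes $M_S$; whether $j\in S$ or $j\in\overline{S}$ is immaterial, which is precisely why relaxing the decoding requirement at the receivers in $\overline{S}$ leaves the bound intact. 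This is structurally identical to the Class~B converse of Theorem~\ref{Th-Capacity_Region_Two_K-1_Order_msgs}, which already treated the receiver $Y_{K-1}$ (that does not decode $M_{\overline{K-1}}$) in exactly this way; the present proof simply repeats that argument for every $j\in\{1,\dots,K-1\}$.

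The remaining step, which I expect to be the main obstacle, is to single-letterize the right-hand side into $\sum_i\bigl[I(U_i;Y_{K,i})+I(X_i;Y_{j,i}\,|\,U_i)\bigr]$ with the Markov relation $U_i-X_i-(Y_{j,i},Y_{K,i})$ holding by memorylessness. Following Appendix~\ref{Appendix_Proof_Th_Two_K-1_Order_msgs}, I would take the natural auxiliary $U_i=(M_S,Y_{K,i+1}^n)$ (or the variant dictated by the information inequality) and invoke \cite[Lemma 1]{nair2011capacity} together with the less-noisy hypothesis $Y_j\succeq Y_K$ (Definition~\ref{Def_Less_Noisy}) to absorb the cross terms, obtaining $R_S+R_{\overline{K}}\leq I(X;Y_j\,|\,U)+I(U;Y_K)$ for this $j$; the same $U_i$ gives $R_S\leq I(U;Y_K)$ from the $Y_K$ Fano bound alone since $I(M_S;Y_K^n)\leq\sum_i I(U_i;Y_{K,i})$. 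A standard time-sharing variable (absorbed into $U$) and the cardinality bound on $U$ then complete the proof, the union over $j$ producing the family \eqref{Eq:Capacity_Region_1Private_1OrderK-1_messages_2}. The delicate point here is that the less-noisy ordering is exactly what makes the cross terms telescope correctly and simultaneously renders $R_S+R_{\overline{K}}\leq I(X;Y_j)$ redundant, so that no surviving inequality ever forces a receiver $Y_j$ with $j\in\overline{S}$ to decode $M_S$.
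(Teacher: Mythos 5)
Your proposal is correct and follows essentially the route the paper intends: the paper's own ``proof'' of Proposition \ref{Proposition_1Private_1OrderK-1_messages} is simply the statement that it mirrors Theorem \ref{Th-Capacity_Region_Two_K-1_Order_msgs}, and your argument is a faithful fleshing-out of that — achievability by relaxing the nested-message code (or specializing Corollary \ref{Colollary_general_indirect_C1_1_C2_1}-style choices of $U_{\overline{\phi}}=U$, $U_{\overline{K}}=X$), and a Class-B converse that only ever uses the facts that $Y_j$ decodes $M_{\overline{K}}$ and $Y_K$ decodes $M_S$. The one detail to fix in the single-letterization is the auxiliary: for the Class-B argument via \cite[Lemma 1]{nair2011capacity} the paper takes the \emph{past}, $U_i=(M_S,Y_{K,1}^{i-1})$, rather than the future $Y_{K,i+1}^n$ (the latter belongs to the Csiszar-sum/Class-A variant), as you anticipated with your hedge.
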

\begin{proof}
The proof is left to the reader since it is similar to that of Theorem \ref{Th-Capacity_Region_Two_K-1_Order_msgs}. 
\end{proof}

Consider Proposition \ref{Proposition_1Private_1OrderK-1_messages} in its strongest case, i.e., for $S=K$. Here, the proposition in effect states that forcing the receivers $Y_1,Y_2,\cdots,Y_{K-1}$ to decode $M_K$ is without loss of optimality as long as $Y_i \succeq Y_K $ for all $i\in \{1,2,\cdots ,K-1\}$. While from the converse perspective, relaxing the decoding requirement of the receivers $Y_1,Y_2,\cdots,Y_{K-1}$ by considering  $\mathsf{E}=\{\overline{K},K\}$ instead of $\mathsf{E}=\{\overline{K},\overline{\phi}\}$ does not enlarge the outer bound from what it is for $\mathsf{E}=\{\overline{K},\overline{\phi}\}$ when $Y_i \succeq Y_K $ for all $i\in \{1,2,\cdots ,K-1\}$. In fact, the rationale behind this is the following. Unlike the more capable condition, the less noisy condition is a strong order relation. Hence, by considering the class $Y_i \succeq Y_K $ for all $i\in \{1,2,\cdots ,K-1\}$, it is assured that the receivers $Y_1,Y_2,\cdots,Y_{K-1}$ that decode from $X$ can decode whatever message receiver $Y_K$ decodes from $U$ which explains why forcing them to decode the cloud center $U$ is without loss of optimality. 
On the other hand, our converse proof given in \cite[Lemma 3]{salman2017capacity} or in Appendix \ref{Appendix_Proof_Th_Two_K-1_Order_msgs} does not use the fact that the message encoded in $U$ that is required by $Y_K$ might be also required by any subset of the receivers $Y_1,Y_2,\cdots,Y_{K-1}$. Hence, for $\mathsf{E}=\{\overline{\phi},\overline{K}\}$ or $\mathsf{E}=\{K,\overline{K}\}$, we get the same outer bound for the class of channels $Y_i \succeq Y_K $ for all $i\in \{1,2,\cdots ,K-1\}$. 

%\textcolor{blue}{REMOVE: Clearly, by the above argument, for any message set in-between the two extremes, namely $\mathsf{E}=\{S,\overline{K}\}$ where $K\in S$, the outer bound on the capacity region remains \eqref{Eq:Capacity_Region_1Private_1OrderK-1_messages_1}-\eqref{Eq:Capacity_Region_1Private_1OrderK-1_messages_2}, when $R_K $ is replaced by $R_S$} (\textcolor{purple}{shouldn't \eqref{Eq:Capacity_Region_1Private_1OrderK-1_messages_2} be only for all $j \in \bar{S}$ in this case? Why not write the general case as the lemma and say the proof can be done for the ``hardest" case $S=\{K\}$ as in Theorem \ref{Th-Capacity_Region_Two_K-1_Order_msgs} and the general case follows in a similar way?}).
%This explains why the outer bound does not change by relaxing the decoding required for any subset of the receivers $Y_1,Y_2,\cdots,Y_{K-1}$. 

\section{Achievable Rate Region: Two Nested Messages}
\label{Sec_Two_nested}

In this section, we focus on the special case of two {\em nested} messages. Let $S_1=\{1, 2, \cdots, P\}$ and $S_2= \{1, 2, \cdots, P+L\}$ with  $P+L{=}K$. Hence, the set of indices of private receivers that demand both messages is $S_p =\{1,2,\cdots,P\}$ and the set of indices of common receivers that demand only the common or multicast message $M_{S_2}$ is $S_l{=}\{P+1, \cdots , P+L\}$. It follows from the up-set rate-splitting technique described in Theorem \ref{TH_genral_inner_General_msgs} that the private message is split into at most $2^L$ parts (depending on the choice of $\mathsf{F}$) while the common message is not split because it is desired by all receivers. 

%Moreover, depending on the choice of $\mathsf{F}$ in Theorem \ref{TH_genral_inner_General_msgs}, we get different achievable regions. 
%We elaborate this point via a simple example next. 

Next, we present the inner bound in a more explicit way for the two nested message case than in Theorem \ref{TH_genral_inner_General_msgs}. In particular, the inner bound is presented in terms of the original message rates $R_{S}$ (where $S\in\mathsf{E}$) and the split message rates $R_{S\rightarrow S^{'}}$ ($S^{'} \in \mathsf{F}$) instead of the reconstruction rates as in Theorem \ref{TH_genral_inner_General_msgs}. We are able to do this because of the simplicity afforded by the structure of the nested message set.
%while the rest of the reconstruction rates $\hat{R}_{S}$ ($S\in \mathsf{F}\backslash \uparrow_{\mathsf{F}} \{12 \cdots P\}$) are equal to zero.

\begin{theorem}
\label{Th_Inner_Nested_Bound_indirect_F}
Let $\mathsf{F}$ be any message index superset so that $\mathsf{P}\supseteq \mathsf{F}\supseteq\mathsf{E}$. A rate pair $(R_{S_p}, R_{\overline{\phi}})$ is achievable over the $K$-user DM BC if 
\begin{align}
 R_{\overline{\phi}}+R_{S_p} & \leq  I(U_{\mathsf{W}_j^{\mathsf{F}}};Y_j|Q) \enspace \forall j\in S_p \label{Eq_TH_indirect_generalF_private1}\\
 \sum_{S\in \mathsf{B}}  R_{S_p\rightarrow S} 
 &\leq I(U_{\mathsf{B}}; Y_j|U_{\mathsf{W}_j^{\mathsf{F}}\backslash \mathsf{B}   },Q)  \;\quad \mathsf{B}\in \mathcal{F}_{\downarrow}(\mathsf{W}_j^{\mathsf{F} }\backslash \overline{\phi}), \; \forall j\in S_p \label{Eq_TH_indirect_generalF_private2}\\
 R_{\overline{\phi}}+\sum_{S\in \uparrow_{\mathsf{F}} \{ 1 2 \cdots P.i\}} R_{S_p\rightarrow S}
 &\leq I(U_{\mathsf{W}_i^{\mathsf{F}}};Y_i|Q)  \enspace \forall i\in S_l \label{Eq_TH_indirect_generalF_common}
\end{align}
for some $p(q,u_{\mathsf{F}})=p(q) \prod_{S\in \mathsf{F}}p(u_S|u_{\uparrow_{\mathsf{F}} S\backslash    \{S\}     },q)$ and some deterministic function $X$ of the time sharing and auxiliary random variables $(Q,U_{\mathsf{F}})$.
\end{theorem}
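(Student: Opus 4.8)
The plan is to obtain this theorem directly from Theorem \ref{TH_genral_inner_General_msgs} by specializing the message set to $\mathsf{E}=\{S_p,\overline{\phi}\}$ and then rewriting the reconstruction-rate inequalities \eqref{Eq_condition_reconstruction_rates1}--\eqref{Eq_condition_reconstruction_rates2} in terms of $R_{\overline{\phi}}$ and the private split rates $R_{S_p\rightarrow S}$; no fresh probabilistic argument is needed since achievability is inherited wholesale. First I would evaluate the reconstruction rates \eqref{Eq_reconstruction_rates_upset_splitting}. Because $\overline{\phi}$ is the top element of $\mathsf{P}$, the only $S'\in\mathsf{F}$ with $\overline{\phi}\subseteq S'$ is $S'=\overline{\phi}$ itself, so the common message is not split and $R_{\overline{\phi}}=R_{\overline{\phi}\rightarrow\overline{\phi}}$. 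Consequently $\downarrow_{\mathsf{E}}\overline{\phi}=\{S_p,\overline{\phi}\}$, while $\downarrow_{\mathsf{E}}S'$ contains only $S_p$ (and only when $S_p\subseteq S'$) for every $S'\neq\overline{\phi}$. Hence $\hat{R}_{\overline{\phi}}=R_{\overline{\phi}}+R_{S_p\rightarrow\overline{\phi}}$, whereas $\hat{R}_{S'}=R_{S_p\rightarrow S'}$ for $S'\neq\overline{\phi}$, with the understanding that $\hat{R}_{S'}=0$ whenever $S_p\not\subseteq S'$ (these are precisely the null reconstruction rates of Remark \ref{remarkzerorate}).

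Next I would translate the private-receiver constraints. For $j\in S_p$ the constraints \eqref{Eq_condition_reconstruction_rates1} range over all $\mathsf{B}\in\mathcal{F}_{\downarrow}(\mathsf{W}_j^{\mathsf{F}})$, and I would split them by whether $\overline{\phi}\in\mathsf{B}$. Since $\overline{\phi}$ is maximal, the unique down-set containing it is $\mathsf{B}=\mathsf{W}_j^{\mathsf{F}}$; using the elementary identity $\{S'\in\mathsf{W}_j^{\mathsf{F}}:S_p\subseteq S'\}=\uparrow_{\mathsf{F}}S_p$ (valid because $j\in S_p$) together with \eqref{Eq_RS_upset_splittin} gives $\sum_{S'\in\mathsf{W}_j^{\mathsf{F}}}\hat{R}_{S'}=R_{\overline{\phi}}+R_{S_p}$, which is \eqref{Eq_TH_indirect_generalF_private1} (the conditioning set $\mathsf{W}_j^{\mathsf{F}}\backslash\mathsf{B}$ is empty here). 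Every remaining $\mathsf{B}$ avoids $\overline{\phi}$ and is therefore a down-set of $\mathsf{W}_j^{\mathsf{F}}\backslash\overline{\phi}$; for such $\mathsf{B}$ only the split-rate terms survive, so $\sum_{S'\in\mathsf{B}}\hat{R}_{S'}=\sum_{S\in\mathsf{B}}R_{S_p\rightarrow S}$, which yields \eqref{Eq_TH_indirect_generalF_private2}. These two cases exhaust $\mathcal{F}_{\downarrow}(\mathsf{W}_j^{\mathsf{F}})$.

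For the common receivers I would first observe that in the nested case $S_{l_1}=\phi$ and $S_{l_2}=S_l$, so only \eqref{Eq_condition_reconstruction_rates2} with $S_i=\overline{\phi}$ is active. Its index family $\mathcal{F}_{\downarrow_{\{\overline{\phi}\}}}(\mathsf{W}_j^{\mathsf{F}})$ consists of the down-sets of $\mathsf{W}_j^{\mathsf{F}}$ that contain $\overline{\phi}$, and since $\overline{\phi}$ is maximal this family is the singleton $\{\mathsf{W}_j^{\mathsf{F}}\}$. I would then compute $\sum_{S'\in\mathsf{W}_j^{\mathsf{F}}}\hat{R}_{S'}=R_{\overline{\phi}}+\sum_{S'\in\mathsf{W}_j^{\mathsf{F}},\,S_p\subseteq S'}R_{S_p\rightarrow S'}$ and identify $\{S'\in\mathsf{W}_j^{\mathsf{F}}:S_p\subseteq S'\}=\uparrow_{\mathsf{F}}\{12\cdots P.i\}$ with $i=j$ (now the extra requirement $j\in S'$ is binding, since $j\notin S_p$), which is exactly \eqref{Eq_TH_indirect_generalF_common}. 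The right-hand mutual-information terms require no rewriting, as they already appear in the stated form in Theorem \ref{TH_genral_inner_General_msgs}.

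The work is essentially bookkeeping rather than a substantive argument. The step that demands the most care is tracking which reconstruction rates vanish: whenever $\mathsf{F}\supset\uparrow_{\mathsf{P}}\mathsf{E}$ the expansion creates indices $S'$ with $\hat{R}_{S'}=0$ (the single-codeword-per-cloud phenomenon of Remark \ref{remarkzerorate}), and one must verify that these null terms drop out of every partial sum so that the surviving constraints depend only on $R_{\overline{\phi}}$ and the split rates $R_{S_p\rightarrow S}$ as claimed. The set identities $\{S'\in\mathsf{W}_j^{\mathsf{F}}:S_p\subseteq S'\}=\uparrow_{\mathsf{F}}\{S_p\cup\{j\}\}$ used in both the private and common cases follow immediately from the definitions of $\mathsf{W}_j^{\mathsf{F}}$ and of the up-set, so no appeal to Lemma \ref{Lemma_Sets} beyond these elementary observations is needed.
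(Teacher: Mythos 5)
Your proposal is correct and follows essentially the same route as the paper's proof: specialize Theorem \ref{TH_genral_inner_General_msgs} to $\mathsf{E}=\{S_p,\overline{\phi}\}$, simplify the reconstruction rates (with $\hat{R}_S=0$ for $S\notin\uparrow_{\mathsf{F}}S_p$), observe that the only down-set of $\mathsf{W}_j^{\mathsf{F}}$ containing $\overline{\phi}$ is $\mathsf{W}_j^{\mathsf{F}}$ itself to split \eqref{Eq_condition_reconstruction_rates1} into \eqref{Eq_TH_indirect_generalF_private1}--\eqref{Eq_TH_indirect_generalF_private2}, and collapse $\mathcal{F}_{\downarrow_{\overline{\phi}}}(\mathsf{W}_i^{\mathsf{F}})$ to the singleton $\{\mathsf{W}_i^{\mathsf{F}}\}$ with $\mathsf{W}_i^{\mathsf{F}}\cap\uparrow_{\mathsf{F}}S_p=\uparrow_{\mathsf{F}}\{12\cdots P.i\}$ to obtain \eqref{Eq_TH_indirect_generalF_common}. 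The bookkeeping of the vanishing reconstruction rates matches the paper's \eqref{Eq_reconstruction_rates_nested1}--\eqref{Eq_reconstruction_rates_nested2}, so no gap remains.
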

\begin{proof}
The proof of this theorem follows from Theorem \ref{TH_genral_inner_General_msgs} when we specialize it to two nested messages. First,  note that since the private message $M_{S_p}$ is split according to \eqref{Eq_RS_upset_splittin}, we can simplify the reconstructed message rates in \eqref{Eq_reconstruction_rates_upset_splitting} as follows
\begin{align}
&\hat{R}_{S}= R_{S_p\rightarrow S} \enspace \forall S\in \uparrow_{\mathsf{F}} S_p\backslash \overline{\phi} \label{Eq_reconstruction_rates_nested1}\\
&\hat{R}_{\overline{\phi}}=R_{\overline{\phi}}+R_{S_p \rightarrow \overline{\phi}} \\
& \hat{R}_{S} = 0 \quad S\in \mathsf{F}\backslash \uparrow_{\mathsf{F}} S_p
\label{Eq_reconstruction_rates_nested2}
\end{align} 
Next, consider the set of inequalities in \eqref{Eq_condition_reconstruction_rates1}. Note that $\mathcal{F}_{\downarrow}(\mathsf{W}_j^{\mathsf{F} })$ for any $j$ can be written as $\mathcal{F}_{\downarrow}(\mathsf{W}_j^{\mathsf{F} }\backslash \overline{\phi}) \cup \mathsf{W}_j^{\mathsf{F} } $ because the only down-set that contains $\overline{\phi}$ must be $\mathsf{W}_j^{\mathsf{F} } $. 
%since $ \{ 1,2, \cdots, K \} \supseteq S$ for any $S \in \mathsf{W}_j^{\mathsf{F} } $. 
Hence, the inequality in \eqref{Eq_condition_reconstruction_rates1} simplifies to \eqref{Eq_TH_indirect_generalF_private1} and \eqref{Eq_TH_indirect_generalF_private2}. On the other hand, for all common receivers $Y_i$ ($i\in S_l$), we have $\mathcal{F}_{\downarrow_{ \overline{\phi} }  }(\mathsf{W}_i^{\mathsf{F}})= \mathsf{W}_i^{\mathsf{F}}$. However, since the only non-zero reconstruction rates are indexed by $\uparrow_{\mathsf{F}}S_p$, we can replace $\mathcal{F}_{\downarrow_{   \overline{\phi} }  }(\mathsf{W}_i^{\mathsf{F}})$ in \eqref{Eq_condition_reconstruction_rates2} by $\mathsf{W}_i^{\mathsf{F}} \cap \uparrow_{\mathsf{F}}S_p = \uparrow_{\mathsf{F}}\{12 \cdots P.i\}$ which yields \eqref{Eq_TH_indirect_generalF_common}. This completes the proof of Theorem \ref{Th_Inner_Nested_Bound_indirect_F}.   
\end{proof}

\subsection{Recovering Prior Results from Theorem \ref{Th_Inner_Nested_Bound_indirect_F}}

Theorem \ref{Th_Inner_Nested_Bound_indirect_F} can be seen as a generalization of previously proposed achievable schemes in the context of particular examples. These include (a) the two-receiver DM-BC with degraded messages in which superposition coding alone (without rate-splitting) is sufficient to achieve the capacity region as was shown in the important work of Korner and Marton in \cite{kiirner1977general} (b) the three-receiver DM-BC with two nested messages with one and two common receivers which was investigated in-depth in \cite{nair2009capacity} and (c) the DM-BC with two nested messages for an arbitrary number of receivers and with an arbitrary number of common receivers studied by the authors in \cite{salman2017capacity}. In particular, the achievable rate regions based on rate-splitting and superposition coding obtained in the aforementioned papers can be obtained by specific choices of $\mathsf{F}$. Those rate regions were shown in those works to be the capacity regions for certain classes of channels. In this section, we briefly describe those regions from the lens of Theorem \ref{Th_Inner_Nested_Bound_indirect_F} and also specify the conditions on the DM-BC for which they yield the capacity region. This allows us to not only place known results in the general framework of Theorem \ref{Th_Inner_Nested_Bound_indirect_F} but also to show how further improvements may be possible in DM-BCs for which the capacity region is not yet known.

%\textcolor{blue}{We don't have capacity results for these problems; the cited papers do. There is not much that is surprising in this set of examples. All they are saying is that Theorem 2 provides a general framework for rate-splitting and superposition coding with direct/indirect decoding.}

%\textcolor{red}{REMOVE THESE IF WE KEEP TH2: Recall first the definitions of the less noisy and more capable class of channels.
%\begin{definition}  \label{Def_Less_Noisy} \cite[Definition 2]{korner1975source} Receiver $Y_t$ is less noisy than $Y_r$ if $I(U;Y_t){\geq} I(U;Y_r)$ for all $p(u,x)$. Henceforth, we denote this condition as $Y_t {\succeq} Y_r$.
%\end{definition}
%\begin{definition} \cite[Definition 3]{korner1975source} \label{Def_More_Capable} Receiver $Y_t$ is more capable than $Y_r$ if $I(X;Y_t){\geq} I(X;Y_r)$ for all $p(x)$. Henceforth, we denote this condition as $ Y_t  \sqsupseteq  Y_r $.  \end{definition} It is well known that the more capable condition is strictly less restrictive than the less noisy condition.}

\begin{example}
\label{Example_K_2_P_1}
Consider the case of $K=2$ and two degraded messages so that $\mathsf{E}=\{1,12\}$. Let us choose $\mathsf{F}=\mathsf{E}$ and $X=U_1$ and $Q$ to be an uninformative constant in Theorem \ref{Th_Inner_Nested_Bound_indirect_F}. We get that that a rate pair $(R_{1}, R_{12})$ is achievable if there exist non-negative split rates with $R_{1}= R_{1 \rightarrow 1} + R_{1\rightarrow 12} $ such that the following inequalities hold
\begin{align}
R_{12}+R_1\leq &I(X;Y_1)\nonumber  %\qquad &&R_{12}+R_1\leq I(X;Y_1) 
\\
R_{1 \rightarrow 1}\leq &I(X;Y_1|U_{12}) \nonumber \\
R_{12}+R_{1\rightarrow 12}\leq &I(U_{12};Y_2)\nonumber 
%\qquad && R_{12}\leq I(U_{12};Y_2)&&
\end{align}
for some $p(u_{12})p(x|u_{12})$. It was shown that the above region is the capacity region for $K=2$ with two nested messages in \cite{kiirner1977general} without any channel restrictions. Note that even rate splitting is not necessary in this case.
\end{example}

\begin{example}
\label{Example_K_3_P_2}
Consider the three-receiver case with the message index set $\mathsf{E}=\{12,123\}$. We choose $\mathsf{F}=\mathsf{E}$ and $X=U_{12}$ and $Q={\rm const}$ in Theorem \ref{Th_Inner_Nested_Bound_indirect_F} which then implies that a rate pair $(R_{12}, R_{123})$ is achievable if there exist non-negative split rates with $R_{12}= R_{12 \rightarrow 12} + R_{12\rightarrow 123} $ such that the following inequalities hold
\begin{align*}
R_{123}+R_{12} & \leq \min\{I(X;Y_1),I(X;Y_2)\} \\
R_{12 \rightarrow 12} & \leq \min\{I(X;Y_1|U_{123}), I(X;Y_2|U_{123})\} \\
R_{123}+R_{12\rightarrow 123} & \leq I(U_{123};Y_3)
\end{align*} 
for some $p(u_{123})p(x|u_{123})$. Projecting away the split rates yields that rate pairs for which
\begin{align*}
R_{123}+R_{12} & \leq \min\{I(X;Y_1),I(X;Y_2)\} \\
R_{123}+ R_{12} & \leq \min\{I(X;Y_1|U_{123}), I(X;Y_2|U_{123})\}+ I(U_{123};Y_3) \\
R_{123} &\leq I(U_{123};Y_3)
\end{align*}
for some $p(u_{123})p(x|u_{123})$ are achievable. This region is identical to that given in \cite[Corollary 1]{nair2009capacity} (as it should be, since our general scheme reduces to that of \cite[Corollary 1]{nair2009capacity} with the choices $\mathsf{F}=\mathsf{E}$ and $X=U_{12}$ and $Q={\rm const} $). Notably, the above region was shown to be the capacity region when $Y_1\succeq Y_3$ and $Y_2\succeq Y_3$ in \cite[Proposition 11]{nair2009capacity} \footnote{There is a typo in the statement of Proposition 11 of \cite{nair2009capacity}. The conditions stated therein $Y_1 {\succeq}  Y_2 $ and $Y_3 {\succeq}  Y_2 $ should be $Y_2{\succeq} Y_3 $ and $Y_1{\succeq}  Y_3$. The correct conditions are stated in the proof of the proposition.}. 
\end{example}

%In the coming example, we regenerate the region obtained before in \cite[Remark 5.2]{nair2009capacity}. 
\begin{example}
\label{Example:E_1,123_F_1,13,123}
Here, we %again consider the three-receiver BC with two nested messages but with two common receivers, i.e., 
consider $K=3$ and $\mathsf{E}=\{1,123\}$. We choose $\mathsf{F}=\{1,13,123\}$ and $X=U_1$ and $Q={\rm const}$. From Theorem \ref{Th_Inner_Nested_Bound_indirect_F}, we obtain inequalities that involve split rates. Using FME to eliminate the split rates we have that rate pairs $(R_2, R_{123})$ are achievable provided 
%we can show that the rate tuples 
%\begin{align*}
%R_{123}+R_{1\rightarrow 13}+R_{1 \rightarrow 123}&\leq I(U_{13};Y_3)  \\
%R_{123}+R_{1 \rightarrow 123}&\leq I(U_{123};Y_2)\\
%R_{123}+R_1&\leq I(X;Y_1)  \\
%R_{1\rightarrow 1}&\leq I(X;Y_1|U_{13}) \\
%R_{1\rightarrow 1}+R_{1 \rightarrow 13}&\leq I(X;Y_1|U_{123})  
%\end{align*}for some $p(u_{123})p(u_{13}|u_{123})p(x|u_{13})$ are achievable.
\begin{align*}
R_{123} &\leq \min\{I(U_{13};Y_3),I(U_{123};Y_2)\} \\ 
R_{123}+R_1 &\leq \min\{  I(X;Y_1), I(X;Y_1|U_{13})+I(U_{13};Y_3), I(X;Y_1|U_{123})+I(U_{123};Y_2)\}
\end{align*}
for some $p(u_{123})p(u_{13}|u_{123})p(x|u_{13})$.
This region is identical to that given in \cite[Remark 5.2]{nair2009capacity} since our scheme for this example and choice of $\mathsf{F}=\{1,13,123\}$ and $X=U_1$ becomes that of \cite[Remark 5.2]{nair2009capacity}. Also, in \cite[Proposition 7]{nair2009capacity}, this region was shown to be capacity region when $Y_1\succeq Y_2$.  
\end{example}

\begin{example}
\label{Example_K_4_P_1}
Consider the four-receiver case with the message index set $\mathsf{E}=\{1,1234\}$.  By choosing $\mathsf{F}=\{1,12,123,1234\}$ and $Q={\rm const}$, we get from Theorem \ref{Th_Inner_Nested_Bound_indirect_F} that the set of rates satisfying
\begin{align*}
R_{1234}+R_1&\leq I(X;Y_1)\\ 
R_{1\rightarrow 1}& \leq I(X;Y_1|U_{12})\\
R_{1\rightarrow 1}+R_{1\rightarrow 12}&\leq I(X;Y_1|U_{123})\\
R_{1\rightarrow 123}+R_{1\rightarrow 12}+R_{1\rightarrow 1}&\leq I(X;Y_1|U_{1234})\\
R_{123}+R_{1\rightarrow 12}+R_{1\rightarrow 123}+R_{1\rightarrow 1234}&\leq I(U_{12};Y_2) \\
R_{123}+R_{1\rightarrow 123}+R_{1\rightarrow 1234}&\leq I(U_{123};Y_3)\\
R_{123}+R_{1\rightarrow 1234}&\leq I(U_{1234};Y_4) 
\end{align*}
for some $p(u_{1234}) p(u_{123}|u_{1234}) p(u_{12}|u_{123})p(x|u_{12})$ is achievable where $R_1=R_{1\rightarrow 1}+R_{1\rightarrow 12}+R_{1\rightarrow 123}+R_{1\rightarrow 1234}$ and $X=U_1$. By applying the FME procedure  to eliminate the sub-rates, the obtained polygon is the capacity region for the class of channels $Y_1 \succeq Y_3$ and $Y_1 \succeq Y_4$ from the result in \cite[Theorems 2 and 3]{salman2017capacity} for $K=4$. 
\end{example}

In fact, if we choose $\mathsf{F}=\{1,12,123,1234,\cdots ,1234 \cdots K\}$, we recover the result in \cite[Thoerem 2]{salman2017capacity} for any number of receivers $K$ with any number of common and private receivers. Hence, the result in \cite[Thoerem 2]{salman2017capacity} is a special case of the result in Theorem \ref{Th_Inner_Nested_Bound_indirect_F} obtained by a specific choice of $\mathsf{F}$ but unlike in Theorem \ref{Th_Inner_Nested_Bound_indirect_F} it has a polyhedral description in terms of the two message rates \cite{salman2017capacity}.

\subsection{New rate regions from Theorem \ref{Th_Inner_Nested_Bound_indirect_F} for $K=3$ and $\mathsf{E}=\{1,123\}$}

In the previous section we saw that previous rate regions of \cite{kiirner1977general}, \cite{nair2009capacity}, and \cite{salman2017capacity} can be obtained from Theorem \ref{Th_Inner_Nested_Bound_indirect_F} by choosing $\mathsf{F}$ accordingly. 
It this section, we show that Theorem \ref{Th_Inner_Nested_Bound_indirect_F} can also produce new rate regions that are potentially larger.
As stated in Remark \ref{rem-largeF}, the choice of $\mathsf{F} \supseteq \mathsf{E}$ determines the achievable region and expanding $\mathsf{F}$ cannot reduce the rate region. 
%with the largest rate region obtained by setting $\mathsf{F}=\mathsf{P}$. 
Moreover, choosing  $\mathsf{F} \supset \uparrow_{\mathsf{P}}\mathsf{E}$ yields some zero reconstruction rates in \eqref{Eq_reconstruction_rates_upset_splitting}, or equivalently, multiple codebooks for certain groups of sub-messages. 

%Choosing $\mathsf{F}=\mathsf{P}$ is however not always necessary.  

In particular, Theorem \ref{Th_Inner_Nested_Bound_indirect_F} can produce potentially larger inner bounds
for the message set $\mathsf{E}=\{12,123\}$ than that in Example \ref{Example_K_3_P_2} (i.e., that in \cite[Corollary 1]{nair2009capacity}) 
and for the message set $\mathsf{E}=\{1,123\}$ than in Example \ref{Example:E_1,123_F_1,13,123} (i.e., that of
\cite[Remark 5.2]{nair2009capacity}). We show this in the latter case in the next example.
%by choosing larger message set expansion parameter $\mathsf{F}$. We illustrate this by considering just the latter case of $\mathsf{E}=\{1,123\}$ and obtain new rate regions in the following two examples.
\begin{example}
\label{Example:E_1,123_F_1,13,23,123}
Suppose $K=3$ and $\mathsf{E}=\{1,123\}$. We choose $\mathsf{F}=\{1,13,23,123\} $ and $Q={\rm const}$. From Theorem \ref{Th_Inner_Nested_Bound_indirect_F}, we can show, after projecting away the split rates using FME,  that the rate tuples that satisfy the inequalities
\begin{align}
R_{123}&\leq \min\{I(U_{13},U_{23};Y_3), I(U_{23},U_{123};Y_2)\} \label{Eq:E_1,123_F_1,13,23,123_1}\\
R_{123}+R_1&\leq \min \{ I(U_1,U_{13},U_{123};Y_1), \label{Eq:E_1,123_F_1,13,23,123_2}
\\
& \quad \quad \quad
I(U_1;Y_1|U_{13},U_{123})+I(U_{13},U_{23};Y_3),\label{Eq:E_1,123_F_1,13,23,123_3}\\
& \quad \quad \quad I(U_1,U_{13};Y_1|U_{123})+I(U_{23},U_{123};Y_2) \} \label{Eq:E_1,123_F_1,13,23,123_4}
\end{align}
for some $p(u_{123})p(u_{23}|u_{123})p(u_{13}|u_{123})p(u_1|u_{13},u_{123})$, and with $X$ is deterministic function of $U_{\mathsf{F}}$, are achievable. Note that this region subsumes that of Example \ref{Example:E_1,123_F_1,13,123} since by setting $U_1=X$ and $U_{23}=\textit{const.}$, i.e., $U_{123}\markov U_{13}\markov U_{1}=X$, the above region reduces to that in Example \ref{Example:E_1,123_F_1,13,123}. In other words, choosing $X$ more generally as a deterministic function of $U_{\mathsf{F}}$ and taking the union over all admissible distribution of $U_{23}$ can only enlarge the inner bound compared to that in Example \ref{Example:E_1,123_F_1,13,123}. In particular, in the rate region of this example $U_{23}$ is an extra degree of freedom that receivers $Y_2,Y_3$ can exploit (cf. the bounds in the inequalities \eqref{Eq:E_1,123_F_1,13,23,123_1}, \eqref{Eq:E_1,123_F_1,13,23,123_3}-\eqref{Eq:E_1,123_F_1,13,23,123_4}) which can possibly enlarge the inner bound.   
\end{example}

%\begin{example}
%\label{Example_K_3_P_2}
%Consider the three-receiver case with the message index set $\mathsf{E}=\{12,123\}$. We choose $\mathsf{F}=\mathsf{E}$ and $X=U_{12}$ and $Q={\rm const}$. We have from Theorem \ref{Th_Inner_Nested_Bound_indirect_F} that the rate tuples 
%\begin{align*}
%R_{123}+R_{12}&\leq \min\{I(X;Y_1),I(X;Y_2)\}\\ R_{12 \rightarrow 12}& \leq \min\{I(X;Y_1|U_{123}), I(X;Y_2|U_{123})\}\\ %R_{123}+R_{12\rightarrow 123}&\leq I(U_{123};Y_3)
%\end{align*} for some $p(u_{123})p(x|u_{123})$ are achievable where $R_{12}=R_{12\rightarrow 12}+R_{12 \rightarrow 123}$. After projection, this region is shown in \cite[Proposition 11]{nair2009capacity} to be the capacity region for the class of channels 
%\footnote{There is a typo in the statement of Proposition 11 of \cite{nair2009capacity}. The conditions stated therein $Y_1 {\succeq}  Y_2 $ and $Y_3 {\succeq}  Y_2 $ should be $Y_2{\succeq} Y_3 $ and $Y_1{\succeq}  Y_3$} $Y_1\succeq Y_3$ and $Y_2\succeq Y_3$. 
%\end{example}

\begin{example}
\label{Example:E_1,123_F_1,12,13,123}
Suppose again $K=3$ and $\mathsf{E}=\{1,123\}$. This time choose $\mathsf{F}= \uparrow_{\mathsf{P}}\mathsf{E}=\{1,12,13,123\}$, $X=U_1$ and $Q={\rm const}$. From Theorem \ref{Th_Inner_Nested_Bound_indirect_F}, we can show that the rate pairs that satisfy 
\begin{align}
&R_{123}\leq \min\{I(U_{12},U_{123};Y_{2}),I(U_{13},U_{123};Y_3)\}\label{Eq_Inner_Bound_F_diamond,1}\\
&R_{123}+R_{1} \leq\min\{I(X;Y_1), \nonumber \\ &\qquad \qquad \enspace I(X;Y_1|U_{123},U_{12})+I(U_{12},U_{123};Y_{2}), \nonumber \\ &\qquad \qquad \enspace I(X;Y_1|U_{123},U_{13})+I(U_{13},U_{123};Y_3)\} \label{Eq_Inner_Bound_F_diamond,2}\\
&2R_{123}+R_{1} \leq I(X;Y_1|U_{12},U_{13})+ I(U_{12},U_{123};Y_{2})+I(U_{13},U_{123};Y_3)\label{Eq_Inner_Bound_F_diamond,3}\\
&2R_{123}+2R_{1} \leq I(X;Y_1|U_{12},U_{13})+  I(X;Y_1|U_{123})+
\nonumber \\& \qquad \qquad \qquad I(U_{12},U_{123};Y_{2})+I(U_{13},U_{123};Y_3) \label{Eq_Inner_Bound_F_diamond,4}
\end{align} 
for some $p(u_{123})p(u_{12}|u_{123})p(u_{13}|u_{123})p(x|u_{12},u_{13})$ are achievable. It can be shown that the above region is equivalent to that in \cite[Proposition 5]{nair2009capacity} but with no binning. Note that without binning, the second and fifth inequalities in \cite[Proposition 5]{nair2009capacity} become redundant. 
\end{example}

%\subsection{The new rate region ${\cal R}(K=3, \mathsf{E}=\{1,123\}, \mathsf{F}=\mathsf{P})$}
%In the previous examples, we saw that several known results are special cases of Theorem \ref{Th_Inner_Nested_Bound_indirect_F} obtained by choosing different $\mathsf{F}$ in different scenarios. Hence, Theorem \ref{Th_Inner_Nested_Bound_indirect_F} offers a general achievable rate region parameterized by the message index superset $\mathsf{F}$. 

The largest inner bound obtainable from Theorem \ref{Th_Inner_Nested_Bound_indirect_F} is obtained by setting $\mathsf{F}=\mathsf{P}$. In the next example we present the achievable rate region for this choice after FME.

\begin{example}
\label{Example:E_1,123_F_1,2,3,12,13,123}
Consider again $K=3$ and $\mathsf{E}=\{1,123\}$. 
Choosing $\mathsf{F}=\mathsf{P}$ and $Q={\rm const}$ in Theorem \ref{Th_Inner_Nested_Bound_indirect_F}, and after performing FME, it can be shown that the rate pairs that satisfy the inequalities
\begin{align}
&R_{123}\leq \min\{I(U_2,U_{12},U_{23},U_{123};Y_{2}),I(U_3,U_{13},U_{23},U_{123};Y_3)\}\label{Eq_Inner_Bound_F_diamond,1}\\
&R_{123}+R_{1} \leq\min\{I(U_1,U_{12},U_{13},U_{123};Y_1), \nonumber \\ &\qquad \qquad \enspace I(U_1,U_{13};Y_1|U_{123},U_{12})+I(U_2,U_{12},U_{23},U_{123};Y_{2}), \nonumber \\ &\qquad \qquad \enspace I(U_1,U_{12};Y_1|U_{123},U_{13})+I(U_3,U_{13},U_{23},U_{123};Y_3)\} \label{Eq_Inner_Bound_F_diamond,2}\\
&2R_{123}+R_{1} \leq I(U_1;Y_1|U_{12},U_{13})+ I(U_2,U_{12},U_{23},U_{123};Y_{2})+I(U_3,U_{13},U_{23},U_{123};Y_3)\label{Eq_Inner_Bound_F_diamond,3}\\
&2R_{123}+2R_{1} \leq I(U_1;Y_1|U_{12},U_{13})+  I(U_1,U_{12},U_{13};Y_1|U_{123})+
\nonumber \\& \qquad \qquad \qquad I(U_2,U_{12},U_{23},U_{123};Y_{2})+I(U_3,U_{13},U_{23},U_{123};Y_3) \label{Eq_Inner_Bound_F_diamond,4}
\end{align} for some $p(u_{123})p(u_{12}|u_{123})p(u_{13}|u_{123})p(u_{23}|u_{123})p(u_1|u_{12},u_{13},u_{123})p(u_2|u_{12},u_{23},u_{123})p(u_3|u_{13},u_{23},u_{123})$, with $X$ a deterministic function of $U_{\mathsf{P}}$, are achievable. By setting $U_1=X$ and setting $U_2$, $U_3$ and $U_{23}$ to be uninformative constants, the above rate region reduces to that in Example \ref{Example:E_1,123_F_1,12,13,123}. But more generally, we  allow $X$ to be a function of $U_{\mathsf{P}}$ and exploit the additional degrees of freedom $U_2$, $U_3$ and $U_{23}$ provide to the common receivers so as to likely get an enlarged rate region (this can be almost certainly made to be the case per coding distribution).
Next, we give a detailed account of the encoding and decoding functions associated with the above rate region, which we denote as ${\cal R}(K=3, \mathsf{E}=\{1,123\}, \mathsf{F}=\mathsf{P})$.

In the coding scheme of Theorem \ref{Th_Inner_Nested_Bound_indirect_F} for the choice $\mathsf{F}{=}\mathsf{P}$, illustrated in the Hasse diagram of Fig. \ref{Fig_K3_Different_F_P}, we generate codewords associated with all possible auxiliary random variables $U_1,U_2,U_3,U_{12},U_{13},U_{23},U_{123}$ even though $M_1$ is split using \eqref{Eq_RS_upset_splittin} into the four split messages as in Example \ref{Example:E_1,123_F_1,12,13,123}. Note that, as was the case
in Example \ref{Example_P_0_C1_1_C2_2}, the choice of $\mathsf{F}{=}\mathsf{P}$ yields some reconstruction rates equal to {\em zero} %as per \eqref{Eq_reconstruction_rates_upset_splitting}, 
only this time those rates are $\hat{R}_2$,  $\hat{R}_3$ and $\hat{R}_{23}$. 
%The resulting rate region ${\cal R}(K, \mathsf{E} = \{1,123\}, \mathsf{F}=\mathsf{P}) $ clearly is at least as large as ${\cal R}(K, \mathsf{E} = \{1,123\}, \mathsf{F}= \uparrow_{\mathsf{P}}\mathsf{E})$ of \cite[Proposition 5]{nair2009capacity} without binning which in turn is at least as large as ${\cal R}(K, \mathsf{E} = \{1,123\}, \mathsf{F}= \{1,13,123\})$ of \cite[Theorem 1]{nair2009capacity}. Indeed, we show later that the unexpected selection of $\mathsf{F}=\mathsf{P}$ renders a simple choice of the coding distribution to be capacity achieving for the combination network for arbitrary $K$ and $L\leq 2$. %For this reason, we provide a detailed explanation of the encoding and decoding functions in what follows with an emphasis on clarifying the meaning of zero reconstruction rates and how they lead to multiple codebooks for some set of sub-messages. We also take this opportunity to describe the rationale behind our general approach to non-unique decoding in detail.
%whereas the others described in Example \ref{Example_K=3_Different_F} are strictly smaller for that distribution.
%In particular, the coding distribution with uniform independent auxiliary random variables achieves capacity for this latter problem.
%In Example \ref{Example:E_1,123_F_1,2,3,12,13,123} (
Unlike in Example \ref{Example:E_1,123_F_1,12,13,123},
%where with $\mathsf{F}= \uparrow_{\mathsf{P}}\mathsf{E}$,
%or in \cite[Proposition 5]{nair2009capacity} with $\mathsf{F}= \uparrow_{\mathsf{P}}\mathsf{E}$), 
the message pair $(m_{123}, m_{1 \rightarrow 123})$ in the present example is assigned {\em two} codebooks, namely, the $U_{123}$ and $U_{23}$ codebooks (with joint distribution $p(u_{23},u_{123})$), which can also be thought of as the $U_{23}$ codebook being a satellite of the $U_{123}$ codebook but with a single satellite codeword per cloud codeword. Similarly (also unlike in the cases of Example \ref{Example:E_1,123_F_1,12,13,123}) %or in \cite[Proposition 5]{nair2009capacity} with $\mathsf{F}= \uparrow_{\mathsf{P}}\mathsf{E}$), 
the message triple $(m_{123}, m_{1 \rightarrow 123}, m_{1 \rightarrow 13})$ is assigned {\em two} codebooks (the $U_{13}$ and $U_{3}$ codebooks) with $U_{13}$ codebook being a satellite of the cloud codebook $U_{123}$ (which is one of the two codebooks for the pair $(m_{123}, m_{1 \rightarrow 123})$) and the $U_{3}$ codebook being a satellite of the $U_{13}$  codebook but with one satellite codeword in the $U_{3}$ codebook per one codeword in the $U_{13}$ codebook. Moreover, the message triple $(m_{123}, m_{1 \rightarrow 123}, m_{1 \rightarrow 12})$ is also assigned {\em two} codebooks (the $U_{12}$ and $U_{2}$ codebooks) with each being a satellite codebook of one of the two ``cloud" codebooks $U_{123}$ and $U_{23}$ for the message pair $(m_{123}, m_{1 \rightarrow 123})$, where $U_{23}$ is itself a one-satellite-per-cloud satellite of the cloud $U_{123}$. 

As for decoding, private receiver 1 must decode both messages. To do this, it decodes the reconstructed message set $M_{\mathsf{W}_1^\mathsf{P}} = (\hat{M}_1, \hat{M}_{12}, \hat{M}_{13}, \hat{M}_{123})$ which contains all of $(M_1, M_{123})$ (and nothing more) based on unique joint typicality of the $(U_1, U_{12}, U_{13}, U_{123}) $ codeword-tuple with the received sequence $Y_{1,1}^n$. Common Receiver 2, which needs to only decode the common message $M_{123}$, will decode the reconstructed message set $M_{\mathsf{W}_2^\mathsf{P}} = (\hat{M}_2, \hat{M}_{12}, \hat{M}_{23}, \hat{M}_{123})$ by testing for joint typicality of $(U_2, U_{12}, U_{23}, U_{123}) $ codeword-tuple with the received signal $Y_{2,1}^n$, but not uniquely as explained next. Now, note that $\hat{M}_2$ and $\hat{M}_{23}$ are empty messages and $\hat{M}_{12} = M_{1 \rightarrow 12}$ and $\hat{M}_{123} = (M_{1 \rightarrow 123}, M_{123})$. Hence, Receiver 2 can decode $M_{123}$ by jointly decoding $M_{\mathsf{W}_2^\mathsf{P}}$ by testing for joint typicality of the $(U_2, U_{12}, U_{23}, U_{123} )$ codeword-tuple with the received sequence $Y_{2,1}^n$ (thereby making use of the two double codewords) with just unique decoding of $\hat{M}_{123}$ (which is sufficient to decode $M_{123}$ uniquely) and non-unique decoding of $\hat{M}_{12}$ which contains only a part of $M_1$ which Receiver 2 is not required to decode. A similar explanation can be given for non-unique decoding at Receiver 3.
\end{example}

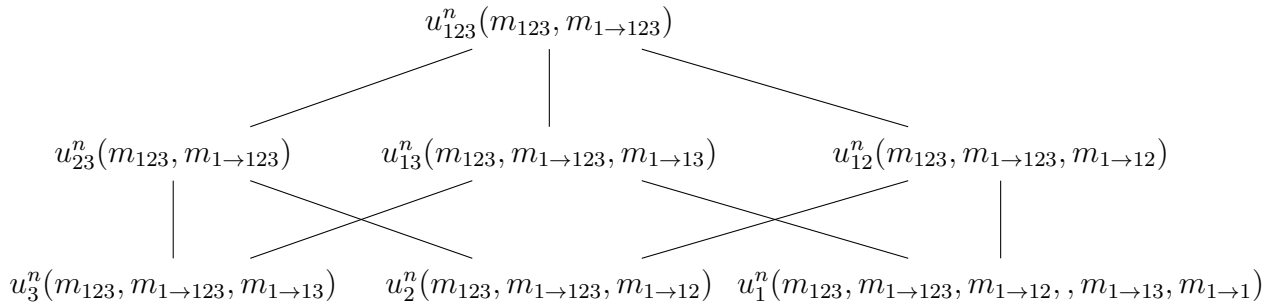
\begin{figure*}
\centering
\begin{tikzpicture}
  \node (u123) at (-1.5,1.75)  {$u_{123}^n(m_{123},m_{1\rightarrow 123})$};
  
  \node (u12) at (4.5,0)  {$  u_{12}^n(m_{123},m_{1\rightarrow 123},m_{1\rightarrow 12})$};
 
  \node (u13) at (-1.5,0)  {$ u_{13}^n(m_{123},m_{1\rightarrow 123},m_{1\rightarrow 13})$};
  
 \node (u23) at (-6.5,0)  {$ u_{23}^n(m_{123},m_{1\rightarrow 123})$};
 
 \node (u3) at (-6.5,-1.75)  {$ u_{3}^n(m_{123},m_{1\rightarrow 123},m_{1\rightarrow 13})$};
 
 \node (u2) at (-1.5,-1.75)  {$u_{2}^n(m_{123},m_{1\rightarrow 123},m_{1\rightarrow 12})$};
 
 \node (u1) at (4.5,-1.75)  {$ u_1^n(m_{123},m_{1\rightarrow 123},m_{1\rightarrow 12},,m_{1\rightarrow 13},m_{1\rightarrow 1})$};

\draw [-] (u123) -- (u12);
\draw [-] (u123) -- (u13);
\draw [-] (u123) -- (u23);
\draw [-] (u1) -- (u12);
\draw [-] (u1) -- (u13);
\draw [-] (u2) -- (u12);
\draw [-] (u2) -- (u23);
\draw [-] (u3) -- (u13);
\draw [-] (u3) -- (u23);
\end{tikzpicture}
\caption{ A Hasse diagram for the coding scheme when message index superset $\mathsf{F}=\mathsf{P}$ for $\mathsf{E}=\{1,123\}$, where the line represents superposition coding. $M_1$ is split into four parts $M_{1\rightarrow 1},M_{1\rightarrow 12},M_{1\rightarrow 13},M_{1\rightarrow 123}$. The encoding and decoding schemes are explained in detail in Example \ref{Example:E_1,123_F_1,2,3,12,13,123}.
\label{Fig_K3_Different_F_P}} 
\end{figure*}

\subsection{Explicit polyhedral representation for the inner bound with one and two common receivers}

It is important to notice that for any two nested messages when we set $\mathsf{F}=\mathsf{P}$ in Theorem \ref{Th_Inner_Nested_Bound_indirect_F} we get $2^K-1$ auxiliary random variables and the private message is split into $2^L$ parts where $L$ is the number of common receivers. Also, we can rewrite \eqref{Eq_TH_indirect_generalF_private2} as follows
\begin{align}
&\sum_{S \in \mathsf{B}}R_{S_p\rightarrow S} \leq I(U_{\downarrow_{\mathsf{W}_j^{\mathsf{P}}}  \mathsf{B}   };Y_j|U_{\mathsf{W}_j^{\mathsf{P}} \backslash \downarrow_{\mathsf{W}_j^{\mathsf{P}}} \mathsf{B}  },Q)   \qquad \forall  \mathsf{B} \in \mathcal{F}_{\downarrow}(\uparrow_{\mathsf{P}}S_p\backslash \overline{\phi}) , j\in  S_p \label{Eq_TH_indirect_private2_extra}
\end{align}

For one common receiver, we have $ S_p =\overline{K}$ and $ S_l =\{K\}$. We specialize Theorem \ref{Th_Inner_Nested_Bound_indirect_F} to get a polyhedral description for the inner bound for this case as follows.
\begin{corollary}
\label{Colollary_nested_indirect_1commonRec}
An inner bound of $K$-user DM BC for two nested messages ($M_{\overline{K}}$,$M_{\overline{\phi}}$) is the set of rate pairs ($R_{\overline{K}},R_{\overline{\phi}}$) satisfying
\begin{align}
R_{\overline{\phi}} &\leq I(U_{\mathsf{W}_K^{\mathsf{P}}};Y_K|Q) \label{Eq_Corollary_indirect_1common_1}\\
R_{\overline{\phi}}+R_{\overline{K}} &\leq  I(U_{\mathsf{W}_j^{\mathsf{P}}};Y_j|Q) \enspace \forall j\in S_p \label{Eq_Corollary_indirect_1common_2}\\
R_{\overline{\phi}}+R_{\overline{K}} &\leq I(U_{  \downarrow_{\mathsf{W}_j^{\mathsf{P}}} \{\overline{K}\} };Y_j|U_{\mathsf{W}_j^{\mathsf{P}} \backslash  \downarrow_{\mathsf{W}_j^{\mathsf{P}}} \{\overline{K}\}  },Q) + I(U_{\mathsf{W}_{K}^{\mathsf{P}}};Y_{K}|Q)  \enspace \forall j\in S_p \label{Eq_Corollary_indirect_1common_3} 
\end{align}
for some $p(q,u_{\mathsf{P}})= p(q) \prod_{S\in \mathsf{P}}p(u_S|u_{\uparrow_{\mathsf{P}} S\backslash \{ S \} },q)$ and $X$ as a deterministic function of $(Q, U_{\mathsf{P}})$.
\end{corollary}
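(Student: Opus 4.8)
The plan is to specialize Theorem~\ref{Th_Inner_Nested_Bound_indirect_F} to the choice $\mathsf{F}=\mathsf{P}$ together with the one-common-receiver configuration $S_p=\overline{K}=\{1,\dots,K-1\}$, $S_l=\{K\}$, and then project away the split rates by Fourier--Motzkin elimination (FME) \cite{schrijver1986theory}. First I would record that $\uparrow_{\mathsf{P}}S_p=\{\overline{K},\overline{\phi}\}$, so \eqref{Eq_RS_upset_splittin} forces the private message to split into exactly two parts,
\[
R_{\overline{K}}=R_{\overline{K}\rightarrow\overline{K}}+R_{\overline{K}\rightarrow\overline{\phi}},
\]
and that by \eqref{Eq_reconstruction_rates_nested1}--\eqref{Eq_reconstruction_rates_nested2} the only nonzero reconstruction rates are $\hat R_{\overline{K}}=R_{\overline{K}\rightarrow\overline{K}}$ and $\hat R_{\overline{\phi}}=R_{\overline{\phi}}+R_{\overline{K}\rightarrow\overline{\phi}}$. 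Hence only the two split rates $R_{\overline{K}\rightarrow\overline{K}}$ and $R_{\overline{K}\rightarrow\overline{\phi}}$ must be eliminated.

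Next I would instantiate the three inequality families of Theorem~\ref{Th_Inner_Nested_Bound_indirect_F}. Equation \eqref{Eq_TH_indirect_generalF_private1} gives, verbatim, $R_{\overline{\phi}}+R_{\overline{K}}\le I(U_{\mathsf{W}_j^{\mathsf{P}}};Y_j|Q)$ for every $j\in S_p$; this is already \eqref{Eq_Corollary_indirect_1common_2} and it survives FME untouched since it carries no split rates. For the private partial bound I would use the reformulation \eqref{Eq_TH_indirect_private2_extra}: because $\uparrow_{\mathsf{P}}S_p\backslash\overline{\phi}=\{\overline{K}\}$ is a one-element poset, its only nonempty down-set is $\mathsf{B}=\{\overline{K}\}$, so the only non-redundant partial inequality is $R_{\overline{K}\rightarrow\overline{K}}\le I(U_{\downarrow_{\mathsf{W}_j^{\mathsf{P}}}\{\overline{K}\}};Y_j|U_{\mathsf{W}_j^{\mathsf{P}}\backslash\downarrow_{\mathsf{W}_j^{\mathsf{P}}}\{\overline{K}\}},Q)$ for each $j\in S_p$; every other $\mathsf{B}$ contributes $0\le(\cdot)$ and is discarded. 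Finally, for the common receiver $i=K$ the index set in \eqref{Eq_TH_indirect_generalF_common} is $\uparrow_{\mathsf{P}}\overline{\phi}=\{\overline{\phi}\}$, so that bound reads $R_{\overline{\phi}}+R_{\overline{K}\rightarrow\overline{\phi}}\le I(U_{\mathsf{W}_K^{\mathsf{P}}};Y_K|Q)$.

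The FME itself is then two-dimensional and routine. Writing $a:=R_{\overline{K}\rightarrow\overline{K}}$ and substituting $R_{\overline{K}\rightarrow\overline{\phi}}=R_{\overline{K}}-a$, I collect the lower bounds $a\ge0$ and $a\ge R_{\overline{\phi}}+R_{\overline{K}}-I(U_{\mathsf{W}_K^{\mathsf{P}}};Y_K|Q)$ (the latter from the common-receiver bound) against the upper bounds $a\le I(U_{\downarrow_{\mathsf{W}_j^{\mathsf{P}}}\{\overline{K}\}};Y_j|U_{\mathsf{W}_j^{\mathsf{P}}\backslash\downarrow_{\mathsf{W}_j^{\mathsf{P}}}\{\overline{K}\}},Q)$ for $j\in S_p$ and $a\le R_{\overline{K}}$ (the latter from $R_{\overline{K}\rightarrow\overline{\phi}}\ge0$). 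Pairing each lower bound with each upper bound, the pairings involving the lower bound $a\ge0$ are vacuous; the pair with lower bound $R_{\overline{\phi}}+R_{\overline{K}}-I(U_{\mathsf{W}_K^{\mathsf{P}}};Y_K|Q)$ and upper bound $R_{\overline{K}}$ yields $R_{\overline{\phi}}\le I(U_{\mathsf{W}_K^{\mathsf{P}}};Y_K|Q)$, i.e.\ \eqref{Eq_Corollary_indirect_1common_1}; and the pairs with that same lower bound and the upper bounds indexed by $j\in S_p$ yield $R_{\overline{\phi}}+R_{\overline{K}}\le I(U_{\downarrow_{\mathsf{W}_j^{\mathsf{P}}}\{\overline{K}\}};Y_j|U_{\mathsf{W}_j^{\mathsf{P}}\backslash\downarrow_{\mathsf{W}_j^{\mathsf{P}}}\{\overline{K}\}},Q)+I(U_{\mathsf{W}_K^{\mathsf{P}}};Y_K|Q)$, i.e.\ \eqref{Eq_Corollary_indirect_1common_3}. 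Together with the retained \eqref{Eq_Corollary_indirect_1common_2} this is exactly the claimed polyhedron.

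I do not anticipate a genuine obstacle here, as the computation is short. The only point requiring care is the bookkeeping of redundancy: verifying that all down-sets $\mathsf{B}$ other than $\{\overline{K}\}$ in \eqref{Eq_TH_indirect_generalF_private2}, and all lower/upper-bound pairings in the FME other than the three listed, produce only trivially true inequalities (such as $0\le I(\cdot\,;\cdot|\cdot)$), so that no spurious constraint remains. This is precisely the kind of redundancy elimination that the non-unique-decoding formulation underlying Theorem~\ref{TH_genral_inner_General_msgs} was designed to keep manageable.
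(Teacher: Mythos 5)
Your proposal is correct and follows exactly the route the paper takes: specialize Theorem~\ref{Th_Inner_Nested_Bound_indirect_F} with $\mathsf{F}=\mathsf{P}$, note that the private message splits into only the two parts $M_{\overline{K}\rightarrow\overline{K}}$ and $M_{\overline{K}\rightarrow\overline{\phi}}$, and eliminate the two split rates by FME. The paper states this only as a two-line sketch, and your explicit enumeration of the surviving inequalities and the lower/upper-bound pairings fills in that sketch accurately.
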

\begin{proof}
The proof follows from Theorem \ref{Th_Inner_Nested_Bound_indirect_F} when $\mathsf{F}=\mathsf{P}$. In this case, we only split the private message $M_{\overline{K}}$ into two sub-messages $M_{\overline{K}\rightarrow \overline{K}}$, $M_{\overline{K}\rightarrow \overline{\phi}}$. By applying FME procedure, we project away the sub-rates and obtain the achievable rate region in the statement of the corollary. 
\end{proof}

On the other hand, for two common receivers case, we have $ S_p =\{1,2,\cdots,K-2\}$ and $ S_l =\{K{-}1,K\}$. In the next corollary, we specialize Theorem \ref{Th_Inner_Nested_Bound_indirect_F} to this case and obtain an explicit polyhedral description for the inner bound.
\begin{corollary}
\label{Colollary_nested_indirect_2commonRec}
An inner bound of $K$-user DM BC for two nested messages ($M_{\overline{K-1.K}}$,$M_{\overline{\phi}}$) is the set of rate pairs ($R_{\overline{K-1.K}},R_{\overline{\phi}}$) satisfying
\begin{align}
R_{\overline{\phi}} &\leq I(U_{\mathsf{W}_i^{\mathsf{P}}};Y_i|Q)  \enspace \forall i\in S_l \label{Eq_Corollary_indirect_2common_1}\\
R_{\overline{\phi}}+R_{\overline{K-1.K}} &\leq  I(U_{\mathsf{W}_j^{\mathsf{P}}};Y_j|Q) \enspace \forall j\in S_p \label{Eq_Corollary_indirect_2common_2}
\\
R_{\overline{\phi}}+R_{\overline{K-1.K}} &\leq I(U_{  \downarrow_{\mathsf{W}_j^{\mathsf{P}}} \{\overline{K-1}\} };Y_j|U_{\mathsf{W}_j^{\mathsf{P}} \backslash  \downarrow_{\mathsf{W}_j^{\mathsf{P}}} \{\overline{K-1}\}  },Q)  + I(U_{\mathsf{W}_{K-1}^{\mathsf{P}}};Y_{K-1}|Q) 
\enspace \forall j\in S_p \label{Eq_Corollary_indirect_2common_3} \\
R_{\overline{\phi}}+R_{\overline{K-1.K}} &\leq  I(U_{  \downarrow_{\mathsf{W}_j^{\mathsf{P}}} \{\overline{K}\} };Y_j|U_{\mathsf{W}_j^{\mathsf{P}} \backslash  \downarrow_{\mathsf{W}_j^{\mathsf{P}}} \{\overline{K}\}  },Q)  + I(U_{\mathsf{W}_{K}^{\mathsf{P}}};Y_{K}|Q) 
\enspace \forall j\in S_p 
\label{Eq_Corollary_indirect_2common_4} \\
2R_{\overline{\phi}}+R_{\overline{K-1.K}} &\leq I(U_{\downarrow_{\mathsf{W}_j^{\mathsf{P}}} \{\overline{K-1.K}\} };Y_j|U_{\mathsf{W}_j^{\mathsf{P}}\backslash \downarrow_{\mathsf{W}_j^{\mathsf{P}}} \{\overline{K-1.K}\} },Q)+I(U_{\mathsf{W}_{K{-}1}^{\mathsf{P}}};Y_{K{-}1}|Q)+ I(U_{\mathsf{W}_{K}^{\mathsf{P}}};Y_{K}|Q)
\enspace \forall j\in S_p 
\label{Eq_Corollary_indirect_2common_5} \\
2R_{\overline{\phi}}+2R_{\overline{K-1.K}}
&\leq  
I(U_{\downarrow_{\mathsf{W}_{j_1}^{\mathsf{P}}} \{\overline{K-1.K}\} };Y_{j_1}|U_{\mathsf{W}_{j_1}^{\mathsf{P}}\backslash \downarrow_{\mathsf{W}_{j_1}^{\mathsf{P}}} \{\overline{K-1.K}\} },Q)
+
I(U_{\downarrow_{\mathsf{W}_{j_2}^{\mathsf{P}}} \{\overline{K-1},\overline{K}\} };Y_{j_2}|U_{\mathsf{W}_{j_2}^{\mathsf{P}}\backslash \downarrow_{\mathsf{W}_{j_2}^{\mathsf{P}}} \{\overline{K-1},\overline{K}\} },Q)\nonumber \\
&\qquad
+I(U_{\mathsf{W}_{K{-}1}^{\mathsf{P}}};Y_{K{-}1}|Q)  + I(U_{\mathsf{W}_{K}^{\mathsf{P}}};Y_{K}|Q)\enspace \forall j_1,j_2\in S_p \label{Eq_Corollary_indirect_2common_6} 
\end{align}
for some $p(q, u_{\mathsf{P}})= p(q) \prod_{S\in \mathsf{P}}p(u_S|u_{\uparrow_{\mathsf{P}} S\backslash \{ S \} },q)$ and $X$ as a deterministic function of $(Q, U_{\mathsf{P}})$.
\end{corollary}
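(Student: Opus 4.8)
The plan is to specialize Theorem~\ref{Th_Inner_Nested_Bound_indirect_F} to $\mathsf{F}=\mathsf{P}$ for the configuration $S_p=\{1,\dots,K-2\}=\overline{K-1.K}$, $S_l=\{K-1,K\}$, and then to eliminate the split rates by Fourier--Motzkin elimination (FME), just as was done for Corollaries~\ref{Colollary_general_indirect_C1_1_C2_1} and~\ref{Colollary_nested_indirect_1commonRec}. Since $L=2$ here, up-set splitting~\eqref{Eq_RS_upset_splittin} breaks $M_{\overline{K-1.K}}$ into the $2^{L}=4$ parts indexed by $\uparrow_{\mathsf{P}}S_p=\{\overline{K-1.K},\overline{K},\overline{K-1},\overline{\phi}\}$; I abbreviate their rates as $a=R_{S_p\to\overline{K-1.K}}$, $b=R_{S_p\to\overline{K}}$, $c=R_{S_p\to\overline{K-1}}$, $d=R_{S_p\to\overline{\phi}}$, so that $R_{\overline{K-1.K}}=a+b+c+d$ with $a,b,c,d\ge 0$. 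By~\eqref{Eq_reconstruction_rates_nested1}--\eqref{Eq_reconstruction_rates_nested2} the only nonzero reconstruction rates are $\hat R_{\overline{K-1.K}}=a$, $\hat R_{\overline{K}}=b$, $\hat R_{\overline{K-1}}=c$ and $\hat R_{\overline{\phi}}=R_{\overline{\phi}}+d$, since $\mathsf{F}=\mathsf{P}\supset\uparrow_{\mathsf{P}}\mathsf{E}$ forces the rest to vanish (cf. Remark~\ref{remarkzerorate}).

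Next I would list the inequalities of Theorem~\ref{Th_Inner_Nested_Bound_indirect_F}. Constraint~\eqref{Eq_TH_indirect_generalF_private1} gives, for each $j\in S_p$, $R_{\overline{\phi}}+R_{\overline{K-1.K}}\le I(U_{\mathsf{W}_j^{\mathsf{P}}};Y_j|Q)$, which is already~\eqref{Eq_Corollary_indirect_2common_2}. Writing the remaining private constraints in the form~\eqref{Eq_TH_indirect_private2_extra}, only the down-sets $\mathsf{B}\in\mathcal{F}_\downarrow(\uparrow_{\mathsf{P}}S_p\setminus\overline{\phi})$ are relevant, and since $\uparrow_{\mathsf{P}}S_p\setminus\overline{\phi}=\{\overline{K-1.K},\overline{K},\overline{K-1}\}$ is a three-element poset, the nonempty choices are $\{\overline{K-1.K}\}$, $\{\overline{K-1.K},\overline{K}\}$, $\{\overline{K-1.K},\overline{K-1}\}$, and $\{\overline{K-1.K},\overline{K},\overline{K-1}\}$, bounding $a$, $a+b$, $a+c$, and $a+b+c$ respectively (all other down-sets of $\mathsf{W}_j^{\mathsf{P}}$ yield redundant inequalities, exactly as in the proof of Corollary~\ref{Colollary_general_indirect_C1_1_C2_1}). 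Finally~\eqref{Eq_TH_indirect_generalF_common}, evaluated at $\uparrow_{\mathsf{P}}\{S_p\cup\{K-1\}\}=\{\overline{K},\overline{\phi}\}$ and $\uparrow_{\mathsf{P}}\{S_p\cup\{K\}\}=\{\overline{K-1},\overline{\phi}\}$, gives the two common-receiver bounds $R_{\overline{\phi}}+b+d\le I(U_{\mathsf{W}_{K-1}^{\mathsf{P}}};Y_{K-1}|Q)$ and $R_{\overline{\phi}}+c+d\le I(U_{\mathsf{W}_K^{\mathsf{P}}};Y_K|Q)$.

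The core of the argument is the FME itself. Substituting $d=R_{\overline{K-1.K}}-a-b-c$ turns $d\ge 0$ into $a+b+c\le R_{\overline{K-1.K}}$ and rewrites the two common bounds as $R_{\overline{\phi}}+R_{\overline{K-1.K}}\le I(U_{\mathsf{W}_{K-1}^{\mathsf{P}}};Y_{K-1}|Q)+a+c$ and $R_{\overline{\phi}}+R_{\overline{K-1.K}}\le I(U_{\mathsf{W}_K^{\mathsf{P}}};Y_K|Q)+a+b$, leaving $a,b,c\ge 0$ to be eliminated. The individual-rate bounds~\eqref{Eq_Corollary_indirect_2common_1} then follow by combining each rewritten common bound with $a+b+c\le R_{\overline{K-1.K}}$ and a nonnegativity constraint ($b\ge 0$, resp. $c\ge 0$). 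The single-weight sum bounds~\eqref{Eq_Corollary_indirect_2common_3} and~\eqref{Eq_Corollary_indirect_2common_4} follow by adding the private bound on $a+c$ (resp. $a+b$) to the rewritten $Y_{K-1}$ (resp. $Y_K$) bound, which reconstitutes $R_{\overline{\phi}}+R_{\overline{K-1.K}}$ on the left. For the weight-two inequalities, summing the two rewritten common bounds gives $2R_{\overline{\phi}}+2R_{\overline{K-1.K}}-2a-b-c\le I(U_{\mathsf{W}_{K-1}^{\mathsf{P}}};Y_{K-1}|Q)+I(U_{\mathsf{W}_K^{\mathsf{P}}};Y_K|Q)$; adding $a+b+c\le R_{\overline{K-1.K}}$ and the private bound on $a$ at some $Y_j$ yields~\eqref{Eq_Corollary_indirect_2common_5}, whereas adding the private bound on $a$ at $Y_{j_1}$ and the private bound on $a+b+c$ at $Y_{j_2}$ yields~\eqref{Eq_Corollary_indirect_2common_6}.

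I expect the only genuine difficulty to be the bookkeeping of the elimination: confirming that these six families are precisely the facets of the projected polytope and that every other nonnegative combination --- in particular those pairing~\eqref{Eq_TH_indirect_generalF_private1} with a common bound, or invoking the down-sets on $a+b$ and $a+c$ simultaneously --- is dominated and hence redundant. This is routine but tedious, and is cleanest if one eliminates $d$ first via the substitution above, then $b$ and $c$, and finally $a$, checking at each stage that no new non-redundant inequality survives. A secondary point needing care is matching each down-set $\downarrow_{\mathsf{W}_j^{\mathsf{P}}}\{\cdot\}$ in~\eqref{Eq_Corollary_indirect_2common_3}--\eqref{Eq_Corollary_indirect_2common_6} with the correct partial sum of $a,b,c$, which rests on the fact that, within the nonzero reconstruction indices, the down-set of $\{\overline{K-1}\}$ is $\{\overline{K-1.K},\overline{K-1}\}$, that of $\{\overline{K}\}$ is $\{\overline{K-1.K},\overline{K}\}$, and that of $\{\overline{K-1},\overline{K}\}$ is $\{\overline{K-1.K},\overline{K},\overline{K-1}\}$.
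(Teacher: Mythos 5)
Your proposal is correct and follows the same route as the paper's proof: specialize Theorem \ref{Th_Inner_Nested_Bound_indirect_F} with $\mathsf{F}=\mathsf{P}$, split $M_{\overline{K-1.K}}$ into the four up-set sub-messages, and project away the split rates by FME. The paper states this only as a two-line sketch, and your identification of the relevant down-sets, the partial sums $a$, $a+b$, $a+c$, $a+b+c$, and the specific nonnegative combinations producing \eqref{Eq_Corollary_indirect_2common_1}--\eqref{Eq_Corollary_indirect_2common_6} correctly fills in the elimination the paper leaves implicit.
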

\begin{proof}
The proof follows from Theorem \ref{Th_Inner_Nested_Bound_indirect_F} when $\mathsf{F}=\mathsf{P}$. We split $M_{\overline{K-1.K}}$ into four sub-messages $M_{\overline{K-1.K}\rightarrow \overline{K-1.K}}$, $M_{\overline{K-1.K}\rightarrow \overline{K}}$, $M_{\overline{K-1.K}\rightarrow \overline{K-1}}$, $M_{\overline{K-1.K}\rightarrow  \overline{\phi}}$. Then, by using the FME procedure, we project away the sub-rates and obtain an explicit polyhedral description of the achievable rate region in the statement of the corollary.  
\end{proof}

%We found it more convenient in the following part to write the set $\{i_1,i_2,\cdots, i_N\}$ as $\overline{j_1j_2\cdots j_M}$ or $i_1i_2\cdots i_N$ for any two positive numbers $N$ and $M$ where the set $\{i_1,i_2,\cdots, i_N\}$ and $\{j_1,j_2,\cdots, j_M\}$ are disjoint and $\{i_1,i_2,\cdots, i_N\}\cup \{j_1,j_2,\cdots, j_M\}=\{1,2,3,\cdots,K\}$. Hence, for instance, the set of the nested message indices with one common receiver can be written as $\mathsf{E}=\{\overline{K},\overline{\phi}\}$ or $\mathsf{E}=\{12 \cdots K-1,12 \cdots K\}$. This modification simplifies the presentation of the results. 

\section{Capacity Regions For The Combination Network}
\label{Sec_Capacity_combianation}

In this section, we establish the capacity region for combination network for each of three different message sets considered in Corollaries \ref{Colollary_general_indirect_C1_1_C2_1}, \ref{Colollary_nested_indirect_1commonRec} and \ref{Colollary_nested_indirect_2commonRec}. Throughout this section, we forgo coded time-sharing by setting $Q$ to be an uninformative constant in those corollaries. In particular, we will consider $(a)$ two messages each required by $K-1$ receivers, i.e., $\mathsf{E}=\{\overline{K},\overline{K-1}\}$ by specializing Corollary \ref{Colollary_general_indirect_C1_1_C2_1} $(b)$ two nested messages with one common receiver and any number of private receivers, i.e., $\mathsf{E}=\{\overline{K},\overline{\phi}\}$ by specializing Corollary \ref{Colollary_nested_indirect_1commonRec} and $(c)$ two nested messages with two common receivers and any number of private receivers, i.e., $\mathsf{E}=\{\overline{K-1.K},\overline{\phi}\}$ by specializing Corollary \ref{Colollary_nested_indirect_2commonRec}. The capacity for each message set is presented separately in different subsections. Note that in each of those corollaries we exploit the full generality of the coding scheme proposed in Theorem \ref{TH_genral_inner_General_msgs} and \ref{Th_Inner_Nested_Bound_indirect_F} by letting $\mathsf{F}=\mathsf{P}$. We will show that the benefit of this generality is that a single distribution suffices to achieve capacity independently of which of the three message sets is considered. In particular, choosing the auxiliary random variables $U_S$ for all $S\in \mathsf{P}$ to be independent and uniformly distributed over $\mathcal{V}_S$ where $|\mathcal{V}_S|=2^{C_S}$ and $V_S=U_S$ is capacity-achieving in each case. We prove the converses using mainly the submodularity of entropy.    

We define next a function we will use throughout this section. Let the modular function (over all subsets of $\mathsf{P}$) $C_{\mathsf{W}} \triangleq \sum_{S\in \mathsf{W}}C_S $ for any $\mathsf{W}\subseteq \mathsf{P}$. 

%In fact, the approach that we are using here to establish the capacity regions is different from that in \cite{bidokhti2016capacity}.
%More specifically, we specialize the inner bounds in Theorem \ref{TH_genral_inner_General_msgs} and \ref{Th_Inner_Nested_Bound_indirect_F} 

Since the proof of the capacity region for combination networks with different message sets depends on the use of certain general identities, we state them in the following lemma for easy reference. 
\begin{lemma}
\label{Lemma_Combination_networks_proof}
The following two identities hold:
\begin{enumerate}
\item For any set $S=i_1i_2\cdots i_N \subset \{1,2,\cdots, K\}$ and $i\in [1:K]$ 
\begin{align}
C_{\mathsf{W}_i^{\mathsf{P}}}&= C_{\downarrow_{\mathsf{W}_i^{\mathsf{P}}}\{\overline{i_1},\overline{i_2},\cdots, \overline{i_N}\}} +C_{\uparrow_{\mathsf{W}_i^\mathsf{P}}\{S\}} \label{Eq_Lemma_7}
\\
C_{\mathsf{W}_i^{\mathsf{P}}}&= C_{\downarrow_{\mathsf{W}_i^{\mathsf{P}}}\{\overline{S}\}} +C_{ \uparrow_{\mathsf{W}_i^\mathsf{P}}\{i_1,i_2,\cdots ,i_N\} }
\label{Eq:Lemma_77}
\end{align}
\item In the $K$-receiver combination network where $Y_i=V_{\mathsf{W}_i^{\mathsf{P}}}$ for all $i\in \{1,2,\cdots,K\}$, if the random variables $U_S$ for all $S\in \mathsf{P}$ are independent and uniformly distributed over $\mathcal{V}_S$ where $|\mathcal{V}_S|=2^{C_S}$ and $V_S=U_S$, then for any $ \mathsf{W}\subseteq \mathsf{W}_i^{\mathsf{P}}$,
\begin{equation}
I(U_\mathsf{W};Y_i|U_{\mathsf{W}_i^{\mathsf{P}} \backslash \mathsf{W}},Q)= C_{\mathsf{W}} \label{Eq_Lemma_8}
\end{equation} where $Q$ is a time sharing random variable.

\end{enumerate}
\end{lemma}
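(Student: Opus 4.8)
The plan is to handle the two parts separately: the first is a direct consequence of the set-partition identities already proved in Lemma \ref{Lemma_Sets}, applied through the additivity of $C$; the second is a short entropy computation exploiting the noiseless, independent structure of the combination network.

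\textbf{Part 1.} The set function $C_{\mathsf{W}} = \sum_{S\in\mathsf{W}} C_S$ is modular, hence additive over disjoint unions. Therefore, to establish \eqref{Eq_Lemma_7} it suffices to exhibit $\downarrow_{\mathsf{W}_i^{\mathsf{P}}}\{\overline{i_1},\overline{i_2},\cdots,\overline{i_N}\}$ and $\uparrow_{\mathsf{W}_i^\mathsf{P}}\{S\}$ as a partition of $\mathsf{W}_i^{\mathsf{P}}$. But this is precisely the content of Lemma \ref{Lemma_Sets}: equation \eqref{Eq_Lemma_*} says their union is all of $\mathsf{W}_i^{\mathsf{P}}$, while \eqref{Eq_Lemma_**} says their intersection is empty. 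Summing $C_S$ over these two disjoint blocks and comparing with the sum over all of $\mathsf{W}_i^{\mathsf{P}}$ gives \eqref{Eq_Lemma_7} at once. The companion identity \eqref{Eq:Lemma_77} follows in exactly the same way from the pair \eqref{Eq_Lemma_***} and \eqref{Eq_Lemma_****}.

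\textbf{Part 2.} In the combination network we have $Y_i = V_{\mathsf{W}_i^{\mathsf{P}}}$ with $V_S = U_S$, so $Y_i$ is simply the tuple $U_{\mathsf{W}_i^{\mathsf{P}}}$. First I would substitute this and expand the conditional mutual information as a difference of conditional entropies,
\[
I(U_\mathsf{W};Y_i|U_{\mathsf{W}_i^{\mathsf{P}}\backslash\mathsf{W}},Q) = H(U_\mathsf{W}|U_{\mathsf{W}_i^{\mathsf{P}}\backslash\mathsf{W}},Q) - H(U_\mathsf{W}|U_{\mathsf{W}_i^{\mathsf{P}}},Q),
\]
where in the second term the conditioning on $Y_i=U_{\mathsf{W}_i^{\mathsf{P}}}$ already subsumes $U_{\mathsf{W}_i^{\mathsf{P}}\backslash\mathsf{W}}$. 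Since $\mathsf{W}\subseteq\mathsf{W}_i^{\mathsf{P}}$, the sub-tuple $U_\mathsf{W}$ is a deterministic function of $U_{\mathsf{W}_i^{\mathsf{P}}}$, so the subtracted entropy vanishes. For the first term, the mutual independence of $\{U_S\}_{S\in\mathsf{P}}$ together with their independence of the time-sharing variable $Q$ allows the conditioning to be dropped, leaving $H(U_\mathsf{W}) = \sum_{S\in\mathsf{W}} H(U_S)$. As each $U_S$ is uniform on $\mathcal{V}_S$ with $|\mathcal{V}_S| = 2^{C_S}$, we have $H(U_S) = C_S$, and summing yields $C_{\mathsf{W}}$, which is \eqref{Eq_Lemma_8}.

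I do not anticipate a genuine obstacle in either part: Part 1 is a restatement of Lemma \ref{Lemma_Sets} under the additive map $C$, and Part 2 merely formalizes the intuition that independent uniform sources on the noiseless links of the combination network each deliver their full link capacity, with no interaction between the already-observed links $U_{\mathsf{W}_i^{\mathsf{P}}\backslash\mathsf{W}}$ and the new links $U_{\mathsf{W}}$. The one point deserving a moment's care is the bookkeeping that $\mathsf{W}\subseteq\mathsf{W}_i^{\mathsf{P}}$ simultaneously guarantees that $U_\mathsf{W}$ is observable at receiver $i$ (forcing the subtracted entropy to zero) and that $\mathsf{W}$ and $\mathsf{W}_i^{\mathsf{P}}\backslash\mathsf{W}$ genuinely partition the links seen by that receiver.
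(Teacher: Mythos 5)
Your proposal is correct and follows essentially the same route as the paper: Part 1 is obtained by applying the modular function $C$ to the disjoint decompositions \eqref{Eq_Lemma_*}--\eqref{Eq_Lemma_**} and \eqref{Eq_Lemma_***}--\eqref{Eq_Lemma_****} of Lemma \ref{Lemma_Sets}, and Part 2 is the same entropy expansion in which the subtracted term vanishes because $Y_i=U_{\mathsf{W}_i^{\mathsf{P}}}$ determines $U_{\mathsf{W}}$ and the remaining conditional entropy reduces to $\sum_{S\in\mathsf{W}}H(U_S)=C_{\mathsf{W}}$ by independence and uniformity. The only cosmetic difference is that the paper simply sets $|Q|=1$ while you retain $Q$ and drop it by independence; both are valid.
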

\begin{proof}
The proof of \eqref{Eq_Lemma_7} and \eqref{Eq:Lemma_77} follows directly from \eqref{Eq_Lemma_*}-\eqref{Eq_Lemma_**} and \eqref{Eq_Lemma_***}-\eqref{Eq_Lemma_****}, respectively, in Lemma \ref{Lemma_Sets}. In equation \eqref{Eq_Lemma_8}, let $|Q|=1$ so that no coded time sharing is used. Then, we get
\begin{align}
%\stackrel{(a)}
I(U_{\mathsf{W}};Y_i|U_{\mathsf{W}_i^{\mathsf{P}} \backslash \mathsf{W}})&{=}H(U_{\mathsf{W}}|U_{\mathsf{W}_i^{\mathsf{P}} \backslash \mathsf{W}})-H(U_{\mathsf{W}}|Y_i,U_{\mathsf{W}_i^{\mathsf{P}} \backslash \mathsf{W}})\enspace  \nonumber \\
& = H(U_{\mathsf{W}}|U_{\mathsf{W}_i^{\mathsf{P}} \backslash \mathsf{W}}) \label{Eq_Ach_comb_1}\\
& = \sum_{S\in \mathsf{W}}H(U_{S}) \label{Eq_Ach_comb_2}\\
& = \sum_{S\in \mathsf{W}}\log_2|\mathcal{V}_S| \label{Eq_Ach_comb_3}\\
& = \sum_{S\in \mathsf{W}} C_S=C_{\mathsf{W}}\nonumber 
\end{align}
where \eqref{Eq_Ach_comb_1} follows from $Y_i=U_{\mathsf{W}_i^{\mathsf{P}}}$ since $V_S=U_S$ for all $S{\in}\mathsf{P} $, \eqref{Eq_Ach_comb_2} from the independence of the auxiliary random variables, and \eqref{Eq_Ach_comb_3} from $U_S$ being uniformly distributed over $\mathcal{V}_S$ where $|\mathcal{V}_S| =2^{C_S}$.
\end{proof}

%Using \eqref{Eq_Lemma_8} of Lemma \ref{Lemma_Combination_networks_proof}, we can show that the following equalities hold for any $j\in[1:K]$
%\begin{align}
%I(U_{\mathsf{W}_j^{\mathsf{P}}};Y_j)&= C_{\mathsf{W}_j^{\mathsf{P}}}  \label{Eq_achievability_CN_i}\\ I(U_{\downarrow_{\mathsf{W}_j^{\mathsf{P}}} \{\overline{K}\} };Y_j|U_{\mathsf{W}_j^{\mathsf{P}} \backslash \downarrow_{\mathsf{W}_j^{\mathsf{P}}} \{\overline{K}\} })&=C_{\downarrow_{\mathsf{W}_j^{\mathsf{P}}} \{ \overline{K}\} }\label{Eq_achievability_CN_K}\\ I(U_{\downarrow_{\mathsf{W}_j^{\mathsf{P}}} \{\overline{K-1}\} };Y_j|U_{\mathsf{W}_j^{\mathsf{P}} \backslash \downarrow_{\mathsf{W}_j^{\mathsf{P}}} \{\overline{K-1}\} })&=C_{\downarrow_{\mathsf{W}_j^{\mathsf{P}}} \{\overline{K-1} \} }\label{Eq_achievability_CN_K1}\\ I(U_{\downarrow_{\mathsf{W}_j^{\mathsf{P}}} \{\overline{K-1.K}\} };Y_j|U_{\mathsf{W}_j^{\mathsf{P}} \backslash \downarrow_{\mathsf{W}_j^{\mathsf{P}}} \{\overline{K-1.K}\} }) &=C_{\downarrow_{\mathsf{W}_j^{\mathsf{P}}} \{\overline{K-1.K} \}}\label{Eq_achievability_CN_K1K}\\ I(U_{\downarrow_{\mathsf{W}_j^{\mathsf{P}}} \{\overline{K-1},\overline{K}\} };Y_j|U_{\mathsf{W}_j^{\mathsf{P}}\backslash \downarrow_{\mathsf{W}_j^{\mathsf{P}}} \{\overline{K-1},\overline{K}\} })&=C_{\downarrow_{\mathsf{W}_j^{\mathsf{P}}} \{\overline{K-1},\overline{K}\}}\label{Eq_achievability_CN_K1_K}
%\end{align}

\subsection{The capacity for $\mathsf{E}=\{\overline{K},\overline{K-1}\}$}
\label{Sub_section_K_1_general}

In the following theorem, we establish the capacity region of the $K$-user combination network for $P=K-2$, $L_1=1$, and $L_2=1$. 
\begin{theorem}
\label{Th_Capacity_combination_networks_C1_1_C2_1}
The capacity region of the $K$-user combination network with two messages $M_{\overline{K}},M_{\overline{K-1}}$ is the set of rate pairs ($R_{\overline{K}},R_{\overline{K-1}}$) satisfying
\begin{align}
&R_{\overline{K-1}}  \leq C_{\mathsf{W}_K^{\mathsf{P}}} \label{Eq_Capacity_comb_network_C1_1_C2_1_1}\\
&R_{\overline{K}}  \leq C_{\mathsf{W}_{K-1}^{\mathsf{P}}} \label{Eq_Capacity_comb_network_C1_1_C2_1_2}\\
&R_{\overline{K-1}}+R_{\overline{K}}  \leq C_{\mathsf{W}_{j}^{\mathsf{P}}}\enspace \forall j \in  S_p  \label{Eq_Capacity_comb_network_C1_1_C2_1_3}\\
&R_{\overline{K-1}}+R_{\overline{K}}\leq C_{ \downarrow_{\mathsf{W}_{K-1}^{\mathsf{P}}} \{\overline{K}\} }+C_{\mathsf{W}_K^{\mathsf{P}}} \label{Eq_Capacity_comb_network_C1_1_C2_1_4}\\
%&R_{\overline{K-1}}+R_{\overline{K}}\leq C_{ \downarrow_{\mathsf{W}_{j}^{\mathsf{P}}} \{\overline{K}\} }+C_{\mathsf{W}_K^{\mathsf{P}}} \enspace j\in  S_p  \label{Eq_Capacity_comb_network_C1_1_C2_1_5}\\
%&R_{\overline{K-1}}+R_{\overline{K}}\leq C_{ \downarrow_{\mathsf{W}_{j}^{\mathsf{P}}} \{\overline{K-1}\} }+C_{\mathsf{W}_{K-1}^{\mathsf{P}}} \enspace j\in  S_p  \label{Eq_Capacity_comb_network_C1_1_C2_1_6}\\
&2R_{\overline{K-1}}+2R_{\overline{K}}\leq C_{ \downarrow_{\mathsf{W}_{j}^{\mathsf{P}}} \{\overline{K-1},\overline{K}\} } + C_{\mathsf{W}_{K}^{\mathsf{P}}} + C_{\mathsf{W}_{K-1}^{\mathsf{P}}} \enspace j\in  S_p  \label{Eq_Capacity_comb_network_C1_1_C2_1_7}
\end{align}  
%where $C_{\mathsf{W}}=\sum_{S\in \mathsf{W}} C_S$ for any $ \mathsf{W} \subseteq \mathsf{P} $. 
Note that \eqref{Eq_Capacity_comb_network_C1_1_C2_1_4} is also equivalent to \begin{equation}
R_{\overline{K-1}}+R_{\overline{K}}\leq C_{ \downarrow_{\mathsf{W}_{K}^{\mathsf{P}}} \{\overline{K-1}\} }+C_{\mathsf{W}_{K-1}^{\mathsf{P}}} 
\end{equation}
\end{theorem}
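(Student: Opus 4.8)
My plan has two halves, achievability and converse, and both turn out to reduce to the single coding distribution fixed in the section preamble together with the modularity of $C_{\mathsf{W}}$. For achievability I would start from Corollary \ref{Colollary_general_indirect_C1_1_C2_1} with $\mathsf{F}{=}\mathsf{P}$ and instantiate it on the combination network by taking $Q$ constant, the $U_S$ ($S\in\mathsf{P}$) mutually independent and uniform on $\mathcal{V}_S$, and $V_S{=}U_S$ with $X$ the identity map. Lemma \ref{Lemma_Combination_networks_proof}, specifically \eqref{Eq_Lemma_8}, then evaluates every conditional mutual information $I(U_{\mathsf{W}};Y_i|U_{\mathsf{W}_i^{\mathsf{P}}\backslash\mathsf{W}})$ in \eqref{Eq_Corollary_indirect_C1_1_C2_1_1}--\eqref{Eq_Corollary_indirect_C1_1_C2_1_5} as the modular quantity $C_{\mathsf{W}}$. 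Substituting term by term maps \eqref{Eq_Corollary_indirect_C1_1_C2_1_1}, \eqref{Eq_Corollary_indirect_C1_1_C2_1_11}, \eqref{Eq_Corollary_indirect_C1_1_C2_1_2} onto \eqref{Eq_Capacity_comb_network_C1_1_C2_1_1}--\eqref{Eq_Capacity_comb_network_C1_1_C2_1_3}, the $j{=}K-1$ instance of \eqref{Eq_Corollary_indirect_C1_1_C2_1_3} onto \eqref{Eq_Capacity_comb_network_C1_1_C2_1_4}, and \eqref{Eq_Corollary_indirect_C1_1_C2_1_5} onto \eqref{Eq_Capacity_comb_network_C1_1_C2_1_7}. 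I would then discard the remaining $j\in S_p$ instances of \eqref{Eq_Corollary_indirect_C1_1_C2_1_3}--\eqref{Eq_Corollary_indirect_C1_1_C2_1_4} after checking, from $C_{\downarrow_{\mathsf{W}_j^{\mathsf{P}}}\{\overline{K}\}}{=}C_{\mathsf{W}_j^{\mathsf{P}}}-C_{\mathsf{W}_j^{\mathsf{P}}\cap\mathsf{W}_K^{\mathsf{P}}}$, that $C_{\mathsf{W}_j^{\mathsf{P}}}\le C_{\downarrow_{\mathsf{W}_j^{\mathsf{P}}}\{\overline{K}\}}+C_{\mathsf{W}_K^{\mathsf{P}}}$, so that they are dominated by \eqref{Eq_Capacity_comb_network_C1_1_C2_1_3}. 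This shows the achievable region from the chosen distribution is exactly the polyhedron \eqref{Eq_Capacity_comb_network_C1_1_C2_1_1}--\eqref{Eq_Capacity_comb_network_C1_1_C2_1_7}.

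For the equivalence of the two forms of \eqref{Eq_Capacity_comb_network_C1_1_C2_1_4}, I would observe that both sides equal $C_{\mathsf{W}_{K-1}^{\mathsf{P}}\cup\mathsf{W}_K^{\mathsf{P}}}$: indeed $\downarrow_{\mathsf{W}_{K-1}^{\mathsf{P}}}\{\overline{K}\}$ is precisely the links reaching $K-1$ but not $K$, so by modularity of $C_{\mathsf{W}}$ (and the complementation relations \eqref{Eq_Lemma_*}--\eqref{Eq_Lemma_****} underlying Lemma \ref{Lemma_Combination_networks_proof}) we get $C_{\downarrow_{\mathsf{W}_{K-1}^{\mathsf{P}}}\{\overline{K}\}}+C_{\mathsf{W}_K^{\mathsf{P}}}=C_{\mathsf{W}_{K-1}^{\mathsf{P}}}-C_{\mathsf{W}_{K-1}^{\mathsf{P}}\cap\mathsf{W}_K^{\mathsf{P}}}+C_{\mathsf{W}_K^{\mathsf{P}}}=C_{\mathsf{W}_{K-1}^{\mathsf{P}}\cup\mathsf{W}_K^{\mathsf{P}}}$, and symmetrically for the form with $\downarrow_{\mathsf{W}_K^{\mathsf{P}}}\{\overline{K-1}\}$.

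The converse I would do purely with cut-set (Fano) bounds. Since receiver $K$ decodes $M_{\overline{K-1}}$ from $Y_K^n=V_{\mathsf{W}_K^{\mathsf{P}}}^n$ and $H(V_S^n)\le nC_S$, Fano gives $nR_{\overline{K-1}}\le I(M_{\overline{K-1}};Y_K^n)+n\epsilon_n\le nC_{\mathsf{W}_K^{\mathsf{P}}}+n\epsilon_n$, which is \eqref{Eq_Capacity_comb_network_C1_1_C2_1_1}; \eqref{Eq_Capacity_comb_network_C1_1_C2_1_2} follows symmetrically from receiver $K-1$. Each private $j\in S_p$ decodes both messages, so $n(R_{\overline{K}}+R_{\overline{K-1}})\le I(M_{\overline{K}},M_{\overline{K-1}};Y_j^n)+n\epsilon_n\le nC_{\mathsf{W}_j^{\mathsf{P}}}+n\epsilon_n$, giving \eqref{Eq_Capacity_comb_network_C1_1_C2_1_3}. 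For \eqref{Eq_Capacity_comb_network_C1_1_C2_1_4} I would treat $K-1$ and $K$ as one combined receiver observing $(Y_{K-1}^n,Y_K^n)=V_{\mathsf{W}_{K-1}^{\mathsf{P}}\cup\mathsf{W}_K^{\mathsf{P}}}^n$; because $Y_{K-1}^n$ yields $M_{\overline{K}}$ and $Y_K^n$ yields $M_{\overline{K-1}}$, this super-receiver decodes both, whence $n(R_{\overline{K}}+R_{\overline{K-1}})\le H(V_{\mathsf{W}_{K-1}^{\mathsf{P}}\cup\mathsf{W}_K^{\mathsf{P}}}^n)+n\epsilon_n\le nC_{\mathsf{W}_{K-1}^{\mathsf{P}}\cup\mathsf{W}_K^{\mathsf{P}}}+n\epsilon_n$, which is \eqref{Eq_Capacity_comb_network_C1_1_C2_1_4} by the identity just noted. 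Note these bounds hold for any (possibly stochastic) encoder, since only subadditivity $H(V_{\mathsf{W}}^n)\le nC_{\mathsf{W}}$ and Fano are used.

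Finally, the double bound \eqref{Eq_Capacity_comb_network_C1_1_C2_1_7} I expect to obtain simply by adding the already-proved \eqref{Eq_Capacity_comb_network_C1_1_C2_1_3} for a private index $j$ and \eqref{Eq_Capacity_comb_network_C1_1_C2_1_4}: their sum gives $2(R_{\overline{K}}+R_{\overline{K-1}})\le C_{\mathsf{W}_j^{\mathsf{P}}}+C_{\mathsf{W}_{K-1}^{\mathsf{P}}\cup\mathsf{W}_K^{\mathsf{P}}}$, and since $\downarrow_{\mathsf{W}_j^{\mathsf{P}}}\{\overline{K-1},\overline{K}\}=\mathsf{W}_j^{\mathsf{P}}\backslash(\mathsf{W}_{K-1}^{\mathsf{P}}\cap\mathsf{W}_K^{\mathsf{P}})$, modularity of $C_{\mathsf{W}}$ shows this right-hand side is at most $C_{\downarrow_{\mathsf{W}_j^{\mathsf{P}}}\{\overline{K-1},\overline{K}\}}+C_{\mathsf{W}_K^{\mathsf{P}}}+C_{\mathsf{W}_{K-1}^{\mathsf{P}}}$, the slack being exactly $C$ of the links reaching both $K-1$ and $K$ but not $j$. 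Thus \eqref{Eq_Capacity_comb_network_C1_1_C2_1_7} is in fact implied by \eqref{Eq_Capacity_comb_network_C1_1_C2_1_3} and \eqref{Eq_Capacity_comb_network_C1_1_C2_1_4} and is listed only to mirror Corollary \ref{Colollary_general_indirect_C1_1_C2_1}. I therefore expect no genuine information-theoretic obstacle: the entire argument is elementary cut-set reasoning, and the only real work is the order-theoretic bookkeeping, namely correctly reading off the down-/up-set capacity sums via Lemmas \ref{Lemma_Sets} and \ref{Lemma_Combination_networks_proof} and verifying the modularity identities that establish both the equivalence of the two forms of \eqref{Eq_Capacity_comb_network_C1_1_C2_1_4} and the dominance relations rendering the $S_p$-indexed instances and \eqref{Eq_Capacity_comb_network_C1_1_C2_1_7} redundant.
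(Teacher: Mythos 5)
Your proposal is correct, and the achievability half coincides with the paper's: both specialize Corollary \ref{Colollary_general_indirect_C1_1_C2_1} with $\mathsf{F}=\mathsf{P}$, independent uniform $U_S=V_S$, evaluate every mutual information via \eqref{Eq_Lemma_8}, and discard the $j\in S_p$ instances of \eqref{Eq_Corollary_indirect_C1_1_C2_1_3}--\eqref{Eq_Corollary_indirect_C1_1_C2_1_4} by the same dominance argument. Where you genuinely diverge is the converse for \eqref{Eq_Capacity_comb_network_C1_1_C2_1_4} and \eqref{Eq_Capacity_comb_network_C1_1_C2_1_7}. The paper proves \eqref{Eq_Capacity_comb_network_C1_1_C2_1_4} by a Fano/chain-rule computation on $nR_{\overline{K}}$ conditioned on $M_{\overline{K-1}}$ followed by submodularity of entropy applied to $H(V_{\mathsf{W}_K^{\mathsf{P}}}^n,V_{\mathsf{W}_{K-1}^{\mathsf{P}}}^n)$; you instead use a combined-receiver cut-set bound, $n(R_{\overline{K}}+R_{\overline{K-1}})\le H(V_{\mathsf{W}_{K-1}^{\mathsf{P}}\cup\mathsf{W}_K^{\mathsf{P}}}^n)+n\epsilon_n$, plus the modular identity $C_{\downarrow_{\mathsf{W}_{K-1}^{\mathsf{P}}}\{\overline{K}\}}+C_{\mathsf{W}_K^{\mathsf{P}}}=C_{\mathsf{W}_{K-1}^{\mathsf{P}}\cup\mathsf{W}_K^{\mathsf{P}}}$ (the only point needing a word of care is that Fano for the joint message pair follows from the two individual Fano inequalities, $H(M_{\overline{K}},M_{\overline{K-1}}|Y_{K-1}^n,Y_K^n)\le H(M_{\overline{K}}|Y_{K-1}^n)+H(M_{\overline{K-1}}|Y_K^n)$, which is fine). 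The two arguments do the same entropy accounting, but yours is shorter and makes the equivalence of the two forms of \eqref{Eq_Capacity_comb_network_C1_1_C2_1_4} transparent. More substantively, for \eqref{Eq_Capacity_comb_network_C1_1_C2_1_7} the paper gives a second, longer submodularity chain involving the triple intersection $\mathsf{W}_j^{\mathsf{P}}\cap\mathsf{W}_{K-1}^{\mathsf{P}}\cap\mathsf{W}_K^{\mathsf{P}}$, whereas you observe that adding \eqref{Eq_Capacity_comb_network_C1_1_C2_1_3} and \eqref{Eq_Capacity_comb_network_C1_1_C2_1_4} already gives $2(R_{\overline{K}}+R_{\overline{K-1}})\le C_{\mathsf{W}_j^{\mathsf{P}}}+C_{\mathsf{W}_{K-1}^{\mathsf{P}}}+C_{\mathsf{W}_K^{\mathsf{P}}}-C_{\mathsf{W}_{K-1}^{\mathsf{P}}\cap\mathsf{W}_K^{\mathsf{P}}}$, which is tighter than \eqref{Eq_Capacity_comb_network_C1_1_C2_1_7} because $C_{\mathsf{W}_j^{\mathsf{P}}\cap\mathsf{W}_{K-1}^{\mathsf{P}}\cap\mathsf{W}_K^{\mathsf{P}}}\le C_{\mathsf{W}_{K-1}^{\mathsf{P}}\cap\mathsf{W}_K^{\mathsf{P}}}$. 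I verified this: \eqref{Eq_Capacity_comb_network_C1_1_C2_1_7} is indeed redundant in the polyhedral description, a simplification the paper does not remark on (it carries \eqref{Eq_Capacity_comb_network_C1_1_C2_1_7} over from the FME of the general DM BC corollary, where it is a genuine facet, and proves a separate converse for it). What the paper's route buys is a self-contained submodularity template that it reuses verbatim for Proposition \ref{Th_Capacity_combination_networks_C2}; what your route buys is brevity and the explicit identification of which inequalities actually cut the region.
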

\begin{proof} The inequalities \eqref{Eq_Capacity_comb_network_C1_1_C2_1_1}-\eqref{Eq_Capacity_comb_network_C1_1_C2_1_7} follow from applying \eqref{Eq_Lemma_8} in Lemma \ref{Lemma_Combination_networks_proof} to the inequalities \eqref{Eq_Corollary_indirect_C1_1_C2_1_1}-\eqref{Eq_Corollary_indirect_C1_1_C2_1_2},
\eqref{Eq_Corollary_indirect_C1_1_C2_1_3} for $j=K-1$, or equivalently, \eqref{Eq_Corollary_indirect_C1_1_C2_1_4} for $j=K$, and \eqref{Eq_Corollary_indirect_C1_1_C2_1_5} in Corollary \ref{Colollary_general_indirect_C1_1_C2_1}, respectively. On the other hand, when we apply \eqref{Eq_Lemma_8} in Lemma \ref{Lemma_Combination_networks_proof} to the inequalities \eqref{Eq_Corollary_indirect_C1_1_C2_1_3} and \eqref{Eq_Corollary_indirect_C1_1_C2_1_4} for $j\in S_p$, we obtain the following two redundant inequalities
\begin{align}
&R_{\overline{K-1}}+R_{\overline{K}}\leq C_{ \downarrow_{\mathsf{W}_{j}^{\mathsf{P}}} \{\overline{K}\} }+C_{\mathsf{W}_K^{\mathsf{P}}} \enspace j\in  S_p  \label{Eq_Capacity_comb_network_C1_1_C2_1_5}\\
&R_{\overline{K-1}}+R_{\overline{K}}\leq C_{ \downarrow_{\mathsf{W}_{j}^{\mathsf{P}}} \{\overline{K-1}\} }+C_{\mathsf{W}_{K-1}^{\mathsf{P}}} \enspace j\in  S_p  \label{Eq_Capacity_comb_network_C1_1_C2_1_6}
\end{align}
In particular, \eqref{Eq_Capacity_comb_network_C1_1_C2_1_5} is redundant since for any $j\in S_p$, we have 
\begin{align}
C_{\mathsf{W}_j^{\mathsf{P}}}&=C_{ \downarrow_{\mathsf{W}_{j}^{\mathsf{P}}} \{\overline{K}\} }+C_{ \uparrow_{\mathsf{W}_{j}^{\mathsf{P}}} \{K\} }\label{Eq_proof_redundance1}\\
&<C_{ \downarrow_{\mathsf{W}_{j}^{\mathsf{P}}} \{\overline{K}\} }+C_{\mathsf{W}_K^{\mathsf{P}}}\label{Eq_proof_redundance2}
\end{align}
where \eqref{Eq_proof_redundance1} follows from \eqref{Eq_Lemma_7} in Lemma \ref{Lemma_Combination_networks_proof} and \eqref{Eq_proof_redundance2} from $ \uparrow_{\mathsf{W}_{j}^{\mathsf{P}}} \{K\}\subset \mathsf{W}_K^{\mathsf{P}}$. Similarly, \eqref{Eq_Capacity_comb_network_C1_1_C2_1_6} is redundant, thereby proving achievability. 

For the converse proof, note that the inequalities \eqref{Eq_Capacity_comb_network_C1_1_C2_1_1}-\eqref{Eq_Capacity_comb_network_C1_1_C2_1_3} are just cutset bounds. That inequality \eqref{Eq_Capacity_comb_network_C1_1_C2_1_4} is an outer bound on the capacity region follows from the following:
\begin{align}
nR_{\overline{K}}&=H(M_{\overline{K}})\nonumber \\
&= I(M_{\overline{K}};Y_{K-1}^n,M_{\overline{K-1}})\nonumber \\
&\quad +H(M_{\overline{K}}|Y_{K-1}^n,M_{\overline{K-1}})\nonumber \\
&= I(M_{\overline{K}};Y_{K-1}^n|M_{\overline{K-1}})+n\epsilon_n \label{Eq_Converse_C1_C2_1_1} \\
&=H(Y_{K-1}^n|M_{\overline{K-1}})+n\epsilon_n\nonumber \\
&\leq H(Y_{K}^n,Y_{K-1}^n|M_{\overline{K-1}})+n\epsilon_n\nonumber \\
&= H(Y_{K}^n,Y_{K-1}^n,M_{\overline{K-1}})-nR_{\overline{K-1}}+n\epsilon_n\label{Eq_Converse_C1_C2_1_2} \\
&= H(Y_{K}^n,Y_{K-1}^n)+H(M_{\overline{K-1}}|Y_{K}^n,Y_{K-1}^n)\nonumber \\
&\quad -nR_{\overline{K-1}}+n\epsilon_n\nonumber\\
&\leq H(Y_{K}^n,Y_{K-1}^n) -nR_{\overline{K-1}}+2n\epsilon_n \label{Eq_Converse_C1_C2_1_3}\\
& =H(V_{\mathsf{W}_K^{\mathsf{P}}}^n,V_{\mathsf{W}_{K-1}^{\mathsf{P}}}^n) -nR_{\overline{K-1}}+2n\epsilon_n \label{Eq_Converse_C1_C2_1_4}\\
& =H(V_{\mathsf{W}_K^{\mathsf{P}}}^n)+H(V_{\mathsf{W}_{K-1}^{\mathsf{P}}}^n)-H(V_{\uparrow_{\mathsf{W}_{K-1}^{\mathsf{P}}}\{K\} }^n) \nonumber \\
&\quad -nR_{\overline{K-1}}+2n\epsilon_n \label{Eq_Converse_C1_C2_1_5}\\
& \leq nC_{\mathsf{W}_K^{\mathsf{P}}}+nC_{\mathsf{W}_{K-1}^{\mathsf{P}}}-nC_{\uparrow_{\mathsf{W}_{K-1}^{\mathsf{P}}}\{K\} } \nonumber \\
&\quad -nR_{\overline{K-1}}+2n\epsilon_n \label{Eq_Converse_C1_C2_1_6}\\
& = nC_{\mathsf{W}_K^{\mathsf{P}}}+nC_{\downarrow_{\mathsf{W}_{K-1}^{\mathsf{P}}}\{\overline{K}\} } 
%\nonumber \\ &\quad 
-nR_{\overline{K-1}}+2n\epsilon_n \label{Eq_Converse_C1_C2_1_7}
%& = nC_{\mathsf{W}_K^{\mathsf{P}}}+nC_{\downarrow_{\mathsf{W}_{K-1}^{\mathsf{P}}}\{1 \cdots PK-1\} } \nonumber \\v&\quad -nR_{\overline{K-1}}+2n\epsilon_n \label{Eq_Converse_C1_C2_1_8}
\end{align}
where \eqref{Eq_Converse_C1_C2_1_1} follows from Fano's inequality and independence between messages, \eqref{Eq_Converse_C1_C2_1_2} from chain rule of entropy, \eqref{Eq_Converse_C1_C2_1_3} from Fano's inequality, \eqref{Eq_Converse_C1_C2_1_4} from the fact that $Y_i^n=V_{\mathsf{W}_i^{\mathsf{P}}}^n$ for all $i\in[1:K]$, \eqref{Eq_Converse_C1_C2_1_5} from the submodularity of entropy where $\mathsf{W}_K^{\mathsf{P}} \cap \mathsf{W}_{K-1}^{\mathsf{P}}= \uparrow_{\mathsf{W}_{K-1}^{\mathsf{P}}}\{K\}=\uparrow_{\mathsf{W}_{K}^{\mathsf{P}}}\{K-1\}$, and \eqref{Eq_Converse_C1_C2_1_6} from the fact that the uniform distribution maximizes entropy. Finally, \eqref{Eq_Converse_C1_C2_1_7} follows from Lemma \ref{Lemma_Combination_networks_proof} such that $C_{\mathsf{W}_{K-1}^{\mathsf{P}}}= C_{\downarrow_{\mathsf{W}_{K-1}^{\mathsf{P}}}\{\overline{K}\} }+C_{\uparrow_{\mathsf{W}_{K-1}^{\mathsf{P}}}\{K\} }$.

Lastly, for \eqref{Eq_Capacity_comb_network_C1_1_C2_1_7}, observe that for all $j\in S_p$ 
\begin{align}
n(R_{\overline{K}}&+R_{\overline{K-1}})=H(M_{\overline{K}})+H(M_{\overline{K-1}})\nonumber \\
&\leq I(M_{\overline{K}};Y_j^n,M_{\overline{K-1}})\nonumber \\ & \quad+I(M_{\overline{K-1}};Y_K^n,M_{\overline{K}})+2n\epsilon_n \label{Eq_Converse_C1_C2_3_1}\\
&= I(M_{\overline{K}};Y_j^n|M_{\overline{K-1}})\nonumber \\ & \quad+I(M_{\overline{K-1}};Y_K^n|M_{\overline{K}})+2n\epsilon_n \label{Eq_Converse_C1_C2_3_2}\\
&= H(Y_j^n|M_{\overline{K-1}})+H(Y_K^n|M_{\overline{K}})+2n\epsilon_n \nonumber\\
&\leq H(Y_j^n|M_{\overline{K-1}})\nonumber \\ & \quad+H(Y_K^n,Y_{K-1}^n|M_{\overline{K}})+2n\epsilon_n \nonumber\\
&= H(Y_j^n,M_{\overline{K-1}})-nR_{\overline{K-1}}+2n\epsilon_n\nonumber \\ & \quad+H(Y_K^n,Y_{K-1}^n,M_{\overline{K}})-nR_{\overline{K}} \nonumber\\
&\leq H(Y_j^n)+H(Y_K^n,Y_{K-1}^n)\nonumber \\ & \quad-nR_{\overline{K-1}}-nR_{\overline{K}}+4n\epsilon_n \label{Eq_Converse_C1_C2_3_3}\\
&= H(V_{\mathsf{W}_j^{\mathsf{P}}}^n)+H(V_{\mathsf{W}_K^{\mathsf{P}}}^n,V_{\mathsf{W}_{K-1}^{\mathsf{P}}}^n)\nonumber \\ & \quad-nR_{\overline{K-1}}-nR_{\overline{K}}+4n\epsilon_n \label{Eq_Converse_C1_C2_3_4}\\
&= H(V_{\mathsf{W}_j^{\mathsf{P}}}^n)+H(V_{\mathsf{W}_K^{\mathsf{P}}}^n)+H(V_{\mathsf{W}_{K-1}^{\mathsf{P}}}^n)\nonumber \\ & \quad-H(V_{\mathsf{W}_K^{\mathsf{P}}}^n\cap V_{\mathsf{W}_{K-1}^{\mathsf{P}}}^n )-nR_{\overline{K-1}}\nonumber \\ & \quad -nR_{\overline{K}}+4n\epsilon_n \label{Eq_Converse_C1_C2_3_5}\\
&\leq H(V_{\mathsf{W}_j^{\mathsf{P}}}^n)+H(V_{\mathsf{W}_K^{\mathsf{P}}}^n)+H(V_{\mathsf{W}_{K-1}^{\mathsf{P}}}^n)\nonumber \\ & \quad-H(V_{\mathsf{W}_K^{\mathsf{P}}}^n\cap V_{\mathsf{W}_{K-1}^{\mathsf{P}}}^n \cap V_{\mathsf{W}_j^{\mathsf{P}}}^n)-nR_{\overline{K-1}}\nonumber \\ & \quad -nR_{\overline{K}}+4n\epsilon_n \nonumber\\
&= H(V_{\mathsf{W}_j^{\mathsf{P}}}^n)+H(V_{\mathsf{W}_K^{\mathsf{P}}}^n)+H(V_{\mathsf{W}_{K-1}^{\mathsf{P}}}^n)\nonumber \\ & \quad-H(  V_{  \uparrow_{   \mathsf{W}_j^{\mathsf{P}}} \{K-1.K\} }^n)-nR_{\overline{K-1}}\nonumber \\ & \quad -nR_{\overline{K}}+4n\epsilon_n \label{Eq_Converse_C1_C2_3_55}\\
&\leq nC_{\mathsf{W}_j^{\mathsf{P}}}+nC_{\mathsf{W}_K^{\mathsf{P}}}+nC_{\mathsf{W}_{K-1}^{\mathsf{P}}}\nonumber \\ & \quad-nC_{  \uparrow_{   \mathsf{W}_j^{\mathsf{P}}} \{K-1.K\} }-nR_{\overline{K-1}}\nonumber \\ & \quad -nR_{\overline{K}}+4n\epsilon_n \label{Eq_Converse_C1_C2_3_6}\\
&= nC_{  \downarrow_{   \mathsf{W}_j^{\mathsf{P}}} \{\overline{K},\overline{K-1}\} }+nC_{\mathsf{W}_K^{\mathsf{P}}}+nC_{\mathsf{W}_{K-1}^{\mathsf{P}}}\nonumber \\ & \quad-nR_{\overline{K-1}} -nR_{\overline{K}}+4n\epsilon_n \label{Eq_Converse_C1_C2_3_7}
\end{align}
where \eqref{Eq_Converse_C1_C2_3_1} follows from Fano's inequality, \eqref{Eq_Converse_C1_C2_3_2} from independence between messages, \eqref{Eq_Converse_C1_C2_3_3} from Fano's inequality and chain rule, \eqref{Eq_Converse_C1_C2_3_4} from $Y_i^n=V_{\mathsf{W}_i^{\mathsf{P}}}^n$ for all $i\in[1:K]$, \eqref{Eq_Converse_C1_C2_3_5} from the submodularity of entropy, and \eqref{Eq_Converse_C1_C2_3_55} from the following
\begin{align*}
V_{\mathsf{W}_K^{\mathsf{P}}}^n\cap V_{\mathsf{W}_{K-1}^{\mathsf{P}}}^n \cap V_{\mathsf{W}_j^{\mathsf{P}}}^n&=\uparrow_{\mathsf{P}} \{j.K-1.K\}\\
&= \uparrow_{\mathsf{P}} \{K-1.K\} \enspace  \cap \enspace  \mathsf{W}_j^{\mathsf{P}}\\
&=\uparrow_{\mathsf{W}_j^{\mathsf{P}}} \{K-1.K\}
\end{align*}
Moreover, \eqref{Eq_Converse_C1_C2_3_6} follows from the fact that the uniform distribution maximizes entropy and \eqref{Eq_Converse_C1_C2_3_7} from Lemma \ref{Lemma_Combination_networks_proof} from which we have $C_{\mathsf{W}_j^{\mathsf{P}}}= C_{  \downarrow_{   \mathsf{W}_j^{\mathsf{P}}} \{\overline{K},\overline{K-1}\} }+ C_{  \uparrow_{   \mathsf{W}_j^{\mathsf{P}}} \{K-1. K\} }$. This completes the converse proof of Theorem \ref{Th_Capacity_combination_networks_C1_1_C2_1}.

\end{proof}

\begin{remark}
If we set $K= 3$ in Theorem \ref{Th_Capacity_combination_networks_C1_1_C2_1}, we get the capacity region for $\mathsf{E}=\{12,13\}$. It is left to the reader to verify that it is a special case of \cite{grokop2008fundamental} (see also \cite{romero2016superposition}) where the capacity region for combination network with general message set $\mathsf{E}=\mathsf{P}$ is established for $K=2$ and $K=3$. 
\end{remark}

\begin{example}
The capacity region of the combination network for $K> 3$ is not known in general. We consider $K{=}4$ with $\mathsf{E} = \{ 123, 124\}$ in this example. Using Theorem \ref{Th_Capacity_combination_networks_C1_1_C2_1}, the capacity region of the four-receiver combination network for this message set is 
%with two messages $M_{123},M_{124}$ is 
the set of rate pairs ($R_{123},R_{124}$) satisfying 
\begin{align}
&R_{124}  \leq C_4+C_{14}+C_{24}+C_{34}+\nonumber \\
& \qquad\qquad C_{124}+C_{134}+C_{234}+C_{1234} \\
&R_{123}  \leq C_3+C_{13}+C_{23}+C_{34}+\nonumber \\
& \qquad\qquad C_{123}+C_{134}+C_{234}+C_{1234} \\
&R_{123}+R_{124}  \leq C_1+C_{12}+C_{13}+C_{14}+\nonumber \\
& \qquad\qquad C_{123}+C_{124}+C_{134}+C_{1234} \\
&R_{123}+R_{124}  \leq C_2+C_{12}+C_{23}+C_{24}+\nonumber \\
& \qquad\qquad C_{123}+C_{124}+C_{234}+C_{1234} \\
&R_{123}+R_{124}\leq C_3+C_{13}+C_{23}+C_{123}+ \nonumber \\ 
&\qquad\qquad  C_4+C_{14}+C_{24}+C_{34}+\nonumber \\
& \qquad\qquad C_{124}+C_{134}+C_{234}+C_{1234} \\
&2R_{123}+2R_{124}\leq C_1+C_{12}+C_{13}+C_{14}+C_{123}+C_{124} + \nonumber \\
& \qquad \qquad C_3+C_{13}+C_{23}+C_{34}+\nonumber \\
& \qquad\qquad C_{123}+C_{134}+C_{234}+C_{1234}+ \nonumber \\
&\qquad\qquad C_4+C_{14}+C_{24}+C_{34}+\nonumber \\
& \qquad\qquad C_{124}+C_{134}+C_{234}+C_{1234} \\
&2R_{123}+2R_{124}\leq C_2+C_{12}+C_{23}+C_{24}+C_{123}+C_{124} + \nonumber \\
& \qquad \qquad C_3+C_{13}+C_{23}+C_{34}+\nonumber \\
& \qquad\qquad C_{123}+C_{134}+C_{234}+C_{1234}+ \nonumber \\
&\qquad\qquad C_4+C_{14}+C_{24}+C_{34}+\nonumber \\
& \qquad\qquad C_{124}+C_{134}+C_{234}+C_{1234}
\end{align}  
%where \begin{align}
%&\downarrow_{\mathsf{W}_3^{\mathsf{P}}}\{123\}=\{3,12,23,123\}\\ &\downarrow_{\mathsf{W}_1^{\mathsf{P}}}\{123,124\}=\{1,12,13,14,123,124\}\\ &\downarrow_{\mathsf{W}_2^{\mathsf{P}}}\{123,124\}=\{2,12,23,24,123,124\}
%\end{align}
\end{example}

%\begin{remark}
%I want to speak with you about this remark before writing it. \textcolor{red}{I want to say here something about order theory. I feel that without order theory this converse proof will be much harder or the notation will be very hard to track. It is already not easy.}
%\end{remark}

%\textcolor{red}{Why we only prove capacity for $C\leq 2$ or $C_1+C_2\leq 2$. becuase these are the cases where getting a polyhedral description for the inner bound is not hard.}

\subsection{Recovering the capacity regions for $\mathsf{E}=\{\overline{K},\overline{\phi}\}$ and $\mathsf{E}=\{\overline{K-1.K},\overline{\phi}\}$ of \cite{bidokhti2016capacity} via Theorem \ref{Th_Inner_Nested_Bound_indirect_F} }
\label{Sub_section_1Common}

%\textcolor{red}{mention here that we regenerate \cite[Proposition 1]{bidokhti2016capacity}, however, we give the converse proof for the sake of completeness.}
We recover the capacity regions of the combination network with two nested messages for the one and two common receivers cases next. These results were first obtained in \cite[Theorem 3]{bidokhti2016capacity}. While in \cite[Proposition 1]{bidokhti2016capacity} achievability was shown using rate-splitting and linear superposition coding tailored to the combination network we show it using a top-down approach by specializing Corollaries \ref{Colollary_nested_indirect_1commonRec} and \ref{Colollary_nested_indirect_2commonRec} on the DM BC to the combination network, which is indicative of the strength of the inner bound of Theorem \ref{Th_Inner_Nested_Bound_indirect_F} on the DM BC (from which Corollaries \ref{Colollary_nested_indirect_1commonRec} and \ref{Colollary_nested_indirect_2commonRec} were obtained using FME). These regions can be shown to coincide with those obtained in \cite[Theorem 3]{bidokhti2016capacity} where they were shown to be the capacity regions. We provide the proof of converses here as well for the sake of completeness since the notation here is quite different from that in \cite[Theorem 3]{bidokhti2016capacity}.
\begin{proposition}
\label{Th_Capacity_combination_networks_C1}
The capacity region of the $K$-user combination network with $\mathsf{E}=\{\overline{K},\overline{\phi}\}$ is the set of rate pairs ($R_{\overline{K}},R_{\overline{\phi}}$) satisfying
\begin{align}
R_{\overline{\phi}} & \leq C_{\mathsf{W}_K^{\mathsf{P}}} \label{Eq_Capacity_comb_network_1C1}\\
R_{\overline{\phi}}+R_{\overline{K}} & \leq C_{\mathsf{W}_j^{\mathsf{P}}} \enspace \forall j\in [1:K-1] \label{Eq_Capacity_comb_network_2C1}
\end{align}  
where $C_{\mathsf{W}}=\sum_{S\in \mathsf{W}} C_S$ for any $ \mathsf{W} \subseteq \mathsf{P} $.
\end{proposition}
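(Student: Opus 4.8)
The plan is to obtain achievability by specializing the inner bound of Corollary~\ref{Colollary_nested_indirect_1commonRec} to the combination network under the single coding distribution prescribed in Lemma~\ref{Lemma_Combination_networks_proof}, and to obtain the matching converse from elementary cut-set (max-flow) arguments driven by Fano's inequality. Recall that for one common receiver we have $S_p=\overline{K}=[1:K-1]$ and $S_l=\{K\}$, so receiver $K$ needs only $M_{\overline{\phi}}$ while each receiver $j\in[1:K-1]$ needs both $M_{\overline{\phi}}$ and $M_{\overline{K}}$.

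For achievability I would adopt the distribution of Lemma~\ref{Lemma_Combination_networks_proof}, namely $U_S$ independent and uniform over $\mathcal{V}_S$ with $V_S=U_S$ and $|Q|=1$. Substituting the identity~\eqref{Eq_Lemma_8} into the three families of inequalities \eqref{Eq_Corollary_indirect_1common_1}--\eqref{Eq_Corollary_indirect_1common_3} turns \eqref{Eq_Corollary_indirect_1common_1} into $R_{\overline{\phi}}\leq C_{\mathsf{W}_K^{\mathsf{P}}}$ and \eqref{Eq_Corollary_indirect_1common_2} into $R_{\overline{\phi}}+R_{\overline{K}}\leq C_{\mathsf{W}_j^{\mathsf{P}}}$ for all $j\in[1:K-1]$, which are precisely \eqref{Eq_Capacity_comb_network_1C1}--\eqref{Eq_Capacity_comb_network_2C1}. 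The remaining step is to show that \eqref{Eq_Corollary_indirect_1common_3}, which becomes $R_{\overline{\phi}}+R_{\overline{K}}\leq C_{\downarrow_{\mathsf{W}_j^{\mathsf{P}}}\{\overline{K}\}}+C_{\mathsf{W}_K^{\mathsf{P}}}$, is redundant. To this end I would invoke identity~\eqref{Eq_Lemma_7} of Lemma~\ref{Lemma_Combination_networks_proof} with $S=\{K\}$, which gives $C_{\mathsf{W}_j^{\mathsf{P}}}=C_{\downarrow_{\mathsf{W}_j^{\mathsf{P}}}\{\overline{K}\}}+C_{\uparrow_{\mathsf{W}_j^{\mathsf{P}}}\{K\}}$. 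Since every member of $\uparrow_{\mathsf{W}_j^{\mathsf{P}}}\{K\}$ contains $K$, we have $\uparrow_{\mathsf{W}_j^{\mathsf{P}}}\{K\}\subseteq\mathsf{W}_K^{\mathsf{P}}$ and hence $C_{\uparrow_{\mathsf{W}_j^{\mathsf{P}}}\{K\}}\leq C_{\mathsf{W}_K^{\mathsf{P}}}$, so $C_{\mathsf{W}_j^{\mathsf{P}}}\leq C_{\downarrow_{\mathsf{W}_j^{\mathsf{P}}}\{\overline{K}\}}+C_{\mathsf{W}_K^{\mathsf{P}}}$; thus \eqref{Eq_Capacity_comb_network_2C1} dominates the third family, which may be discarded.

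For the converse both bounds are single-receiver cut-set bounds. Since receiver $K$ decodes $M_{\overline{\phi}}$, Fano's inequality yields $nR_{\overline{\phi}}\leq I(M_{\overline{\phi}};Y_K^n)+n\epsilon_n\leq H(Y_K^n)+n\epsilon_n=H(V_{\mathsf{W}_K^{\mathsf{P}}}^n)+n\epsilon_n\leq nC_{\mathsf{W}_K^{\mathsf{P}}}+n\epsilon_n$, where $Y_K^n=V_{\mathsf{W}_K^{\mathsf{P}}}^n$ and the last step is because the uniform distribution maximizes entropy over $\prod_{S\in\mathsf{W}_K^{\mathsf{P}}}\mathcal{V}_S$. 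Since each receiver $j\in[1:K-1]$ decodes both messages, the same chain applied to the joint message $(M_{\overline{\phi}},M_{\overline{K}})$ and output $Y_j^n=V_{\mathsf{W}_j^{\mathsf{P}}}^n$ gives $n(R_{\overline{\phi}}+R_{\overline{K}})\leq H(Y_j^n)+n\epsilon_n\leq nC_{\mathsf{W}_j^{\mathsf{P}}}+n\epsilon_n$. Letting $n\to\infty$ completes the converse.

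Because the converse here reduces to single-output cut-set bounds and needs no joint-entropy decomposition or submodularity (in contrast to the $\mathsf{E}=\{\overline{K},\overline{K-1}\}$ case of Theorem~\ref{Th_Capacity_combination_networks_C1_1_C2_1}), I expect the only genuinely substantive step to be the achievability redundancy argument, i.e., correctly applying the order-theoretic identity~\eqref{Eq_Lemma_7} together with the containment $\uparrow_{\mathsf{W}_j^{\mathsf{P}}}\{K\}\subseteq\mathsf{W}_K^{\mathsf{P}}$ to eliminate \eqref{Eq_Corollary_indirect_1common_3}; everything else is routine.
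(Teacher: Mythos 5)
Your proposal is correct and follows essentially the same route as the paper: achievability by specializing Corollary \ref{Colollary_nested_indirect_1commonRec} with independent uniform $U_S=V_S$ via \eqref{Eq_Lemma_8}, eliminating \eqref{Eq_Corollary_indirect_1common_3} through \eqref{Eq_Lemma_7} and the containment $\uparrow_{\mathsf{W}_j^{\mathsf{P}}}\{K\}\subseteq\mathsf{W}_K^{\mathsf{P}}$, and a converse from single-receiver cut-set bounds. Your explicit Fano-plus-maximum-entropy chain simply spells out what the paper leaves as "receiver cutset bounds," and your use of $\leq$ in place of the paper's strict inequality $C_{\uparrow_{\mathsf{W}_j^{\mathsf{P}}}\{K\}}<C_{\mathsf{W}_K^{\mathsf{P}}}$ is, if anything, the more careful statement.
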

\begin{proof}
The proof of the converse follows from the receiver cutset bounds since the common receiver desires only the common message $M_{\overline{\phi}}$ and the private receivers desire both messages. The proof of achievability follows from using \eqref{Eq_Lemma_8} of Lemma \ref{Lemma_Combination_networks_proof} %substituting \eqref{Eq_achievability_CN_i} and \eqref{Eq_achievability_CN_K} 
in the inequalities \eqref{Eq_Corollary_indirect_1common_1} and \eqref{Eq_Corollary_indirect_1common_2} in the achievable region of Corollary \ref{Colollary_nested_indirect_1commonRec}. We show next that the inequality \eqref{Eq_Corollary_indirect_1common_3} of that region is redundant because of \eqref{Eq_Corollary_indirect_1common_2}. From \eqref{Eq_Lemma_7} of Lemma \ref{Lemma_Combination_networks_proof}, we have $C_{\mathsf{W}_j^{\mathsf{P}}}= C_{\downarrow_{\mathsf{W}_j^{\mathsf{P}}} \{\overline{K} \} }+C_{\uparrow_{\mathsf{W}_j^{\mathsf{P}}} \{K \} }$ for any $j\in [1:K-1]$. Moreover, from the definition of $C_{\mathsf{W}}$,  we have $C_{\uparrow_{\mathsf{W}_j^{\mathsf{P}}} \{K \}}<  C_{\mathsf{W}_K^{\mathsf{P}}}$. Hence, $C_{\mathsf{W}_j^{\mathsf{P}}}< C_{\downarrow_{\mathsf{W}_j^{\mathsf{P}}} \{\overline{K} \} }+ C_{\mathsf{W}_K^{\mathsf{P}}}$ for any $j\in [1:K-1]$ and so \eqref{Eq_Corollary_indirect_1common_3} is redundant. % \eqref{Eq_Corollary_indirect_1common_2} is more restrict that \eqref{Eq_Corollary_indirect_1common_3}. This completes the proof.
\end{proof}

%\subsection{The capacity for $\mathsf{E}=\{\overline{K-1.K},\overline{\phi}\}$} \label{Sub_section_2Common}
%We next recover the result of \cite[Theorem 3]{bidokhti2016capacity} on the capacity region of the $K$-receiver combination network with two nested messages and two common receivers:
\begin{proposition}\cite[Theorem 3]{bidokhti2016capacity}
\label{Th_Capacity_combination_networks_C2}
The capacity region of the $K$-user combination network with $\mathsf{E}=\{\overline{K-1.K},\overline{\phi}\}$ is the set of rate pairs ($R_{\overline{K-1.K}},R_{\overline{\phi} }$) satisfying
\begin{align}
R_{\overline{\phi}} & \leq C_{\mathsf{W}_i^{\mathsf{P}}} \enspace \forall i\in  S_l \label{Eq_Capacity_comb_network_1}\\
R_{\overline{\phi}}+R_{\overline{K-1.K}} & \leq C_{\mathsf{W}_j^{\mathsf{P}}} \enspace \forall j\in S_p \label{Eq_Capacity_comb_network_2}\\
2R_{\overline{\phi}} + R_{\overline{K-1.K}} & \leq C_{\downarrow_{\mathsf{W}_j^{\mathsf{P}}}\{\overline{K-1.K}\}} + C_{\mathsf{W}_{K-1}^{\mathsf{P}}} + C_{\mathsf{W}_{K}^{\mathsf{P}}} \enspace \forall j\in S_p \label{Eq_Capacity_comb_network_3}
\end{align}  
%where $C_{\mathsf{W}}=\sum_{S\in \mathsf{W}} C_S$ for any $ \mathsf{W} \subseteq \mathsf{P} $.
\end{proposition}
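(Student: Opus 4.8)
The plan is to prove Proposition~\ref{Th_Capacity_combination_networks_C2} in two stages — achievability by specializing the nested-message inner bound of Corollary~\ref{Colollary_nested_indirect_2commonRec} (taken with $\mathsf{F}=\mathsf{P}$) to the combination network, and a matching converse assembled from two cutset bounds and one submodularity argument. Achievability will use the \emph{single} distribution fixed at the start of Section~\ref{Sec_Capacity_combianation}: the $U_S$, $S\in\mathsf{P}$, independent and uniform on $\mathcal{V}_S$ with $V_S=U_S$ and $|Q|=1$, so that every conditional mutual information appearing in Corollary~\ref{Colollary_nested_indirect_2commonRec} is evaluated through \eqref{Eq_Lemma_8} of Lemma~\ref{Lemma_Combination_networks_proof}.

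For achievability, I would substitute $I(U_{\mathsf{W}};Y_i\mid U_{\mathsf{W}_i^{\mathsf{P}}\backslash\mathsf{W}})=C_{\mathsf{W}}$ into the six inequalities \eqref{Eq_Corollary_indirect_2common_1}--\eqref{Eq_Corollary_indirect_2common_6}. Inequalities \eqref{Eq_Corollary_indirect_2common_1}, \eqref{Eq_Corollary_indirect_2common_2} and \eqref{Eq_Corollary_indirect_2common_5} then become exactly \eqref{Eq_Capacity_comb_network_1}, \eqref{Eq_Capacity_comb_network_2} and \eqref{Eq_Capacity_comb_network_3}, so it only remains to discard \eqref{Eq_Corollary_indirect_2common_3}, \eqref{Eq_Corollary_indirect_2common_4} and \eqref{Eq_Corollary_indirect_2common_6} as redundant, in the spirit of the redundancy checks already carried out in Theorem~\ref{Th_Capacity_combination_networks_C1_1_C2_1} and Proposition~\ref{Th_Capacity_combination_networks_C1}. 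For \eqref{Eq_Corollary_indirect_2common_3} (and symmetrically \eqref{Eq_Corollary_indirect_2common_4}), identity \eqref{Eq_Lemma_7} gives $C_{\mathsf{W}_j^{\mathsf{P}}}=C_{\downarrow_{\mathsf{W}_j^{\mathsf{P}}}\{\overline{K-1}\}}+C_{\uparrow_{\mathsf{W}_j^{\mathsf{P}}}\{K-1\}}$, and since $\uparrow_{\mathsf{W}_j^{\mathsf{P}}}\{K-1\}\subsetneq\mathsf{W}_{K-1}^{\mathsf{P}}$ the monotonicity of $C_{\mathsf{W}}$ in $\mathsf{W}$ yields $C_{\mathsf{W}_j^{\mathsf{P}}}<C_{\downarrow_{\mathsf{W}_j^{\mathsf{P}}}\{\overline{K-1}\}}+C_{\mathsf{W}_{K-1}^{\mathsf{P}}}$; hence \eqref{Eq_Corollary_indirect_2common_3} is dominated by \eqref{Eq_Capacity_comb_network_2}. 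For \eqref{Eq_Corollary_indirect_2common_6} I would show that twice \eqref{Eq_Capacity_comb_network_2} at $j=j_2$ is tighter: using $C_{\mathsf{W}_{j_2}^{\mathsf{P}}}=C_{\downarrow_{\mathsf{W}_{j_2}^{\mathsf{P}}}\{\overline{K-1},\overline{K}\}}+C_{\uparrow_{\mathsf{W}_{j_2}^{\mathsf{P}}}\{K-1.K\}}$ from \eqref{Eq_Lemma_7} together with $\uparrow_{\mathsf{W}_{j_2}^{\mathsf{P}}}\{K-1.K\}\subseteq\mathsf{W}_{K-1}^{\mathsf{P}}\cap\mathsf{W}_K^{\mathsf{P}}$, one verifies $2\,C_{\mathsf{W}_{j_2}^{\mathsf{P}}}\le C_{\downarrow_{\mathsf{W}_{j_1}^{\mathsf{P}}}\{\overline{K-1.K}\}}+C_{\downarrow_{\mathsf{W}_{j_2}^{\mathsf{P}}}\{\overline{K-1},\overline{K}\}}+C_{\mathsf{W}_{K-1}^{\mathsf{P}}}+C_{\mathsf{W}_K^{\mathsf{P}}}$, so \eqref{Eq_Corollary_indirect_2common_6} adds nothing.

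For the converse, \eqref{Eq_Capacity_comb_network_1} and \eqref{Eq_Capacity_comb_network_2} are plain receiver cutset bounds: Fano at a common receiver $i\in S_l$ gives $nR_{\overline{\phi}}\le I(M_{\overline{\phi}};Y_i^n)+n\epsilon_n\le H(Y_i^n)+n\epsilon_n\le nC_{\mathsf{W}_i^{\mathsf{P}}}+n\epsilon_n$ (using $Y_i^n=V_{\mathsf{W}_i^{\mathsf{P}}}^n$ and that the uniform input maximizes entropy), while Fano at a private receiver $j\in S_p$, which must decode both messages, gives $n(R_{\overline{\phi}}+R_{\overline{K-1.K}})\le H(Y_j^n)+n\epsilon_n\le nC_{\mathsf{W}_j^{\mathsf{P}}}+n\epsilon_n$. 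The substance is \eqref{Eq_Capacity_comb_network_3}, which I would derive in close parallel to \eqref{Eq_Capacity_comb_network_C1_1_C2_1_7}. Write $n(2R_{\overline{\phi}}+R_{\overline{K-1.K}})=2H(M_{\overline{\phi}})+H(M_{\overline{K-1.K}})$ and apply Fano three times, once at each of $Y_{K-1}$ and $Y_K$ for $M_{\overline{\phi}}$ and once at $Y_j$ for $M_{\overline{K-1.K}}$ conditioned on $M_{\overline{\phi}}$ (using message independence). Because the combination network is deterministic, $H(Y_j^n\mid M_{\overline{\phi}},M_{\overline{K-1.K}})=0$, so $I(M_{\overline{K-1.K}};Y_j^n\mid M_{\overline{\phi}})=H(Y_j^n\mid M_{\overline{\phi}})$. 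After bounding $H(Y_{K-1}^n)\le nC_{\mathsf{W}_{K-1}^{\mathsf{P}}}$ and $H(Y_K^n)\le nC_{\mathsf{W}_K^{\mathsf{P}}}$, the claim reduces to the purely entropic inequality $H(V_{\mathsf{W}_j^{\mathsf{P}}}^n\mid M_{\overline{\phi}})-H(V_{\mathsf{W}_{K-1}^{\mathsf{P}}}^n\mid M_{\overline{\phi}})-H(V_{\mathsf{W}_K^{\mathsf{P}}}^n\mid M_{\overline{\phi}})\le nC_{\downarrow_{\mathsf{W}_j^{\mathsf{P}}}\{\overline{K-1.K}\}}$.

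The main obstacle — and it is bookkeeping rather than analysis — is this last inequality, for which the key observation is the set identity $\mathsf{W}_j^{\mathsf{P}}\backslash(\mathsf{W}_{K-1}^{\mathsf{P}}\cup\mathsf{W}_K^{\mathsf{P}})=\downarrow_{\mathsf{W}_j^{\mathsf{P}}}\{\overline{K-1.K}\}$, i.e.\ the links seen by $j$ but by neither common receiver are exactly those indexed by subsets of $\overline{K-1.K}$ containing $j$. I would then chain-rule $V_{\mathsf{W}_j^{\mathsf{P}}}^n$ into this ``private'' part, whose conditional entropy is at most $nC_{\downarrow_{\mathsf{W}_j^{\mathsf{P}}}\{\overline{K-1.K}\}}$, and the remaining part indexed by $(\mathsf{W}_j^{\mathsf{P}}\cap\mathsf{W}_{K-1}^{\mathsf{P}})\cup(\mathsf{W}_j^{\mathsf{P}}\cap\mathsf{W}_K^{\mathsf{P}})$; subadditivity followed by monotonicity of (conditional) entropy bounds the latter by $H(V_{\mathsf{W}_{K-1}^{\mathsf{P}}}^n\mid M_{\overline{\phi}})+H(V_{\mathsf{W}_K^{\mathsf{P}}}^n\mid M_{\overline{\phi}})$, which cancels the two negative terms and closes the argument. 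Letting $n\to\infty$ then delivers \eqref{Eq_Capacity_comb_network_1}--\eqref{Eq_Capacity_comb_network_3}.
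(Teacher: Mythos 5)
Your overall route coincides with the paper's: achievability by evaluating Corollary~\ref{Colollary_nested_indirect_2commonRec} under the single independent--uniform choice $V_S=U_S$ via \eqref{Eq_Lemma_8}, then discarding \eqref{Eq_Corollary_indirect_2common_3}, \eqref{Eq_Corollary_indirect_2common_4} and \eqref{Eq_Corollary_indirect_2common_6}; converse by cutset bounds for \eqref{Eq_Capacity_comb_network_1}--\eqref{Eq_Capacity_comb_network_2} and Fano plus subadditivity of entropy for \eqref{Eq_Capacity_comb_network_3}. Your converse for \eqref{Eq_Capacity_comb_network_3} is a valid reorganization of the paper's: you expand $H(V_{\mathsf{W}_j^{\mathsf{P}}}^n\mid M_{\overline{\phi}})$ and cancel against the two negative conditional entropies, whereas the paper bounds $H(Y_{K-1}^n,Y_K^n,Y_j^n\mid M_{\overline{\phi}})$ directly; both reduce to $H(A,B,C)\le H(A)+H(B)+H(C\mid A,B)$ together with the set identity $\mathsf{W}_j^{\mathsf{P}}\backslash(\mathsf{W}_{K-1}^{\mathsf{P}}\cup\mathsf{W}_K^{\mathsf{P}})=\downarrow_{\mathsf{W}_j^{\mathsf{P}}}\{\overline{K-1.K}\}$ and the uniform bound, and your reduction to the single entropic inequality closes as you describe.

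The one step that fails is your redundancy argument for \eqref{Eq_Corollary_indirect_2common_6} in the case $j_1\neq j_2$. The claimed inequality $2\,C_{\mathsf{W}_{j_2}^{\mathsf{P}}}\le C_{\downarrow_{\mathsf{W}_{j_1}^{\mathsf{P}}}\{\overline{K-1.K}\}}+C_{\downarrow_{\mathsf{W}_{j_2}^{\mathsf{P}}}\{\overline{K-1},\overline{K}\}}+C_{\mathsf{W}_{K-1}^{\mathsf{P}}}+C_{\mathsf{W}_K^{\mathsf{P}}}$ is false in general: take $K=4$, $S_p=\{1,2\}$, $j_1=1$, $j_2=2$, and let $C_{2}>0$ with all other link capacities zero; then the left side is $2C_{2}$ while the right side is $0+C_{2}+0+0=C_{2}$, since $\{2\}\in\downarrow_{\mathsf{W}_{2}^{\mathsf{P}}}\{\overline{3},\overline{4}\}$ but $\{2\}\notin\downarrow_{\mathsf{W}_{1}^{\mathsf{P}}}\{\overline{34}\}$ and $\{2\}$ is seen by neither common receiver. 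The conclusion (redundancy) is still true, but the correct derivation, as in the paper, is to \emph{add} \eqref{Eq_Capacity_comb_network_2} at $j_1$ and at $j_2$ rather than double it at $j_2$: decompose $C_{\mathsf{W}_{j_1}^{\mathsf{P}}}$ and $C_{\mathsf{W}_{j_2}^{\mathsf{P}}}$ via \eqref{Eq_Lemma_7} and \eqref{Eq:Lemma_77}, one with the down-set $\downarrow\{\overline{K-1.K}\}$ and the other with $\downarrow\{\overline{K-1},\overline{K}\}$, and note that the two residual up-sets lie respectively in $\mathsf{W}_{K-1}^{\mathsf{P}}\cap\mathsf{W}_K^{\mathsf{P}}$ and $\mathsf{W}_{K-1}^{\mathsf{P}}\cup\mathsf{W}_K^{\mathsf{P}}$, so by modularity their capacities sum to at most $C_{\mathsf{W}_{K-1}^{\mathsf{P}}}+C_{\mathsf{W}_K^{\mathsf{P}}}$. (For $j_1=j_2$ the right side of \eqref{Eq_Corollary_indirect_2common_6} equals the sum of the right sides of \eqref{Eq_Corollary_indirect_2common_3} and \eqref{Eq_Corollary_indirect_2common_4}, so the argument you give for those two already suffices there.) The remainder of your proposal is sound.
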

\begin{proof}
The three inequalities of \eqref{Eq_Capacity_comb_network_1}-\eqref{Eq_Capacity_comb_network_3} follow from applying \eqref{Eq_Lemma_8} of Lemma \ref{Lemma_Combination_networks_proof} to inequalities \eqref{Eq_Corollary_indirect_2common_1}, \eqref{Eq_Corollary_indirect_2common_2},  and \eqref{Eq_Corollary_indirect_2common_5} in Corollary \ref{Colollary_nested_indirect_2commonRec}. We next show that the inequalities that result from \eqref{Eq_Corollary_indirect_2common_3}, \eqref{Eq_Corollary_indirect_2common_4} and \eqref{Eq_Corollary_indirect_2common_6} will be redundant. 
%{\color{red} proof incomplete}.

First, \eqref{Eq_Corollary_indirect_2common_6} is redundant when $j_1= j_2=j\in [1:K-2]$ because it is the sum of \eqref{Eq_Corollary_indirect_2common_3} and \eqref{Eq_Corollary_indirect_2common_4} due to the modularity of $C_{\mathsf{W}}$ for any $\mathsf{W}\subseteq \mathsf{P}$ so that $C_{\downarrow_{\mathsf{W}_j^{\mathsf{P}}} \{\overline{K-1} \}}+C_{\downarrow_{\mathsf{W}_j^{\mathsf{P}}} \{\overline{K} \}}= C_{\downarrow_{\mathsf{W}_j^{\mathsf{P}}} \{\overline{K-1}, \overline{K} \}}+C_{\downarrow_{\mathsf{W}_j^{\mathsf{P}}} \{\overline{K-1.K} \}}$. 
While, for $j_1\not =j_2$, \eqref{Eq_Corollary_indirect_2common_6} is still redundant from \eqref{Eq_Corollary_indirect_2common_2} since from \eqref{Eq_Lemma_7} and \eqref{Eq:Lemma_77}, we have
\begin{align}
    C_{\mathsf{W}_{j_1}^{\mathsf{P}}}= C_{\downarrow_{\mathsf{W}_{j_1}^{\mathsf{P}}} \{\overline{K-1},\overline{K}\}}+C_{\uparrow_{\mathsf{W}_{j_1}^{\mathsf{P}}} \{K-1.K\}}\\
    C_{\mathsf{W}_{j_2}^{\mathsf{P}}}=C_{ \downarrow_{\mathsf{W}_{j_2}^{\mathsf{P}}} \{\overline{K-1.K}\}}+C_{\uparrow_{\mathsf{W}_{j_2}^{\mathsf{P}}} \{K-1,K\}}
\end{align}
and $C_{\uparrow_{\mathsf{W}_{j_1}^{\mathsf{P}}} \{K-1.K\}}+C_{\uparrow_{\mathsf{W}_{j_2}^{\mathsf{P}}} \{K-1,K\}}\leq  C_{\mathsf{W}_{K-1}^{\mathsf{P}}}+C_{\mathsf{W}_{K}^{\mathsf{P}}}$.

Moreover, we can show that \eqref{Eq_Corollary_indirect_2common_3} and \eqref{Eq_Corollary_indirect_2common_4} are redundant because of \eqref{Eq_Corollary_indirect_2common_2} by following exactly the same argument that we used to show that \eqref{Eq_Corollary_indirect_1common_3} is redundant because of \eqref{Eq_Corollary_indirect_1common_2}. Hence, \eqref{Eq_Corollary_indirect_2common_3} is redundant since $C_{\mathsf{W}_j^{\mathsf{P}}}< C_{\downarrow_{\mathsf{W}_j^{\mathsf{P}}} \{\overline{K-1} \} }+ C_{\mathsf{W}_{K-1}^{\mathsf{P}}}$ for any $j\in [1:K-2]$ and \eqref{Eq_Corollary_indirect_2common_4} is redundant since $C_{\mathsf{W}_j^{\mathsf{P}}}< C_{\downarrow_{\mathsf{W}_j^{\mathsf{P}}} \{\overline{K} \} }+ C_{\mathsf{W}_K^{\mathsf{P}}}$ for any $j\in [1:K-2]$.

The inequalities \eqref{Eq_Capacity_comb_network_1} and \eqref{Eq_Capacity_comb_network_2} are cutset outer bounds since the common receivers $Y_i$ ($i\in S_l $) desire only the common message $M_{\overline{\phi}}$ and the private receivers $Y_j$ ($j\in  S_p $) desire both messages. More interestingly, the converse proof of \eqref{Eq_Capacity_comb_network_3} again uses the sub-modularity of entropy as in \cite{bidokhti2016noisy,prabhakaran2007broadcasting}. Assume that the transmission is done over a block of length $n$. Then, for any $j\in  S_p $, we have
\begin{align}
nR_{\overline{K-1.K}}& = H(M_{\overline{K-1.K}})\nonumber 
\\
& \hspace{-0.25in}  = I(M_{\overline{K-1.K}};Y_{j}^n,M_{ \overline{\phi} }) + H(M_{\overline{K-1.K}}|Y_{j}^n,M_{ \overline{\phi} })\nonumber 
\\
& \hspace{-0.25in}  \leq I(M_{\overline{K-1.K}};Y_{j}^n|M_{ \overline{\phi} })+n\epsilon_n \label{Eq_converse_fano_indep}
\\
& \hspace{-0.25in}  = H(Y_j^n|M_{ \overline{\phi} })+n\epsilon_n \nonumber
\\
& \hspace{-0.25in}  \leq H(Y_{K-1}^n,Y_K^n,Y_j^n|M_{ \overline{\phi} })+ n\epsilon_n\label{Eq_converse_chainRule}
\\
& \hspace{-0.25in}  = H(V_{\mathsf{W}_{K-1}^{\mathsf{P}} }^n,V_{\mathsf{W}_{K}^{\mathsf{P}} }^n,V_{\mathsf{W}_{j}^{\mathsf{P}} }^n|M_{ \overline{\phi} })+n\epsilon_n \label{Eq_converse_Y_V}
\\
& \hspace{-0.25in}  \leq H(V_{\mathsf{W}_{K-1}^{\mathsf{P}} }^n|M_{ \overline{\phi} })+ H(V_{\mathsf{W}_{K}^{\mathsf{P}} }^n|M_{ \overline{\phi} })\nonumber \\& \qquad + H(V_{\mathsf{W}_{j}^{\mathsf{P}} }^n|V_{\mathsf{W}_{K-1}^{\mathsf{P}} }^n,V_{\mathsf{W}_{K}^{\mathsf{P}} }^n,M_{ \overline{\phi} })+ n\epsilon_n\label{Eq_converse_subMod}\\
& \hspace{-0.25in}  = H(V_{\mathsf{W}_{K-1}^{\mathsf{P}} }^n,M_{ \overline{\phi} }){-}nR_{ \overline{\phi} }\nonumber \\ & \hspace{-0.25in} \qquad + H(V_{\mathsf{W}_{K}^{\mathsf{P}} }^n,M_{ \overline{\phi} }){-}nR_{ \overline{\phi} }\nonumber \\& \qquad + H(V_{\mathsf{W}_{j}^{\mathsf{P}} }^n|V_{\mathsf{W}_{K-1}^{\mathsf{P}} }^n,V_{\mathsf{W}_{K}^{\mathsf{P}} }^n,M_{ \overline{\phi} })+ n\epsilon_n\nonumber\\
& \hspace{-0.25in}  \leq H(V_{\mathsf{W}_{K-1}^{\mathsf{P}} }^n)+ H(V_{\mathsf{W}_{K}^{\mathsf{P}} }^n)-2nR_{ \overline{\phi} }\nonumber \\&\hspace{-0.25in} \qquad+ H(V_{\mathsf{W}_{j}^{\mathsf{P}} }^n|V_{\mathsf{W}_{K-1}^{\mathsf{P}} }^n,V_{\mathsf{W}_{K}^{\mathsf{P}} }^n,M_{ \overline{\phi} })+ 3n\epsilon_n \label{Eq_converse_fanos_again}\\
& \hspace{-0.25in} \leq H(V_{\mathsf{W}_{K-1}^{\mathsf{P}} }^n)+ H(V_{\mathsf{W}_{K}^{\mathsf{P}} }^n)-2nR_{ \overline{\phi} }\nonumber \\&  \qquad \qquad + H(V_{\mathsf{W}_{j}^{\mathsf{P}} }^n|V_{\mathsf{W}_{K-1}^{\mathsf{P}} }^n,V_{\mathsf{W}_{K}^{\mathsf{P}} }^n)+ 3n\epsilon_n\label{Eq_converse_cond_reduce}\\
& \hspace{-0.25in} = H(V_{\mathsf{W}_{K-1}^{\mathsf{P}} }^n)+ H(V_{\mathsf{W}_{K}^{\mathsf{P}} }^n)-2nR_{ \overline{\phi} }\nonumber \\& \qquad \qquad + H(V_{\downarrow_{\mathsf{W}_j^{\mathsf{P}}} \{\overline{K-1.K}\}}^n)+ 3n\epsilon_n\label{Eq_converse_setfact}\\
& \hspace{-0.25in} \leq nC_{\mathsf{W}_{K-1}^{\mathsf{P}} }+ nC_{\mathsf{W}_{K-1}^{\mathsf{P}} } + nC_{\downarrow_{\mathsf{W}_j^{\mathsf{P}}} \{\overline{K-1.K}\}} \nonumber \\ & \qquad \qquad
{-}2nR_{ \overline{\phi} } + 3n\epsilon_n \label{Eq_converse_unif}
\end{align}
where \eqref{Eq_converse_fano_indep} follows from Fano's inequality and the independence between the messages, \eqref{Eq_converse_chainRule} from the chain rule of entropy, \eqref{Eq_converse_Y_V} from the fact that $Y_{i}^n=V_{\mathsf{W}_{i}^{\mathsf{P}} }^n$ for any $i\in[1{:}K]$, \eqref{Eq_converse_subMod} from the submodularity of entropy, \eqref{Eq_converse_fanos_again} from Fano's inequality, \eqref{Eq_converse_cond_reduce} from conditioning reduces entropy, \eqref{Eq_converse_setfact} from the fact that $\mathsf{W}_j^{\mathsf{P}}\backslash\{\mathsf{W}_{K-1}^{\mathsf{P}} \cup \mathsf{W}_K^{\mathsf{P}}\}= \downarrow_{\mathsf{W}_j^{\mathsf{P}}} \{\overline{K-1.K}\}$ for any $j\in  S_p $, and \eqref{Eq_converse_unif} from the fact that the uniform distribution maximizes entropy. This completes the proof of Proposition \ref{Th_Capacity_combination_networks_C2}.  
\end{proof}

\begin{remark}
In Theorem \ref{Th_Capacity_combination_networks_C1_1_C2_1}, 
and in Propositions \ref{Th_Capacity_combination_networks_C1} and \ref{Th_Capacity_combination_networks_C2}, we prove achievability top-down by specializing random coding in the DM BC to the combination network. This is in contrast to the proof in \cite[Proposition 1, Theorem 3]{bidokhti2016capacity} for nested messages which is tailored to the combination network via explicit linear network coding. Also, our descriptions for the rate regions are more stream-lined, thanks to the use of order theory,
%(and different from that in \cite{bidokhti2016capacity}), %thanks in part to the use of order-theoretic tools. So also it is with 
as are the proofs of the converses, even though the main ingredient, besides standard information inequalities, in the proof of the converses for Theorem \ref{Th_Capacity_combination_networks_C1_1_C2_1} and Proposition \ref{Th_Capacity_combination_networks_C2} is the sub-modularity of entropy as it is in \cite[Theorem 3]{bidokhti2016capacity}. 
\end{remark}

\begin{example}
In this example, we show the importance of choosing the message index superset $\mathsf{F}=\mathsf{P}$ in Theorem \ref{Th_Inner_Nested_Bound_indirect_F} via the example of $K=3,L=1$, i.e., $\mathsf{E}=\{1,123\}$. From Proposition \ref{Th_Capacity_combination_networks_C2}, the capacity region for this case is given by
\begin{align}
R_{123}&\leq \min\{C_2+C_{12}+C_{23}+C_{123},\; C_3+C_{13}+C_{23}+C_{123}\}\label{Eq_Capacity_K=3_C=2,1}\\
R_{123}+R_{1}&\leq C_1+C_{12}+C_{13}+C_{123}\label{Eq_Capacity_K=3_C=2,2}\\
2R_{123}+R_1 &\leq C_1+C_2+C_3+C_{12}+C_{13}+2C_{23}+2C_{123}\label{Eq_Capacity_K=3_C=2,3}
\end{align}
Consider next the coding scheme commonly used in the  literature which effectively sets $\mathsf{F}=\uparrow_{\mathsf{P}}\mathsf{E}$
%which does not lead to the capacity region with independent auxiliary random variables with uniform distribution.
in Theorem \ref{Th_Inner_Nested_Bound_indirect_F}, i.e., $\mathsf{F}=\{1,12,13,123\}$. It is not hard to show that the achievable rate region is the set of rate pairs %($R_1,R_{123}$) 
satisfying
\begin{align}
R_{123} &\leq \min\{I(U_{12},U_{123};Y_2),I(U_{13},U_{123};Y_3)\}\\
R_{123}+R_{1}&\leq \min\{I(U_1;Y_1), \nonumber \\ &\qquad \qquad \enspace I(U_1;Y_1|U_{123},U_{12})+I(U_{12},U_{123};Y_2), \nonumber \\ &\qquad \qquad \enspace I(U_1;Y_1|U_{123},U_{13})+I(U_{13},U_{123};Y_3)\} \\
2R_{123}+R_{1}&\leq I(U_1;Y_1|U_{12},U_{13})+ I(U_{12},U_{123};Y_2)+I(U_{13},U_{123};Y_3)\\
2R_{123}+2R_{1}&\leq I(U_1;Y_1|U_{12},U_{13})+I(U_1;Y_1|U_{123})+\nonumber \\&\qquad \qquad \enspace I(U_{12},U_{123};Y_2)+I(U_{13},U_{123};Y_3)
\end{align} for some $p(u_{123})p(u_{12}|u_{123})p(u_{13}|u_{123})p(u_1|u_{12},u_{13})$. It can be shown that the above region does not become the capacity region by choosing independent auxiliary random variables with a single distribution for any choice of the channel input component $V_S$ (where $S\in \mathsf{P}$). %To insure the independence among the auxiliary random variables and the existence of $V_S$ for any specific choice, we have to choose $V_{S_i}=U_{S^{'}}$ {\color{red} where $S_i\subseteq \mathsf{P},S^{'}\in\mathsf{F},$ and $i\leq 2^K{-}1$ such that $\underset{i}{\cap}S_i=\emptyset$. this is weird.} %Otherwise, $U_{S^{'}}$ can be written as a function of each others, and hence, not independent or the choice of $V_S$ may not be valid. 

By examining the capacity region in \eqref{Eq_Capacity_K=3_C=2,1}-\eqref{Eq_Capacity_K=3_C=2,3}, an intuitive choice for the channel input components $ V_S$ ($S\in \mathsf{P}$) and the auxiliary random variables $U_S$ ($S\in \mathsf{F}$) to achieve \eqref{Eq_Capacity_K=3_C=2,1} and \eqref{Eq_Capacity_K=3_C=2,3} is $(V_{123},V_{23})=U_{123}$, $(V_{2},V_{12})=U_{12}$, $(V_{3},V_{13})=U_{13}$, and $V_1=U_1$ where the auxiliary random variables $U_{123},U_{13},U_{12},U_1$ are uniform distributions over $\mathcal{V}_{123}\times \mathcal{V}_{23}$, $\mathcal{V}_{13}\times \mathcal{V}_{3}$, $\mathcal{V}_{12}\times \mathcal{V}_{2}$ and $\mathcal{V}_{1}$, respectively. For this choice, following the same analysis as in Proposition \ref{Th_Capacity_combination_networks_C2}, we can show that the rate pairs $(R_1, R_{123})$ that satisfy 
\begin{align}
R_{123}&\leq \min\{C_2+C_{12}+C_{23}+C_{123},\; C_3+C_{13}+C_{23}+C_{123}\}\label{Eq_Capacity_K=3_C=2,1Choice1}\\
R_{123}+R_{1}&\leq C_1\label{Eq_Capacity_K=3_C=2,2Choice1}\\
2R_{123}+R_1 &\leq C_1+C_2+C_3+C_{12}+C_{13}+2C_{23}+2C_{123}\label{Eq_Capacity_K=3_C=2,3Choice1}
\end{align}
are achievable.
Note that the last inequality is redundant. Obviously, the above region is strictly smaller than the capacity region given in \eqref{Eq_Capacity_K=3_C=2,1}-\eqref{Eq_Capacity_K=3_C=2,3}.

A different choice that achieves \eqref{Eq_Capacity_K=3_C=2,2} is $V_{23}=U_{123}$, $V_{2}=U_{12}$, $V_{3}=U_{13}$, and $(V_1,V_{12},V_{13},V_{123})=U_1$ where the auxiliary random variables $U_{123},U_{13},U_{12},U_1$ are uniform distributions according to $\mathcal{V}_{23}$, $\mathcal{V}_{3}$, $\mathcal{V}_{2}$ and $\mathcal{V}_{123}\times \mathcal{V}_{13}\times \mathcal{V}_{12}\times \mathcal{V}_{1}$, respectively.
For this choice, the rate pairs $(R_1, R_{123})$ that satisfy
\begin{align}
R_{123} &\leq \min \{C_{2}+C_{23}, C_{3}+C_{23} \}\label{Eq_Capacity_K=3_C=2,1Choice2}\\
R_{123}+R_1 &\leq C_1+C_{12}+C_{13}+C_{123}\label{Eq_Capacity_K=3_C=2,2Choice2}\\
2R_{123}+R_1 &\leq C_1+C_2+C_3+C_{12}+C_{13}+2C_{23}+C_{123}\label{Eq_Capacity_K=3_C=2,3Choice2}
\end{align} are achievable. Note that the last inequality is redundant and the above region is also strictly smaller than the capacity region.  

In Fig. \ref{Fig:Capacity_Region_and_Inner_Bounds}, we show that even the convex hull of the union of the two inner bounds, given in \eqref{Eq_Capacity_K=3_C=2,1Choice1}-\eqref{Eq_Capacity_K=3_C=2,3Choice1} and \eqref{Eq_Capacity_K=3_C=2,1Choice2}-\eqref{Eq_Capacity_K=3_C=2,3Choice2}, is strictly contained in the capacity region. 

\begin{figure}
  \includegraphics[width=\linewidth]{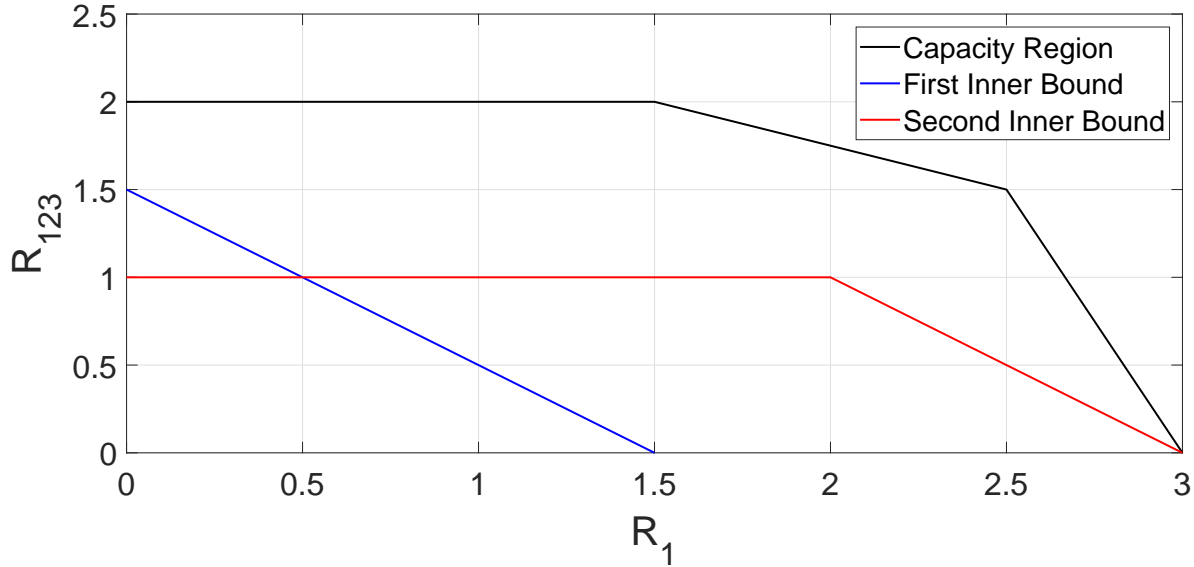}
  \caption{The capacity region is compared with the two inner bounds given in \eqref{Eq_Capacity_K=3_C=2,1Choice1}-\eqref{Eq_Capacity_K=3_C=2,3Choice1} and \eqref{Eq_Capacity_K=3_C=2,1Choice2}-\eqref{Eq_Capacity_K=3_C=2,3Choice2} for $K=3,C=2$. The finite capacity links are given as follows: $C_1=1.5$, $C_2=0.5$, $C_3=0.75$, $C_{12}=0.75$, $C_{13}=0.5$, $C_{23}=0.5$, $C_{123}=0.25$.}
  \label{Fig:Capacity_Region_and_Inner_Bounds}
\end{figure}

%{\color{blue} I don't find the above argument sufficiently convincing that we need $\mathsf{F}=\mathsf{P}$. Is there a distribution that will achieve inequality (57)? If so, please verify that the convex hull of the three above regions is not exactly the capacity region! More generally, maybe in general the task of achieving capacity with $\mathsf{F}=\uparrow_{\mathsf{P}}\mathsf{E}$ is just a matter of choosing a finite number of simple coding distributions and taking the convex hull of the resulting regions. If this is true then we have a clear trade-off between a complex coding scheme that works to achieve any point in the capacity region vs a rate-dependent simpler coding scheme.}
\label{ex:K=3L=1-CN}
\end{example}

Example \ref{ex:K=3L=1-CN} illustrates the importance of choosing the message index superset $\mathsf{F}=\mathsf{P}$. Although, this choice adds complexity to the coding scheme, it simplifies the choice of the coding distribution that achieves capacity. In particular, independent auxiliary random variables with uniform distribution are extremal. 

Nevertheless, we show in Section \ref{Sec:Smaller_F} that choosing $\mathsf{F}\subset \mathsf{P}$ can also achieve the capacity of the combination network, provided $\mathsf{F}$ is chosen appropriately depending on the message set. Moreover, independent auxiliary random variables with uniform distribution do not suffice. A certain dependency has to be introduced in them. 

%{\color{red} By choosing $\mathsf{F}\subset \mathsf{P}$, we may need to consider dependent auxiliary random variables with multiple distributions to achieve capacity, which may be a difficult task}.

%\begin{remark}
%\textcolor{red}{ In Theorem \ref{Th_Capacity_combination_networks_C1} and \ref{Th_Capacity_combination_networks_C2}, we established the capacity region for general combination network with one and two common receivers which recover the results in \cite{bidokhti2016capacity}. However, our frame work is flexible enough to handle more message sets. We prove this by considering in the next part the message set $\mathsf{E}=\{\overline{K},\overline{K-1}\}$. } 
%\end{remark}

\subsection{Other Message Sets}
In the three scenarios of Sections \ref{Sub_section_1Common} and \ref{Sub_section_K_1_general} for which we were able to establish the capacity region for combination networks, we had a polygonal description for the inner bound by applying FME to project away the split rates. This was possible because in these three cases, there are only up to four split rates under up-set rate splitting. For other choices of message pairs (with $K>3$) we would have many more split rates which would render FME too tedious or intractable. %, so that obtaining a polygonal description in the original rates becomes difficult.
Nevertheless, we suspect that the achievable rate region of Theorem \ref{TH_genral_inner_General_msgs} is large enough to be the capacity of the combination network for other message pairs (such as say nested messages with $L\geq 3$). 

Moreover, it is likely that our approach leads to the capacity region for combination networks for more than just two messages as well. For instance, it can be shown that by using a similar analysis as in Proposition \ref{Th_Capacity_combination_networks_C2}, we can establish the capacity region for three degraded messages, i.e., $\mathsf{E}=\{M_{\overline{\phi}},M_{\overline{K}},M_{\overline{K-1.K}}\}$. Hence, discovering all  message sets (ideally, the message set $\mathsf{E}=\mathsf{P}$) for which our approach leads to the capacity region for the combination network is a topic of future research.  
%The cases of nested messages with $C=3$ and the case of $K$ order-$(K-1)$ messages with $\mathsf{E} = \{\bar{1}, \bar{2}, \cdots , \cdots{K}\} $ may well be the next stepping stones.

\section{Further Evidence of the Strength of Theorem \ref{Th_Inner_Nested_Bound_indirect_F} 
%(by Recovering More Results of  \cite{bidokhti2016capacity}})
}
\label{Sec_Relation_Bidokhti}

%In Section \ref{Sub_section_1Common} we recovered the  nested message set 
In this section, we provide further evidence of the strength of the inner bound of Theorem \ref{Th_Inner_Nested_Bound_indirect_F} by specializing it to the combination network to recover, with a single universal coding distribution, the entire achievable rate region of \cite[Proposition 1]{bidokhti2016capacity}, the latter being obtained via rate-splitting and linear superposition coding in \cite{bidokhti2016capacity}. Note that that region further specialized to the cases of one or two common receivers yielded the capacity region for those cases as was shown in \cite[Theorem 3]{bidokhti2016capacity} and we recovered those capacity results in Propositions \ref{Th_Capacity_combination_networks_C1} and \ref{Th_Capacity_combination_networks_C2}, with achievability in a top-down manner via Theorem \ref{Th_Inner_Nested_Bound_indirect_F}, in Section \ref{Sub_section_1Common}.
%discuss the relation between the results in \cite{bidokhti2016capacity} which consider two nested messages over combination networks and our nested message result in Theorem \ref{Th_Inner_Nested_Bound_indirect_F}. 
%It is worth noting that, unlike our notation, the notation used in \cite{bidokhti2016capacity} tailored only to the nested message set and it is not clear how it generalized to general two messages. For instance, the intermediate nodes are identified by their connection to the common receivers. Hence, by considering general two messages where the common receivers are undefined in this case, the notation in \cite{bidokhti2016capacity} does not hold. However, using the order theory framework, we are able to handle any two messages problem, including two nested messages, in a unified framework.     
%We show that the achievable region given in Theorem \ref{Th_Inner_Nested_Bound_indirect_F} boils down to that proposed in \cite[Proposition 1]{bidokhti2016capacity} by restricting the auxiliary random variables to a single distribution. 

Moreover, the authors in \cite{bidokhti2016capacity} showed that the achievable region in \cite[Theorem 1]{bidokhti2016capacity}, which is achieved by adding a pre-encoder to the rate-splitting and linear superposition coding scheme of \cite[Proposition 1]{bidokhti2016capacity}, is strictly larger than that of \cite[Proposition 1]{bidokhti2016capacity} through \cite[Example 2]{bidokhti2016capacity} with $K=6$ with three common receivers, i.e., there is an achievable rate pair achieved by \cite[Theorem 1]{bidokhti2016capacity} but not by \cite[Proposition 1]{bidokhti2016capacity}. Moreover, through \cite[Example 4]{bidokhti2016capacity} in which $K=7$ and four common receivers, it was shown that the achievable region in \cite[Theorem 2]{bidokhti2016capacity}, achieved via a block Markov coding scheme, is strictly larger than that in \cite[Theorem 1]{bidokhti2016capacity}. %Although, we did not show that Theorem \ref{Th_Inner_Nested_Bound_indirect_F} is capacity achieving for combination networks with more than two common receivers, 
In this section, we specialize the rate region of Theorem \ref{Th_Inner_Nested_Bound_indirect_F} to the combination network and
identify the auxiliary random variables that achieve the rate pairs achievable by \cite[Theorem 1]{bidokhti2016capacity} and \cite[Theorem 2]{bidokhti2016capacity} in the key examples \cite[Example 2]{bidokhti2016capacity} and \cite[Example 4]{bidokhti2016capacity}, respectively.

We begin by recovering the rate region of \cite[Proposition 1]{bidokhti2016capacity} in the following proposition in a top-down manner via the more general Theorem \ref{Th_Inner_Nested_Bound_indirect_F}.
\begin{proposition}\cite[Proposition 1]{bidokhti2016capacity}
\label{Proposition_Regenerate_Bidokhti_el_al}
The rate pair $(R_{S_p}, R_{\overline{\phi}})$ is achievable over the $K$-user combination network if  there exist real numbers $R_{S_p \rightarrow S}$ for all $S \in  \uparrow_{\mathsf{P}} \{S_p\} $ such that the following inequalities are satisfied:
\begin{align}
    R_{S_p}&=\sum_{S \in  \uparrow_{\mathsf{P}} \{S_p\} } R_{S_p \rightarrow S} 
    \label{Reproduce_prop1_Feasiablity_4}\\
    R_{S_p \rightarrow S} &\geq 0 \; \forall S \in  \uparrow_{\mathsf{P}} \{S_p\} 
    \label{Reproduce_prop1_Feasiablity_5} \\
%\end{align}
%for which
%\begin{align}
& R_{\overline{\phi}}+R_{S_p}  \leq  C_{\mathsf{W}_j^{\mathsf{P}}} \enspace \forall j\in S_p 
\label{Reproduce_prop1_Feasiablity_1}\\
& \sum_{S\in \mathsf{B}}  R_{S_p\rightarrow S} \leq C_{\mathsf{B}}
 \qquad\forall \mathsf{B}\in \mathcal{F}_{\downarrow}(\mathsf{W}_j^{\mathsf{P} }\backslash \overline{\phi}), \; \forall j\in S_p  
 \label{Reproduce_prop1_Feasiablity_2}\\
&R_{\overline{\phi}}+\sum_{S\in \uparrow_{\mathsf{P}} \{ 1 2 \cdots Pi\}} R_{S_p\rightarrow S}\leq C_{\mathsf{W}_i^{\mathsf{P}}}  \enspace \forall i\in S_l 
\label{Reproduce_prop1_Feasiablity_3}
\end{align} 
\end{proposition}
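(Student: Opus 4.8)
The plan is to obtain this region as a direct specialization of the general nested-message inner bound of Theorem~\ref{Th_Inner_Nested_Bound_indirect_F} to the combination network, using the largest message index superset $\mathsf{F}=\mathsf{P}$ together with the canonical coding distribution identified in Lemma~\ref{Lemma_Combination_networks_proof}. Concretely, I would set $Q$ to be an uninformative constant, take the auxiliary random variables $\{U_S : S\in\mathsf{P}\}$ to be mutually independent with each $U_S$ uniform over $\mathcal{V}_S$ (so $|\mathcal{V}_S|=2^{C_S}$), and let $V_S=U_S$ for every $S\in\mathsf{P}$, so that $X=(V_S:S\in\mathsf{P})$ is a deterministic function of $U_{\mathsf{P}}$. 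The first step is to check that this choice is admissible for Theorem~\ref{Th_Inner_Nested_Bound_indirect_F}: because the $U_S$ are independent, the required factorization $p(u_{\mathsf{P}})=\prod_{S\in\mathsf{P}}p(u_S\mid u_{\uparrow_{\mathsf{P}}S\backslash\{S\}})$ holds trivially (each conditional equals its marginal), and $X$ is the stipulated deterministic function of $U_{\mathsf{P}}$.

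Second, I would evaluate every mutual information term appearing in \eqref{Eq_TH_indirect_generalF_private1}--\eqref{Eq_TH_indirect_generalF_common} (with $\mathsf{F}=\mathsf{P}$) using the identity \eqref{Eq_Lemma_8} of Lemma~\ref{Lemma_Combination_networks_proof}, namely $I(U_{\mathsf{W}};Y_i\mid U_{\mathsf{W}_i^{\mathsf{P}}\backslash\mathsf{W}})=C_{\mathsf{W}}$ for any $\mathsf{W}\subseteq\mathsf{W}_i^{\mathsf{P}}$. Taking $\mathsf{W}=\mathsf{W}_j^{\mathsf{P}}$ turns the private-receiver constraint \eqref{Eq_TH_indirect_generalF_private1} into \eqref{Reproduce_prop1_Feasiablity_1}; taking $\mathsf{W}=\mathsf{B}$ for each down-set $\mathsf{B}\in\mathcal{F}_{\downarrow}(\mathsf{W}_j^{\mathsf{P}}\backslash\overline{\phi})$ turns \eqref{Eq_TH_indirect_generalF_private2} into \eqref{Reproduce_prop1_Feasiablity_2}; and taking $\mathsf{W}=\mathsf{W}_i^{\mathsf{P}}$ turns the common-receiver constraint \eqref{Eq_TH_indirect_generalF_common} into \eqref{Reproduce_prop1_Feasiablity_3}, where the summation index set $\uparrow_{\mathsf{P}}\{12\cdots P.i\}$ is inherited verbatim from \eqref{Eq_TH_indirect_generalF_common} upon setting $\mathsf{F}=\mathsf{P}$. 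The index sets match on the nose because the down-set families $\mathcal{F}_{\downarrow}(\mathsf{W}_j^{\mathsf{P}}\backslash\overline{\phi})$ and the up-sets $\uparrow_{\mathsf{P}}\{12\cdots P.i\}$ are exactly those named in the proposition.

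Finally, the feasibility constraints \eqref{Reproduce_prop1_Feasiablity_4}--\eqref{Reproduce_prop1_Feasiablity_5} are nothing but the up-set rate-splitting relation \eqref{Eq_RS_upset_splittin} specialized to the single private message, so that $M_{S_p}$ is split into the parts $M_{S_p\rightarrow S}$ indexed by $S\in\uparrow_{\mathsf{P}}S_p$ with non-negative split rates, which is inherited directly from the hypotheses of Theorem~\ref{Th_Inner_Nested_Bound_indirect_F}. Assembling these, the rate region of Theorem~\ref{Th_Inner_Nested_Bound_indirect_F} under this distribution coincides with the claimed region, proving achievability. I do not anticipate a genuine obstacle here: the argument is a mechanical substitution, and the only points requiring care are verifying that the chosen product distribution is admissible (which it is, since independence makes the order-theoretic factorization degenerate) and confirming that the order-theoretic index sets in Theorem~\ref{Th_Inner_Nested_Bound_indirect_F} with $\mathsf{F}=\mathsf{P}$ align term-by-term with those in the proposition. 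The mild subtlety worth flagging is that \eqref{Eq_Lemma_8} relies on $Y_i=U_{\mathsf{W}_i^{\mathsf{P}}}$ and on the independence and uniformity of the $U_S$, so one should confirm these hypotheses hold before invoking it, which they do by construction.
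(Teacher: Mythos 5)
Your proposal is correct and follows essentially the same route as the paper: set $\mathsf{F}=\mathsf{P}$ in Theorem \ref{Th_Inner_Nested_Bound_indirect_F}, choose the $U_S$ independent and uniform over $\mathcal{V}_S$ with $V_S=U_S$, and apply \eqref{Eq_Lemma_8} of Lemma \ref{Lemma_Combination_networks_proof} to map \eqref{Eq_TH_indirect_generalF_private1}, \eqref{Eq_TH_indirect_generalF_private2}, and \eqref{Eq_TH_indirect_generalF_common} to \eqref{Reproduce_prop1_Feasiablity_1}, \eqref{Reproduce_prop1_Feasiablity_2}, and \eqref{Reproduce_prop1_Feasiablity_3}, respectively. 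Your additional remarks on the admissibility of the product distribution and the inheritance of the split-rate constraints from \eqref{Eq_RS_upset_splittin} are sound elaborations of points the paper leaves implicit.
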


\begin{proof}
From Theorem \ref{Th_Inner_Nested_Bound_indirect_F}, we set $\mathsf{F}=\mathsf{P}$ and choose the random variables $U_S$ for all $S\in \mathsf{P}$ to independent and uniformly distributed over $\mathcal{V}_S$ where $|\mathcal{V}_S|=2^{C_S}$ and $V_S=U_S$. For this specific choice, we can apply \eqref{Eq_Lemma_8} in Lemma \ref{Lemma_Combination_networks_proof} to Theorem \ref{Th_Inner_Nested_Bound_indirect_F} to get \eqref{Reproduce_prop1_Feasiablity_1} from \eqref{Eq_TH_indirect_generalF_private1}, \eqref{Reproduce_prop1_Feasiablity_2} from \eqref{Eq_TH_indirect_generalF_private2}, and \eqref{Reproduce_prop1_Feasiablity_3} from 
\eqref{Eq_TH_indirect_generalF_common}.
\end{proof}

%Obviously, for every fixed distribution for the auxiliary random variables, the inner bounds in Theorem \ref{Th_Inner_Nested_Bound_indirect_F} is converted to a feasibility region \eqref{Reproduce_prop1_Feasiablity_1}-\eqref{Reproduce_prop1_Feasiablity_5}, where any rate pair is achievable if the linear program is feasible. 

\begin{remark}
The region in Proposition \ref{Proposition_Regenerate_Bidokhti_el_al} is obtained in \cite{bidokhti2016capacity} by rate-splitting of the private message and linear superposition coding scheme where the transmitted signal is obtained by the multiplication of the information symbols vector by the so-called zero-structured matrix \cite[Definition 2]{bidokhti2016capacity}. Here, Proposition \ref{Proposition_Regenerate_Bidokhti_el_al} %presents an alternative way of obtaining the same region 
is obtained
by specializing the inner bound on the achievable rate region for the DM BC of Theorem \ref{Th_Inner_Nested_Bound_indirect_F} (which is a union of polytopes over all admissible coding distributions given therein) to the combination network and by restricting the distribution of the auxiliary random variables to be a {\em single} distribution. Hence, it is possible that the entire rate region of our coding scheme of Theorem \ref{Th_Inner_Nested_Bound_indirect_F} is much larger than the one given by \cite[Proposition 1]{bidokhti2016capacity}.
%may be quite limited compared to what superposition coding can achieve where this is shown by how easily we regenerate it from . 
%This illustrate the potential in Theorem \ref{Th_Inner_Nested_Bound_indirect_F} since the region is presented as a union of linear feasibility problems over the proper pmfs. 
We address this important point in Examples \ref{Example_Bidokhti_Example_K=6} and \ref{Example_Bidokhti_Example_K=7} to follow.
\end{remark}

\begin{remark}
The authors in \cite{bidokhti2016capacity} define Superset Saturated Subsets \cite[Definition 1]{bidokhti2016capacity}, which is equivalent to up-sets in our notation, to be able to describe the inequality in \eqref{Reproduce_prop1_Feasiablity_2} without using order theory. %(and thereby forgoing the rich set of results in order theory that could be used as tools for further manipulating their rate regions). 
%In fact, the authors did not recognize the connected between their definition and order theory as we did in this paper, and hence, loosing the richness of order theory notation that allows us to deal with multiple message set problems in a single framework, and the potential to use results from order theory directly to their work. 
Note that the complement of any down-set is an upset, and hence, \eqref{Reproduce_prop1_Feasiablity_2} can be described using the family of down-sets or the family of upsets. In fact, \eqref{Reproduce_prop1_Feasiablity_2} can be written as 
\begin{equation}
    R_{S_p}-\sum_{S\in \mathsf{B}}R_{S_p\rightarrow S}\leq C_{\mathsf{W}_j^{\mathsf{P}} \backslash \mathsf{B} }  \; \forall \; \mathsf{B} \in \mathcal{F}_{\uparrow}(\mathsf{W}_j^{\mathsf{P} }) \backslash \mathsf{W}_j^{\mathsf{P}},\; \forall j\in S_p
\end{equation}
\end{remark}

Before proceeding to the three common receivers example in \cite[Example 2]{bidokhti2016capacity}, we specialize first Theorem \ref{Th_Inner_Nested_Bound_indirect_F} to the three common receiver case in the following corollary.

\begin{corollary}
\label{Corollary_threecommon_recs_FeasibilityProblem}
An inner bound of $K$-user DM BC for two nested messages with three common receivers ($M_{S_p}$,$M_{\overline{\phi}}$) where $S_p=\{1,2,\cdots,K-3\}$ is the set of rate pairs ($R_{S_p},R_{\overline{\phi}}$) satisfying the following for $l_1=K,l_2=K-1,l_3=K-2$ and all $j \in S_p$
\begin{align}
    &R_{\overline{\phi}}+R_{S_p\rightarrow\overline{l_2l_3}}+R_{S_p\rightarrow\overline{l_2}}+R_{S_p\rightarrow\overline{l_3}}+R_{S_p\rightarrow\overline{\phi}}
    \leq I(U_{\mathsf{W}_{l_1}^{\mathsf{P}}};Y_{l_3}|Q)
    \label{Eq:threecommon_recs_FeasibilityProblem_1}
    \\
    &R_{\overline{\phi}}+R_{S_p\rightarrow\overline{l_1l_3}}+R_{S_p\rightarrow\overline{l_1}}+R_{S_p\rightarrow\overline{l_3}}+R_{S_p\rightarrow\overline{\phi}}
    \leq I(U_{\mathsf{W}_{l_2}^{\mathsf{P}}};Y_{l_3}|Q)
    \label{Eq:threecommon_recs_FeasibilityProblem_2}
    \\
    &R_{\overline{\phi}}+R_{S_p\rightarrow\overline{l_1l_2}}+R_{S_p\rightarrow\overline{l_1}}+R_{S_p\rightarrow\overline{l_2}}+R_{S_p\rightarrow\overline{\phi}}
    \leq 
    I(U_{\mathsf{W}_{l_3}^{\mathsf{P}}};Y_{l_3}|Q)
    \label{Eq:threecommon_recs_FeasibilityProblem_3}
    \\
    &R_{\overline{l_1l_2l_3}}
    \leq 
    I(U_{  \downarrow_{\mathsf{W}_{j}^{\mathsf{P}}} \{\overline{l_1l_2l_3}\} };Y_{j}|U_{\mathsf{W}_{j}^{\mathsf{P}} \backslash  \downarrow_{\mathsf{W}_{j}^{\mathsf{P}}} \{\overline{l_1l_2l_3}\}  },Q)
    \label{Eq:threecommon_recs_FeasibilityProblem_4}
    \\
    &R_{S_p\rightarrow \overline{l_1l_2l_3}}+R_{S_p\rightarrow \overline{l_1l_2}}
    \leq I(U_{  \downarrow_{\mathsf{W}_{j}^{\mathsf{P}}} \{\overline{l_1l_2}\} };Y_{j}|U_{\mathsf{W}_{j}^{\mathsf{P}} \backslash  \downarrow_{\mathsf{W}_{j}^{\mathsf{P}}} \{\overline{l_1l_2}\}  },Q)
    \label{Eq:threecommon_recs_FeasibilityProblem_5}
    \\
    &R_{S_p\rightarrow \overline{l_1l_2l_3}}+R_{S_p\rightarrow \overline{l_1l_3}}
    \leq I(U_{  \downarrow_{\mathsf{W}_{j}^{\mathsf{P}}} \{\overline{l_1l_3}\} };Y_{j}|U_{\mathsf{W}_{j}^{\mathsf{P}} \backslash  \downarrow_{\mathsf{W}_{j}^{\mathsf{P}}} \{\overline{l_1l_3}\}  },Q)
    \label{Eq:threecommon_recs_FeasibilityProblem_6}
    \\
    &R_{S_p\rightarrow \overline{l_1l_2l_3}}+R_{S_p\rightarrow \overline{l_2l_3}}
    \leq I(U_{  \downarrow_{\mathsf{W}_{j}^{\mathsf{P}}} \{\overline{l_2l_3}\} };Y_{j}|U_{\mathsf{W}_{j}^{\mathsf{P}} \backslash  \downarrow_{\mathsf{W}_{j}^{\mathsf{P}}} \{\overline{l_2l_3}\}  },Q)
    \label{Eq:threecommon_recs_FeasibilityProblem_7}
    \\
    &R_{S_p\rightarrow \overline{l_1l_2l_3}}
    +R_{S_p\rightarrow \overline{l_1l_2}}
    +R_{S_p\rightarrow \overline{l_1l_3}}
    \leq I(U_{  \downarrow_{\mathsf{W}_{j}^{\mathsf{P}}} \{\overline{l_1l_2},\overline{l_1l_3}\} };Y_{j}|U_{\mathsf{W}_{j}^{\mathsf{P}} \backslash  \downarrow_{\mathsf{W}_{j}^{\mathsf{P}}} \{\overline{l_1l_2},\overline{l_1l_3}\}  },Q)
    \label{Eq:threecommon_recs_FeasibilityProblem_8}
    \\
    &R_{S_p\rightarrow \overline{l_1l_2l_3}}
    +R_{S_p\rightarrow \overline{l_1l_2}}
    +R_{S_p\rightarrow \overline{l_2l_3}}
    \leq I(U_{  \downarrow_{\mathsf{W}_{j}^{\mathsf{P}}} \{\overline{l_1l_2},\overline{l_2l_3}\} };Y_{j}|U_{\mathsf{W}_{j}^{\mathsf{P}} \backslash  \downarrow_{\mathsf{W}_{j}^{\mathsf{P}}} \{\overline{l_1l_2},\overline{l_2l_3}\}  },Q)
    \label{Eq:threecommon_recs_FeasibilityProblem_9}
    \\
    &R_{S_p\rightarrow \overline{l_1l_2l_3}}
    +R_{S_p\rightarrow \overline{l_1l_3}}
    +R_{S_p\rightarrow \overline{l_2l_3}}
    \leq I(U_{  \downarrow_{\mathsf{W}_{j}^{\mathsf{P}}} \{\overline{l_1l_3},\overline{l_2l_3}\} };Y_{j}|U_{\mathsf{W}_{j}^{\mathsf{P}} \backslash  \downarrow_{\mathsf{W}_{j}^{\mathsf{P}}} \{\overline{l_1l_3},\overline{l_2l_3}\}  },Q)
    \label{Eq:threecommon_recs_FeasibilityProblem_10}
    \\
    &R_{S_p\rightarrow \overline{l_1l_2l_3}}
    +R_{S_p\rightarrow \overline{l_1l_2}}
    +R_{S_p\rightarrow \overline{l_1l_3}}
    +R_{S_p\rightarrow \overline{l_2l_3}}
    \leq I(U_{  \downarrow_{\mathsf{W}_{j}^{\mathsf{P}}} \{\overline{l_1l_2},\overline{l_1l_3},\overline{l_2l_3}\} };Y_{j}|U_{\mathsf{W}_{j}^{\mathsf{P}} \backslash  \downarrow_{\mathsf{W}_{j}^{\mathsf{P}}} \{\overline{l_1l_2},\overline{l_1l_3},\overline{l_2l_3}\}  },Q)
    \label{Eq:threecommon_recs_FeasibilityProblem_11}
    \\
    & R_{S_p\rightarrow \overline{l_1l_2l_3}}
    +R_{S_p\rightarrow \overline{l_1l_2}}
    +R_{S_p\rightarrow \overline{l_1l_3}}
    +R_{S_p\rightarrow \overline{l_1}}
    \leq I(U_{  \downarrow_{\mathsf{W}_{j}^{\mathsf{P}}} \{\overline{l_1}\} };Y_{j}|U_{\mathsf{W}_{j}^{\mathsf{P}} \backslash  \downarrow_{\mathsf{W}_{j}^{\mathsf{P}}} \{\overline{l_1}\}  },Q)
    \label{Eq:threecommon_recs_FeasibilityProblem_12}
    \\
    &R_{S_p\rightarrow \overline{l_1l_2l_3}}
    +R_{S_p\rightarrow \overline{l_1l_2}}
    +R_{S_p\rightarrow \overline{l_1l_3}}
    +R_{S_p\rightarrow \overline{l_2l_3}}
    +R_{S_p\rightarrow \overline{l_1}}
    \leq I(U_{  \downarrow_{\mathsf{W}_{j}^{\mathsf{P}}} \{\overline{l_1},\overline{l_2l_3}\} };Y_{j}|U_{\mathsf{W}_{j}^{\mathsf{P}} \backslash  \downarrow_{\mathsf{W}_{j}^{\mathsf{P}}} \{\overline{l_1},\overline{l_2l_3}\}  },Q)
    \label{Eq:threecommon_recs_FeasibilityProblem_13}
    \\
    & R_{S_p\rightarrow \overline{l_1l_2l_3}}
    +R_{S_p\rightarrow \overline{l_1l_2}}
    +R_{S_p\rightarrow \overline{l_2l_3}}
    +R_{S_p\rightarrow \overline{l_2}}
    \leq I(U_{  \downarrow_{\mathsf{W}_{j}^{\mathsf{P}}} \{\overline{l_2}\} };Y_{j}|U_{\mathsf{W}_{j}^{\mathsf{P}} \backslash  \downarrow_{\mathsf{W}_{j}^{\mathsf{P}}} \{\overline{l_2}\}  },Q)
    \label{Eq:threecommon_recs_FeasibilityProblem_14}
    \\
    &R_{S_p\rightarrow \overline{l_1l_2l_3}}
    +R_{S_p\rightarrow \overline{l_1l_2}}
    +R_{S_p\rightarrow \overline{l_1l_3}}
    +R_{S_p\rightarrow \overline{l_2l_3}}
    +R_{S_p\rightarrow \overline{l_2}}
    \leq I(U_{  \downarrow_{\mathsf{W}_{j}^{\mathsf{P}}} \{\overline{l_2},\overline{l_1l_3}\} };Y_{j}|U_{\mathsf{W}_{j}^{\mathsf{P}} \backslash  \downarrow_{\mathsf{W}_{j}^{\mathsf{P}}} \{\overline{l_2},\overline{l_1l_3}\}  },Q)
    \label{Eq:threecommon_recs_FeasibilityProblem_15}
    \\
    &R_{S_p\rightarrow \overline{l_1l_2l_3}}
    +R_{S_p\rightarrow \overline{l_1l_3}}
    +R_{S_p\rightarrow \overline{l_2l_3}}
    +R_{S_p\rightarrow \overline{l_3}}
    \leq I(U_{  \downarrow_{\mathsf{W}_{j}^{\mathsf{P}}} \{\overline{l_3}\} };Y_{j}|U_{\mathsf{W}_{j}^{\mathsf{P}} \backslash  \downarrow_{\mathsf{W}_{j}^{\mathsf{P}}} \{\overline{l_3}\}  },Q)
    \label{Eq:threecommon_recs_FeasibilityProblem_16}
    \\
    &R_{S_p\rightarrow \overline{l_1l_2l_3}}
    +R_{S_p\rightarrow \overline{l_1l_2}}
    +R_{S_p\rightarrow \overline{l_1l_3}}
    +R_{S_p\rightarrow \overline{l_2l_3}}
    +R_{S_p\rightarrow \overline{l_3}}
    \leq I(U_{  \downarrow_{\mathsf{W}_{j}^{\mathsf{P}}} \{\overline{l_3},\overline{l_1l_2}\} };Y_{j}|U_{\mathsf{W}_{j}^{\mathsf{P}} \backslash  \downarrow_{\mathsf{W}_{j}^{\mathsf{P}}} \{\overline{l_3},\overline{l_1l_2}\}  },Q)
    \label{Eq:threecommon_recs_FeasibilityProblem_17}
    \\
    &R_{S_p\rightarrow \overline{l_1l_2l_3}}
    +R_{S_p\rightarrow \overline{l_1l_2}}
    +R_{S_p\rightarrow \overline{l_1l_3}}
    +R_{S_p\rightarrow \overline{l_2l_3}}
    +R_{S_p\rightarrow \overline{l_1}}
    +R_{S_p\rightarrow \overline{l_2}}
    \leq I(U_{  \downarrow_{\mathsf{W}_{j}^{\mathsf{P}}} \{\overline{l_1},\overline{l_2}\} };Y_{j}|U_{\mathsf{W}_{j}^{\mathsf{P}} \backslash  \downarrow_{\mathsf{W}_{j}^{\mathsf{P}}} \{\overline{l_1},\overline{l_2}\}  },Q)
    \label{Eq:threecommon_recs_FeasibilityProblem_18}
    \\
    &R_{S_p\rightarrow \overline{l_1l_2l_3}}
    +R_{S_p\rightarrow \overline{l_1l_2}}
    +R_{S_p\rightarrow \overline{l_1l_3}}
    +R_{S_p\rightarrow \overline{l_2l_3}}
    +R_{S_p\rightarrow \overline{l_1}}
    +R_{S_p\rightarrow \overline{l_3}}
    \leq I(U_{  \downarrow_{\mathsf{W}_{j}^{\mathsf{P}}} \{\overline{l_1},\overline{l_3}\} };Y_{j}|U_{\mathsf{W}_{j}^{\mathsf{P}} \backslash  \downarrow_{\mathsf{W}_{j}^{\mathsf{P}}} \{\overline{l_1},\overline{l_3}\}  },Q)
    \label{Eq:threecommon_recs_FeasibilityProblem_19}
    \\
    &R_{S_p\rightarrow \overline{l_1l_2l_3}}
    +R_{S_p\rightarrow \overline{l_1l_2}}
    +R_{S_p\rightarrow \overline{l_1l_3}}
    +R_{S_p\rightarrow \overline{l_2l_3}}
    +R_{S_p\rightarrow \overline{l_2}}
    +R_{S_p\rightarrow \overline{l_3}}
    \leq I(U_{  \downarrow_{\mathsf{W}_{j}^{\mathsf{P}}} \{\overline{l_2},\overline{l_3}\} };Y_{j}|U_{\mathsf{W}_{j}^{\mathsf{P}} \backslash  \downarrow_{\mathsf{W}_{j}^{\mathsf{P}}} \{\overline{l_2},\overline{l_3}\}  },Q)
    \label{Eq:threecommon_recs_FeasibilityProblem_20}
    \\
    &R_{S_p\rightarrow \overline{l_1l_2l_3}}
    +R_{S_p\rightarrow \overline{l_1l_2}}
    +R_{S_p\rightarrow \overline{l_1l_3}}
    +R_{S_p\rightarrow \overline{l_2l_3}}
    +R_{S_p\rightarrow \overline{l_1}}
    +R_{S_p\rightarrow \overline{l_2}}
    +R_{S_p\rightarrow \overline{l_3}}
    \leq \nonumber \\
    & \hspace{10cm}
    I(U_{  \downarrow_{\mathsf{W}_{j}^{\mathsf{P}}} \{\overline{l_1},\overline{l_2},\overline{l_3}\} };Y_{j}|U_{\mathsf{W}_{j}^{\mathsf{P}} \backslash  \downarrow_{\mathsf{W}_{j}^{\mathsf{P}}} \{\overline{l_1}, \overline{l_2},\overline{l_3}\}  },Q)
    \label{Eq:threecommon_recs_FeasibilityProblem_21}
    \\
    &R_{\overline{\phi}}+R_{S_p}
    \leq 
    I(U_{\mathsf{W}_{j}^{\mathsf{P}}};Y_{j}|Q)
    \label{Eq:threecommon_recs_FeasibilityProblem_22}
\end{align}
for some $p(q,u_{\mathsf{P}})= p(q) \prod_{S\in \mathsf{P}}p(u_S|u_{\uparrow_{\mathsf{P}} S\backslash \{ S \} },q)$ and $X$ as a deterministic function of $(Q, U_{\mathsf{P}})$ where
\begin{align}
R_{S_p\rightarrow S}&\geq 0 \; \forall S\in \uparrow_{\mathsf{P}}\{S_p\}    \label{Eq:threecommon_recs_FeasibilityProblem_23}
\\
R_{S_p}&= R_{S_p\rightarrow \overline{\phi}}+R_{S_p\rightarrow \overline{l_1}}+R_{S_p\rightarrow \overline{l_2}}+R_{S_p\rightarrow \overline{l_3}}+R_{S_p\rightarrow \overline{l_1l_2}}+R_{S_p\rightarrow \overline{l_1 l_3}}+R_{S_p\rightarrow \overline{l_2l_3}}+R_{S_p\rightarrow \overline{l_1l_2l_3}}
    \label{Eq:threecommon_recs_FeasibilityProblem_24}
\end{align}
    
\end{corollary}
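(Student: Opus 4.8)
The plan is to read off the corollary as the specialization of Theorem~\ref{Th_Inner_Nested_Bound_indirect_F} to nested messages with $L=3$ common receivers and the maximal choice $\mathsf{F}=\mathsf{P}$, so that $P=K-3$, $S_p=\{1,\dots,K-3\}$ and $S_l=\{l_1,l_2,l_3\}=\{K,K-1,K-2\}$. First I would note that up-set rate splitting \eqref{Eq_RS_upset_splittin} splits $M_{S_p}$ into the $2^L=8$ parts indexed by $\uparrow_{\mathsf{P}}\{S_p\}$; writing each superset of $S_p$ by its complement, these indices are exactly $\overline{\phi},\overline{l_1},\overline{l_2},\overline{l_3},\overline{l_1l_2},\overline{l_1l_3},\overline{l_2l_3},\overline{l_1l_2l_3}$, which gives \eqref{Eq:threecommon_recs_FeasibilityProblem_23}-\eqref{Eq:threecommon_recs_FeasibilityProblem_24} verbatim. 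The reconstruction rates vanish outside $\uparrow_{\mathsf{P}}\{S_p\}$ by \eqref{Eq_reconstruction_rates_nested2}, and this is what makes the remaining bounds collapse to those listed.

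Next I would match the three easy families. Inequality \eqref{Eq:threecommon_recs_FeasibilityProblem_22} is exactly the private full-cut bound \eqref{Eq_TH_indirect_generalF_private1}. The three common-receiver bounds \eqref{Eq:threecommon_recs_FeasibilityProblem_1}-\eqref{Eq:threecommon_recs_FeasibilityProblem_3} are the instances of \eqref{Eq_TH_indirect_generalF_common} for $i=l_1,l_2,l_3$: for each such $i$ the index set $\uparrow_{\mathsf{F}}\{S_p\cup\{i\}\}$ consists precisely of the four supersets of $S_p$ that contain $i$, and these contribute the four split-rate terms appearing on the left of the corresponding bound together with the common rate $R_{\overline{\phi}}$.

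The substantive bookkeeping is the private-receiver family \eqref{Eq:threecommon_recs_FeasibilityProblem_4}-\eqref{Eq:threecommon_recs_FeasibilityProblem_21}, which I would obtain from the down-set form \eqref{Eq_TH_indirect_private2_extra} of \eqref{Eq_TH_indirect_generalF_private2}. There $\mathsf{B}$ ranges over $\mathcal{F}_{\downarrow}(\uparrow_{\mathsf{P}}\{S_p\}\setminus\overline{\phi})$, and this ground poset is order-isomorphic to the Boolean lattice $B_3$ on $\{l_1,l_2,l_3\}$ with its maximum element deleted. Since the only order ideal of $B_3$ containing its maximum is all of $B_3$, this poset has $19$ order ideals counting the empty one ($20$, the third Dedekind number, minus that single full ideal); under the paper's convention that $\mathcal{F}_{\downarrow}$ omits the empty ideal, exactly $18$ remain, matching \eqref{Eq:threecommon_recs_FeasibilityProblem_4}-\eqref{Eq:threecommon_recs_FeasibilityProblem_21}. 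For each ideal $\mathsf{B}$ the left side of the corresponding bound is $\sum_{S\in\mathsf{B}}R_{S_p\rightarrow S}$, while by the union identity \eqref{Eq_Lemma_3} the right side $I(U_{\downarrow_{\mathsf{W}_j^{\mathsf{P}}}\mathsf{B}};Y_j|U_{\mathsf{W}_j^{\mathsf{P}}\setminus\downarrow_{\mathsf{W}_j^{\mathsf{P}}}\mathsf{B}},Q)$ depends only on the antichain $\mathsf{A}$ of maximal elements of $\mathsf{B}$, and $\mathsf{A}$ is exactly the set written inside the braces on each right-hand side.

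The only real obstacle is the enumeration itself rather than any information-theoretic step: one must check that the listed $18$ left-hand sides exhaust the non-empty order ideals and that each is paired with the correct maximal antichain. I would organize this by the number of maximal elements of $\mathsf{B}$ --- the single minimum \eqref{Eq:threecommon_recs_FeasibilityProblem_4}; the three two-element chains \eqref{Eq:threecommon_recs_FeasibilityProblem_5}-\eqref{Eq:threecommon_recs_FeasibilityProblem_7}; the ideals generated by two or three incomparable elements, up to the full ideal \eqref{Eq:threecommon_recs_FeasibilityProblem_21} whose maximal antichain is $\{\overline{l_1},\overline{l_2},\overline{l_3}\}$ --- and verify each right-hand side against the template \eqref{Eq_TH_indirect_private2_extra}. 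The non-negativity constraints \eqref{Eq:threecommon_recs_FeasibilityProblem_23} and the splitting identity \eqref{Eq:threecommon_recs_FeasibilityProblem_24} are inherited directly, completing the specialization.
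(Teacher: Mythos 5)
Your proposal is correct and follows essentially the same route as the paper: the paper's proof simply sets $\mathsf{F}=\mathsf{P}$ in Theorem \ref{Th_Inner_Nested_Bound_indirect_F}, splits $M_{S_p}$ into the eight sub-messages indexed by $\uparrow_{\mathsf{P}}\{S_p\}$, and reads \eqref{Eq:threecommon_recs_FeasibilityProblem_1}--\eqref{Eq:threecommon_recs_FeasibilityProblem_3}, \eqref{Eq:threecommon_recs_FeasibilityProblem_4}--\eqref{Eq:threecommon_recs_FeasibilityProblem_21}, and \eqref{Eq:threecommon_recs_FeasibilityProblem_22} off of \eqref{Eq_TH_indirect_generalF_common}, \eqref{Eq_TH_indirect_generalF_private2}, and \eqref{Eq_TH_indirect_generalF_private1}, respectively. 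Your explicit enumeration of the $18$ non-empty down-sets of the seven-element poset via their maximal antichains (and the Dedekind-number count) only makes precise the bookkeeping that the paper leaves implicit.
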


\begin{proof}
We obtain the achievable rate region in the statement of the corollary by setting $\mathsf{F}=\mathsf{P}$ in Theorem \ref{Th_Inner_Nested_Bound_indirect_F}. Note that we split the private message $M_{S_p}$ into eight sub-messages $M_{S_p\rightarrow \overline{\phi}}$, $M_{S_p\rightarrow \overline{l_1}}$, $M_{S_p\rightarrow \overline{l_2}}$, $M_{S_p\rightarrow \overline{l_3}}$, $M_{S_p\rightarrow \overline{l_1l_2}}$, $M_{S_p\rightarrow \overline{l_1l_3}}$, $M_{S_p\rightarrow \overline{l_2l_3}}$, $M_{S_p\rightarrow \overline{l_1l_2l_3}}$ since there are 3 common receivers. Hence, we obtain \eqref{Eq:threecommon_recs_FeasibilityProblem_1}-\eqref{Eq:threecommon_recs_FeasibilityProblem_3} from \eqref{Eq_TH_indirect_generalF_common}, \eqref{Eq:threecommon_recs_FeasibilityProblem_4}-\eqref{Eq:threecommon_recs_FeasibilityProblem_21} from \eqref{Eq_TH_indirect_generalF_private2}, and  \eqref{Eq:threecommon_recs_FeasibilityProblem_22} from \eqref{Eq_TH_indirect_generalF_private1}. 
\end{proof}

In the following example, we consider the example in \cite[Example 2]{bidokhti2016capacity} wherein it was shown that the rate-splitting and linear coding scheme of \cite[Propositon 1]{bidokhti2016capacity} does not achieve capacity. However, when a linear pre-encoder is used together with that scheme a strictly larger rate region is obtained in \cite[Theorem 1]{bidokhti2016capacity} which achieves capacity in \cite[Example 2]{bidokhti2016capacity}. We show that the capacity region of \cite[Example 2]{bidokhti2016capacity} can actually be recovered in a top-down fashion from Corollary \ref{Corollary_threecommon_recs_FeasibilityProblem}, but two distributions of the auxilary random variables are needed.

\begin{example}[Example 2 of \cite{bidokhti2016capacity}]
\label{Example_Bidokhti_Example_K=6}
We consider $K=6$ with three common receivers, i.e., $\mathsf{E}=\{123,123456\}$ where $S_p=\{1,2,3\}$ and $S_l=\{4,5,6\}$. Moreover, we set $C_{124}=C_{135}=C_{236}=1$ and $C_S=0$ for all $S\in\mathsf{P}\backslash \{124,135,236\}$ so that we have only three intermediate nodes. Clearly, for such a connection, the capacity region is a line between the two corner points ($R_{S_p},R_{\overline{\phi}})=(2,0)$ and ($R_{S_p},R_{\overline{\phi}})=(0,1)$. By choosing the random variables $U_S$ for all $S\in\{124,135,236\}$ to be independent and uniformly distributed over $\mathcal{V}_S$ where $|\mathcal{V}_S|=2^{C_S}$ and $V_S=U_S$ with $|Q|=1$ and $U_S=\textit{const.}$ for $S\in\mathsf{P}\backslash \{124,135,236\}$ in the region given in Corollary \ref{Corollary_threecommon_recs_FeasibilityProblem}, we get that the rate pair ($R_{S_p},R_{\overline{\phi}})$ is achievable if there exists non-negative numbers $R_{S_p\rightarrow S}\geq 0 \; \forall S\in \uparrow_{\mathsf{P}}\{S_p\}$ with $R_{S_p}= R_{S_p\rightarrow \overline{\phi}}+R_{S_p\rightarrow \overline{6}}+R_{S_p\rightarrow \overline{5}}+R_{S_p\rightarrow \overline{4}}+R_{S_p\rightarrow \overline{56}}+R_{S_p\rightarrow \overline{46}}+R_{S_p\rightarrow \overline{45}}+R_{S_p\rightarrow \overline{456}}$ such that the following inequalities are satisfied:
\begin{align}
    R_{\overline{\phi}}+R_{S_p\rightarrow\overline{45}}+R_{S_p\rightarrow\overline{5}}+R_{S_p\rightarrow\overline{4}}+R_{S_p\rightarrow\overline{\phi}}
    &\leq 1
    \label{Eq:threecommon_recs_Sol1_FeasibilityProblem_1}
    \\
    R_{\overline{\phi}}+R_{S_p\rightarrow\overline{46}}+R_{S_p\rightarrow\overline{6}}+R_{S_p\rightarrow\overline{4}}+R_{S_p\rightarrow\overline{\phi}}
    &\leq 1 
    \label{Eq:threecommon_recs_Sol1_FeasibilityProblem_2}
    \\
    R_{\overline{\phi}}+R_{S_p\rightarrow\overline{56}}+R_{S_p\rightarrow\overline{6}}+R_{S_p\rightarrow\overline{5}}+R_{S_p\rightarrow\overline{\phi}}
    &\leq 1 
    \label{Eq:threecommon_recs_Sol1_FeasibilityProblem_3}
    \\
    R_{\overline{456}}
    &\leq 0 
    \label{Eq:threecommon_recs_Sol1_FeasibilityProblem_4}
    \\
    R_{S_p\rightarrow \overline{456}}+R_{S_p\rightarrow \overline{56}}
    &\leq 0 
    \label{Eq:threecommon_recs_Sol1_FeasibilityProblem_5}
    \\
    R_{S_p\rightarrow \overline{456}}+R_{S_p\rightarrow \overline{46}}
    &\leq 0 
    \label{Eq:threecommon_recs_Sol1_FeasibilityProblem_6}
    \\
    R_{S_p\rightarrow \overline{456}}+R_{S_p\rightarrow \overline{45}}
    &\leq 0 
    \label{Eq:threecommon_recs_Sol1_FeasibilityProblem_7}
    \\
    R_{S_p\rightarrow \overline{456}}
    +R_{S_p\rightarrow \overline{56}}
    +R_{S_p\rightarrow \overline{46}}
    &\leq 1 
    \label{Eq:threecommon_recs_Sol1_FeasibilityProblem_8}
    \\
    R_{S_p\rightarrow \overline{456}}
    +R_{S_p\rightarrow \overline{56}}
    +R_{S_p\rightarrow \overline{45}}
    &\leq 1 
    \label{Eq:threecommon_recs_Sol1_FeasibilityProblem_9}
    \\
    R_{S_p\rightarrow \overline{456}}
    +R_{S_p\rightarrow \overline{46}}
    +R_{S_p\rightarrow \overline{45}}
    &\leq 1 
    \label{Eq:threecommon_recs_Sol1_FeasibilityProblem_10}
    \\
    R_{S_p\rightarrow \overline{456}}
    +R_{S_p\rightarrow \overline{56}}
    +R_{S_p\rightarrow \overline{46}}
    +R_{S_p\rightarrow \overline{45}}
    &\leq 2 
    \label{Eq:threecommon_recs_Sol1_FeasibilityProblem_11}
    \\
    R_{S_p\rightarrow \overline{456}}
    +R_{S_p\rightarrow \overline{56}}
    +R_{S_p\rightarrow \overline{46}}
    +R_{S_p\rightarrow \overline{6}}
    &\leq 1 
    \label{Eq:threecommon_recs_Sol1_FeasibilityProblem_12}
    \\
    R_{S_p\rightarrow \overline{456}}
    +R_{S_p\rightarrow \overline{56}}
    +R_{S_p\rightarrow \overline{46}}
    +R_{S_p\rightarrow \overline{45}}
    +R_{S_p\rightarrow \overline{6}}
    &\leq 2 
    \label{Eq:threecommon_recs_Sol1_FeasibilityProblem_13}
    \\
     R_{S_p\rightarrow \overline{456}}
    +R_{S_p\rightarrow \overline{56}}
    +R_{S_p\rightarrow \overline{45}}
    +R_{S_p\rightarrow \overline{5}}
    &\leq 1 
    \label{Eq:threecommon_recs_Sol1_FeasibilityProblem_14}
    \\
    R_{S_p\rightarrow \overline{456}}
    +R_{S_p\rightarrow \overline{56}}
    +R_{S_p\rightarrow \overline{46}}
    +R_{S_p\rightarrow \overline{45}}
    +R_{S_p\rightarrow \overline{5}}
    &\leq 2 
    \label{Eq:threecommon_recs_Sol1_FeasibilityProblem_15}
    \\
    R_{S_p\rightarrow \overline{456}}
    +R_{S_p\rightarrow \overline{46}}
    +R_{S_p\rightarrow \overline{45}}
    +R_{S_p\rightarrow \overline{4}}
    &\leq 1 
    \label{Eq:threecommon_recs_Sol1_FeasibilityProblem_16}
    \\
    R_{S_p\rightarrow \overline{456}}
    +R_{S_p\rightarrow \overline{56}}
    +R_{S_p\rightarrow \overline{46}}
    +R_{S_p\rightarrow \overline{45}}
    +R_{S_p\rightarrow \overline{4}}
    &\leq 2 
    \label{Eq:threecommon_recs_Sol1_FeasibilityProblem_17}
    \\
    R_{S_p\rightarrow \overline{456}}
    +R_{S_p\rightarrow \overline{56}}
    +R_{S_p\rightarrow \overline{46}}
    +R_{S_p\rightarrow \overline{45}}
    +R_{S_p\rightarrow \overline{6}}
    +R_{S_p\rightarrow \overline{5}}
    &\leq 2 
    \label{Eq:threecommon_recs_Sol1_FeasibilityProblem_18}
    \\
    R_{S_p\rightarrow \overline{456}}
    +R_{S_p\rightarrow \overline{56}}
    +R_{S_p\rightarrow \overline{46}}
    +R_{S_p\rightarrow \overline{45}}
    +R_{S_p\rightarrow \overline{6}}
    +R_{S_p\rightarrow \overline{4}}
    &\leq 2 
    \label{Eq:threecommon_recs_Sol1_FeasibilityProblem_19}
    \\
    R_{S_p\rightarrow \overline{456}}
    +R_{S_p\rightarrow \overline{56}}
    +R_{S_p\rightarrow \overline{46}}
    +R_{S_p\rightarrow \overline{45}}
    +R_{S_p\rightarrow \overline{5}}
    +R_{S_p\rightarrow \overline{4}}
    &\leq 2 
    \label{Eq:threecommon_recs_Sol1_FeasibilityProblem_20}
    \\
    R_{S_p\rightarrow \overline{456}}
    +R_{S_p\rightarrow \overline{56}}
    +R_{S_p\rightarrow \overline{46}}
    +R_{S_p\rightarrow \overline{45}}
    +R_{S_p\rightarrow \overline{6}}
    +R_{S_p\rightarrow \overline{5}}
    +R_{S_p\rightarrow \overline{4}}
    &\leq 2 
    \label{Eq:threecommon_recs_Sol1_FeasibilityProblem_21}
    \\
    R_{\overline{\phi}}+R_{S_p}
    &\leq 2 
    \label{Eq:threecommon_recs_Sol1_FeasibilityProblem_22}
    %\\
    %R_{S_p\rightarrow S}\geq 0 \; \forall S\in \uparrow_{\mathsf{P}}\{S_p\}&    
    %\label{Eq:threecommon_recs_Sol1_FeasibilityProblem_23}
    %\\
    %R_{S_p}= R_{S_p\rightarrow \overline{\phi}}+R_{S_p\rightarrow \overline{6}}+R_{S_p\rightarrow \overline{5}}+R_{S_p\rightarrow \overline{4}}+R_{S_p\rightarrow \overline{56}}+R_{S_p\rightarrow \overline{46}}+R_{S_p\rightarrow \overline{45}}+R_{S_p\rightarrow \overline{456}}&
    %\label{Eq:threecommon_recs_Sol1_FeasibilityProblem_24}
\end{align}
Clearly, for this choice of the auxiliary random variable, the rate pair ($R_{S_p},R_{\overline{\phi}})=(0,1)$ is achievable since the above linear program is feasible. However, the rate pair ($R_{S_p},R_{\overline{\phi}})=(2,0)$ is not achievable. To achieve the rate pair ($R_{S_p},R_{\overline{\phi}})=(2,0)$, we have to set, for instance, $R_{S_p\rightarrow \overline{46}}=R_{S_p\rightarrow \overline{56}}=1$. By doing so, we see that the above linear program is infeasible due to the inequalities \eqref{Eq:threecommon_recs_Sol1_FeasibilityProblem_5}, \eqref{Eq:threecommon_recs_Sol1_FeasibilityProblem_6}, \eqref{Eq:threecommon_recs_Sol1_FeasibilityProblem_8}, and \eqref{Eq:threecommon_recs_Sol1_FeasibilityProblem_12}. However, the inequality \eqref{Eq:threecommon_recs_Sol1_FeasibilityProblem_1} has a right hand side equal to $1$ with left hand side equal $0$. This inequality is the reliability condition for the common receiver $Y_6$. In fact, this common receiver does not benefit from setting $U_{236}=V_{236}$ since it does not have to decode any information. Hence, we have to change the choice of the auxiliary random variable to make the private receivers exploit $V_{236}$ to relax the inequalities \eqref{Eq:threecommon_recs_Sol1_FeasibilityProblem_5}, \eqref{Eq:threecommon_recs_Sol1_FeasibilityProblem_6}, \eqref{Eq:threecommon_recs_Sol1_FeasibilityProblem_8}, and \eqref{Eq:threecommon_recs_Sol1_FeasibilityProblem_12} and make the linear program feasible. To do so, we choose the auxiliary random variables $U_{135},U_{124},U_{123}$ to be independent and uniformly distributed over $\mathcal{V}_{135},\mathcal{V}_{124},\mathcal{V}_{236}$, respectively, and set $U_{135}=V_{135},U_{124}=V_{124},U_{123}=V_{236}$ with $|Q|=1$ and $U_S=\textit{const.}$ for $S\in\mathsf{P}\backslash \{123,124,135\}$. With this second choice of the random variable, we get from Corollary \ref{Corollary_threecommon_recs_FeasibilityProblem} that the rate pair ($R_{S_p},R_{\overline{\phi}})$ is achievable if there exist non-negative numbers $ R_{S_p\rightarrow S}\geq 0 \; \forall S\in \uparrow_{\mathsf{P}}\{S_p\} $ with $ R_{S_p}= R_{S_p\rightarrow \overline{\phi}}+R_{S_p\rightarrow \overline{6}}+R_{S_p\rightarrow \overline{5}}+R_{S_p\rightarrow \overline{4}}+R_{S_p\rightarrow \overline{56}}+R_{S_p\rightarrow \overline{46}}+R_{S_p\rightarrow \overline{45}}+R_{S_p\rightarrow \overline{456}} $ that satisfy the following inequalities

\begin{align}
    R_{\overline{\phi}}+R_{S_p\rightarrow\overline{45}}+R_{S_p\rightarrow\overline{5}}+R_{S_p\rightarrow\overline{4}}+R_{S_p\rightarrow\overline{\phi}}
    &\leq 0
    \label{Eq:threecommon_recs_Sol2_FeasibilityProblem_1}
    \\
    R_{\overline{\phi}}+  R_{S_p\rightarrow\overline{46}}+R_{S_p\rightarrow\overline{6}}+R_{S_p\rightarrow\overline{4}}+R_{S_p\rightarrow\overline{\phi}}
    &\leq 1 
    \label{Eq:threecommon_recs_Sol2_FeasibilityProblem_2}
    \\
    R_{\overline{\phi}}+R_{S_p\rightarrow\overline{56}}+R_{S_p\rightarrow\overline{6}}+R_{S_p\rightarrow\overline{5}}+R_{S_p\rightarrow\overline{\phi}}
    &\leq 1 
    \label{Eq:threecommon_recs_Sol2_FeasibilityProblem_3}
    \\
    R_{\overline{456}}
    &\leq 0 
    \label{Eq:threecommon_recs_Sol2_FeasibilityProblem_4}
    \\
    R_{S_p\rightarrow \overline{456}}+R_{S_p\rightarrow \overline{56}}
    &\leq 1 
    \label{Eq:threecommon_recs_Sol2_FeasibilityProblem_5}
    \\
    R_{S_p\rightarrow \overline{456}}+R_{S_p\rightarrow \overline{46}}
    &\leq 1 
    \label{Eq:threecommon_recs_Sol2_FeasibilityProblem_6}
    \\
    R_{S_p\rightarrow \overline{456}}+R_{S_p\rightarrow \overline{45}}
    &\leq 0 
    \label{Eq:threecommon_recs_Sol2_FeasibilityProblem_7}
    \\
    R_{S_p\rightarrow \overline{456}}
    +R_{S_p\rightarrow \overline{56}}
    +R_{S_p\rightarrow \overline{46}}
    &\leq 2 
    \label{Eq:threecommon_recs_Sol2_FeasibilityProblem_8}
    \\
    R_{S_p\rightarrow \overline{456}}
    +R_{S_p\rightarrow \overline{56}}
    +R_{S_p\rightarrow \overline{45}}
    &\leq 1 
    \label{Eq:threecommon_recs_Sol2_FeasibilityProblem_9}
    \\
    R_{S_p\rightarrow \overline{456}}
    +R_{S_p\rightarrow \overline{46}}
    +R_{S_p\rightarrow \overline{45}}
    &\leq 1 
    \label{Eq:threecommon_recs_Sol2_FeasibilityProblem_10}
    \\
    R_{S_p\rightarrow \overline{456}}
    +R_{S_p\rightarrow \overline{56}}
    +R_{S_p\rightarrow \overline{46}}
    +R_{S_p\rightarrow \overline{45}}
    &\leq 2 
    \label{Eq:threecommon_recs_Sol2_FeasibilityProblem_11}
    \\
    R_{S_p\rightarrow \overline{456}}
    +R_{S_p\rightarrow \overline{56}}
    +R_{S_p\rightarrow \overline{46}}
    +R_{S_p\rightarrow \overline{6}}
    &\leq 2 
    \label{Eq:threecommon_recs_Sol2_FeasibilityProblem_12}
    \\
    R_{S_p\rightarrow \overline{456}}
    +R_{S_p\rightarrow \overline{56}}
    +R_{S_p\rightarrow \overline{46}}
    +R_{S_p\rightarrow \overline{45}}
    +R_{S_p\rightarrow \overline{6}}
    &\leq 2 
    \label{Eq:threecommon_recs_Sol2_FeasibilityProblem_13}
    \\
     R_{S_p\rightarrow \overline{456}}
    +R_{S_p\rightarrow \overline{56}}
    +R_{S_p\rightarrow \overline{45}}
    +R_{S_p\rightarrow \overline{5}}
    &\leq 1 
    \label{Eq:threecommon_recs_Sol2_FeasibilityProblem_14}
    \\
    R_{S_p\rightarrow \overline{456}}
    +R_{S_p\rightarrow \overline{56}}
    +R_{S_p\rightarrow \overline{46}}
    +R_{S_p\rightarrow \overline{45}}
    +R_{S_p\rightarrow \overline{5}}
    &\leq 2 
    \label{Eq:threecommon_recs_Sol2_FeasibilityProblem_15}
    \\
    R_{S_p\rightarrow \overline{456}}
    +R_{S_p\rightarrow \overline{46}}
    +R_{S_p\rightarrow \overline{45}}
    +R_{S_p\rightarrow \overline{4}}
    &\leq 1 
    \label{Eq:threecommon_recs_Sol2_FeasibilityProblem_16}
    \\
    R_{S_p\rightarrow \overline{456}}
    +R_{S_p\rightarrow \overline{56}}
    +R_{S_p\rightarrow \overline{46}}
    +R_{S_p\rightarrow \overline{45}}
    +R_{S_p\rightarrow \overline{4}}
    &\leq 2 
    \label{Eq:threecommon_recs_Sol2_FeasibilityProblem_17}
    \\
    R_{S_p\rightarrow \overline{456}}
    +R_{S_p\rightarrow \overline{56}}
    +R_{S_p\rightarrow \overline{46}}
    +R_{S_p\rightarrow \overline{45}}
    +R_{S_p\rightarrow \overline{6}}
    +R_{S_p\rightarrow \overline{5}}
    &\leq 2 
    \label{Eq:threecommon_recs_Sol2_FeasibilityProblem_18}
    \\
    R_{S_p\rightarrow \overline{456}}
    +R_{S_p\rightarrow \overline{56}}
    +R_{S_p\rightarrow \overline{46}}
    +R_{S_p\rightarrow \overline{45}}
    +R_{S_p\rightarrow \overline{6}}
    +R_{S_p\rightarrow \overline{4}}
    &\leq 2 
    \label{Eq:threecommon_recs_Sol2_FeasibilityProblem_19}
    \\
    R_{S_p\rightarrow \overline{456}}
    +R_{S_p\rightarrow \overline{56}}
    +R_{S_p\rightarrow \overline{46}}
    +R_{S_p\rightarrow \overline{45}}
    +R_{S_p\rightarrow \overline{5}}
    +R_{S_p\rightarrow \overline{4}}
    &\leq 2 
    \label{Eq:threecommon_recs_Sol2_FeasibilityProblem_20}
    \\
    R_{S_p\rightarrow \overline{456}}
    +R_{S_p\rightarrow \overline{56}}
    +R_{S_p\rightarrow \overline{46}}
    +R_{S_p\rightarrow \overline{45}}
    +R_{S_p\rightarrow \overline{6}}
    +R_{S_p\rightarrow \overline{5}}
    +R_{S_p\rightarrow \overline{4}}
    &\leq 2 
    \label{Eq:threecommon_recs_Sol2_FeasibilityProblem_21}
    \\
    R_{\overline{\phi}}+R_{S_p}
    &\leq 2 
    \label{Eq:threecommon_recs_Sol2_FeasibilityProblem_22}
   % \\
   % R_{S_p\rightarrow S}\geq 0 \; \forall S\in \uparrow_{\mathsf{P}}\{S_p\}&    
 % \label{Eq:threecommon_recs_Sol2_FeasibilityProblem_23}
   % \\
   % R_{S_p}= R_{S_p\rightarrow \overline{\phi}}+R_{S_p\rightarrow \overline{6}}+R_{S_p\rightarrow \overline{5}}+R_{S_p\rightarrow \overline{4}}+R_{S_p\rightarrow \overline{56}}+R_{S_p\rightarrow \overline{46}}+R_{S_p\rightarrow \overline{45}}+R_{S_p\rightarrow \overline{456}}&
   % \label{Eq:threecommon_recs_Sol2_FeasibilityProblem_24}
\end{align}
It can be verified that the rate pair ($R_{S_p},R_{\overline{\phi}})=(2,0)$ is now indeed achievable since  $R_{S_p\rightarrow \overline{46}}=R_{S_p\rightarrow \overline{56}}=1$ satisfy the above inequalities.
%and hence, achieve the rate pair ($R_{S_p},R_{\overline{\phi}})=(2,0)$. 
Note that this is not the only possible choice of auxiliary random variables. We can set $R_{S_p\rightarrow \overline{46}}=R_{S_p\rightarrow \overline{45}}=1$ or $R_{S_p\rightarrow \overline{45}}=R_{S_p\rightarrow \overline{56}}=1$ and change the choice of the random variables accordingly to keep the linear program feasible. 
\end{example}

\begin{remark}
Since by choosing the random variables $U_S$ for all $S\in \mathsf{P}$ to be independent and uniformly distributed over $\mathcal{V}_S$ where $|\mathcal{V}_S|=2^{C_S}$ and $V_S=U_S$ with $|Q|=1$ in Theorem \ref{Th_Inner_Nested_Bound_indirect_F}, we obtain the rate region of \cite[Proposition 1]{bidokhti2016capacity} which is not optimal for more than two common receivers as shown by the above example. %, this linear coding scheme does not achieve the capacity for $K=6$ with three common receivers and $C_{124}=C_{135}=C_{236}=1$ while $C_S=0$ for all $S\in\mathsf{P}\backslash \{124,135,236\}$ because the linear program \eqref{Eq:threecommon_recs_Sol1_FeasibilityProblem_1}-\eqref{Eq:threecommon_recs_Sol1_FeasibilityProblem_24} is not feasible for the rate pair ($R_{S_p},R_{\overline{\phi}})=(2,0)$.
Unlike in our approach, the right hand side of the linear program in \cite{bidokhti2016capacity} (equivalent here to \eqref{Eq:threecommon_recs_Sol1_FeasibilityProblem_1}-\eqref{Eq:threecommon_recs_Sol1_FeasibilityProblem_22}) is function only of the network connections and hence it cannot be changed to make the linear program feasible as we did in \eqref{Eq:threecommon_recs_Sol2_FeasibilityProblem_1}-\eqref{Eq:threecommon_recs_Sol2_FeasibilityProblem_22}. This is why the authors in \cite{bidokhti2016capacity} add a pre-encoder to the linear code to relax the left hand side of the linear program to get \cite[Theorem 1]{bidokhti2016capacity} that allows the value of $R_{S_p\rightarrow S_p}$ to be negative, i.e., relaxing the inequality in \eqref{Eq:threecommon_recs_FeasibilityProblem_23} to be $R_{S_p\rightarrow S}\geq 0 \; \forall S\in \uparrow_{\mathsf{P}}\{S_p\} \backslash\{S_p\}$. However, in our approach, we do not have to change our coding scheme, i.e., superposition coding with up-set rate splitting suffices. We have to just choose suitable auxiliary random variables to achieve the corner point ($R_{S_p},R_{\overline{\phi}})=(2,0)$.
\end{remark}

In the following example, we consider the same example as in \cite[Example 4]{bidokhti2016capacity} where the authors showed that their linear coding scheme even with the pre-encoder cannot achieve the rate pair ($R_{S_p},R_{\overline{\phi}})=(3,1)$ for a network with $K=7$ and four common receivers. Hence, a block Markov coding scheme was proposed therein to relax the left hand side of the linear program, and consequently, achieve the rate pair  ($R_{S_p},R_{\overline{\phi}})=(3,1)$. However, as suggested by Example \ref{Example_Bidokhti_Example_K=6}, instead of changing our coding scheme, we might be able to achieve the rate pair under consideration by finding a suitable choice of the auxiliary random variables in Theorem \ref{Th_Inner_Nested_Bound_indirect_F} to achieve this rate pair. This is what we do in the next example.

\begin{example}[Example 4 of \cite{bidokhti2016capacity}]
\label{Example_Bidokhti_Example_K=7}
We consider $K=7$ with four common receivers, i.e., $\mathsf{E}=\{123,1234567\}$ where $S_p=\{1,2,3\}$ and $S_l=\{4,5,6,7\}$. Moreover, we have only six intermediate nodes, i.e., $C_{1245}=C_{1257}=C_{1346}=C_{1347}=C_{2356}=C_{2367}=1$ and $C_S=0$ for all $S\in\mathsf{P}\backslash \{1245,1346,1347,2356,1257,2367\}$. Let us start by choosing $U_S$ for all $S\in \mathsf{P}$ to be independent and uniformly distributed over $\mathcal{V}_S$ where $|\mathcal{V}_S|=2^{C_S}$ with $|Q|=1$. Hence, from \eqref{Eq_TH_indirect_generalF_common} in Theorem \ref{Th_Inner_Nested_Bound_indirect_F}, assuming that $\mathsf{F}=\mathsf{P}$, we know that we have the following four inequalities in the feasibility problem (with $R_{S_p}= \sum_{S\in \uparrow_{\mathsf{P}} \{S_p\}} R_{S_p\rightarrow S}$)
\begin{align}
    R_{\overline{\phi}}+ R_{S_p\rightarrow 1234}+ R_{S_p\rightarrow 12347}+    R_{S_p\rightarrow 12346}+ R_{S_p\rightarrow 12345}+ R_{S_p\rightarrow \overline{5}}+ R_{S_p\rightarrow \overline{6}}+ R_{S_p\rightarrow \overline{7}}+ R_{S_p\rightarrow \overline{\phi}}&\leq 3 
    \label{Eq_Feasibility_problem_K=7_choice1_1}
    \\
    R_{\overline{\phi}}+ R_{S_p\rightarrow 1235}+ R_{S_p\rightarrow 12357}+ R_{S_p\rightarrow 12356}+ R_{S_p\rightarrow 12345}+ R_{S_p\rightarrow \overline{4}}+ R_{S_p\rightarrow \overline{6}}+ R_{S_p\rightarrow \overline{7}}+ R_{S_p\rightarrow \overline{\phi}}&\leq 3 
    \label{Eq_Feasibility_problem_K=7_choice1_2}
    \\
    R_{\overline{\phi}}+ R_{S_p\rightarrow 1236}+ R_{S_p\rightarrow 12367}+ R_{S_p\rightarrow 12356}+ R_{S_p\rightarrow 12346}+ R_{S_p\rightarrow \overline{4}}+ R_{S_p\rightarrow \overline{5}}+ R_{S_p\rightarrow \overline{7}}+ R_{S_p\rightarrow \overline{\phi}}&\leq 3 
    \label{Eq_Feasibility_problem_K=7_choice1_3}
    \\
    R_{\overline{\phi}}+ R_{S_p\rightarrow 1237}+ R_{S_p\rightarrow 12367}+ R_{S_p\rightarrow 12357}+ R_{S_p\rightarrow 12347}+ R_{S_p\rightarrow \overline{4}}+ R_{S_p\rightarrow \overline{5}}+ R_{S_p\rightarrow \overline{6}}+ R_{S_p\rightarrow \overline{\phi}}&\leq 3
    \label{Eq_Feasibility_problem_K=7_choice1_4}
   % \\
   % R_{S_p}= \sum_{S\in \uparrow_{\mathsf{P}} \{S_p\}} R_{S_p\rightarrow S}
   % \label{Eq_Feasibility_problem_K=7_choice1_5}
\end{align}
To achieve the rate pair  ($R_{S_p},R_{\overline{\phi}})=(3,1)$, we do not need to have $3$ on the right hand side of all four inequalities \eqref{Eq_Feasibility_problem_K=7_choice1_1}-\eqref{Eq_Feasibility_problem_K=7_choice1_4}. Here, the private message is split into $16$ sub-messages since we have four common receivers. We have to set the sum rate of a subset of these sub-messages equal $3$ in a way that allows us to restrict at least one of the above inequalities. Clearly, we cannot set the sum rate of $M_{S_p\rightarrow \overline{4}},M_{S_p\rightarrow \overline{5}},M_{S_p\rightarrow \overline{6}},M_{S_p\rightarrow \overline{7}}$ to be $3$ while the sum rate of each three of them is upper bounded by $2$ from \eqref{Eq_Feasibility_problem_K=7_choice1_1}-\eqref{Eq_Feasibility_problem_K=7_choice1_4}. Hence, let set $R_{S_p\rightarrow 12345}=R_{S_p\rightarrow 12347}=R_{S_p\rightarrow 12357}=1$. This is one possible choice from many other choices. Obviously, for this choice we can restrict the inequality \eqref{Eq_Feasibility_problem_K=7_choice1_3}, i.e., make its right hand side equal $1$ instead of $3$. To do so, we set 
\begin{align}
    U_{1346}= V_{1346} \label{Choice2_Aux_1}\\
    U_{2356}= \text{const.}\\
    U_{2367}= \text{const.}
\end{align} instead of the original choice which was $U_{S}= V_{S}$ for $S\in\{1246,2356,2367\}$. 
To make sure that all private receivers $Y_1,Y_2,Y_3$ benefit from this restriction, we set 
\begin{align}
    U_{1235}=V_{2356}
    \label{Choice2_Aux_2}\\
    U_{1237}=V_{2367}
    \label{Choice2_Aux_3}
\end{align}

We have only three sub-messages with non-zero rates, namely, $R_{S_p\rightarrow 12345}=R_{S_p\rightarrow 12347}=R_{S_p\rightarrow 12357}=1$ and hence, we have only seven $\mathsf{B}$'s from \eqref{Eq_TH_indirect_generalF_private2} in Theorem \ref{Th_Inner_Nested_Bound_indirect_F} that give non-redundant inequalities. Thus, the achievable rate region, from Theorem \ref{Th_Inner_Nested_Bound_indirect_F}, is the set of rate pairs $(R_{123}, R_{\overline{\phi}})$ satisfying for all $j\in S_p$
\begin{align}
& R_{\overline{\phi}}+R_{123}  \leq  I(U_{\mathsf{W}_j^{\mathsf{P}}};Y_j|Q)  \label{Eq_AchievableRateRegion_K=7_1}
\\
& R_{123}\leq I( U_{ \downarrow_{\mathsf{W}_j^{\mathsf{P}}} \{12345,12347,12357\} } ;Y_{j}|U_{ \mathsf{W}_j^{\mathsf{P}} \backslash \downarrow_{\mathsf{W}_j^{\mathsf{P}}} \{12345,12347,12357\} } )
\label{Eq_AchievableRateRegion_K=7_2}
\\
& R_{123\rightarrow 12345}+R_{123\rightarrow 12347}\leq I( U_{ \downarrow_{\mathsf{W}_j^{\mathsf{P}}} \{12345,12347\} } ;Y_{j}|U_{ \mathsf{W}_j^{\mathsf{P}} \backslash \downarrow_{\mathsf{W}_j^{\mathsf{P}}} \{12345,12347\} } )
\label{Eq_AchievableRateRegion_K=7_3}
\\
& R_{123\rightarrow 12345}+R_{123\rightarrow 12357}\leq I( U_{ \downarrow_{\mathsf{W}_j^{\mathsf{P}}} \{12345,12357\} } ;Y_{j}|U_{ \mathsf{W}_j^{\mathsf{P}} \backslash \downarrow_{\mathsf{W}_j^{\mathsf{P}}} \{12345,12357\} } )
\label{Eq_AchievableRateRegion_K=7_4}
\\
&R_{123\rightarrow 12347}+R_{123\rightarrow 12357}\leq I( U_{ \downarrow_{\mathsf{W}_j^{\mathsf{P}}} \{12347,12357\} } ;Y_{j}|U_{ \mathsf{W}_j^{\mathsf{P}} \backslash \downarrow_{\mathsf{W}_j^{\mathsf{P}}} \{12347,12357\} } )
\label{Eq_AchievableRateRegion_K=7_5}
\\
& R_{123\rightarrow 12345}\leq I( U_{ \downarrow_{\mathsf{W}_j^{\mathsf{P}}} \{12345\} } ;Y_{j}|U_{ \mathsf{W}_j^{\mathsf{P}} \backslash \downarrow_{\mathsf{W}_j^{\mathsf{P}}} \{12345\} } )
\label{Eq_AchievableRateRegion_K=7_6}
\\
& R_{123\rightarrow 12347}\leq I( U_{ \downarrow_{\mathsf{W}_j^{\mathsf{P}}} \{12347\} } ;Y_{j}|U_{ \mathsf{W}_j^{\mathsf{P}} \backslash \downarrow_{\mathsf{W}_j^{\mathsf{P}}} \{12347\} } )
\label{Eq_AchievableRateRegion_K=7_7}
\\
& R_{123\rightarrow 12357}\leq I( U_{ \downarrow_{\mathsf{W}_j^{\mathsf{P}}} \{12357\} } ;Y_{j}|U_{ \mathsf{W}_j^{\mathsf{P}} \backslash \downarrow_{\mathsf{W}_j^{\mathsf{P}}} \{12357\} } )
\label{Eq_AchievableRateRegion_K=7_8}
\\
&R_{\overline{\phi}}+R_{123\rightarrow 12345}+R_{123\rightarrow 12347} \leq  I(U_{\mathsf{W}_4^{\mathsf{P}}};Y_4|Q) 
\label{Eq_AchievableRateRegion_K=7_9}
\\
&R_{\overline{\phi}}+R_{123\rightarrow 12345}+R_{123\rightarrow 12357} \leq  I(U_{\mathsf{W}_5^{\mathsf{P}}};Y_5|Q) 
\label{Eq_AchievableRateRegion_K=7_10}
\\
&R_{\overline{\phi}}\leq  I(U_{\mathsf{W}_4^{\mathsf{P}}};Y_6|Q) 
\label{Eq_AchievableRateRegion_K=7_11}
\\
&R_{\overline{\phi}}+R_{123\rightarrow 12347}+R_{123\rightarrow 12357} \leq  I(U_{\mathsf{W}_7^{\mathsf{P}}};Y_7|Q) 
\label{Eq_AchievableRateRegion_K=7_12}\\
&R_{123}=R_{123\rightarrow 12345}+R_{123\rightarrow 12347}+R_{123\rightarrow 12357}
\label{Eq_AchievableRateRegion_K=7_13}
\end{align}
for some $p(q,u_{\mathsf{P}})=p(q) \prod_{S\in \mathsf{F}}p(u_S|u_{\uparrow_{\mathsf{P}} S\backslash    \{S\}     },q)$ and some deterministic function $X$ of the time sharing and auxiliary random variables $(Q,U_{\mathsf{P}})$. Besides \eqref{Choice2_Aux_1}-\eqref{Choice2_Aux_3}, we have to assign $V_{1245},V_{1347},V_{1257}$ to the auxiliary random variable such that all the above inequalities \eqref{Eq_AchievableRateRegion_K=7_1}-\eqref{Eq_AchievableRateRegion_K=7_13} are satisfied simultaneously for ($R_{S_p},R_{\overline{\phi}})=(3,1)$. We choose the random variables $U_{12345},U_{12347},U_{12357},U_{1235},U_{1237},U_{1346}$ to be independent and uniformly distributed over $\mathcal{V}_{1245},\mathcal{V}_{1347},\mathcal{V}_{1257},\mathcal{V}_{2356},\mathcal{V}_{2367},\mathcal{V}_{1346}$, respectively, and set 
\begin{align}
&U_{12345}=V_{1245}\\
&U_{12347}=V_{1347}\\
&U_{12357}=V_{1257}\\
&U_{1235}=V_{2356}\\
&U_{1237}=V_{2367}\\
&U_{1346}=V_{1346}  
\end{align} 
while setting the rest of the auxiliary random variable to const. with $|Q|=1$. For this particular choice of auxiliary random variables, we can show that the linear program is \eqref{Eq_AchievableRateRegion_K=7_1}-\eqref{Eq_AchievableRateRegion_K=7_13} is feasible for the rate pair ($R_{S_p},R_{\overline{\phi}})=(3,1)$ with $R_{S_p\rightarrow 12345}=R_{S_p\rightarrow 12347}=R_{S_p\rightarrow 12357}=1$, and hence, achievable. It is worth noting that there are other choices for the auxiliary random variables that lead to a feasible linear program \eqref{Eq_AchievableRateRegion_K=7_1}-\eqref{Eq_AchievableRateRegion_K=7_13} for the rate pair ($R_{S_p},R_{\overline{\phi}})=(3,1)$ with $R_{S_p\rightarrow 12345}=R_{S_p\rightarrow 12347}=R_{S_p\rightarrow 12357}=1$. % For instance \begin{align}
%&U_{12345}=V_{1245}\\
%&U_{12347}=V_{1347}\\
%&U_{12357}=V_{1257}\\
%&U_{1234}=V_{1346}\\
%&U_{1237}=V_{2367}\\
%U_{2356}=V_{2356}  
%\end{align}  

\end{example}

\begin{remark}
We showed in Example \ref{Example_Bidokhti_Example_K=6} and \ref{Example_Bidokhti_Example_K=7} that Theorem \ref{Th_Inner_Nested_Bound_indirect_F} achieves the rate pairs that needed a pre-encoder along with rate-splitting and linear superposition coding \cite[Theorem 1]{bidokhti2016capacity} or the proposed block Markov coding scheme \cite[Theorem 2]{bidokhti2016capacity} in \cite{bidokhti2016capacity} by just changing the choice of the auxiliary random variables. We did not have to include any dependency among the auxiliary random variables (which is admissible as seen in the allowable distributions in Theorem \ref{Th_Inner_Nested_Bound_indirect_F}). Note that in Example \ref{Example_Bidokhti_Example_K=7}, Theorem \ref{Th_Inner_Nested_Bound_indirect_F} achieves a rate pair that is not achieved by \cite[Theorem 1]{bidokhti2016capacity} which is known to be capacity achieving for three common receivers \cite[Theorem 4]{bidokhti2016capacity}. This leads us to conjecture that the achievable region in Theorem \ref{Th_Inner_Nested_Bound_indirect_F} is also capacity achieving for three common receivers. In fact, it might subsume the region obtained from linear coding, i.e., the transmitted signal is a linear combination of the information symbols since the region in Theorem \ref{Th_Inner_Nested_Bound_indirect_F} is the union of feasibility regions over all admissible coding distributions as specified therein.
\end{remark}

\begin{remark}
 Example \ref{Example_Bidokhti_Example_K=7} is the only example given in \cite{bidokhti2016capacity} that shows that block Markov coding {\em might} be needed for the more general DM BCs. This was the reason for the authors in \cite{bidokhti2016capacity} to have extended their block Markov coding scheme to the general DM BC in \cite[Theorem 6]{bidokhti2016capacity}. However, we showed in Example \ref{Example_Bidokhti_Example_K=7} that in that example, our coding scheme of Theorem \ref{Th_Inner_Nested_Bound_indirect_F} is enough to achieve the rate pair that needed the block Markov coding scheme of \cite[Theorem 2]{bidokhti2016capacity}. This Example \ref{Example_Bidokhti_Example_K=7} leaves us with no example that motivates the block Markov coding scheme of \cite[Theorem 2]{bidokhti2016capacity}, leading us to suggest that perhaps the block Markov coding scheme of \cite{bidokhti2016capacity} is never needed. To strengthen this suggestion, further work is needed.
\end{remark}

\section{Is $\mathsf{F}=\mathsf{P}$ necessary?}
\label{Sec:Smaller_F}

We have shown that by using a single coding scheme corresponding to $\mathsf{F}=\mathsf{P}$, we can achieve the capacity region of general (asymmetric) combination networks for three different message sets with uniform independent auxiliary random variables and a specification of the channel input via the simple relation $V_S=U_S$. In this part, we  consider the question of whether the capacity region be can achieved with a simpler coding scheme, i.e., with $\mathsf{F}\subset \mathsf{P}$? The answer to this question, as we show here, is yes. However, both $\mathsf{F}$, the expanded message set, and the coding distribution (i.e., the distribution of $U_\mathsf{F}$) and the function $X(U_\mathsf{F})$ must be tailored to the message set. We summarize the results of this section next.

%In particular, a dependent set of auxiliary random variables must be considered.
For two order-$(K-1)$ messages, the coding scheme of Theorem \ref{TH_genral_inner_General_msgs} associated with $\mathsf{F}=\uparrow_{\mathsf{P}} \mathsf{E} \enspace \cup \enspace \{S_p\}$ is sufficient to achieve capacity. For two nested messages with one and two common receivers, the coding scheme of Theorem \ref{Th_Inner_Nested_Bound_indirect_F} corresponding to $\mathsf{F}=\mathsf{E}$ and $\mathsf{F}=\uparrow_{\mathsf{P}} \mathsf{E}$, respectively, are sufficient to achieve capacity. In particular, we show that these respective smaller $\mathsf{F}$'s can be used to achieve the capacity region provided the distribution of the auxiliary random variables $U_\mathsf{F}$ and the function $X(U_\mathsf{F})$ are chosen accordingly. In particular, a dependent set of auxiliary random variables must be considered. Hence, both the coding scheme and the coding distribution must be tailored to the message set under consideration. 

\subsection{$\mathsf{E}=\{\overline{K},\overline{\phi}\}$}

For two nested messages with one common receiver, we choose $\mathsf{F}=\mathsf{E}=\{\overline{K},\overline{\phi}\}$. With this specific choice, we get from Theorem \ref{Th_Inner_Nested_Bound_indirect_F} (with $|Q|=1$) that the rate pairs
\begin{align}
R_{\overline{\phi}}\leq &I(U_{\overline{\phi}};Y_K) \label{Region_C=1_F=E,1}\\
R_{\overline{\phi}}+R_{\overline{K}}\leq &I(U_{\overline{K}};Y_j) \enspace j\in S_p \label{Region_C=1_F=E,2}\\
R_{\overline{\phi}}+R_{\overline{K}}\leq &I(U_{\overline{K}};Y_j|U_{\overline{\phi}})+ I(U_{\overline{\phi}};Y_K) \enspace j\in S_p \label{Region_C=1_F=E,3}
\end{align}
for some $p(u_{\overline{\phi}})p(u_{\overline{K}}|u_{\overline{\phi}})$ are achievable where $X$ is a deterministic function of the auxiliary random variables. Note that the above region is the direct extension of that in \cite{kiirner1977general} from $K=2$ to arbitrary $K$. To achieve the capacity region of the combination network given by Proposition \ref{Th_Capacity_combination_networks_C1}, 
%we cannot consider two independent auxiliary random variables anymore. In fact, 
we choose $U_{\overline{\phi}}$ to be  uniformly distributed over $\prod_{S\in \mathsf{W}_K^{\mathsf{P}}} \mathcal{V}_S$ and $U_{\overline{K}}$ to be uniformly distributed over $\prod_{S\in \mathsf{P}} \mathcal{V}_S$ where $| \mathcal{V}_S|=2^{C_S}$ for any $S\in \mathsf{P}$. We set the channel input components $V_S$ for all $S\in\mathsf{P}$ to be independent and uniformly distributed over $\mathcal{V}_S$ where $V_{\mathsf{W}_K^{\mathsf{P}}}=U_{\overline{\phi}}$ and $V_{\mathsf{P}}=U_{\overline{K}}$. It is clear that for this specific choice of the channel input components $V_S$, we have \begin{equation}
H(V_{\mathsf{W}})=C_{\mathsf{W}} \label{eq:choice_of_Vs}
\end{equation} for any $\mathsf{W}\subseteq \mathsf{P}$ since the channel input components $V_S$ are independent and uniform distributed over $\mathcal{V}_S$.

Next, we compute the bounds in \eqref{Region_C=1_F=E,1}-\eqref{Region_C=1_F=E,3}.
Following a similar analysis as in Proposition \ref{Th_Capacity_combination_networks_C1}, we have \begin{equation}
I(U_{\overline{\phi}};Y_K) =C_{\mathsf{W}_K^{\mathsf{P}}} \label{eq:capacity_C=1,1}
\end{equation} and for each $j \in S_p$ we have
\begin{align}
I(U_{\overline{K}};Y_j) &=H(Y_j)\nonumber \\
&=H(V_{\mathsf{W}_j}^{\mathsf{P}})\nonumber \\
&= C_{W_j^{\mathsf{P}}} \label{eq:follows_choice_of_Vs}  
\end{align}
where \eqref{eq:follows_choice_of_Vs} follows directly from \eqref{eq:choice_of_Vs}. Hence, substituting \eqref{eq:capacity_C=1,1} and \eqref{eq:follows_choice_of_Vs} into \eqref{Region_C=1_F=E,1} and \eqref{Region_C=1_F=E,2}, we obtain the capacity region given in Proposition \ref{Th_Capacity_combination_networks_C1}. Moreover, we show that \eqref{Region_C=1_F=E,3} is redundant because for all $j \in S_p$
\begin{align}
I(U_{\overline{K}};Y_j|U_{\overline{\phi}})&=H(Y_j|U_{\overline{\phi}}) \nonumber\\
&=H(V_{\mathsf{W}_j^{\mathsf{P}}}|V_{\mathsf{W}_K^{\mathsf{P}}})\label{eq:Y_i_and_U}\\
&=H(V_{\mathsf{W}_j^{\mathsf{P}}\backslash \mathsf{W}_K^{\mathsf{P}}}|V_{\mathsf{W}_K^{\mathsf{P}}})\nonumber \\
&=H(V_{\downarrow_{\mathsf{W}_j^{\mathsf{P}}} \{\overline{K}\}}) \label{eq:Vs_indep2_and_sets} \\
&= C_{\downarrow_{\mathsf{W}_j^{\mathsf{P}}} \{\overline{K}\}} \label{eq:follows_from_choice_of_Vs}
\end{align}
where \eqref{eq:Y_i_and_U} follows from that fact that $Y_j=V_{\mathsf{W}_j}^{\mathsf{P}}$ and $U_{\overline{\phi}}=V_{\mathsf{W}_K}^{\mathsf{P}}$, \eqref{eq:Vs_indep2_and_sets} from %\eqref{Eq_Lemma_*} and \eqref{Eq_Lemma_**} where 
 $\mathsf{W}_j^{\mathsf{P}}\backslash\mathsf{W}_K^{\mathsf{P}}= \downarrow_{\mathsf{W}_j^{\mathsf{P}}} \{\overline{K}\}$ and the independence among the channel input components, and \eqref{eq:follows_from_choice_of_Vs} from \eqref{eq:choice_of_Vs}. Note that for any $j \in S_p$
\begin{align}
C_{\mathsf{W}_j^{\mathsf{P}}}&= C_{\downarrow_{\mathsf{W}_j^{\mathsf{P}}} \{\overline{K}\} }+ C_{\uparrow_{\mathsf{W}_j^{\mathsf{P}}} \{K\} } \label{eq:lemma} \\
&\leq C_{\downarrow_{\mathsf{W}_j^{\mathsf{P}}} \{\overline{K}\} }+C_{\mathsf{W}_K^{\mathsf{P}}} \label{eq:sets}
\end{align}
where \eqref{eq:lemma} follows from Lemma \ref{Lemma_Combination_networks_proof} when $S=K$ and \eqref{eq:sets} from $\uparrow_{\mathsf{W}_j^{\mathsf{P}}} \{K\} \subseteq \mathsf{W}_K^{\mathsf{P}}$. The last inequality proves the redundancy of \eqref{Region_C=1_F=E,3}.

\subsection{$\mathsf{E}=\{\overline{\phi},\overline{K-1.K}\}$}

For two common receivers, i.e., $\mathsf{E}=\{\overline{\phi},\overline{K-1.K}\}$, we choose $\mathsf{F}=\uparrow_{\mathsf{P}}\mathsf{E}=\{\overline{\phi},\overline{K},\overline{K-1},\overline{K-1.K}\}$. Hence, from Theorem \ref{Th_Inner_Nested_Bound_indirect_F} (setting $|Q|=1$), we can show that for $j\in S_p$ the set of rate pairs ($R_{\overline{\phi}},R_{\overline{K-1.K}}$) satisfying 
\begin{align}
R_{\overline{\phi}}&\leq \min\{I(U_{\overline{K}},U_{\overline{\phi}};Y_{K-1}),I(U_{\overline{K-1}},U_{\overline{\phi}};Y_K)\}\label{Eq_Inner_Bound_C=2_F,1}\\
R_{\overline{\phi}}+R_{\overline{K-1.K}} &\leq\min\{I(U_{\overline{K-1.K}};Y_j), \nonumber \\ &\qquad \qquad \enspace I(U_{\overline{K-1.K}};Y_j|U_{\overline{\phi}},U_{\overline{K}})+I(U_{\overline{K}},U_{\overline{\phi}};Y_{K-1}), \nonumber \\ &\qquad \qquad \enspace I(U_{\overline{K-1.K}};Y_j|U_{\overline{\phi}},U_{\overline{K-1}})+I(U_{\overline{K-1}},U_{\overline{\phi}};Y_K)\} \label{Eq_Inner_Bound_C=2_F,2}\\
2R_{\overline{\phi}}+R_{\overline{K-1.K}} &\leq I(U_{\overline{K-1.K}};Y_j|U_{\overline{K}},U_{\overline{K-1}})+\nonumber \\& \qquad \qquad \enspace I(U_{\overline{K}},U_{\overline{\phi}};Y_{K-1})+I(U_{\overline{K-1}},U_{\overline{\phi}};Y_K)\label{Eq_Inner_Bound_C=2_F,3}\\
2R_{\overline{\phi}}+2R_{\overline{K-1.K}} &\leq I(U_{\overline{K-1.K}};Y_j|U_{\overline{K}},U_{\overline{K-1}})+  I(U_{\overline{K-1.K}};Y_j|U_{\overline{\phi}})+\nonumber \\& \qquad \qquad\enspace I(U_{\overline{K}},U_{\overline{\phi}};Y_{K-1})+I(U_{\overline{K-1}},U_{\overline{\phi}};Y_K) \label{Eq_Inner_Bound_C=2_F,4}
\end{align} for some $p(u_{\overline{\phi}})p(u_{\overline{K}}|u_{\overline{\phi}})p(u_{\overline{K-1}}|u_{\overline{\phi}})p(u_{\overline{K-1.K}}|u_{\overline{K}},u_{\overline{K-1}})$ and with $X$ a deterministic function $U_{\mathsf{F}}$, is achievable. The above region is the extension of the region in \cite[Proposition 5]{nair2009capacity} without binning from $K=3,L=2$ to arbitrary $K$ with two common receivers. In fact, as we mentioned before, the above region, with $K=3,L=2$, was shown in \cite[Proposition 7]{nair2009capacity} to be the capacity region when $Y_1$ is less noisy than $Y_2$ or $Y_3$. Here, we show that the above region is the capacity region for general combination networks with two common receivers and arbitrary $K$. 
We choose the dependent auxiliary random variables $U_{\overline{\phi}}$, $U_{\overline{K}}$, $U_{\overline{K-1}}$, and  $U_{\overline{K-1.K}}$ to be uniformly distributed over $\prod_{S\in \uparrow_{\mathsf{W}_{K-1}^{\mathsf{P}}}\{K\} } \mathcal{V}_S$, $\prod_{S\in \mathsf{W}_{K-1}^{\mathsf{P}}} \mathcal{V}_S$, and $\prod_{S\in \mathsf{W}_{K}^{\mathsf{P}}} \mathcal{V}_S$, and $\prod_{S\in\mathsf{P}}\mathcal{V}_S$,  respectively. On the other hand, we set the channel input components $V_S$ for all $S\in\mathsf{P}$ to be independent and uniformly distributed over $\mathcal{V}_S$ where 
\begin{align}
&V_{\uparrow_{\mathsf{W}_{K}^{\mathsf{P}}}\{K-1\}}=V_{\uparrow_{\mathsf{W}_{K-1}^{\mathsf{P}}}\{K\}}=U_{\overline{\phi}}\label{Eq:choice_C=2,1}\\
&V_{\uparrow_{\mathsf{W}_{K-1}^{\mathsf{P}}}\{K\}} , V_{\downarrow_{\mathsf{W}_{K-1}^{\mathsf{P}}}\{\overline{K}\}}=V_{\mathsf{W}_{K-1}^{\mathsf{P}}}=U_{\overline{K}}\label{Eq:choice_C=2,2}\\
&V_{\uparrow_{\mathsf{W}_{K}^{\mathsf{P}}}\{K-1\}} , V_{\downarrow_{\mathsf{W}_K^{\mathsf{P}}}\{\overline{K-1}\}}= V_{\mathsf{W}_{K}^{\mathsf{P}}}=U_{\overline{K-1}}\label{Eq:choice_C=2,3}\\
&V_{\mathsf{P}}=U_{\overline{K-1.K}}\label{Eq:choice_C=2,4}
\end{align} Note that the first equality in \eqref{Eq:choice_C=2,1} follows from the fact that $\uparrow_{\mathsf{W}_{K-1}^{\mathsf{P}}}\{K\}=\uparrow_{\mathsf{W}_{K}^{\mathsf{P}}}\{K-1\}$ and that in \eqref{Eq:choice_C=2,2} and \eqref{Eq:choice_C=2,3} directly from \eqref{Eq_Lemma_*} and \eqref{Eq_Lemma_**}.

For this choice of auxiliary random variables and channel input components and following similar analysis as in the one common receiver case, we have
\begin{align}
I(U_{\overline{K}},U_{\overline{\phi}};Y_{K-1})&=C_{\mathsf{W}_{K-1}^{\mathsf{P}}} \label{Values_ofInformation_Terms,1}\\
I(U_{\overline{K-1}},U_{\overline{\phi}};Y_K)&= C_{\mathsf{W}_{K}^{\mathsf{P}}}\\
I(U_{\overline{K-1.K}};Y_j)&=C_{\mathsf{W}_{j}^{\mathsf{P}}}\\
I(U_{\overline{K-1.K}};Y_j|U_{\overline{\phi}},U_{\overline{K}})&=C_{\downarrow_{\mathsf{W}_j^{\mathsf{P}}}\{\overline{K-1}\}}\\ I(U_{\overline{K-1.K}};Y_j|U_{\overline{\phi}},U_{\overline{K-1}})=& C_{\downarrow_{\mathsf{W}_j^{\mathsf{P}}}\{\overline{K}\}} \label{Values_ofInformation_Terms,5}\\
I(U_{\overline{K-1.K}};Y_j|U_{\overline{\phi}})&=C_{\downarrow_{\mathsf{W}_j^{\mathsf{P}}} \{\overline{K-1}, \overline{K} \}}\label{Values_ofInformation_Terms,6}\\
I(U_{\overline{K-1.K}};Y_j|U_{\overline{K}},U_{\overline{K-1}})  &= C_{\downarrow_{\mathsf{W}_j^{\mathsf{P}}} \{\overline{K-1.K} \}}\label{Values_ofInformation_Terms,7}
\end{align}

Notice that \eqref{Values_ofInformation_Terms,6} follows from 
\begin{align}
I(U_{\overline{K-1.K}};Y_j|U_{\overline{\phi}})&=H(Y_j|U_{\overline{\phi}})\nonumber \\
& = H(V_{\mathsf{W}_j^{\mathsf{P}}}|V_{\uparrow_{\mathsf{W}_{K-1}^{\mathsf{P}}}\{K\}})\label{Eq:proof_last_inequality1}\\
&= H(V_{\mathsf{W}_j^{\mathsf{P}}}|V_{\uparrow_{\mathsf{P}}\{K-1.K\}})\nonumber\\
&= H(V_{\mathsf{W}_j^{\mathsf{P}}}|V_{\uparrow_{\mathsf{P}}\{K-1.K\} \cap \mathsf{W}_j^{\mathsf{P}}  })\label{Eq:proof_last_inequality2}\\
&= H(V_{\mathsf{W}_j^{\mathsf{P}}}|V_{\uparrow_{\mathsf{W}_j^{\mathsf{P}}}\{K-1.K\}})\nonumber\\
&= H(V_{\mathsf{W}_j^{\mathsf{P}} \backslash \uparrow_{\mathsf{W}_{j}^{\mathsf{P}}}\{K-1.K\}})\label{Eq:proof_last_inequality3}\\
&=H(V_{\downarrow_{\mathsf{W}_j^{\mathsf{P}}} \{\overline{K-1},\overline{K}\}})\label{Eq:proof_last_inequality4}\\
& = C_{\downarrow_{\mathsf{W}_j^{\mathsf{P}}} \{\overline{K-1},\overline{K}\}}\label{Eq:proof_last_inequality5}
\end{align}
where \eqref{Eq:proof_last_inequality1} follows from \eqref{Eq:choice_C=2,1}, \eqref{Eq:proof_last_inequality2} and \eqref{Eq:proof_last_inequality3} from the independence of the channel input components $V_S$ for all $S\in\mathsf{P}$, \eqref{Eq:proof_last_inequality4} from \eqref{Eq_Lemma_*} and \eqref{Eq_Lemma_**}. Finally, \eqref{Eq:proof_last_inequality5} follows directly from \eqref{eq:choice_of_Vs}. %Hence, \eqref{Eq_Inner_Bound_C=2_F,4} is equal to the sum of the second and the third part of \eqref{Eq_Inner_Bound_C=2_F,2} since $C_{\downarrow_{\mathsf{W}_j^{\mathsf{P}}}\{\overline{K}\}} +C_{\downarrow_{\mathsf{W}_j^{\mathsf{P}}}\{\overline{K-1}\}} = C_{\downarrow_{\mathsf{W}_j^{\mathsf{P}}} \{\overline{K-1},\overline{K}\}}+C_{\downarrow_{\mathsf{W}_j^{\mathsf{P}}} \{\overline{K-1.K}\}}$ following from the modularity of the function $C_{\mathsf{W}}$ for any $\mathsf{W}\subseteq \mathsf{P}$.  

On the other hand, \eqref{Values_ofInformation_Terms,7} can be shown as follows. For any $j\in S_p$, we have
\begin{align}
I(U_{\overline{K-1.K}};Y_j|U_{\overline{K}},U_{\overline{K-1}})&=H(Y_j|U_{\overline{K}},U_{\overline{K-1}})\nonumber \\
&\hspace{-1cm} = H(V_{\mathsf{W}_j^{\mathsf{P}}}|V_{\mathsf{W}_{K-1}^{\mathsf{P}}},V_{\mathsf{W}_K^{\mathsf{P}}}) \nonumber \\
&\hspace{-1cm} = H(V_{\mathsf{W}_j^{\mathsf{P}} \backslash \mathsf{W}_{K-1}^{\mathsf{P}}\cup \mathsf{W}_K^{\mathsf{P}}}) \label{Eq:proofC=2_inner_bound_3}\\
&\hspace{-1cm} = H(V_{\downarrow_{\mathsf{W}_j^{\mathsf{P}}} \{\overline{K-1.K}\}})\label{Eq:proofC=2_inner_bound_4}\\
&\hspace{-1cm} = C_{\downarrow_{\mathsf{W}_j^{\mathsf{P}}} \{\overline{K-1.K}\}}\label{Eq:proofC=2_inner_bound_5}
\end{align} where \eqref{Eq:proofC=2_inner_bound_3} follows from the independence of the channel input components $V_S$ for all $S\in\mathsf{P}$, \eqref{Eq:proofC=2_inner_bound_4} from removing from $\mathsf{W}_j^{\mathsf{P}}$ all the sets that contain $K$ or $K-1$, and \eqref{Eq:proofC=2_inner_bound_5} from \eqref{eq:choice_of_Vs}.

When we substitute \eqref{Values_ofInformation_Terms,1}-\eqref{Values_ofInformation_Terms,7} in \eqref{Eq_Inner_Bound_C=2_F,1}-\eqref{Eq_Inner_Bound_C=2_F,4}, we get an equivalent region to \eqref{Eq_Corollary_indirect_2common_1}-\eqref{Eq_Corollary_indirect_2common_6} for the choice of independent auxiliary random variables $U_S$ for all $S\in \mathsf{P}$ uniformly distributed over $\mathcal{V}_S$ where $|\mathcal{V}_S|=2^{C_S}$ and $V_S=U_S$. More precisely, the inequality in \eqref{Eq_Inner_Bound_C=2_F,1} becomes equivalent to \eqref{Eq_Corollary_indirect_2common_1}, \eqref{Eq_Inner_Bound_C=2_F,2} equivalent to \eqref{Eq_Corollary_indirect_2common_2}-\eqref{Eq_Corollary_indirect_2common_4}, \eqref{Eq_Inner_Bound_C=2_F,3} equivalent to \eqref{Eq_Corollary_indirect_2common_5} and \eqref{Eq_Inner_Bound_C=2_F,4} equivalent to \eqref{Eq_Corollary_indirect_2common_6}. Hence, the region in \eqref{Eq_Inner_Bound_C=2_F,1}-\eqref{Eq_Inner_Bound_C=2_F,4} achieves the capacity region established in Proposition \ref{Th_Capacity_combination_networks_C2}.

\subsection{$\mathsf{E}=\{\overline{K},\overline{K-1}\}$}

For the case two messages each required by $K-1$ receivers $\mathsf{E}=\{\overline{K},\overline{K-1}\}$, we set $\mathsf{F}=\uparrow_{\mathsf{P} }\mathsf{E}\cup \{S_p\}=\{\overline{\phi},\overline{K-1},\overline{K},\overline{K-1.K}\}$, where $S_p=\overline{K-1.K}$. Since we are choosing $\mathsf{F}$ strictly bigger that $\uparrow_{\mathsf{P} }\mathsf{E}$, we have one zero reconstruction rate as we mentioned before in Remark \ref{rem-largeF}. From Theorem \ref{TH_genral_inner_General_msgs}, the inner bound of $K$-user DM BC for the message index set $\mathsf{E}=\{\overline{K},\overline{K-1}\}$ is the set of rate pairs ($R_{\overline{K}},R_{\overline{K-1}}$) satisfying
\begin{align}
&R_{\overline{K-1}} \leq I(U_{\overline{K-1}},U_{\overline{\phi}};Y_K) \label{Region_InnerBound_C1_C2_SmallF_1}\\
&R_{\overline{K}} \leq  I(U_{\overline{K}},U_{\overline{\phi}};Y_{K-1})\label{Region_InnerBound_C1_C2_SmallF_2}\\
&R_{\overline{K-1}}+R_{\overline{K}} \leq I(U_{\overline{K-1.K}};Y_j)
\label{Region_InnerBound_C1_C2_SmallF_3} \\ &R_{\overline{K-1}}+R_{\overline{K}}\leq I(U_{\overline{K-1.K}};Y_j|U_{\overline{\phi}},U_{\overline{K}})+I(U_{\overline{K}},U_{\overline{\phi}};Y_{K-1}) 
\label{Region_InnerBound_C1_C2_SmallF_4} \\ &R_{\overline{K-1}}+R_{\overline{K}}
\leq I(U_{\overline{K-1.K}};Y_j|U_{\overline{\phi}},U_{\overline{K-1}})+I(U_{\overline{K-1}},U_{\overline{\phi}};Y_K)
\label{Region_InnerBound_C1_C2_SmallF_5} \\&R_{\overline{K-1}}+R_{\overline{K}} \leq I(U_{\overline{K}};Y_{K-1}|U_{\overline{\phi}})+I(U_{\overline{K-1}},U_{\overline{\phi}};Y_K)
\label{Region_InnerBound_C1_C2_SmallF_6} \\ &R_{\overline{K-1}}+R_{\overline{K}}
\leq I(U_{\overline{K-1}};Y_{K}|U_{\overline{\phi}})+I(U_{\overline{K}},U_{\overline{\phi}};Y_{K-1})  \label{Region_InnerBound_C1_C2_SmallF_7}\\
&2R_{\overline{K-1}}+2R_{\overline{K}}\leq   I(U_{\overline{K-1.K}};Y_j|U_{\overline{\phi}})+ I(U_{\overline{K}},U_{\overline{\phi}};Y_{K-1})+I(U_{\overline{K-1}},U_{\overline{\phi}};Y_K) \label{Region_InnerBound_C1_C2_SmallF_8}
\end{align} for all $j\in S_p$ and some $p(u_{\overline{\phi}})$ $p(u_{\overline{K}}|u_{\overline{\phi}})$ $p(u_{\overline{K-1}}|u_{\overline{\phi}})$ $p(u_{\overline{K-1.K}}|u_{\overline{K}},u_{\overline{K-1}})$ and $X$ a deterministic function of $U_{\mathsf{F}}$ is achievable. To achieve the capacity given in Theorem \ref{Th_Capacity_combination_networks_C1_1_C2_1}, we choose the auxiliary random variables and the channel input components exactly as in the two common receivers case. Hence, the equalities in \eqref{Values_ofInformation_Terms,1}-\eqref{Values_ofInformation_Terms,7} hold. Moreover, we have 
\begin{align}
I(U_{\overline{K}};Y_{K-1}|U_{\overline{\phi}})&=H(Y_{K-1}|U_{\overline{\phi}})\nonumber \\
&=H(V_{\mathsf{W}_{K-1}^{\mathsf{P}}}|V_{\uparrow_{\mathsf{W}_{K-1}^{\mathsf{P}}} \{K\} })\label{proof_information_1}\\
&= H(V_{\mathsf{W}_{K-1}^{\mathsf{P}}   \backslash \uparrow_{\mathsf{W}_{K-1}^{\mathsf{P}}} \{K\} })\label{proof_information_2}\\
&= H(V_{\downarrow_{\mathsf{W}_{K-1}^{\mathsf{P}}} \{\overline{K}\}})\label{proof_information_3}\\
&= C_{\downarrow_{\mathsf{W}_{K-1}^{\mathsf{P}}} \{\overline{K}\}}\label{proof_information_4}
\end{align} where \eqref{proof_information_1} follows from \eqref{Eq:choice_C=2,1}, \eqref{proof_information_2} from all channel input components are independent, \eqref{proof_information_3} from \eqref{Eq_Lemma_*} and \eqref{Eq_Lemma_**}, and \eqref{proof_information_4} from \eqref{eq:choice_of_Vs}.

Similarly, we can show that \begin{equation}
I(U_{\overline{K-1}};Y_{K}|U_{\overline{\phi}})=C_{\downarrow_{\mathsf{W}_{K}^{\mathsf{P}}} \{\overline{K-1}\}} \label{proof_information_5}
\end{equation}

By substituting \eqref{Values_ofInformation_Terms,1}-\eqref{Values_ofInformation_Terms,6}, \eqref{Eq:proof_last_inequality5}, \eqref{proof_information_4}, and \eqref{proof_information_5}, in the region \eqref{Region_InnerBound_C1_C2_SmallF_1}-\eqref{Region_InnerBound_C1_C2_SmallF_8}, we get exactly equivalent region to the one obtained from \eqref{Eq_Corollary_indirect_C1_1_C2_1_1}-\eqref{Eq_Corollary_indirect_C1_1_C2_1_5}, when the auxiliary random variables $U_S$ for all $S\in \mathsf{P}$ are chosen to be independent and uniformly distributed over $\mathcal{V}_S$ where $|\mathcal{V}_S|=2^{C_S}$ and $V_S=U_S$, which is the capacity region of the combination network given in Theorem \ref{Th_Capacity_combination_networks_C1_1_C2_1}. More precisely, the inequalities \eqref{Region_InnerBound_C1_C2_SmallF_1}-\eqref{Region_InnerBound_C1_C2_SmallF_3} and \eqref{Region_InnerBound_C1_C2_SmallF_8} become equivalent to \eqref{Eq_Corollary_indirect_C1_1_C2_1_1}-\eqref{Eq_Corollary_indirect_C1_1_C2_1_2} and \eqref{Eq_Corollary_indirect_C1_1_C2_1_5}. Also, \eqref{Region_InnerBound_C1_C2_SmallF_4} and \eqref{Region_InnerBound_C1_C2_SmallF_7} become equivalent to \eqref{Eq_Corollary_indirect_C1_1_C2_1_4}. Finally, \eqref{Region_InnerBound_C1_C2_SmallF_5} and \eqref{Region_InnerBound_C1_C2_SmallF_6} become equivalent to \eqref{Eq_Corollary_indirect_C1_1_C2_1_3}.

\begin{remark}
The most commonly used coding scheme for DM BCs for any message index set $\mathsf{E}$ is the one corresponding to $\mathsf{F}= \uparrow_{\mathsf{P}} \mathsf{E}$. If we restrict ourselves here to this kind of coding scheme, it is unclear how we can find the optimal distribution for the auxiliary random variables to achieve the capacity region for general combination networks with $\mathsf{E}=\{\overline{K},\overline{K-1}\}$. However, with a slight increase in the complexity of the coding scheme, i.e., setting $\mathsf{F}=\uparrow_{\mathsf{P} }\mathsf{E}\cup \{S_p\}$, we were able to identify the optimal distribution for the auxiliary random variables.
\end{remark}

In this section, we showed that there is a trade-off between the complexity of the coding scheme by changing $\mathsf{F}$ and that of the distribution of the auxiliary random variables and the encoding function that must be chosen to achieve the capacity region in three scenarios. We needed to consider {\em dependent} auxiliary random variables when we choose $\mathsf{F}\subset \mathsf{P}$ while independent auxiliary random variables were enough when $\mathsf{F}=\mathsf{P}$ as in Theorem \ref{Th_Capacity_combination_networks_C1_1_C2_1} and Propositions \ref{Th_Capacity_combination_networks_C1}, and \ref{Th_Capacity_combination_networks_C2}. It is interesting to ask if it is always true that using the most complex coding scheme by setting $\mathsf{F}=\mathsf{P}$ compensates for the need for dependent auxiliary random variables? This is possible to be true depending on Theorem \ref{Th_Capacity_combination_networks_C1_1_C2_1} and Propositions \ref{Th_Capacity_combination_networks_C1}, and \ref{Th_Capacity_combination_networks_C2} and Examples \ref{Example_Bidokhti_Example_K=6} and \ref{Example_Bidokhti_Example_K=7}. If this is true, then finding the optimal distribution for the auxiliary random variables to achieve a corner point is the capacity region could be much simpler than finding it over all admissible dependent distributions for the auxiliary random variables. This question needs further investigation.

\section{Conclusion}
\label{Sec_Conc}

In this paper, we propose a novel and general achievable scheme for the $K$-receiver DM BC with two groupcast messages that involves new twists and generalizations of the techniques of message splitting, superposition coding and indirect decoding. The language of order theory is used to describe it succinctly and to characterize its achievable rate region. To demonstrate the efficacy of this scheme we obtain its specialization to the combination network and show that in the three special cases of (a) nested messages with one common receiver, (b) nested messages with two common receivers and (c) with two messages each required by $K{-}1$ receivers the proposed achievable rate region coincides with the capacity region. In particular, the descriptions of the capacity regions are given as explicit polygons that reveals their combinatorial structure.

It remains to be seen if Theorem \ref{Th_Inner_Nested_Bound_indirect_F}, when specialized to the $K$-user combination network, yields its capacity region for {\em any} two nested groupcast messages. More generally, we are curious to know if Theorem \ref{TH_genral_inner_General_msgs} might yield the capacity region of the combination network for any two arbitrary groupcast messages. In future work, it is also of interest to generalize the results of this work in the direction of expanding message sets to contain more than two messages.

%\textcolor{red}{ Can generalize cut-set bounds help to achieve capacity for combination network for $C>2$ or $C_1,C_2>1$?  I think this can be done.}

\begin{appendices}
\section{Proof of (\ref{Eq_Lemma_*}) and (\ref{Eq_Lemma_***}) in Lemma \ref{Lemma_Sets} }
\label{App_Prove_Lemma_Sets}
For any set $S=i_1i_2\cdots i_N \subset \{1,2,\cdots, K\}$ and $i\in [1:K]$ , we have
\begin{align}
\mathsf{W}_i^{\mathsf{P}}&= \mathsf{W}_i^{\mathsf{P}} \backslash \uparrow_{\mathsf{W}_i^\mathsf{P}}\{S\} \enspace  \cup \enspace  \mathsf{W}_i^{\mathsf{P}} \cap \uparrow_{\mathsf{W}_i^\mathsf{P}}\{S\}\nonumber \\
&= \mathsf{W}_i^{\mathsf{P}} \backslash \uparrow_{\mathsf{W}_i^\mathsf{P}}\{S\}  \enspace  \cup \enspace \uparrow_{\mathsf{W}_i^\mathsf{P}}\{S\}\label{Eq_Lemma_proof_1_1}\\
&= \downarrow_{\mathsf{W}_i^\mathsf{P}}\{ \overline{1},\overline{2},\cdots ,\overline{K}\}  \backslash \uparrow_{\mathsf{W}_i^\mathsf{P}}\{S\}  \enspace  \cup \enspace  \uparrow_{\mathsf{W}_i^\mathsf{P}}\{S\} \label{Eq_Lemma_proof_1_2}\\
&= \downarrow_{\mathsf{W}_i^\mathsf{P}}\{ \overline{1},\cdots ,\overline{i-1},\overline{i+1},\cdots, \overline{K} \}  \backslash \uparrow_{\mathsf{W}_i^\mathsf{P}}\{S\}  \enspace  \cup \enspace  \uparrow_{\mathsf{W}_i^\mathsf{P}}\{S\} \label{Eq_Lemma_proof_1_3}\\
&= \downarrow_{\mathsf{W}_i^{\mathsf{P}}}\{\overline{i_1},\overline{i_2},\cdots, \overline{i_N}\} \backslash \uparrow_{\mathsf{W}_i^\mathsf{P}}\{S\}  \enspace  \cup \enspace  \uparrow_{\mathsf{W}_i^\mathsf{P}}\{S\} \label{Eq_Lemma_proof_1_4}\\
&= \downarrow_{\mathsf{W}_i^{\mathsf{P}}}\{\overline{i_1},\overline{i_2},\cdots, \overline{i_N}\}  \enspace  \cup \enspace  \uparrow_{\mathsf{W}_i^\mathsf{P}}\{S\} \label{Eq_Lemma_proof_1_5}
\end{align}
where \eqref{Eq_Lemma_proof_1_1} follows from $\uparrow_{\mathsf{W}_i^\mathsf{P}}\{S\} \subseteq \mathsf{W}_i^{\mathsf{P}}$. For \eqref{Eq_Lemma_proof_1_2}, we know that the set $12\cdots K\in \uparrow_{\mathsf{W}_i^\mathsf{P}}\{S\}$ for any $S=i_1i_2\cdots i_N \subseteq \{1,2,\cdots, K\}$. Also, $\mathsf{W}_i^{\mathsf{P}}=\{12\cdots K\} \cup \downarrow_{\mathsf{W}_i^\mathsf{P}}\{ \overline{1},\overline{2},\cdots , \overline{K} \}$, in words, this means that $\mathsf{W}_i^{\mathsf{P}}$ is the union of the down-set of all $K$ sets with cardinality $K-1$ and the set with cardinality $K$ (i.e, $\{1,2,\cdots, K\}$). Hence, we can replace $\mathsf{W}_i^{\mathsf{P}}$ by $\downarrow_{\mathsf{W}_i^\mathsf{P}}\{ \overline{1},\overline{2},\cdots , \overline{K} \}$ since $ 12\cdots K\in \uparrow_{\mathsf{W}_i^\mathsf{P}}\{S\}$. Then, \eqref{Eq_Lemma_proof_1_3} follows from the fact that $\cup_{S\in \mathsf{W}}\downarrow_{\mathsf{W}_i^{\mathsf{P}}}\{S\}= \downarrow_{\mathsf{W}_i^{\mathsf{P}} }\{\cup_{S\in \mathsf{W}}\}$ for any $\mathsf{W}\subseteq \mathsf{P} $ and $\downarrow_{\mathsf{W}_i^{\mathsf{P}}}\{\overline{i}\}=\phi $. In \eqref{Eq_Lemma_proof_1_4}, we remove all the sets in $\{ \overline{1}, \overline{2},\cdots ,\overline{K}\}$ that are available in $\uparrow_{\mathsf{W}_i^\mathsf{P}}\{S\}$ for $N<K$. Note that we have $K$ sets with cardinality $K{-}1$ in $\mathsf{P}$ and $K{-}N$ sets with cardinality $K{-}1$ contains $S=i_1 i_2 \cdots i_N$. Finally, \eqref{Eq_Lemma_proof_1_5} follows from $\downarrow_{\mathsf{W}_i^{\mathsf{P}}}\{\overline{i_1},\overline{i_2},\cdots, \overline{i_N}\} \cap \uparrow_{\mathsf{W}_i^\mathsf{P}}\{i_1i_2\cdots i_N\}=\phi$.

In a similar manner, we can prove \eqref{Eq_Lemma_***} as follows

\begin{align}
\mathsf{W}_i^{\mathsf{P}}&= \mathsf{W}_i^{\mathsf{P}} \backslash \uparrow_{\mathsf{W}_i^\mathsf{P}}\{i_1,i_2,\cdots,i_N\} \enspace  \cup \enspace  \mathsf{W}_i^{\mathsf{P}} \cap \uparrow_{\mathsf{W}_i^\mathsf{P}}\{i_1,i_2,\cdots,i_N\}\nonumber \\
&= \downarrow_{\mathsf{W}_i^\mathsf{P}}\{ 123\cdots K\}  \backslash \uparrow_{\mathsf{W}_i^\mathsf{P}}\{i_1,i_2,\cdots,i_N\}  \enspace  \cup \enspace  \uparrow_{\mathsf{W}_i^\mathsf{P}}\{i_1,i_2,\cdots,i_N\} 
\label{Eq_Lemma_proof_2_1}\\
&= \downarrow_{\mathsf{W}_i^\mathsf{P}}\{ \overline{S}\}  \backslash \uparrow_{\mathsf{W}_i^\mathsf{P}}\{i_1,i_2,\cdots,i_N\}  \enspace  \cup \enspace  \uparrow_{\mathsf{W}_i^\mathsf{P}}\{i_1,i_2,\cdots,i_N\} 
\nonumber\\
&= \downarrow_{\mathsf{W}_i^{\mathsf{P}}}\{\overline{S}\}  \enspace  \cup \enspace  \uparrow_{\mathsf{W}_i^\mathsf{P}}\{i_1,i_2,\cdots,i_N\} 
\label{Eq_Lemma_proof_2_2}
\end{align}
where \eqref{Eq_Lemma_proof_2_1} from the fact that $\mathsf{W}_i^{\mathsf{P}}= \downarrow_{\mathsf{W}_i^\mathsf{P}}\{ 123\cdots K\}$, and \eqref{Eq_Lemma_proof_2_2} from \eqref{Eq_Lemma_****} where  $\downarrow_{\mathsf{W}_i^{\mathsf{P}}}\{\overline{S}\} \cap \uparrow_{\mathsf{W}_i^\mathsf{P}}\{i_1,i_2,\cdots, i_N\}=\phi$. 

\section{Proof of Theorem \ref{TH_genral_inner_General_msgs} }
\label{App_Prove_General_msgs}

Fix $\mathsf{F}$ ($\mathsf{E}\subseteq \mathsf{F}\subseteq \mathsf{P}$) to be ordered by set inclusion such that $S^{'}\leq S$ only if $S^{'}\subseteq S$. For this choice of $\mathsf{F}$, we do the following: ($a$) split the messages $M_S$ $S\in \mathsf{E}$ using the up-set splitting technique proposed in \cite{romero2017rate} such that 
\begin{equation}
M_S= ( M_{S \rightarrow S^{'}}, \enspace  S^{'}\in \uparrow_{\mathsf{F}} S )
\label{Eq_message_splitting}
\end{equation}
($b$) create the reconstruction messages $\hat{M}_S$ $S\in \mathsf{F}$ with rates given in \eqref{Eq_reconstruction_rates_upset_splitting} such that 
\begin{equation}
\hat{M}_S= ( M_{S^{'} \rightarrow S}, \enspace S^{'}\in \downarrow_{\mathsf{E}}S )
\label{Eq_message_reconstruction}
\end{equation}
and $(c)$ fix the coded time-sharing, auxiliary and input random variables ($Q, U_{\mathsf{F}}, X$) such that $X$ is a deterministic function of $Q, U_{\mathsf{F}}$ whose joint distribution is given by 
\begin{equation}
 p(q, u_{\mathsf{F}})= p(q) \prod_{S\in \mathsf{F}}p(u_S|u_{\uparrow_{\mathsf{F},q} S\backslash \{S \}},q)
\end{equation}

%\textcolor{red}{I used $S_{i_j}$ to avoid to use $S_1$ and $S_2$ again for different purpose here.}

Then, we enumerate the sets in $\mathsf{F}$ in non-increasing order such that $\mathsf{F}=\{S_{i_1},S_{i_2},...,S_{i_N}\}$, where $N=|\mathsf{F}|$. Generate a random time-sharing sequence $q^n$ according to $\prod_{o=1}^n p_Q(q_i)$. For each $j\in [1:N]$ and collection of reconstruction messages $\hat{m}_{\uparrow_{\mathsf{F}}S_{i_j}}$, generate $2^{n\hat{R}_{S_{i_j}}}$ codewords $u_{S_{i_j}}^n(\hat{m}_{\uparrow_{\mathsf{F}}S_{i_j}})$ according to $\prod_{t=1}^n p(u_{S_{i_j}}|u_{\uparrow_{\mathsf{F}} S_{i_j}},q_i)$. This process can be done for all $j$ from $j=1$ to $j=N$.

Receivers $Y_j$ ($j\in  S_p $) jointly decode the reconstruction messages $\hat{M}_{\mathsf{W}_j^{\mathsf{F}} }$ which contain both desired messages ($M_{S_1},M_{S_2}$). Using the result in \cite{romero2017rate}, we can show that the probability of error vanishes if 
\begin{equation}
\sum_{S^{'}\in \mathsf{B}}  \hat{R}_{S^{'}} \leq I(U_{\mathsf{B}}; Y_j|U_{\mathsf{W}_j^{\mathsf{F}}\backslash \mathsf{B}   },Q) \quad \forall \mathsf{B}\in \mathcal{F}_{\downarrow}(\mathsf{W}_j^{\mathsf{F}}),j\in S_p  
\end{equation}

On the other hand, the receiver $Y_j$, with $j\in S_{l_1} \cup S_{l_2}$, only needs one message $M_{\mathsf{W}_j^{\mathsf{E}}}$, %($\mathsf{W}_j^{\mathsf{E}}=\{S_1\}$ for $j\in S_{c_1}$ and $\mathsf{W}_j^{\mathsf{E}}=\{S_2\}$ for $ j\in S_{c_2}$), 
and hence, non-unique decoding can be employed. To analyze the error probabilities and obtain the conditions such that these probabilities vanish in the limit of large block length when non-unique decoding is used, we use the following two steps; $(a)$ obtain the conditions such that the probabilities of error vanish when unique decoding is used, and then $(b)$ remove all the inequalities that contain only the rates of the undesired messages. For part $(a)$, we know that the probability of error vanishes if 
\begin{equation}
\sum_{S^{'}\in \mathsf{B}}  \hat{R}_{S^{'}} \leq I(U_{\mathsf{B}}; Y_j|U_{\mathsf{W}_j^{\mathsf{F}}\backslash \mathsf{B}   },Q) \quad \forall \mathsf{B}\in \mathcal{F}_{\downarrow}(\mathsf{W}_j^{\mathsf{F}}),j\in S_{l_1} \cup S_{l_2}
\label{Eq_relaiable_condition_commonRec}
\end{equation}
Since each non-private receiver $Y_j$ desires only the message $M_{\mathsf{W}_j^{\mathsf{E}}}$ %($\mathsf{W}_j^{\mathsf{E}}=\{S_1\}$ for $j\in S_{c_1}$ and $\mathsf{W}_j^{\mathsf{E}}=\{S_2\}$ for $ j\in S_{c_2}$) 
which is part of the reconstruction messages $\hat{M}_{S}$ with $S\in \uparrow_{\mathsf{F}} \mathsf{W}_j^{\mathsf{E}}$ as shown in \eqref{Eq_message_splitting} and \eqref{Eq_message_reconstruction}. Hence, from \eqref{Eq_relaiable_condition_commonRec}, we need to remove from $\mathsf{B}$ all the sets that do not contain $\uparrow_{\mathsf{F}} \mathsf{W}_j^{\mathsf{E}}$. Since all the sets in  $\mathsf{B}$ are down-sets, then if none of these sets contain $\mathsf{W}_j^{\mathsf{E}}$, they must also do not contain any of $\uparrow_{\mathsf{F}} \mathsf{W}_j^{\mathsf{E}}$. Thus, replacing $\mathsf{B} \in \mathcal{F}_{\downarrow}(\mathsf{W}_j^{\mathsf{F}})$ in \eqref{Eq_relaiable_condition_commonRec} by $\mathsf{B}\in \mathcal{F}_{\downarrow_{  \{S_1\} }}(\mathsf{W}_j^{\mathsf{F}}) $ for $j\in S_{l_1}$ and $\mathsf{B}\in \mathcal{F}_{\downarrow_{  \{S_2\} }}(\mathsf{W}_j^{\mathsf{F}}) $ for $j\in S_{l_2}$ removes all the inequalities that contain only the rates of the undesired messages. Hence, all receivers can reliably decode their desired messages if \eqref{Eq_condition_reconstruction_rates1} and \eqref{Eq_condition_reconstruction_rates2} are satisfied. This completes the proof of Theorem \ref{TH_genral_inner_General_msgs}.

\section{Proof of the converse for Theorem \ref{Th-Capacity_Region_Two_K-1_Order_msgs}}
   \label{Appendix_Proof_Th_Two_K-1_Order_msgs}
 The converse proofs in this section use the information inequality of \cite[Lemma 1]{nair2011capacity}, and hence, we state it here for easy reference.
\begin{lemma}
\label{Lemma_Information_Inequality}
Let $X {\markov} (Y,Z)$ be a DM BC without feedback and $Y  \succeq   Z$. Consider $M$ to be any random variable such that
$M {\markov}X^n {\markov} (Y^n,Z^n)$ forms a Markov chain. Then, we have \begin{align}
I(Y^{i-1};Z_{i}|M)\geq & I(Z^{i-1};Z_{i}|M) \enspace 1\leq i\leq n \nonumber \\
I(Y^{i-1};Y_{i}|M)\geq & I(Z^{i-1};Y_{i}|M) \enspace 1\leq i\leq n \nonumber
\end{align}
\end{lemma}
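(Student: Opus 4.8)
The plan is to recognize Lemma \ref{Lemma_Information_Inequality} as a consequence of two facts: (i) the less-noisy order is preserved under the memoryless (product) extension of the channel, and (ii) the past outputs are conditionally independent of the current outputs given the past inputs. The idea is to freeze the index $i$, regard the block map $X^{i-1}\mapsto(Y^{i-1},Z^{i-1})$ as a single vector channel, and use $Z_i$ (for the first inequality) or $Y_i$ (for the second) as an auxiliary input variable in the definition of less noisy (Definition \ref{Def_Less_Noisy}).

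First I would establish the Markov structure. Because the channel is memoryless and there is no feedback, the noise producing $(Y^{i-1},Z^{i-1})$ is independent of $(X_i,M)$ and of the noise producing $(Y_i,Z_i)$, conditioned on $X^{i-1}$. Writing $(Y^{i-1},Z^{i-1})$ as a function of $(X^{i-1},N^{i-1})$ and $(Y_i,Z_i)$ as a function of $(X_i,N_i)$, with the noise variables mutually independent and independent of $(X^n,M)$, this yields the conditional-independence relations
\[
(Y^{i-1},Z^{i-1}) \;\markov\; (X^{i-1},M) \;\markov\; Z_i,\qquad
(Y^{i-1},Z^{i-1}) \;\markov\; (X^{i-1},M) \;\markov\; Y_i.
\]
In particular, given $M$, each of $Z_i$ and $Y_i$ is connected to $(Y^{i-1},Z^{i-1})$ only through $X^{i-1}$, so it is an admissible auxiliary variable for the vector channel $X^{i-1}\mapsto(Y^{i-1},Z^{i-1})$.

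Next I would invoke the preservation of the less-noisy order under memoryless products: from the single-letter hypothesis $Y\succeq Z$ it follows that $Y^{i-1}\succeq Z^{i-1}$, i.e. $I(V;Y^{i-1})\ge I(V;Z^{i-1})$ for every $V$ with $V\markov X^{i-1}\markov(Y^{i-1},Z^{i-1})$. I would justify this by the standard Csiszar-sum telescoping
\[
I(V;Y^{i-1})-I(V;Z^{i-1})=\sum_{t=1}^{i-1}\Big[I(V;Y_t\,|\,Y^{t-1},Z_{t+1}^{i-1})-I(V;Z_t\,|\,Y^{t-1},Z_{t+1}^{i-1})\Big],
\]
where each bracket is nonnegative because $\big(V,Y^{t-1},Z_{t+1}^{i-1}\big)\markov X_t\markov(Y_t,Z_t)$ (again by memorylessness) and the single-letter less-noisy property applies coordinatewise. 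Applying this with the auxiliary variable $V=Z_i$, sliced on each value $M=m$ and then averaged over $M$ (the channel transition is unaffected by conditioning on $M$), gives $I(Z_i;Y^{i-1}\mid M)\ge I(Z_i;Z^{i-1}\mid M)$, which is the first claimed inequality; taking $V=Y_i$ gives the second.

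The step I expect to be the main obstacle is the coordinatewise less-noisy telescoping used to lift $Y\succeq Z$ to the product channel $Y^{i-1}\succeq Z^{i-1}$: it requires the Csiszar sum identity and a careful verification that $\big(V,Y^{t-1},Z_{t+1}^{i-1}\big)\markov X_t\markov(Y_t,Z_t)$ at every $t$, together with the conditional form of Definition \ref{Def_Less_Noisy} (apply the order for each frozen $M=m$ and average). Once the product-channel less-noisy order and the conditional independence of past and present outputs are in hand, both inequalities follow immediately by reading $Z_i$ and $Y_i$ as auxiliary inputs, and the argument makes no use of the specific structure of $M$ beyond the stated Markov chain $M\markov X^n\markov(Y^n,Z^n)$.
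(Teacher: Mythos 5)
Your proposal is correct. Note that the paper does not actually prove this lemma; it only restates it from \cite[Lemma 1]{nair2011capacity} ``for easy reference,'' so there is no in-paper proof to compare against. Your argument is essentially the standard one from that reference: the telescoping identity $I(V;Y^{i-1})-I(V;Z^{i-1})=\sum_{t=1}^{i-1}[I(V;Y_t|Y^{t-1},Z_{t+1}^{i-1})-I(V;Z_t|Y^{t-1},Z_{t+1}^{i-1})]$ combined with the per-coordinate Markov chain $(V,Y^{t-1},Z_{t+1}^{i-1})\markov X_t\markov (Y_t,Z_t)$ guaranteed by memorylessness and the absence of feedback, applied with $V=Z_i$ (resp.\ $V=Y_i$) conditioned on $M=m$ and averaged. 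The only organizational difference is that you package the telescoping as a separate ``less-noisy tensorizes'' step and then feed $Z_i$ or $Y_i$ in as the auxiliary, whereas the reference carries $M$ inside the conditioning of a single telescoping sum; the two computations are identical term by term. Your verification of the conditional independences $p(y^{i-1},z^{i-1},z_i\,|\,m,x^{i-1})=p(y^{i-1},z^{i-1}|x^{i-1})\,p(z_i|m,x^{i-1})$ and of the fact that conditioning on $(Y^{t-1},Z_{t+1}^{i-1},M)$ leaves the single-letter channel from $X_t$ to $(Y_t,Z_t)$ unchanged are exactly the points that need checking, and both are handled correctly.
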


In this part, we establish the converse proof for only \eqref{Eq:Capacity_Region_Two_K-1_Order_msgs1_1}-\eqref{Eq:Capacity_Region_Two_K-1_Order_msgs1_4}. In particular, we show that for every sequence of $(2^{nR_{\overline{K-1}}}, 2^{nR_{\overline{K}}}, n)$ codes with $\lim_{n \rightarrow \infty} P_e^{(n)} = 0 $ the inequalities in \eqref{Eq:Capacity_Region_Two_K-1_Order_msgs1_1}-\eqref{Eq:Capacity_Region_Two_K-1_Order_msgs1_4} hold for the given classes of channels for some $p(u,x)$ for which $U \markov X \markov (Y_1,Y_2, \cdots , Y_K)$.

For the first class of channels where $Y_j  \sqsupseteq  Y_{K-1} \sqsupseteq Y_K $ for all $j\in S_p $, we show that the optimal choice of $U_i= M_{\overline{K-1}},Y_{K-1,1}^{i-1},Y_{K,i+1}^n$. While for the second class where $Y_i \succeq Y_K $ for all $i\in S_p \cup \{K-1\}$, the optimal choice of $U_i=M_{\overline{K-1}} Y_{K,1}^{i-1}$. 

The converse proof of \eqref{Eq:Capacity_Region_Two_K-1_Order_msgs1_1} is straightforward for both classes of channels and hence is omitted.

For the first class of channels where $Y_j  \sqsupseteq  Y_{K-1} \sqsupseteq Y_K $ for all $j\in S_p $, the converse proof for \eqref{Eq:Capacity_Region_Two_K-1_Order_msgs1_4} depends on Csiszar sum lemma as that in \cite{el1979capacity}, i.e., 
\begin{align}
n (R_{\overline{K-1}} + R_{\overline{K}}) & = H(M_{\overline{K-1}})+H(M_{\overline{K}}) \nonumber 
\\
& \leq I(M_{\overline{K}};Y_{K-1}^n) + I(M_{\overline{K}} ; M_{\overline{K-1}},Y_{K}^n) +  2n \epsilon_n 
 \label{Converse_proof_Eq4_k-1Msgs_1}
 \\
& = \sum_{i=1}^n  I(M_{\overline{K}};Y_{K-1,i}|Y_{K-1,i+1}^n) + I(M_{\overline{K-1}};Y_{K,i}|M_{\overline{K}},Y_{K,1}^{i-1})+ 2n \epsilon_n
 \label{Converse_proof_Eq4_k-1Msgs_2} 
\\
& \leq  \sum_{i=1}^n I(M_{\overline{K}},Y_{K-1,i+1}^n;Y_{K-1,i}) + I(M_{\overline{K-1}};Y_{K,i}|M_{\overline{K}},Y_{K,1}^{i-1}) +2 n \epsilon_n \nonumber \\
& \leq  \sum_{i=1}^n I(M_{\overline{K}},Y_{K,1}^{i-1},Y_{K-1,i+1}^n;Y_{K-1,i})- I(Y_{K,1}^{i-1};Y_{K-1,i}|M_{\overline{K}},Y_{K-1,i+1}^n)  \nonumber 
\\ 
& \quad + \sum_{i=1}^n I(M_{\overline{K-1}};Y_{K,i}|M_{\overline{K}},Y_{K,1}^{i-1},Y_{K-1,i+1}^n) + I(Y_{K-1,i+1}^n;Y_{K,i}|M_{\overline{K}},Y_{K,1}^{i-1}) 
+2 n \epsilon_n  \label{Converse_proof_Eq4_k-1Msgs_3} 
\\
&=  \sum_{i=1}^n I(M_{\overline{K}},Y_{K,1}^{i-1},Y_{K-1,i+1}^n;Y_{K-1,i})+ I(M_{\overline{K-1}};Y_{K,i}|M_{\overline{K}},Y_{K,1}^{i-1},Y_{K-1,i+1}^n)  +2n \epsilon_n \label{Converse_proof_Eq4_k-1Msgs_4}  \\
& \leq \sum_{i=1}^n I(W_{i};Y_{K-1,i})+ I(X_i;Y_{K,i}|W_{i}) + 2n \epsilon_n \label{Converse_proof_Eq4_k-1Msgs_5}\\
& \leq \sum_{i=1}^n I(W_{i};Y_{K-1,i})+ I(X_i;Y_{K-1,i}|W_{i}) + 2n \epsilon_n \label{Converse_proof_Eq4_k-1Msgs_6}\\
& = \sum_{i=1}^n I(X_i;Y_{K-1,i}) + 2n \epsilon_n \nonumber
\end{align}
where $W_{i}=M_{\overline{K}},Y_{K,1}^{i-1},Y_{K-1,i+1}^n$. The inequality \eqref{Converse_proof_Eq4_k-1Msgs_1} follows from Fano's inequality, \eqref{Converse_proof_Eq4_k-1Msgs_2} from the chain rule for mutual information and the independence between $M_{\overline{K}}$ and $M_{\overline{K}}$. Inequality \eqref{Converse_proof_Eq4_k-1Msgs_3} is true since $I(W;V) \leq I(W;V|U)+I(U;V)$, \eqref{Converse_proof_Eq4_k-1Msgs_4} is due to the Csiszar sum lemma, \eqref{Converse_proof_Eq4_k-1Msgs_5} follows from the fact that $M_{\overline{K-1}},M_{\overline{K}},Y_{K-1,1}^{i-1},Y_{K,i+1}^n \markov X_i\markov Y_{K-1,i}$ forms a Markov chain, and \eqref{Converse_proof_Eq4_k-1Msgs_6} from the condition $Y_{K-1}\sqsupseteq Y_K$.

While, for the converse of \eqref{Eq:Capacity_Region_Two_K-1_Order_msgs1_3}, we have 
\begin{align}
n (R_{\overline{K-1}} + R_{\overline{K}}) & = H(M_{\overline{K-1}})+H(M_{\overline{K}}) \nonumber 
\\
& \leq I(M_{\overline{K-1}};Y_{K}^n) + I(M_{\overline{K}} ; M_{\overline{K-1}},Y_{K-1}^n) +  2n \epsilon_n 
 \label{Converse_proof_Eq3_k-1Msgs_1}
 \\
& = \sum_{i=1}^n  I(M_{\overline{K-1}};Y_{K,i}|Y_{K,i+1}^n) + I(M_{\overline{K}};Y_{K-1,i}|M_{\overline{K-1}},Y_{K-1,1}^{i-1})+ 2n \epsilon_n
 \label{Converse_proof_Eq3_k-1Msgs_2} 
%\\
%& \leq  \sum_{i=1}^n I(M_{\overline{K-1}},Y_{K,i+1}^n;Y_{K,i}) + I(M_{\overline{K}};Y_{K-1,i}|M_{\overline{K-1}},Y_{K-1,1}^{i-1}) +2 n \epsilon_n \nonumber \\
%& \leq  \sum_{i=1}^n I(M_{\overline{K-1}},Y_{K-1,1}^{i-1},Y_{K,i+1}^n;Y_{K,i})- I(Y_{K-1,1}^{i-1};Y_{K,i}|M_{\overline{K-1}},Y_{K,i+1}^n)  \nonumber 
%\\ 
%& \quad + \sum_{i=1}^n I(M_{\overline{K}};Y_{K-1,i}|M_{\overline{K-1}},Y_{K-1,1}^{i-1},Y_{K,i+1}^n) + I(Y_{K,i+1}^n;Y_{K-1,i}|M_{\overline{K-1}},Y_{K-1,1}^{i-1}) 
%+2 n \epsilon_n  \label{Converse_proof_Eq3_k-1Msgs_3} 
\\
&\leq \sum_{i=1}^n I(M_{\overline{K-1}},Y_{K-1,1}^{i-1},Y_{K,i+1}^n;Y_{K,i})+ I(M_{\overline{K}};Y_{K-1,i}|M_{\overline{K-1}},Y_{K-1,1}^{i-1},Y_{K,i+1}^n)  +2n \epsilon_n \label{Converse_proof_Eq3_k-1Msgs_4}  \\
& \leq  \sum_{i=1}^n I(U_{i};Y_{K,i})+ I(X_i;Y_{K-1,i}|U_{i}) + 2n \epsilon_n \label{Converse_proof_Eq3_k-1Msgs_5}
\end{align}
where $U_{i}=M_{\overline{K-1}},Y_{K-1,1}^{i-1},Y_{K,i+1}^n$. The inequality \eqref{Converse_proof_Eq3_k-1Msgs_1} follows from Fano's inequality, \eqref{Converse_proof_Eq3_k-1Msgs_2} from the chain rule for mutual information and the independence between $M_{\overline{K}}$ and $M_{\overline{K}}$. Inequality \eqref{Converse_proof_Eq3_k-1Msgs_4} follows from using the Csiszar sum lemma as in \eqref{Converse_proof_Eq4_k-1Msgs_3}-\eqref{Converse_proof_Eq4_k-1Msgs_4}.%, and \eqref{Converse_proof_Eq3_k-1Msgs_5} follows from the fact that $M_{\overline{K-1}},M_{\overline{K}},Y_{K-1,1}^{i-1},Y_{K,i+1}^n \markov X_i\markov Y_{K-1,i}$ forms a Markov chain.

On the other hand, for the second class where $Y_i \succeq Y_K $ for all $i\in S_p \cup \{K-1\}$, for the converse of \eqref{Eq:Capacity_Region_Two_K-1_Order_msgs1_3}, we have for any $j\in S_p \cup \{K-1\}$
\begin{align}
n (R_{\overline{K-1}} + R_{\overline{K}}) & \leq  \sum_{i=1}^n  I(M_{\overline{K-1}};Y_{K,i}|Y_{K,1}^{i-1}) + I(M_{\overline{K}};Y_{j,i}|M_{\overline{K-1}},Y_{j,1}^{i-1})+ 2n \epsilon_n
\nonumber \\
 &\leq \sum_{i=1}^n  I(M_{\overline{K-1}}, Y_{K,1}^{i-1} ;Y_{K,i}) + I(M_{\overline{K}}, Y_{j,1}^{i-1};Y_{j,i}|M_{\overline{K-1}})- I( Y_{j,1}^{i-1};Y_{j,i}|M_{\overline{K-1}}) + 2n \epsilon_n
 \label{Converse_proof_Eq33_k-1Msgs_1}\\
  &\leq \sum_{i=1}^n  I(M_{\overline{K-1}}, Y_{K,1}^{i-1} ;Y_{K,i}) + I(M_{\overline{K}}, Y_{j,1}^{i-1};Y_{j,i}|M_{\overline{K-1}})- I( Y_{K,1}^{i-1};Y_{j,i}|M_{\overline{K-1}}) + 2n \epsilon_n
 \label{Converse_proof_Eq33_k-1Msgs_2}\\
   &\leq \sum_{i=1}^n  I(M_{\overline{K-1}}, Y_{K,1}^{i-1} ;Y_{K,i}) + I(X_i;Y_{j,i}|M_{\overline{K-1}})- I( Y_{K,1}^{i-1};Y_{j,i}|M_{\overline{K-1}}) + 2n \epsilon_n
 \nonumber \\
   &\leq \sum_{i=1}^n  I(M_{\overline{K-1}}, Y_{K,1}^{i-1} ;Y_{K,i}) + I(X_i;Y_{j,i}|M_{\overline{K-1}},Y_{K,1}^{i-1})+ 2n \epsilon_n
 \label{Converse_proof_Eq33_k-1Msgs_3} \\
 &= \sum_{i=1}^n  I(U_i ;Y_{K,i}) + I(X_i;Y_{j,i}|U_i)+ 2n \epsilon_n \nonumber
\end{align} 
where \eqref{Converse_proof_Eq33_k-1Msgs_1} follows from the chain rule for mutual information, \eqref{Converse_proof_Eq33_k-1Msgs_2} from Lemma \ref{Lemma_Information_Inequality} where $Y_j\succeq Y_K$ for all $j\in S_p\cup \{K-1\}$, and \eqref{Converse_proof_Eq33_k-1Msgs_3} from the chain rule and the non-negativity of conditional mutual information,

The rest of the proof proceeds along standard lines. Define a time-sharing uniform random variable $Q$ over $[1{:}n]$ that is independent of all other involved random variables. Identify $U{=}(U_Q,Q)$ and $Y_{i}{=}Y_{iQ}$ for all $i\in [1:K]$ and take the limit as $n \rightarrow \infty$, so that $\epsilon_n \rightarrow 0$.

\end{appendices}

\bibliographystyle
{IEEEtran}
\bibliography{IEEEabrv,Cite}

% Generated by IEEEtran.bst, version: 1.14 (2015/08/26)
\begin{thebibliography}{10}
\providecommand{\url}[1]{#1}
\csname url@samestyle\endcsname
\providecommand{\newblock}{\relax}
\providecommand{\bibinfo}[2]{#2}
\providecommand{\BIBentrySTDinterwordspacing}{\spaceskip=0pt\relax}
\providecommand{\BIBentryALTinterwordstretchfactor}{4}
\providecommand{\BIBentryALTinterwordspacing}{\spaceskip=\fontdimen2\font plus
\BIBentryALTinterwordstretchfactor\fontdimen3\font minus
  \fontdimen4\font\relax}
\providecommand{\BIBforeignlanguage}[2]{{%
\expandafter\ifx\csname l@#1\endcsname\relax
\typeout{** WARNING: IEEEtran.bst: No hyphenation pattern has been}%
\typeout{** loaded for the language `#1'. Using the pattern for}%
\typeout{** the default language instead.}%
\else
\language=\csname l@#1\endcsname
\fi
#2}}
\providecommand{\BIBdecl}{\relax}
\BIBdecl

\bibitem{salman2018achievable}
M.~Salman and M.~K. Varanasi, ``An achievable rate region for the {K}-receiver
  two nested groupcast dm broadcast channel and a capacity result for the
  combination network,'' in \emph{2018 IEEE International Symposium on
  Information Theory (ISIT)}.\hskip 1em plus 0.5em minus 0.4em\relax IEEE,
  2018, pp. 1415--1419.

\bibitem{bergmans1973random}
P.~Bergmans, ``Random coding theorem for broadcast channels with degraded
  components,'' \emph{IEEE Transactions on Information Theory}, vol.~19, no.~2,
  pp. 197--207, 1973.

\bibitem{korner1975source}
J.~Korner and K.~Marton, ``A source network problem involving the comparison of
  two channels {II},'' \emph{IEEE Transactions on Information Theory}, 1975.

\bibitem{cover1972broadcast}
T.~Cover, ``Broadcast channels,'' \emph{IEEE Transactions on Information
  Theory}, vol.~18, no.~1, pp. 2--14, 1972.

\bibitem{gallager1974capacity}
R.~G. Gallager, ``Capacity and coding for degraded broadcast channels,''
  \emph{Problemy Peredachi Informatsii}, vol.~10, no.~3, pp. 3--14, 1974.

\bibitem{korner1977images}
J.~Korner and K.~Marton, ``Images of a set via two channels and their role in
  multi-user communication,'' \emph{IEEE Transactions on Information Theory},
  vol.~23, no.~6, pp. 751--761, 1977.

\bibitem{el1979capacity}
A.~A. El~Gamal, ``The capacity of a class of broadcast channels.'' \emph{IEEE
  Transactions on Information Theory}, vol.~25, no.~2, pp. 166--169, 1979.

\bibitem{csiszar1978broadcast}
I.~Csisz{\'a}r and J.~Korner, ``Broadcast channels with confidential
  messages,'' \emph{IEEE Transactions on Information Theory}, vol.~24, no.~3,
  pp. 339--348, 1978.

\bibitem{kiirner1977general}
J.~Korner and K.~Marton, ``General broadcast channels with degraded message
  sets,'' \emph{IEEE Transactions on Information Theory}, vol.~23, pp. 60--64,
  1977.

\bibitem{nair2017sub}
C.~Nair and M.~Yazdanpanah, ``Sub-optimality of superposition coding region for
  three receiver broadcast channel with two degraded message sets,'' in
  \emph{2017 IEEE International Symposium on Information Theory
  (ISIT)(ISIT’2017)(Aachen, Germany)}, 2017, pp. 1038--1042.

\bibitem{nair2009capacity}
C.~Nair and A.~El~Gamal, ``The capacity region of a class of three-receiver
  broadcast channels with degraded message sets,'' \emph{IEEE Transactions on
  Information Theory}, vol.~55, no.~10, pp. 4479--4493, 2009.

\bibitem{salman2017capacity}
M.~Salman and M.~K. Varanasi, ``On the capacity region of the {K}-user discrete
  memoryless broadcast channel with two degraded messages,'' in \emph{IEEE
  International Symposium on Information Theory ({ISIT})}.\hskip 1em plus 0.5em
  minus 0.4em\relax IEEE, 2017, pp. 1048--1052.

\bibitem{salman2019capacity}
------, ``Capacity results for classes of partially ordered {K}-user broadcast
  channels with two nested multicast messages,'' \emph{IEEE Transactions on
  Information Theory}, 2019.

\bibitem{ngai2004network}
C.~K. Ngai and R.~W. Yeung, ``Network coding gain of combination networks,'' in
  \emph{IEEE Information Theory Workshop, 2004}.\hskip 1em plus 0.5em minus
  0.4em\relax IEEE, 2004, pp. 283--287.

\bibitem{grokop2008fundamental}
L.~Grokop and D.~N. Tse, ``Fundamental constraints on multicast capacity
  regions,'' \emph{arXiv preprint arXiv:0809.2835}, 2008.

\bibitem{tian2011latent}
C.~Tian, ``Latent capacity region: A case study on symmetric broadcast with
  common messages,'' \emph{IEEE Transactions on Information Theory}, vol.~57,
  no.~6, pp. 3273--3285, 2011.

\bibitem{salimi2015generalized}
A.~Salimi, T.~Liu, and S.~Cui, ``Generalized cut-set bounds for broadcast
  networks,'' \emph{IEEE Transactions on Information Theory}, vol.~61, no.~6,
  pp. 2983--2996, 2015.

\bibitem{romero2016superposition}
H.~Romero and M.~Varanasi, ``Superposition coding in the combination network,''
  in \emph{IEEE International Symposium on Information Theory ({ISIT})}.\hskip
  1em plus 0.5em minus 0.4em\relax IEEE, 2016, pp. 2149--2153.

\bibitem{bidokhti2016capacity}
S.~S. Bidokhti, V.~M. Prabhakaran, and S.~N. Diggavi, ``Capacity results for
  multicasting nested message sets over combination networks,'' \emph{IEEE
  Transactions on Information Theory}, vol.~62, no.~9, pp. 4968--4992, 2016.

\bibitem{willems1985discrete}
F.~Willems and E.~Van~der Meulen, ``The discrete memoryless multiple-access
  channel with cribbing encoders,'' \emph{IEEE Transactions on Information
  Theory}, vol.~31, no.~3, pp. 313--327, 1985.

\bibitem{bidokhti2016noisy}
S.~S. Bidokhti, M.~Wigger, and R.~Timo, ``Noisy broadcast networks with
  receiver caching,'' \emph{arXiv preprint arXiv:1605.02317}, 2016.

\bibitem{romero2017rate}
H.~P. Romero and M.~K. Varanasi, ``Rate splitting and superposition coding for
  concurrent groupcasting over the broadcast channel: A general framework,'' in
  \emph{IEEE International Symposium on Information Theory ({ISIT})}.\hskip 1em
  plus 0.5em minus 0.4em\relax IEEE, 2017, pp. 1888--1892.

\bibitem{el2011network}
A.~El~Gamal and Y.-H. Kim, \emph{Network information theory}.\hskip 1em plus
  0.5em minus 0.4em\relax Cambridge university press, 2011.

\bibitem{davey2002introduction}
B.~A. Davey and H.~A. Priestley, \emph{Introduction to lattices and
  order}.\hskip 1em plus 0.5em minus 0.4em\relax Cambridge university press,
  2002.

\bibitem{romero2017unifying}
H.~P. Romero and M.~K. Varanasi, ``A unifying order-theoretic framework for
  superposition coding: Polymatroidal structure and optimality in the
  multiple-access channel with general message sets,'' \emph{IEEE Trans.
  Inform. Th.}, vol.~63, no.~1, pp. 21--37, 2017.

\bibitem{chong2006comparison}
H.-F. Chong, M.~Motani, and H.~K. Garg, ``A comparison of two achievable rate
  regions for the interference channel,'' in \emph{Proc. ITA Workshop}, 2006,
  pp. 6--10.

\bibitem{bidokhti2014non}
S.~S. Bidokhti and V.~M. Prabhakaran, ``Is non-unique decoding necessary?''
  \emph{IEEE Transactions on Information Theory}, vol.~60, no.~5, pp.
  2594--2610, 2014.

\bibitem{chong2008han}
H.-F. Chong, M.~Motani, H.~K. Garg, and H.~El~Gamal, ``On the han--kobayashi
  region for theinterference channel,'' \emph{IEEE Transactions on Information
  Theory}, vol.~54, no.~7, pp. 3188--3195, 2008.

\bibitem{schrijver1986theory}
A.~Schrijver, ``Theory of integer and linear programming,'' 1986.

\bibitem{nair2011capacity}
C.~Nair and Z.~V. Wang, ``The capacity region of the three receiver less noisy
  broadcast channel,'' \emph{IEEE Transactions on Information Theory}, vol.~57,
  no.~7, pp. 4058--4062, 2011.

\bibitem{prabhakaran2007broadcasting}
V.~Prabhakaran, S.~Diggavi, and D.~Tse, ``Broadcasting with degraded message
  sets: A deterministic approach,'' in \emph{Proceedings of the 45th Annual
  Allerton Conference on Communication, Control and Computing}.\hskip 1em plus
  0.5em minus 0.4em\relax Citeseer, 2007.

\end{thebibliography}

\end{document}